\documentclass[12pt,letterpaper]{article}
\usepackage[T1]{fontenc}
\usepackage[utf8]{inputenc}
\usepackage{newtxtext}
\usepackage{amsmath,amsthm,mathtools}
\usepackage{newtxmath,bm}
\usepackage[margin=1in]{geometry}
\usepackage{microtype}
\usepackage{xcolor,graphicx,booktabs,array}
\usepackage{adjustbox}
\usepackage{tikz}
\usepackage{pgfplots}
\pgfplotsset{compat=1.18}
\usetikzlibrary{arrows.meta,calc,positioning,fit,backgrounds,decorations.pathreplacing}
\usepackage[numbers,sort&compress]{natbib}
\usepackage[font=small,labelfont=bf]{caption}
\usepackage{placeins}
\usepackage{needspace}
\usepackage{enumitem}
\usepackage{hyperref}
\newcommand{\doi}[1]{\href{https://doi.org/#1}{\nolinkurl{doi:#1}}}
\hypersetup{colorlinks=true,linkcolor=blue!45!black,citecolor=blue!45!black,
 urlcolor=blue!45!black,pdftitle={Quantum simultaneous decoding via pretty-good measurements of tilted states},pdfauthor={Mark M. Wilde}}
\numberwithin{equation}{section}
\newtheorem{theorem}{Theorem}[section]
\newtheorem{lemma}[theorem]{Lemma}
\newtheorem{proposition}[theorem]{Proposition}
\newtheorem{corollary}[theorem]{Corollary}
\theoremstyle{definition}

\theoremstyle{remark}
\newtheorem{remark}[theorem]{Remark}
\DeclareMathOperator{\Tr}{Tr}
\DeclareMathOperator{\ran}{ran}
\DeclareMathOperator{\supp}{supp}
\DeclareMathOperator{\spanop}{span}

\DeclareMathOperator{\sgn}{sgn}
\newcommand{\cH}{\mathcal H}
\newcommand{\cX}{\mathcal X}
\newcommand{\cM}{\mathcal M}
\newcommand{\cC}{\mathcal C}
\newcommand{\cW}{\mathcal W}
\newcommand{\cN}{\mathcal N}

\newcommand{\ee}{\mathbb E}

\newcommand{\cc}{\mathbb C}
\newcommand{\ket}[1]{|#1\rangle}
\newcommand{\bra}[1]{\langle#1|}
\newcommand{\proj}[1]{\ket{#1}\!\bra{#1}}
\newcommand{\norm}[1]{\left\|#1\right\|}
\newcommand{\rhot}[1]{\widetilde{\rho}^{#1}_{BL}}
\newcommand{\rhoh}[1]{\widehat{\rho}^{#1}_{BL}}
\newcommand{\Pe}{P_{\mathrm e}}
\newcommand{\eps}{\varepsilon}

\newcommand{\sref}[1]{Section~\ref{#1}}
\newcommand{\fref}[1]{Figure~\ref{#1}}
\newcommand{\thref}[1]{Theorem~\ref{#1}}
\newcommand{\lemref}[1]{Lemma~\ref{#1}}
\newcommand{\corref}[1]{Corollary~\ref{#1}}
\newcommand{\prref}[1]{Proposition~\ref{#1}}
\definecolor{inkblue}{RGB}{36,69,101}
\definecolor{softblue}{RGB}{232,240,247}
\definecolor{inkgreen}{RGB}{28,103,87}
\definecolor{softgreen}{RGB}{233,244,239}
\definecolor{inkorange}{RGB}{160,83,32}
\definecolor{softorange}{RGB}{251,241,230}
\tikzset{>=Latex,operator/.style={draw=inkblue,rounded corners=2pt,fill=white,
 minimum height=9mm,align=center,inner sep=5pt},
 note/.style={font=\small,align=center},
 quantum/.style={line width=.8pt},classical/.style={line width=.8pt,dashed},
 flow/.style={->,line width=.85pt,draw=inkblue},
 panel/.style={draw=inkblue!45,rounded corners=3pt,fill=softblue,inner sep=7pt,align=center}}
\allowdisplaybreaks[1]\widowpenalty=10000\clubpenalty=10000
\title{\textbf{Quantum simultaneous decoding via\\[3pt] pretty-good measurements of tilted states}}
\author{Mark M. Wilde\thanks{Email: \href{mailto:wilde@cornell.edu}{wilde@cornell.edu}.}\\[3pt]
\small\textit{School of Electrical and Computer Engineering, Cornell University}\\
\small\textit{Ithaca, New York 14850, USA}}
\date{October 10, 2026}
\begin{document}
\maketitle
\begin{abstract}
A multiple-access channel transmits messages from several senders to one receiver. For quantum outputs, the receiver must distinguish states depending on all messages, while tests for different decoding errors may not commute. In this paper, we give a self-contained account of simultaneous decoding, in which one joint measurement recovers the messages at the established achievable rates using independent codebooks. The proof combines Sen's tilted-subspace method with Cheng's analysis of the pretty-good measurement, a standard measurement for distinguishing quantum states. Auxiliary labels close to a common reference state separate the geometric directions associated with different errors. A Gram-matrix bound then allows separately chosen binary tests to certify the performance of one measurement. The labels are removed from the completed decoder, leading to a deterministic packing theorem that applies to finite state ensembles, including unequal priors. We develop the two-sender case first, with worked examples and figures, before treating any fixed number of senders. Sharper finite-error guarantees for the same decoder are developed separately, with sharp bounds for two specified steps of the analysis rather than a claim of globally optimal decoding. Applications recover known achievable rates for unassisted and entanglement-assisted classical communication, distributed classical compression with quantum side information, and coherent-state communication through a beamsplitter or passive multiport device. The common proof makes the geometric mechanism and its quantitative consequences available without repeating the construction for each application.
\end{abstract}
\newpage
{\fontsize{9.5}{10}\selectfont\setlength{\parskip}{0pt}\let\qsdaddvspace\addvspace\renewcommand{\addvspace}[1]{\qsdaddvspace{2.5pt}}\tableofcontents}
\newpage
\section{Introduction}\label{sec:intro}

\subsection{Background and motivation}
A multiple-access channel describes communication from several senders to one receiver. Even when all inputs are classical, a quantum output changes the decoding problem: the receiver must distinguish quantum states indexed by a tuple of independently chosen messages. A useful decoder should recover the whole tuple while accounting for every possible subset of incorrect messages. The two-sender protocol is shown in \fref{fig:mac}.

The point-to-point coding theorems of Holevo and Schumacher--Westmoreland established the basic method for transmitting classical information through quantum channels~\cite{Holevo1998,SchumacherWestmoreland1997}. Winter determined the classical--quantum multiple-access capacity region~\cite{Winter2001}. Successive decoding, together with time sharing, proves achievability of that region. Simultaneous decoding asks for a more direct random-coding statement: independently generate one unsplit codebook for each sender and construct a joint decoder that achieves all the rate constraints at once. Here unsplit means that a sender's message is not divided into separately encoded submessages. This distinction matters in network problems, in which one cannot freely choose a separate rate splitting or time-sharing construction for each receiver.

Fawzi \textit{et al}.~proved a two-sender simultaneous-decoding theorem and formulated the three-sender conjecture needed in their analysis of quantum interference channels~\cite[Theorem~2 and Conjecture~4]{FawziEtAl2012}. Sen also developed a sequential-decoding approach to the two-sender problem~\cite{Sen2011}. Qi \textit{et al}.~subsequently isolated a connection with multiple quantum hypothesis testing and constructed a decoder for an arbitrary number of senders, although their particular choice of tests gave generally suboptimal asymptotic rates~\cite[Sections~4 and~5]{QiWangWilde2018}.

Sen ultimately resolved the general simultaneous-decoding problem through a quantum joint typicality lemma with intersection and union properties~\cite{Sen2021Joint,Sen2021Network}. His proof introduced tilted subspaces and smoothing and augmentation. Subsequent developments include multiparty packing and relay communication~\cite{DingEtAl2020}, computation over classical--quantum multiple-access channels~\cite{SohailEtAl2022}, structured coding for three-user broadcast and interference channels~\cite{GouiaaPadakandlaBroadcast2025,GouiaaPadakandla2025}, and multipartite covering and decoupling~\cite{Sen2025Covering}. The simultaneous-decoding conjecture is thus already resolved. These results motivate a detailed account of the mechanism in the multiple-access setting and of the quantitative estimates that it supports.

\begin{figure}[!t]
\centering
\begin{adjustbox}{max width=\linewidth}
\begin{tikzpicture}[font=\small,x=1cm,y=1cm]
\node at (0,2.8) {$m_1$};
\node[operator,minimum width=1.55cm] (E1) at (2.1,2.8) {Encoder 1};
\node at (0,1.45) {$m_2$};
\node[operator,minimum width=1.55cm] (E2) at (2.1,1.45) {Encoder 2};
\node[operator,minimum width=1.65cm,minimum height=2.15cm] (N) at (6.65,0) {$\cN^{\otimes n}$};
\node[operator,minimum width=1.85cm] (D) at (10.1,0) {One joint\\measurement};
\draw[classical,->] (.35,2.8)--(E1.west);
\draw[classical,->] (.35,1.45)--(E2.west);
\draw[classical] (E1.east)--(3.4,2.8);
\draw[classical,->] (3.4,2.8)--node[above,sloped,pos=.48] {$x_1^n(m_1)$} (5.82,.5);
\draw[classical] (E2.east)--(3.4,1.45);
\draw[classical,->] (3.4,1.45)--node[below,sloped,pos=.53] {$x_2^n(m_2)$} (5.82,-.5);
\draw[quantum,->] (N.east)--node[above] {$B^n$} (D.west);
\draw[classical,->] (D.east)--(13,0) node[right] {$(\widehat m_1,\widehat m_2)$};
\draw[->,gray] (0,-1.7)--(13,-1.7) node[midway,below] {time};
\end{tikzpicture}
\end{adjustbox}
\caption{Classical communication over a two-sender classical--quantum multiple-access channel. Each encoder uses an ordinary classical codebook. The receiver performs one measurement, indexed by message pairs, on the joint quantum output $B^n$. Dashed lines carry classical information and the solid line carries a quantum system. No auxiliary message label is transmitted in the protocol.}
\label{fig:mac}
\end{figure}

\subsection{Summary of contributions}
One of our purposes here is to explain the mechanism of Sen's simultaneous-decoding result~\cite{Sen2021Joint,Sen2021Network} through a self-contained treatment of multiple-access communication. We combine its tilted-subspace geometry with Cheng's state-based pretty-good-measurement analysis~\cite{Cheng2023}. 

We prove a deterministic packing bound for an arbitrary collection of states indexed by a Cartesian product of message sets. Each candidate message tuple is supplied with one binary test for every possible nonempty error pattern. These tests can be chosen separately and need not commute. The bound has a correct-message term, a sum of false-acceptance terms, and a small perturbation term. Its constants are independent of the output dimension and the message sizes.

The construction assigns each message index a weak auxiliary label. A candidate-dependent projection retains a sender's complete weak label when that sender's index agrees with the candidate; otherwise it retains only a reference component. Thus, every competing tuple is associated with an exact error-pattern vector. The linear independence of these vectors permits an elementary Gram-matrix estimate for the sum of the rejected subspaces. The auxiliary message-label viewpoint is already present in Sen's construction~\cite[Section~4]{Sen2021Joint}; our emphasis is on its explicit use in the fixed-ensemble packing calculation. The geometric projectors are used only as binary analysis tests. We define the actual extended-space decoder as the pretty-good measurement of the tagged output states and bound its error with Cheng's trace inequality~\cite[Appendix~A]{Cheng2023}. We then compress the completed measurement against a fixed auxiliary state. This compression gives a valid decoder on the original quantum output. The labels are mathematical devices, not a communication resource.

We first apply the fixed-ensemble bound to a classical-input quantum-output multiple-access channel with independent codebooks. For a fixed product input distribution, the packing bound and ordinary binary quantum Stein tests recover the established simultaneous-decoding constraints~\cite{Sen2021Joint,Sen2021Network}
\begin{equation}\label{eq:intro-region}
 \sum_{i\in S}R_i\le I(X_S;B|X_{S^c})_\omega,
 \qquad \varnothing\ne S\subseteq[k],
\end{equation}
with boundary points understood as limits of achievable rates. Here $k$ is an arbitrary fixed number of senders, $X_S$ denotes the classical inputs of the senders in $S$, and $\omega$ is the input--output state induced by the product distribution. The precise random-coding statements appear in \thref{thm:two-capacity} and \thref{thm:k-capacity}.

Our main contribution is this elementary packing-level presentation. Its explicit constants also avoid the doubly exponential dependence on the number of classical registers in the displayed general joint typicality estimate of Sen~\cite[Section~3, Fact~4]{Sen2021Network}. At fixed label strength our correct-rejection coefficient is exponential in $k$; after holding the label-removal contribution below a fixed target, the resulting coefficient is $2^{O(k\log k)}$. Section~\ref{subsec:sen-comparison} gives the matched-parameter comparison and explains why it does not imply a uniform improvement at every error tolerance or an automatic improvement to every use of the general lemma.

We also obtain sharper finite-error bounds for the same decoder. To preserve the exposition, the basic proof is unchanged in structure: a compact refinement is stated in \sref{subsubsec:sharp-packing}, with its full derivation in Appendix~\ref{app:quantitative}. The refined Gram coefficient is optimal for the specified correct-rejection inequality and label family, and the label-transfer function is optimal given only the label overlap and a bound on tagged error. Neither statement asserts global optimality of the decoding error bound. The geometric mechanism is a special instance of Sen's tilting. We do not claim an independent origin for that mechanism or a new capacity formula. The useful specialization is to employ deterministic, injective message labels and to bound a fixed pair of competing messages before averaging over the codebook. This order of analysis makes separate smoothing estimates unnecessary for the problem considered here.

The same deterministic bound gives two further reductions. Position-based coding, introduced by Anshu, Jain, and Warsi~\cite{AnshuJainWarsi2019} and applied to multiple-access communication by Qi \textit{et al}.~\cite{QiWangWilde2018}, converts entanglement-assisted communication into discrimination of reference positions. This recovers the established achievable rate regions for entanglement-assisted classical communication over a multiple access channel~\cite{HsiehDevetakWinter2008,XuWilde2013,ShiEtAl2021}.

For classical compression with quantum side information, Devetak and Winter~\cite{DevetakWinter2003} treated one classical sender and one receiver holding quantum side information, and Renes and Renner~\cite{RenesRenner2012} developed the corresponding one-shot setting. Here we consider several senders holding correlated classical data. A nonuniform-prior version of the packing bound and independent two-universal hashing give a simultaneous decoder, recovering the distributed rate constraints studied in~\cite{Winter1999Thesis,LiuXiFan2023}. Salek and Winter already gave a simultaneous-decoding proof using independent random binning and Sen's construction~\cite[Appendix~A]{SalekWinter2022}; here the task is a direct application of the same fixed-ensemble packing theorem. Both applications follow from explicit one-shot formulas, without a second geometric proof.

As a physical application, we recover the coherent-state achievable region of a two-sender multiple access channel based on a beamsplitter~\cite{YenShapiro2005} and extend the same derivation to a $k$-port splitter with $k$ senders. The proof first uses finite optical constellations, whose received states span a finite-dimensional space, and then establishes the Gaussian limit under a uniform-message average photon-number constraint.

\Needspace{9\baselineskip}
\subsection{Organization of the paper}
We begin our development by presenting the case of two senders first---this case already contains three potentially noncommuting tests and every ingredient of the proof, and it is helpful for understanding later developments. After that, we extend the whole development to $k$ senders. Figure~\ref{fig:roadmap} provides a preview of the construction and analysis of the measurement, summarizing how the construction of the measurement is separated from its analysis.

\begin{figure}[!t]
\centering
\begin{adjustbox}{max width=\linewidth}
\begin{tikzpicture}[font=\footnotesize]
\node[panel,text width=2.7cm,minimum height=1.2cm] (states) at (0,1.8) {Tagged states\\$\rho_m\otimes\proj{t_m}$};
\node[panel,text width=3.0cm,minimum height=1.2cm] (pgm) at (5,1.8) {Pretty-good measurement\\of the tagged states};
\node[panel,text width=3.0cm,minimum height=1.2cm] (actual) at (10,1.8) {Actual decoder\\$\Lambda_m=J^\dagger\widetilde\Lambda_mJ$};
\node[panel,fill=softgreen,text width=2.7cm,minimum height=1.2cm] (tests) at (0,-.8) {Separate tests\\$P_{1,m},P_{2,m},P_{12,m}$};
\node[panel,fill=softgreen,text width=3.0cm,minimum height=1.2cm] (gamma) at (5,-.8) {Rejection span\\and tests $\Gamma_m$};
\node[align=left,text width=3cm] at (10,-.8) {Gram matrix:\\correct rejection\\Exact labels:\\false acceptance};
\draw[flow] (states)--(pgm);
\draw[flow] (pgm)--node[above]{compress}(actual);
\draw[flow] (tests)--(gamma);
\draw[->,densely dashed,inkorange,thick] (gamma.north)--node[right]{Cheng's error bound}(pgm.south);
\end{tikzpicture}
\end{adjustbox}
\caption{Construction and analysis have separate roles. The upper row defines the measurement from the tagged states and then removes the labels. The lower row constructs binary tests that certify its performance; the measurement itself is constructed from the states, not from these tests. The dashed arrow denotes an inequality, not a physical operation. Here $J\ket h=\ket h\otimes\ket{00}$.}
\label{fig:roadmap}
\end{figure}

\sref{sec:notation} recalls the binary-testing and PGM tools, while including posterior-sampling intuition and an alternative proof of Cheng's trace inequality. Sections~\ref{sec:two-setup}--\ref{sec:two-coding} state and prove the two-sender theorem. \sref{sec:k} gives the complete extension to a fixed number of senders and collects the general-effect, hypothesis-testing, and nonuniform-prior forms of the packing bound. Sections~\ref{sec:ea} and~\ref{sec:compression} apply it to entanglement-assisted communication and distributed compression, respectively. \sref{sec:bosonic} treats coherent-state optical channels. \sref{sec:relation} compares the construction with prior work and states its scope, and \sref{sec:conclusion} concludes with a summary and future directions. Appendix~\ref{app:quantitative} proves quantitative refinements of the main error bound, including their extension to the application-specific one-shot bounds.

\section{Notation and preliminary facts}\label{sec:notation}
The construction outlined in the introduction has two parts: geometric binary tests certify distinguishability, and one measurement decodes the states. Before developing the geometry, we explain how binary-testing estimates bound that measurement's error. We first fix the operator and information-theoretic conventions, then give the posterior-sampling intuition, prove Cheng's trace inequality, and derive the pretty-good-measurement (PGM) bound used throughout.
\subsection{States, operators, and measurements}
Except in the bosonic application of \sref{sec:bosonic}, we take all Hilbert spaces to be finite dimensional and classical alphabets to be finite. We write $[M]\coloneqq\{1,\ldots,M\}$ for a positive integer $M$. The identity operator is denoted by $I$, with a system subscript when needed. For Hermitian operators $A$ and $B$ on the same space, $A\ge B$ means that $A-B$ is positive semidefinite; equivalently, $\langle h|(A-B)|h\rangle\ge0$ for every vector $\ket h$. We use $A\le B$ to mean $B\ge A$. 

A state is a positive semidefinite operator of trace one. We use system labels as subscripts, and omit them for single-system expressions when the space has been specified. The range and support of an operator are denoted by $\ran$ and $\supp$. The projector onto a subspace $\cW$ is denoted by $\Pi_{\cW}$. For a set of vectors $\mathcal V$, $\spanop\mathcal V$ denotes the set of its finite complex linear combinations. A sum of subspaces denotes their linear span, not their set-theoretic union. In the finite-dimensional case, $\supp H=(\ker H)^\perp=\ran H$ for a positive semidefinite operator~$H$.

An effect is an operator $T$ satisfying $0\le T\le I$. A positive operator-valued measure (POVM) is a tuple $(\Lambda_z)_z$ of positive semidefinite operators satisfying $\sum_z\Lambda_z=I$. The probability of outcome $z$ on a state $\rho$ is given by $\Tr[\Lambda_z\rho]$. Throughout, a \emph{projector} means a self-adjoint idempotent operator $P$ satisfying $P=P^\dagger=P^2$.

For an operator $Z$, the trace norm and operator norm are defined as
\begin{equation}
 \norm{Z}_1\coloneqq\Tr\sqrt{Z^\dagger Z},
 \qquad
 \norm{Z}\coloneqq\sup_{\ket h:\,\langle h|h\rangle=1}\norm{Z\ket h}.
\end{equation}
For a Hilbert-space vector, we write $\norm{\ket h}\coloneqq\sqrt{\langle h|h\rangle}$ to denote the standard Hilbert-space norm. Thus the argument distinguishes the vector norm from the operator norm; the trace norm retains its subscript~$1$. The normalized trace distance is defined as
\begin{equation}\label{eq:trace-distance}
 \Delta(\rho,\sigma)\coloneqq\tfrac12\norm{\rho-\sigma}_1.
\end{equation}
Its measurement interpretation gives, for every effect $T$,
\begin{equation}\label{eq:effect-distance}
 \left|\Tr[T\rho]-\Tr[T\sigma]\right|\le \Delta(\rho,\sigma).
\end{equation}
Indeed, write $Z\coloneqq \rho-\sigma=Z_+-Z_-$ for its positive and negative parts. Since $\Tr Z=0$, both $\Tr Z_+$ and $\Tr Z_-$ equal $\norm Z_1/2$. The inequalities $0\le T\le I$ bound $\Tr[TZ]$ between $-\Tr Z_-$ and $\Tr Z_+$. We will also use
\begin{equation}\label{eq:pure-distance}
 \Delta(\proj u,\proj v)=\sqrt{1-|\langle u|v\rangle|^2},
 \qquad
 \norm{\rho\otimes Z}_1=\norm Z_1
\end{equation}
for normalized vectors $\ket u,\ket v$ and a state $\rho$. The first identity follows by diagonalizing $\proj u-\proj v$ on the span of $\ket u,\ket v$; its nonzero eigenvalues are the two numbers $\pm\sqrt{1-|\langle u|v\rangle|^2}$. The second follows by diagonalizing $\rho$ and summing its eigenvalues.

For a Hermitian matrix $D$ with spectral decomposition $D=\sum_j d_j\proj{e_j}$, its sign is the Hermitian matrix
\begin{equation}\label{eq:matrix-sign}
 \sgn(D)\coloneqq\sum_j\sgn(d_j)\proj{e_j},\qquad
 \sgn(x)\coloneqq\begin{cases}1,&x>0,\\0,&x=0,\\-1,&x<0.\end{cases}
\end{equation}
In particular, $\sgn(D)D=|D|$, where $|D|\coloneqq\sqrt{D^2}$, and $\sgn(D)^2=I-\Pi_{\ker D}$. The value at zero is specified because our alternative proof of Cheng's trace inequality will add the kernel projector to obtain a self-adjoint unitary.

\subsection{Entropy and binary hypothesis testing}
Throughout, we take all logarithms to base two. Given a state $\rho$ and a positive semidefinite operator~$\sigma$, the von Neumann entropy and Umegaki quantum relative entropy~\cite{Umegaki1962} are defined as
\begin{align}
H(\rho)&\coloneqq-\Tr[\rho\log_2\rho],\label{eq:line-2-6-1}\\
D(\rho\Vert\sigma)&\coloneqq\Tr[\rho(\log_2\rho-\log_2\sigma)],
 \quad\text{if }\supp\rho\subseteq\supp\sigma,\label{eq:line-2-6-2}
\end{align}
and $D(\rho\Vert\sigma)\coloneqq+\infty$ otherwise. The logarithms are evaluated on their supports. For a multipartite state $\rho_{ABC}$, we write $H(A)_\rho\coloneqq H(\rho_A)$ and define
\begin{align}
H(A|C)_\rho&\coloneqq H(AC)_\rho-H(C)_\rho,\label{eq:line-2-7-1}\\
I(A;B)_\rho&\coloneqq H(A)_\rho+H(B)_\rho-H(AB)_\rho,\label{eq:line-2-7-2}\\
I(A;B|C)_\rho&\coloneqq H(AC)_\rho+H(BC)_\rho-H(C)_\rho-H(ABC)_\rho.\label{eq:line-2-7-3}
\end{align}
These definitions apply equally to quantum systems and classical registers represented in an orthonormal basis. We refer to~\cite{Wilde2017} for background on these information quantities.

In binary hypothesis testing, an effect $T$ accepts the null hypothesis $\rho$ and rejects the alternative~$\sigma$. Its two error probabilities are defined as
\begin{equation}\label{eq:binary-errors}
 \alpha(T;\rho)\coloneqq\Tr[(I-T)\rho],
 \qquad
 \beta(T;\sigma)\coloneqq\Tr[T\sigma].
\end{equation}
We use the direct part of the quantum Stein lemma~\cite{HiaiPetz1991,OgawaNagaoka2000}: whenever $r<D(\rho\Vert\sigma)$, there is a sequence of effects $U_n$ such that
\begin{equation}\label{eq:stein-standard}
 \alpha(U_n;\rho^{\otimes n})\xrightarrow[n\to\infty]{} 0,
 \qquad
 \beta(U_n;\sigma^{\otimes n})\le2^{-nr}.
\end{equation}
In \lemref{lem:projective-stein} we recall the derivation of the projective threshold tests needed in the main proof from this standard statement.

For states $\rho,\sigma$ and $0\le\eps<1$, define the hypothesis testing relative entropy as
\begin{equation}\label{eq:DH}
 D_H^\eps(\rho\Vert\sigma)
 \coloneqq-\log_2\min\left\{
 \Tr[E\sigma]:0\le E\le I,\ \Tr[E\rho]\ge1-\eps
 \right\}.
\end{equation}
This quantity appears in various one-shot information-processing tasks~\cite{BuscemiDatta2010,BrandaoDatta2011,WangRenner2012}, with closely related smoothing conventions. The minimum exists in the finite-dimensional case by compactness. We use the convention $2^{-\infty}=0$.

\subsection{Posterior sampling and binary testing costs}\label{subsec:binary-intuition}
Before stating Cheng's trace inequality, it is useful to understand its classical meaning. Suppose two hypotheses give prior-weighted probabilities $a_y$ and $b_y$ for an observation $y$. When $a_y+b_y>0$, the posterior probabilities are $a_y/(a_y+b_y)$ and $b_y/(a_y+b_y)$. A posterior-sampling rule draws its answer from these probabilities, rather than always choosing the larger one. This is the classical form of the pretty-good measurement~\cite{Renes2017}.

The weighted probability that the first hypothesis is true but the second is selected is
\begin{equation}\label{eq:posterior-directional}
 q_{\mathrm{cl}}\coloneqq\sum_{y:a_y+b_y>0}\frac{a_yb_y}{a_y+b_y}.
\end{equation}
The opposite direction gives the same expression. In contrast, the optimal binary rule chooses the larger posterior and has total error $\sum_y\min\{a_y,b_y\}$. For every pair of nonnegative numbers,
\begin{equation}\label{eq:scalar-parallel-bound}
 \frac{ab}{a+b}\le\min\{a,b\},
\end{equation}
with the fraction defined to be zero when $a=b=0$. Indeed, the fraction is at most $a$ because $b\le a+b$, and at most $b$ because $a\le a+b$. Thus \emph{one directional error} of posterior sampling is no larger than the optimal \emph{total} binary error. The total posterior-sampling error is twice that directional contribution: the two directional errors are equal, and their sum is $2q_{\mathrm{cl}}$, not $q_{\mathrm{cl}}$. Thus the comparison does not assert that posterior sampling improves on the optimal decision rule. Figure~\ref{fig:posterior-sampling} illustrates this distinction.

\begin{figure}[!t]
\centering
\begin{adjustbox}{max width=\linewidth}
\begin{tikzpicture}[font=\small]
\begin{axis}[
 width=8.1cm,height=5.4cm,xmin=0,xmax=1,ymin=0,ymax=.54,
 axis lines=left,xlabel={posterior probability $u$ of the first hypothesis},
 ylabel={conditional error contribution},
 xtick={0,.25,.5,.75,1},ytick={0,.25,.5},
 tick label style={font=\footnotesize},label style={font=\footnotesize},
 legend style={draw=none,font=\footnotesize,at={(.5,1.10)},anchor=south},
 domain=0:1,samples=121]
 \addplot[inkorange,thick,densely dashed] {min(x,1-x)};
 \addlegendentry{optimal total error: $\min\{u,1-u\}$}
 \addplot[inkblue,very thick] {x*(1-x)};
 \addlegendentry{one posterior-sampling direction: $u(1-u)$}
\end{axis}
\node[align=left,anchor=west,text width=5.0cm] at (8.45,3.45)
 {The inequality $u(1-u)\le\min\{u,1-u\}$ is the classical comparison behind Cheng's trace bound.};
\node[align=left,anchor=west,text width=5.0cm] at (8.45,1.30)
 {Cheng's trace inequality retains the one-directional comparison when the signal operators do not commute.};
\end{tikzpicture}
\end{adjustbox}
\caption{The scalar comparison behind the trace inequality. For an observation of total weight $a+b$, set $u\coloneqq a/(a+b)$. Dividing its contribution $ab/(a+b)$ by $a+b$ gives $u(1-u)$. The plot compares this single directional contribution with the optimal total binary error.}
\label{fig:posterior-sampling}
\end{figure}

We now pass from classical hypotheses to quantum hypotheses by replacing the prior-weighted likelihoods with positive semidefinite operators. If $A\coloneqq\sum_y a_y\proj y$ and $B\coloneqq\sum_y b_y\proj y$ are diagonal in a common basis, the quantum testing problem reduces to the classical decision problem above. Without the commutation assumption, no common observation basis is available. Cheng's trace inequality will nevertheless retain the comparison between one directional PGM error and the optimal total binary error. To formulate that comparison, for positive semidefinite operators $A$ and $B$ on the same finite-dimensional space, define the optimal binary testing cost as
\begin{equation}\label{eq:binary-cost}
 \mathsf e(A,B)\coloneqq\min\left\{
 \Tr[(I-T)A]+\Tr[TB]:0\le T\le I\right\}.
\end{equation}
For the diagonal operators above, this cost equals $\sum_y\min\{a_y,b_y\}$, recovering the optimal classical binary error. When $\Tr(A+B)=1$, the traces of $A$ and $B$ are the prior probabilities, and this cost is an error probability. We also allow arbitrary positive semidefinite operators, because in a decoding problem $B$ will be a sum of competing states. The Holevo--Helstrom formula~\cite{Helstrom1967,Holevo1973} gives
\begin{equation}\label{eq:helstrom-cost}
 \mathsf e(A,B)=\frac{\Tr(A+B)-\norm{A-B}_1}{2}.
\end{equation}
For completeness, set $D\coloneqq A-B$ and write the objective as $\Tr A-\Tr[TD]$. In an eigenbasis of $D$, the diagonal entries of $T$ lie in $[0,1]$. Their maximizing choice accepts the positive eigenvalues and rejects the negative ones, so the maximum is $\Tr D_+$. The identity $2\Tr D_+=\norm D_1+\Tr D$ gives \eqref{eq:helstrom-cost}. This also proves that an optimizing projector exists. The proof below relates the same trace-norm distinguishability to the overlap appearing in the quantum pretty-good measurement.

\subsection{Cheng's trace inequality and its binary meaning}\label{subsec:cheng}
Cheng's analysis replaces the scalar fraction in \eqref{eq:scalar-parallel-bound} by a trace involving positive semidefinite operators that need not commute~\cite[Appendix~A]{Cheng2023}. We give an alternative proof below.

\begin{lemma}[Cheng's trace inequality]\label{lem:cheng}
Let $A,B\ge0$ act on the same finite-dimensional Hilbert space, set $Z\coloneqq A+B$, and define $e\coloneqq \mathsf e(A,B)$ by \eqref{eq:binary-cost}. Inverses vanish on $\ker Z$. If $s\coloneqq \Tr Z>0$, then
\begin{equation}\label{eq:cheng-trace}
 \Tr[AZ^{-1/2}BZ^{-1/2}]
 \le e\left(1-\frac es\right)\le e.
\end{equation}
If $Z=0$, then the desired inequality trivially holds because the trace and the testing cost are both equal to zero. In either case, every effect $T$ satisfies
\begin{equation}\label{eq:cheng-test}
 \Tr[AZ^{-1/2}BZ^{-1/2}]
 \le\Tr[(I-T)A]+\Tr[TB].
\end{equation}
\end{lemma}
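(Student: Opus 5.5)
The plan is to reduce everything to the support of $Z$ and express the trace through a single Hermitian contraction that encodes $A-B$. Let $\Pi$ be the projector onto $\supp Z$; since $0\le A,B\le Z$, both $A$ and $B$ are supported in $\supp Z$. Define $X\coloneqq Z^{-1/2}AZ^{-1/2}$ and $W\coloneqq 2X-\Pi$. Then $Z^{-1/2}BZ^{-1/2}=\Pi-X=\tfrac12(\Pi-W)$, and $-\Pi\le W\le\Pi$. Moreover, $D\coloneqq A-B=Z^{1/2}WZ^{1/2}$, $A=\tfrac12 Z^{1/2}(\Pi+W)Z^{1/2}$ and $B=\tfrac12Z^{1/2}(\Pi-W)Z^{1/2}$. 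Substituting these into $Q\coloneqq\Tr[AZ^{-1/2}BZ^{-1/2}]$ and using cyclicity together with $Z^{1/2}\Pi=Z^{1/2}$ should give
\begin{equation*}
 Q=\tfrac14\Tr\bigl[Z^{1/2}(\Pi+W)Z^{1/2}(\Pi-W)\bigr]=\tfrac14\bigl(s-\Tr[Z^{1/2}WZ^{1/2}W]\bigr).
\end{equation*}
The cross terms $\pm\Tr[ZW]$ cancel here.

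The key step, which I expect to be the main obstacle, is the lower bound
\begin{equation*}
 \Tr[Z^{1/2}WZ^{1/2}W]\ge\norm{D}_1^2/s.
\end{equation*}
For this I will use the self-adjoint unitary $U\coloneqq\sgn(D)+\Pi_{\ker D}$ from \eqref{eq:matrix-sign}, so that $\norm D_1=\Tr[UD]=\Tr[(Z^{1/4}UZ^{1/4})(Z^{1/4}WZ^{1/4})]$. The Hilbert--Schmidt Cauchy--Schwarz inequality bounds this by $\bigl(\Tr[Z^{1/2}UZ^{1/2}U]\bigr)^{1/2}\bigl(\Tr[Z^{1/2}WZ^{1/2}W]\bigr)^{1/2}$. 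A second Cauchy--Schwarz step gives $\Tr[Z^{1/2}UZ^{1/2}U]\le\norm{Z^{1/2}U}_2\norm{UZ^{1/2}}_2=\Tr Z=s$, because $U^2=I$. Squaring then yields the claimed lower bound. Note that both traces are nonnegative, since each is the squared Hilbert--Schmidt norm of a Hermitian operator.

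To conclude, I combine the two displays: $Q\le\frac{s^2-\norm D_1^2}{4s}=\frac{(s-\norm D_1)(s+\norm D_1)}{4s}$. The Holevo--Helstrom formula \eqref{eq:helstrom-cost} gives $e=(s-\norm D_1)/2$ and $s-e=(s+\norm D_1)/2$, so $Q\le e(s-e)/s=e(1-e/s)\le e$, using $e\ge0$. The case $Z=0$ is trivial, as the statement already notes. Finally, \eqref{eq:cheng-test} follows immediately because $e$ is the minimum in \eqref{eq:binary-cost}, so $e\le\Tr[(I-T)A]+\Tr[TB]$ for every effect $T$.
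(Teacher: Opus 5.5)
Your proof is correct and follows essentially the same route as the paper. Your operator $Z^{1/4}WZ^{1/4}$ is exactly the paper's $X=Z^{-1/4}DZ^{-1/4}$, so your identity $Q=\tfrac14\bigl(s-\Tr[Z^{1/2}WZ^{1/2}W]\bigr)$ is the paper's $\Tr[X^2]=s-4q$, and the sign-unitary pairing, the two Cauchy--Schwarz steps, and the Holevo--Helstrom conclusion are the paper's argument; the remaining differences (writing $W=2Z^{-1/2}AZ^{-1/2}-\Pi$, and taking $\Pi_{\ker D}$ on the whole space rather than within $\supp Z$) are cosmetic and harmless.
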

\begin{proof}
Define $q\coloneqq\Tr[AZ^{-1/2}BZ^{-1/2}]$.
Assume $s>0$ and work on $\supp Z$, where $Z$ is invertible. This suffices, given that positivity and $A,B\le Z$ imply that both operators vanish on $\ker Z$. Set
\begin{equation}\label{eq:cheng-difference}
 D\coloneqq A-B,\qquad X\coloneqq Z^{-1/4}DZ^{-1/4}.
\end{equation}
The matrix $X$ is Hermitian.  Expanding $D=A-B$ and then expanding $Z=A+B$ therefore gives
\begin{align}
\Tr[X^2]
 &=\Tr[DZ^{-1/2}DZ^{-1/2}]\label{eq:cheng-expansion}\\
&=\Tr[AZ^{-1/2}AZ^{-1/2}]+\Tr[BZ^{-1/2}BZ^{-1/2}]-2q\label{eq:line-2-18-2}\\
&=\Tr[ZZ^{-1/2}ZZ^{-1/2}]-4q\label{eq:line-2-18-3}\\
&=s-4q.\label{eq:line-2-18-4}
\end{align}
The second and third equalities follow because the two cross terms in the expansion have equal traces by cyclicity.
Hence an upper bound on $q$ follows from a lower bound on $\Tr[X^2]$. The desired lower bound, as an intermediate step, is
\begin{equation}\label{eq:cheng-goal}
 \Tr[X^2]\ge\frac{\norm D_1^2}{s}.
\end{equation}
We now prove it. With the sign convention in \eqref{eq:matrix-sign}, define
\begin{equation}\label{eq:cheng-sign-unitary}
 W\coloneqq\sgn(D)+\Pi_{\ker D},\qquad Y\coloneqq Z^{1/4}WZ^{1/4}.
\end{equation}
Here the kernel is taken within $\supp Z$. On a positive, negative, or zero eigenspace of $D$, the operator $W$ has eigenvalue $+1$, $-1$, or $+1$, respectively. Thus
\begin{equation}\label{eq:cheng-sign-identities}
 W=W^\dagger,\qquad W^2=I,\qquad WD=|D|.
\end{equation}
The extra kernel projector makes $W$ unitary without changing $WD$. The placement of the powers of $Z$ in $X$ and $Y$ is chosen so that they cancel inside the following trace:
\begin{equation}\label{eq:cheng-pairing}
 \Tr[XY]=\Tr[Z^{-1/4}DWZ^{1/4}]=\Tr[DW]=\norm D_1.
\end{equation}
For matrices $R$ and $S$, the Cauchy--Schwarz inequality is as follows:
\begin{equation}\label{eq:trace-CS}
 |\Tr[R^\dagger S]|^2\le\Tr[R^\dagger R]\Tr[S^\dagger S].
\end{equation}
Applying it to the Hermitian matrices $X$ and $Y$ yields
\begin{equation}\label{eq:cheng-CS}
 \norm D_1^2\le\Tr[X^2]\Tr[Y^2].
\end{equation}
The remaining factor $\Tr[Y^2]$ can be bounded using the Cauchy--Schwarz inequality again, now with $R=Z^{1/2}$ and $S=WZ^{1/2}W$:
\begin{align}
\Tr[Y^2]
 &=\Tr[Z^{1/2}WZ^{1/2}W]\label{eq:cheng-Y-bound}\\
&\le\sqrt{\Tr Z\;\Tr[(WZ^{1/2}W)^2]}\label{eq:line-2-25-2}\\
&=\sqrt{\Tr Z\;\Tr[WZW]}\label{eq:line-2-25-3}\\
&=\Tr Z=s.\label{eq:line-2-25-4}
\end{align}
We used $W^2=I$ in the third line and cyclicity in the fourth. Combining \eqref{eq:cheng-CS} and \eqref{eq:cheng-Y-bound}--\eqref{eq:line-2-25-4} proves \eqref{eq:cheng-goal}. Substitution in \eqref{eq:cheng-expansion}--\eqref{eq:line-2-18-4}, followed by \eqref{eq:helstrom-cost}, now completes the argument:
\begin{equation}\label{eq:cheng-stronger}
 q\le\frac{s^2-\norm D_1^2}{4s}
 =\frac{(s-\norm D_1)(s+\norm D_1)}{4s}
 =e\left(1-\frac es\right)\le e.
\end{equation}
Indeed, $0\le e\le s/2$: the testing choices $T=0$ and $T=I$ give $e\le\min\{\Tr A,\Tr B\}\le s/2$. Evaluating \eqref{eq:binary-cost} at the specified effect $T$ proves \eqref{eq:cheng-test}. The case $s=0$ is immediate from positivity.
\end{proof}

\subsubsection{Why this is a one-directional error bound}
The trace inequality is now established, but its operational meaning depends on which error is being counted. We return to the two-hypothesis measurement to identify that error and to relate the bound to Renes's earlier result in~\cite[Eq.~(12)]{Renes2017}.

For this operational interpretation, assume $s=\Tr(A+B)=1$. Thus $A$ and $B$ already include prior probabilities whose sum is one, and the unnormalized cost $e=\mathsf e(A,B)$ is now an error probability. The binary PGM has effects $Z^{-1/2}AZ^{-1/2}$ and $Z^{-1/2}BZ^{-1/2}$ on $\supp Z$. The first hypothesis being true and the second being selected contributes $q$ to its error. The other contribution is
\begin{equation}\label{eq:equal-directional-errors}
 \Tr[BZ^{-1/2}AZ^{-1/2}]
 =\Tr[AZ^{-1/2}BZ^{-1/2}]=q.
\end{equation}
Its total error is $2q$, not $q$. Substituting $s=1$ into \eqref{eq:cheng-stronger} gives $q\le e(1-e)$, and therefore
\begin{equation}\label{eq:renes-binary}
 P_{\mathrm{succ}}^{\mathrm{PGM}}=1-2q
 \ge1-2e(1-e)=(1-e)^2+e^2.
\end{equation}
The parameter $s$ is absent from \eqref{eq:renes-binary} only because of this normalization. For arbitrary $s>0$, replacing $A$ and $B$ by $A/s$ and $B/s$ leaves the PGM effects unchanged and replaces $q$ and $e$ by $q/s$ and $e/s$, respectively. The corresponding statement is $1-2q/s\ge(1-e/s)^2+(e/s)^2$.

Equation~\eqref{eq:renes-binary} is the binary specialization of Renes's near-optimality bound~\cite[Eq.~(12)]{Renes2017}. Conversely, the normalization just described recovers the stronger inequality in \lemref{lem:cheng}. Cheng explicitly builds on Renes's quantum bound and the classical bounds of Sason and Verd{\'u}~\cite{SasonVerdu2018,Cheng2023}; related near-optimality results originate in Barnum and Knill~\cite{BarnumKnill2002}. Thus the result does not assert that the PGM outperforms optimal binary discrimination. It bounds one directional contribution by the optimal total cost, just as in the scalar example mentioned in Section~\ref{subsec:binary-intuition}.

\subsection{The state-based pretty-good measurement and a union-type bound}\label{subsec:state-pgm}
We next pass from two hypotheses to a whole ensemble. This step will explain why the trace inequality is useful for decoding. Let $\tau_1,\ldots,\tau_M$ be states, and let $p_m\ge0$ be probabilities satisfying $\sum_m p_m=1$. Define
\begin{equation}\label{eq:weighted-pgm-definition}
 A_m\coloneqq p_m\tau_m,\quad Z\coloneqq\sum_m A_m,\quad
 \Lambda_m\coloneqq Z^{-1/2}A_mZ^{-1/2},\quad
 \Lambda_{\mathrm f}\coloneqq I-\Pi_Z.
\end{equation}
All effects are positive semidefinite, and their sum is
\begin{equation}\label{eq:state-pgm-complete}
 \sum_m\Lambda_m+\Lambda_{\mathrm f}
 =Z^{-1/2}ZZ^{-1/2}+I-\Pi_Z=I.
\end{equation}
The failure effect has zero probability on every $\tau_m$ with $p_m>0$. For a uniform prior, the common factor $1/M$ cancels in the PGM formula. The following proposition turns individually chosen binary-testing estimates into an error bound for this one state-based measurement; the tests need not themselves form a POVM.

\begin{proposition}[Binary tests bound the error of the state-based PGM]\label{prop:pgm-tests}
Let $(p_m,\tau_m)_{m=1}^M$ be a finite ensemble on a finite-dimensional space and let $0\le T_m\le I$ be arbitrary effects. The measurement in \eqref{eq:weighted-pgm-definition}, including its failure effect, has its error probability bounded as
\begin{equation}\label{eq:weighted-pgm-error}
 P_{\mathrm e}^{\mathrm{PGM}}\coloneqq1-\sum_m p_m\Tr[\Lambda_m\tau_m]
 \le\sum_m p_m\Tr[(I-T_m)\tau_m]
 +\sum_m p_m\sum_{\ell\ne m}\Tr[T_\ell\tau_m].
\end{equation}
In particular, for the uniform prior $p_m=1/M$,
\begin{equation}\label{eq:pgm-error}
 1-\frac1M\sum_m\Tr[\Lambda_m\tau_m]
 \le\frac1M\sum_m\Tr[(I-T_m)\tau_m]
 +\frac1M\sum_m\sum_{\ell\ne m}\Tr[T_\ell\tau_m].
\end{equation}
The tests $T_m$ need not commute or sum to the identity.
\end{proposition}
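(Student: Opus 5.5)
The plan is to decompose the PGM error message by message and apply \lemref{lem:cheng} once for each $m$. Fix $m$ and set $A\coloneqq A_m=p_m\tau_m$, $B\coloneqq\sum_{\ell\ne m}A_\ell$, so that $A+B=Z$. The failure effect $\Lambda_{\mathrm f}=I-\Pi_Z$ gives zero probability on $\tau_m$ whenever $p_m>0$, because $A_m\le Z$ forces $\supp\tau_m\subseteq\supp Z$. Terms with $p_m=0$ contribute nothing to either side. Hence
\begin{equation*}
 p_m-p_m\Tr[\Lambda_m\tau_m]=\sum_{\ell\ne m}\Tr[\Lambda_\ell A_m]
 =\Tr[A_mZ^{-1/2}BZ^{-1/2}],
\end{equation*}
using cyclicity of the trace and the linearity of $\ell\mapsto\Lambda_\ell$. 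This is exactly the quantity $q$ in \lemref{lem:cheng} for the pair $(A,B)$ with $Z=A+B$.

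Next, I would invoke \eqref{eq:cheng-test} with the effect $T\coloneqq T_m$, which gives
\begin{equation*}
 \Tr[A_mZ^{-1/2}BZ^{-1/2}]\le p_m\Tr[(I-T_m)\tau_m]+\sum_{\ell\ne m}p_\ell\Tr[T_m\tau_\ell].
\end{equation*}
Summing over $m$ and using $\sum_mp_m=1$ makes the left side $P_{\mathrm e}^{\mathrm{PGM}}$. The first terms on the right give $\sum_mp_m\Tr[(I-T_m)\tau_m]$. The second terms form the double sum $\sum_m\sum_{\ell\ne m}p_\ell\Tr[T_m\tau_\ell]$. Relabelling $m\leftrightarrow\ell$ turns this into $\sum_mp_m\sum_{\ell\ne m}\Tr[T_\ell\tau_m]$, which is \eqref{eq:weighted-pgm-error}. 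Setting $p_m=1/M$ then gives \eqref{eq:pgm-error}.

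The only delicate point is the bookkeeping at the first step: the failure outcome and the possibly non-invertible $Z$ must be handled correctly. Inverses are taken on $\supp Z$, and \lemref{lem:cheng} already covers that convention and the degenerate case $Z=0$. The tests enter only through \eqref{eq:cheng-test}, which holds for an arbitrary single effect. So the tests $T_m$ need not commute or form a POVM.
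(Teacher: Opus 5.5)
Your proposal is correct and is essentially the paper's proof: fix $m$, note that $\tau_m$ is supported on $\supp Z$, apply \lemref{lem:cheng} to $A_m$ and $B_m=\sum_{\ell\ne m}A_\ell$ with $T=T_m$, then sum over $m$ and interchange the indices in the double sum. Your use of POVM completeness to rewrite the left side is equivalent to the paper's use of $\Pi_Z=Z^{-1/2}ZZ^{-1/2}$. For $p_m=0$ the per-$m$ inequality holds trivially as $0\le$ a nonnegative quantity, so keeping or dropping those terms loses nothing.
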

\begin{proof}
Use $A_m$ and $Z$ from \eqref{eq:weighted-pgm-definition}. Fix $m$ with $p_m>0$ and define $B_m\coloneqq\sum_{\ell\ne m}A_\ell$ as the sum of the other prior-weighted states. Since $Z=A_m+B_m\ge A_m$, the state $\tau_m$ is supported on $\supp Z$. Applying \lemref{lem:cheng} to $A_m$ and $B_m$, we obtain
\begin{align}
p_m\bigl(1-\Tr[\Lambda_m\tau_m]\bigr)
 &=\Tr[A_m(\Pi_Z-Z^{-1/2}A_mZ^{-1/2})]\label{eq:pgm-one-versus-rest}\\
&=\Tr[A_mZ^{-1/2}B_mZ^{-1/2}]\label{eq:line-2-33-2}\\
&\le\Tr[(I-T_m)A_m]+\Tr[T_mB_m]\label{eq:line-2-33-3}\\
&=p_m\Tr[(I-T_m)\tau_m]+
        \sum_{\ell\ne m}p_\ell\Tr[T_m\tau_\ell].\label{eq:line-2-33-4}
\end{align}
When $p_m=0$, the left side and the trace expression are zero and the last inequality still holds by positivity. Sum over $m$ and relabel the two finite summation indices:
\begin{equation}\label{eq:pgm-index-swap}
 \sum_m\sum_{\ell\ne m}p_\ell\Tr[T_m\tau_\ell]
 =\sum_m p_m\sum_{\ell\ne m}\Tr[T_\ell\tau_m].
\end{equation}
This proves \eqref{eq:weighted-pgm-error}; choosing $p_m=1/M$ proves \eqref{eq:pgm-error}.
\end{proof}

The bound has the familiar classical form: reject the true hypothesis, or accept one of the alternatives. However, it is not a probability of a union of jointly measurable events. Cheng's trace inequality bounds the error of the state-based PGM in terms of the rejection and false-acceptance probabilities of arbitrary binary tests. Summing these bounds yields the displayed decoding guarantee even when the tests are mutually incompatible. Nor does the inequality construct one test that preserves all separately optimized false-acceptance bounds. For the multiple-access application, Sen's geometry supplies precisely those suitable analysis tests on the enlarged space; the state-based PGM then converts their estimates into a decoding guarantee. In the main channel-coding proof the prior is uniform; the weighted form is reserved for the compression application in Section~\ref{sec:compression}.

\section{Two senders: the problem and the obstruction}\label{sec:two-setup}
We now identify the three tests that a two-sender decoder must accommodate for decoding information transmitted over multiple-access channels. After stating the coding problem, we explain why the classical strategy of intersecting acceptance events does not directly extend to noncommuting quantum projectors. The final example isolates the obstacle that the auxiliary labels will resolve.
\subsection{Channel, codebooks, and rate region}
A two-sender classical--quantum multiple-access channel is specified by states
\begin{equation}\label{eq:two-channel}
 \cN\colon(x_1,x_2)\longmapsto\rho_{x_1,x_2},
 \qquad (x_1,x_2)\in\cX_1\times\cX_2,
\end{equation}
on an output system $B$, where $\cX_1$ is an alphabet for Sender~1 and $\cX_2$ is an alphabet for Sender~2. Fix distributions $p_1,p_2$ and their product $p(x_1,x_2)\coloneqq p_1(x_1)p_2(x_2)$. The corresponding classical--quantum state is
\begin{equation}\label{eq:two-omega}
 \omega_{X_1X_2B}
 \coloneqq\sum_{x_1,x_2}p_1(x_1)p_2(x_2)
   \proj{x_1}_{X_1}\otimes\proj{x_2}_{X_2}\otimes\rho_{x_1,x_2}.
\end{equation}
The input registers in this state are mathematical records of the classical inputs. They are not delivered to the receiver.

An $n$-use code consists of two codebooks $x_i^n(m_i)$, $m_i\in[M_i]$, and a POVM $(\Lambda_{m_1,m_2})_{m_1,m_2}$ on $B^n$. For a message pair $m=(m_1,m_2)$, the output is
\begin{equation}\label{eq:memoryless-two}
 \rho_m^{(n)}
 \coloneqq\bigotimes_{t=1}^n\rho_{x_{1,t}(m_1),x_{2,t}(m_2)}.
\end{equation}
The messages are chosen independently and uniformly at random. Let $\cC$ denote the collection of codebooks and $\Lambda$ the decoding POVM. With a failure outcome counted as an error, the average error is
\begin{equation}\label{eq:code-error}
 \Pe(\cC,\Lambda)
 \coloneqq1-\frac{1}{M_1M_2}\sum_{m_1,m_2}
      \Tr[\Lambda_{m_1,m_2}\rho_m^{(n)}].
\end{equation}
A rate pair $(R_1,R_2)$ is achievable if a sequence of such codes has $\Pe\to0$ and, for each $i\in\{1,2\}$, satisfies $\liminf_{n\to\infty}n^{-1}\log_2 M_i\ge R_i$.

To prove that such codes exist, we use random coding and then select a good realization. Specifically, we generate the codebooks by sampling every $x_i^n(m_i)$ independently according to $p_i^{\otimes n}$. The decoder may depend on the realized codebooks. Codebook randomness is used to prove existence; it need not remain as a shared resource when the code is used.

\begin{theorem}[Two-sender simultaneous decoding; Sen~\cite{Sen2021Joint}]\label{thm:two-capacity}
Let $(x_1,x_2)\mapsto\rho_{x_1,x_2}$ be a channel with finite classical input alphabets and a finite-dimensional quantum output $B$, as in \eqref{eq:two-channel}. Fix a product input distribution $p_1p_2$ and its input--output state $\omega_{X_1X_2B}$ in \eqref{eq:two-omega}. With average error defined by \eqref{eq:code-error}, every nonnegative rate pair satisfying
\begin{align}
R_1&\le I(X_1;B|X_2)_\omega,\label{eq:two-region}\\
R_2&\le I(X_2;B|X_1)_\omega,\label{eq:line-3-5-2}\\
R_1+R_2&\le I(X_1X_2;B)_\omega\label{eq:line-3-5-3}
\end{align}
is achievable using unsplit codebooks with every Sender~$i$ codeword sampled independently from $p_i^{\otimes n}$ and one joint decoding POVM. For positive rates satisfying all three inequalities strictly, one may take $M_i=\lfloor2^{nR_i}\rfloor$, with expected average error tending to zero. Boundary rates are obtained by a limiting choice of code sizes, without time sharing or rate splitting. Zero-rate senders are included.
\end{theorem}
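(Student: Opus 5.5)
The plan is to reduce the coding theorem to a deterministic packing bound for the fixed ensemble $(\rho_m^{(n)})_{m\in[M_1]\times[M_2]}$. The bound is obtained from Proposition~\ref{prop:pgm-tests} applied to tagged states, followed by removal of the labels.

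\emph{Step 1: tagged states.} Attach to each message $m_i$ a weak label $\ket{t_{m_i}}=\sqrt{1-\delta}\ket0+\sqrt{\delta}\ket{m_i}$ on an auxiliary space $L_i$. Form the tagged states $\rho_m^{(n)}\otimes\proj{t_{m_1}}\otimes\proj{t_{m_2}}$. The actual decoder is the state-based PGM of these tagged states, compressed by $J\ket h=\ket h\otimes\ket{00}$, i.e.\ $\Lambda_m=J^\dagger\widetilde\Lambda_mJ$. Its error differs from the tagged error by at most a label-overlap term, which follows from \eqref{eq:effect-distance} and \eqref{eq:pure-distance}. This term vanishes as $\delta\to0$, with $\delta$ independent of $n$ and of the message sizes.

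\emph{Step 2: the key packing estimate.} Fix separately chosen projective binary tests $P_{1,m},P_{2,m},P_{12,m}$, one for each error pattern. These come from Lemma~\ref{lem:projective-stein} for the pairs $\rho_{x_1x_2}^{\otimes}$ versus $\rho_{x_2}^{\otimes}$, $\rho_{x_1}^{\otimes}$, and $\rho^{\otimes}$ (the relevant marginals averaged over the other input), applied conditioned on the codewords. From them, build a tilted rejection projector $\Gamma_m$ on the extended space. Each $I-P_{S,m}$ is tilted along the label direction that retains the full weak label for the correct senders and only the reference component $\ket0$ for the senders in $S$. Let $\Gamma_m$ be $I$ minus the projector onto the sum of these tilted rejected subspaces. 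I expect this to be the main obstacle: proving that the correct tagged state is rejected with probability only $C(\delta)\sum_S\Tr[(I-P_{S,m})\rho_m]$. Because the three error-pattern label vectors are linearly independent, the Gram matrix of the tilted rejected subspaces has smallest eigenvalue bounded below by a function of $\delta$. I will use this to bound the projector onto their sum by a constant times the sum of the individual projectors. Making that constant explicit and dimension-free is the delicate part. False acceptance is easier. For a competitor differing in pattern $S$, the exact labels make its tagged state orthogonal to the weak components on senders in $S$. Hence $\Tr[\Gamma_m\tilde\rho_{m'}]$ is bounded by $\Tr[P_{S,m}\rho_{m'}]$ plus small terms. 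Proposition~\ref{prop:pgm-tests} then gives
\[
\Pe\le C(\delta)\,\frac1{M_1M_2}\sum_m\sum_S\Tr[(I-P_{S,m})\rho_m]+\frac1{M_1M_2}\sum_m\sum_{m'\ne m}\sum_S\Tr[P_{S,m}\rho_{m'}]+O(\sqrt\delta).
\]

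\emph{Step 3: averaging over codebooks.} Take expectations over the independent codebooks. For a competitor of pattern $S$, the codewords of the senders in $S$ are independent of the true ones. The expected false-acceptance term is therefore $\le M_S2^{-nr_S}$, with $r_S$ just below $I(X_S;B|X_{S^c})_\omega$, which is the relative entropy between the conditional state and its $X_S$-marginalized version. The rejection terms tend to zero by Stein's lemma. With $M_i=\lfloor2^{nR_i}\rfloor$ and strict inequalities, everything vanishes: first let $n\to\infty$, then let $\delta\to0$. A good realization then exists. Boundary and zero-rate cases follow from a diagonal choice of code sizes, for instance $M_i=\lfloor2^{n(R_i-\eta_n)}\rfloor$ with $\eta_n\to0$ slowly. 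For zero-rate senders the corresponding patterns are vacuous.
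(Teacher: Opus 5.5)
Your outline follows the paper's architecture: weak labels, a tilted rejected span controlled by a Gram bound, the PGM of the tagged states, one label-removal step, Stein tests, and diagonal limits. The step that fails is false acceptance, because of how you define $\Gamma_m=I-\Pi_{\cW_m}$.

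Every tilt direction lies in $\spanop\{\ket0,\ket{m_1}\}\otimes\spanop\{\ket0,\ket{m_2}\}$. A competitor $m'$ with error pattern $S$ carries the label components $\sqrt\delta\ket{m'_i}$, $i\in S$, which are orthogonal to this space and hence to $\cW_m$. Your $\Gamma_m$ accepts these components with certainty. So $\Tr[\Gamma_m\widetilde\rho_{m'}]\ge1-(1-\delta)^{|S|}\ge\delta$ for every competitor, whatever $P_{S,m}$ is. Your \emph{small terms} are therefore at least $\delta$ each. Summed over the roughly $2^{n(R_1+R_2)}$ competitors at fixed $\delta$, they are exponentially large. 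Making $\delta$ exponentially small does not help, because $C(\delta)$ then becomes exponentially large. The displayed bound with only an $O(\sqrt\delta)$ correction does not follow.

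The missing ingredient is the candidate-dependent projection $K_m\coloneqq(\proj0+\proj{m_1})\otimes(\proj0+\proj{m_2})$ and the test $\Gamma_m\coloneqq I_B\otimes K_m-\Pi_{\cW_m}$. This is a projector because $\cW_m\subseteq\cH_B\otimes\ran K_m$. It costs nothing for the correct message, since $K_m$ fixes its label. For a competitor it gives the exact identity $K_m(\ket{t_{m'_1}}\otimes\ket{t_{m'_2}})=(1-\delta)^{|S|/2}\ket{w_{m,S}}$, where $\ket{w_{m,S}}$ is your tilt direction. Compressing $\Gamma_m$ by $\ket h\mapsto\ket h\otimes\ket{w_{m,S}}$ gives an effect that annihilates $\ran(I-P_{S,m})$, since $\Gamma_m$ kills $\cW_m$. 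That effect is therefore at most $P_{S,m}$, so $\Tr[\Gamma_m\widetilde\rho_{m'}]\le(1-\delta)^{|S|}\Tr[P_{S,m}\rho_{m'}]$ with no additive term. The only $\delta$-cost is then the single label-removal comparison per transmitted message.

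Two smaller corrections:
\begin{itemize}
\item The false-acceptance sum must use only the pattern $S=S(m,m')$ of each competitor. Cross terms such as $\Tr[P_{12,m}\rho_{m'}]$ for a pattern-$1$ competitor do not average to anything small.
\item The Stein tests should be taken for $\omega^{\otimes n}$ versus $(\omega_{X_S}\otimes\omega_{X_{S^c}B})^{\otimes n}$ and read off in classical blocks. The conditional states for a fixed codeword pair are not i.i.d., so the lemma does not apply to them directly.
\end{itemize}
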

The region is illustrated in \fref{fig:region}. The two-sender result was first established by Fawzi \textit{et al.}~\cite{FawziEtAl2012}; here we follow Sen's geometric framework. The theorem is an achievability statement with a particular random-codebook construction; the capacity characterization and its converse are already established in~\cite{Winter2001}.

\begin{figure}[!t]
\centering
\begin{adjustbox}{max width=\linewidth}
\begin{tikzpicture}[font=\small,x=1.25cm,y=1.05cm]
\fill[softblue] (0,0)--(3.5,0)--(3.5,1)--(1.5,3)--(0,3)--cycle;
\draw[inkblue,thick] (0,3)--(1.5,3)--(3.5,1)--(3.5,0);
\draw[->] (0,0)--(4.15,0) node[right] {$R_1$};
\draw[->] (0,0)--(0,3.6) node[above] {$R_2$};
\draw[densely dashed,gray] (1.5,3)--(1.5,0);
\draw[densely dashed,gray] (3.5,1)--(0,1);
\fill[inkgreen] (1.5,3) circle (2pt) node[above left,xshift=-2pt,yshift=2pt] {$v_1$};
\fill[inkgreen] (3.5,1) circle (2pt) node[right,xshift=5pt,yshift=-2pt] {$v_2$};
\node at (1.65,1.25) {achievable};
\node[below] at (3.5,0) {$I(X_1;B|X_2)_\omega$};
\node[anchor=east,align=right] at (-.12,3) {$I(X_2;B|X_1)_\omega$};
\node[rotate=-40,anchor=south] at (2.65,2.45) {$R_1+R_2=I(X_1X_2;B)_\omega$};
\end{tikzpicture}
\end{adjustbox}
\caption{A schematic nondegenerate two-sender rate region for one fixed product input distribution. The successive-decoding corners are $v_1=(I(X_1;B)_\omega,I(X_2;B|X_1)_\omega)$ and $v_2=(I(X_1;B|X_2)_\omega,I(X_2;B)_\omega)$. The simultaneous decoder treats every strictly interior rate pair using unsplit, independently generated codebooks, rather than obtaining intermediate points by time sharing. Degenerate regions are included by the boundary argument.}
\label{fig:region}
\end{figure}

\subsection{Three tests for three error events}\label{subsec:two-tests}
For one use of the channel, or for a block treated as a single use, fix a candidate pair $\ell=(\ell_1,\ell_2)$ of messages. There are three ways that the candidate pair $\ell$ can differ from the true pair $m$. We label them by the symbols $1$, $2$, and $12$:
\begin{equation}\label{eq:error-types}
 s(m,\ell)\coloneqq
 \begin{cases}
 1,&m_1\ne\ell_1,\ m_2=\ell_2,\\
 2,&m_1=\ell_1,\ m_2\ne\ell_2,\\
 12,&m_1\ne\ell_1,\ m_2\ne\ell_2.
 \end{cases}
\end{equation}
For each candidate message pair $\ell=(\ell_1,\ell_2)$, we will
choose three acceptance projectors $P_{1,\ell}$, $P_{2,\ell}$,
and $P_{12,\ell}$. Each projector tests the candidate against
a different class of alternative message pairs. The projector
$P_{1,\ell}$ is intended to reject output states associated
with pairs $(m_1,\ell_2)$ for which $m_1\ne\ell_1$:
Sender~2's message agrees with the candidate, but Sender~1's
does not. Similarly, $P_{2,\ell}$ is intended to reject output
states associated with pairs $(\ell_1,m_2)$ for which
$m_2\ne\ell_2$. Finally, $P_{12,\ell}$ is intended to reject
output states associated with pairs $(m_1,m_2)$ for which
both $m_1\ne\ell_1$ and $m_2\ne\ell_2$. All three projectors
should accept the output state of the proposed pair $\rho_\ell$
with high probability.
Thus, for an alternative message pair $m$ of type
$s\in\{1,2,12\}$, the false-acceptance probability of interest
is $\Tr[P_{s,\ell}\rho_m]$, whereas the correct-candidate
acceptance probability is $\Tr[P_{s,\ell}\rho_\ell]$.

Classically, intersecting three acceptance sets preserves the desired false-acceptance bounds, while a union bound controls rejection of the correct message. An exact intersection of quantum acceptance subspaces does not have the corresponding stability property.

\subsection{Why an exact intersection can fail}\label{subsec:intersection-obstruction}
The difficulty in porting the notion of intersection to the quantum case is genuinely noncommutative. For classical acceptance events $E_+$ and $E_-$, the union bound gives
\begin{equation}\label{eq:classical-intersection-probability}
 \Pr(E_+\cap E_-)
 =1-\Pr(E_+^c\cup E_-^c)
 \ge 1-\Pr(E_+^c)-\Pr(E_-^c).
\end{equation}
Thus, if each test accepts with probability at least $1-\eps$, their intersection accepts with probability at least $1-2\eps$. Commuting quantum projectors obey the same inequality, because in a common eigenbasis they are zero--one indicator functions. Noncommuting acceptance subspaces need not have this property. This is the obstruction identified in Sen's analysis~\cite[Sections~1.4 and~3]{Sen2021Joint}.

For clarity, if $\mathcal U$ and $\mathcal V$ are subspaces, their intersection and set-theoretic union are defined as
\begin{equation}\label{eq:subspace-intersection-union}
 \mathcal U\cap\mathcal V\coloneqq\{\ket h:\ket h\in\mathcal U\text{ and }\ket h\in\mathcal V\},\qquad
 \mathcal U\cup\mathcal V\coloneqq\{\ket h:\ket h\in\mathcal U\text{ or }\ket h\in\mathcal V\}.
\end{equation}
The intersection is a subspace, whereas the union need not be: adding vectors taken from its two parts may leave the union. Its linear span is the smallest subspace containing both parts. This distinction is essential below.

The following example exhibits the failure already for a pure qubit state and rank-one projectors; neither mixing nor a large Hilbert space is responsible. Let $\ket{e_0},\ket{e_1}$ be an orthonormal basis of $\cc^2$, and take the correct state to be $\proj{e_0}$. For $0<\theta<\pi/4$, define
\begin{equation}\label{eq:bad-lines}
 \ket{q_\pm}\coloneqq\pm\sin\theta\ket{e_0}+\cos\theta\ket{e_1},
 \qquad P_\pm\coloneqq I-\proj{q_\pm}.
\end{equation}
The acceptance projectors are themselves pure-state projectors:
\begin{equation}\label{eq:acceptance-lines}
 P_\pm=\proj{p_\pm},\qquad
 \ket{p_\pm}\coloneqq\cos\theta\ket{e_0}\mp\sin\theta\ket{e_1}.
\end{equation}
Indeed, $\ket{p_\pm}$ and $\ket{q_\pm}$ are normalized and orthogonal. Each test accepts the correct state with probability
\begin{equation}\label{eq:separate-pure-acceptance}
 \langle e_0|P_\pm|e_0\rangle
 =|\langle p_\pm|e_0\rangle|^2
 =\cos^2\theta=1-\sin^2\theta,
\end{equation}
which tends to one as $\theta\to0$. Nevertheless,
\begin{equation}\label{eq:cancellation-obstruction}
 \ket{q_+}-\ket{q_-}=2\sin\theta\ket{e_0},
 \qquad
 \ket{q_+}+\ket{q_-}=2\cos\theta\ket{e_1}.
\end{equation}
Both coefficients are nonzero, so the two rejected lines span the whole qubit space. Consequently,
\begin{equation}\label{eq:intersection-zero}
 \ran P_+\cap\ran P_-
 =\bigl(\spanop\{\ket{q_+},\ket{q_-}\}\bigr)^\perp
 =\{0\}.
\end{equation}
The intersection projector accepts the correct state with probability zero for every positive $\theta$ in this interval, however small. In contrast, the classical lower bound in \eqref{eq:classical-intersection-probability} would tend to one. Here $0<|\langle p_+|p_-\rangle|=\cos(2\theta)<1$, so the two rank-one acceptance projectors are neither identical nor orthogonal and do not commute. \fref{fig:intersection} depicts the cancellation that causes the failure.

Following Sen's tilting idea~\cite[Section~3]{Sen2021Joint}, the remedy will be to change the geometry of the rejected vectors, not to force the original tests to commute. We attach a small auxiliary component depending on the message index and associate different error types with linearly independent auxiliary directions. A difference of labeled rejection vectors can then no longer cancel their common output component without also leaving auxiliary components. A finite Gram matrix will quantify the resulting control: the projector onto the labeled rejected span is bounded by a constant times the sum of the labeled rejection projectors. For each fixed label strength, that constant is independent of the output space and its subspaces. The labels remain close to a fixed reference state, so they can be removed from the completed measurement at the end. These are mathematical labels used to construct a decoder, not information physically supplied to the receiver.

\begin{figure}[!t]
\centering
\begin{adjustbox}{max width=\linewidth}
\begin{tikzpicture}[font=\small]
\begin{scope}
\fill[softblue] (-1.6,-1.65) rectangle (2.4,2.2);
\draw[->,gray] (-1.3,0)--(2.2,0) node[right] {$\ket{e_0}$};
\draw[->,gray] (0,-1.4)--(0,2) node[above left] {$\ket{e_1}$};
\draw[inkorange,thick] (-.57,-1.55)--(.76,2.05);
\draw[inkgreen,thick] (.57,-1.55)--(-.76,2.05);
\draw[->,inkorange,very thick] (0,0)--(.64,1.73) node[right] {$\ket{q_+}$};
\draw[->,inkgreen,very thick] (0,0)--(-.64,1.73) node[left] {$\ket{q_-}$};
\draw[->,inkblue,very thick] (0,0)--(1.65,0);
\node[below,inkblue] at (1.4,-.12) {correct vector};
\end{scope}
\node[align=left,text width=7.0cm,anchor=west] at (3.1,.4) {Each rejected line has only a small overlap with $\ket{e_0}$.\\[7pt]
Their linear span is nevertheless the whole plane.\\[7pt]
An exact intersection of the two acceptance spaces therefore rejects $\ket{e_0}$ completely.};
\end{tikzpicture}
\end{adjustbox}
\caption{Why an ordinary intersection is unstable. For $\ket{q_\pm}=\pm\sin\theta\ket{e_0}+\cos\theta\ket{e_1}$, each rejection probability on $\ket{e_0}$ is $\sin^2\theta$, but the span of $\ket{q_+}$ and $\ket{q_-}$ is the whole plane when $0<\theta<\pi/4$. The angle is exaggerated for visibility. Weak message labels avoid this problem by adding distinguishable auxiliary directions.}
\label{fig:intersection}
\end{figure}

\FloatBarrier
\section{An elementary two-sender packing bound}\label{sec:two-packing}
The obstruction in the preceding section concerns the geometry of the tests, independently of how the signals were encoded. We therefore temporarily set aside the channel and the random codebook and work with an arbitrary state for each message pair. This section constructs geometric analysis tests, proves their two error estimates, and uses them to certify a decoder on the original output. The states may depend arbitrarily on their message indices.

\subsection{Deterministic statement}
Let $\cM\coloneqq[M_1]\times[M_2]$, let $M\coloneqq M_1M_2$, and let $(\rho_m)_{m\in\cM}$ be arbitrary states on $B$. A decoding POVM has effects $(\Lambda_m)_{m\in\cM}$ and an optional failure effect. For uniformly distributed message pairs, define
\begin{equation}\label{eq:ensemble-error}
 \Pe\coloneqq1-\frac1M\sum_{m\in\cM}\Tr[\Lambda_m\rho_m].
\end{equation}
This is \eqref{eq:code-error} with a general state ensemble in place of the channel outputs; a failure outcome counts as an error.

For every message pair $m$ considered as the decoding candidate and every error type $s\in\{1,2,12\}$ from \eqref{eq:error-types}, let $P_{s,m}$ be a projector on $B$. We collect two kinds of binary-testing errors:
\begin{equation}\label{eq:two-a}
 a\coloneqq\frac1M\sum_{m\in\cM}\sum_{s\in\{1,2,12\}}
       \Tr[(I-P_{s,m})\rho_m]
\end{equation}
and
\begin{equation}\label{eq:two-b}
 b\coloneqq\frac1M\sum_m\left[\sum_{\ell_1\ne m_1}\Tr[P_{1,(\ell_1,m_2)}\rho_m]+\sum_{\ell_2\ne m_2}\Tr[P_{2,(m_1,\ell_2)}\rho_m]+\sum_{\substack{\ell_1\ne m_1\\\ell_2\ne m_2}}\Tr[P_{12,(\ell_1,\ell_2)}\rho_m]\right].
\end{equation}
An empty sum is equal to zero. Each summand has a direct error-probability interpretation. When $m$ was transmitted, $\Tr[(I-P_{s,m})\rho_m]$ is the probability that test $s$ rejects the correct message pair. For each fixed $s$, averaging this probability over $m$ gives an average binary-test error. The quantity $a$ is the sum of these three average errors. Small $a$ therefore means that all three tests usually accept the correct message pair, in the averaged sense used here.

The terms in $b$ are false-acceptance probabilities: the state is still $\rho_m$, but the acceptance projector belongs to a different message pair $\ell$. The notation $P_{s,\ell}$ has two indices because it specifies both an error type $s$ and the entire proposed message pair $\ell$. In the first sum, $\ell=(\ell_1,m_2)$: were this pair output by the decoder, Sender~2's message would be correct and Sender~1's would be incorrect. In the second sum, $\ell=(m_1,\ell_2)$, so the roles are interchanged. In the last sum, both proposed messages differ from the transmitted ones. Thus, for example, the $m_2$ appearing in $P_{1,(\ell_1,m_2)}$ is essential: it specifies the correctly matched second coordinate of that alternative pair.

Strictly speaking, $a$ and $b$ are \emph{sums of error probabilities}, not necessarily probabilities themselves. They satisfy $0\le a\le3$ and $0\le b\le M-1$. The number $a/3$ is the rejection probability when the true pair is uniform and one of its three tests is selected uniformly. When $M>1$, $b/(M-1)$ is the false-acceptance probability when the true pair is uniform, a distinct alternative pair is selected uniformly, and its appropriate test is applied. These are separate binary experiments, not simultaneous outcomes of one measurement. The unnormalized sums are useful because the decoding bound must account for every possible erroneous decision. The actual error probability $\Pe$ in \eqref{eq:ensemble-error} always lies in $[0,1]$.

\Needspace{12\baselineskip}
\begin{theorem}[Deterministic two-sender packing]\label{thm:two-packing}
Let $M_1$ and $M_2$ be positive integers. For each $m\in[M_1]\times[M_2]$, let $\rho_m$ be a state on a finite-dimensional system $B$ and let $P_{s,m}$ be projectors for $s\in\{1,2,12\}$. Let $a$ and $b$ be the sums of binary-test error probabilities in \eqref{eq:two-a} and \eqref{eq:two-b}. For every $0<\delta<1$, there is a decoding POVM on $B$, with a possible failure outcome, whose uniform-message average error $\Pe$ in \eqref{eq:ensemble-error} satisfies
\begin{equation}\label{eq:two-packing}
 \Pe\le d_2(\delta)+C_2(\delta)a+b,
\end{equation}
where
\begin{equation}\label{eq:two-constants}
 d_2(\delta)\coloneqq\sqrt{1-(1-\delta)^2},
 \qquad
 C_2(\delta)\coloneqq\bigl(1-\sqrt{1-\delta}\bigr)^{-2}.
\end{equation}
The constants depend only on $\delta$, not on $M_1,M_2$, the states, the projectors, or the output dimension.
\end{theorem}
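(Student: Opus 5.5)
The plan is to follow the roadmap of \fref{fig:roadmap}. We attach weak auxiliary labels to the message indices, decode the tagged states with the state-based PGM, certify that PGM by \prref{prop:pgm-tests} using geometric analysis tests $\Gamma_m$, and finally compress against a fixed auxiliary state.

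\emph{Labels.} For each sender $i$, take a label space $L_i$ with orthonormal basis $\ket0,\ket1,\ldots,\ket{M_i}$. Set
\[
\ket{t_j}\coloneqq\sqrt{1-\delta}\,\ket0+\sqrt\delta\,\ket j,\qquad \ket{t_m}\coloneqq\ket{t_{m_1}}\otimes\ket{t_{m_2}}.
\]
The tagged states are $\widetilde\rho_m\coloneqq\rho_m\otimes\proj{t_m}$ on $BL$ with $L=L_1L_2$. Since $\langle 00|t_m\rangle=1-\delta$, \eqref{eq:pure-distance} gives
\[
\Delta\bigl(\widetilde\rho_m,\rho_m\otimes\proj{00}\bigr)=\sqrt{1-(1-\delta)^2}=d_2(\delta).
\]
Let $(\widetilde\Lambda_m)$ be the PGM of the tagged ensemble, and define $\Lambda_m\coloneqq J^\dagger\widetilde\Lambda_mJ$ with $J\ket h=\ket h\otimes\ket{00}$. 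The operators $\Lambda_m$, together with the compressed failure effect, form a POVM on $B$. By \eqref{eq:effect-distance},
\[
\Tr[\Lambda_m\rho_m]=\Tr\bigl[\widetilde\Lambda_m(\rho_m\otimes\proj{00})\bigr]\ge\Tr[\widetilde\Lambda_m\widetilde\rho_m]-d_2(\delta).
\]
It therefore suffices to show that the tagged PGM error is at most $C_2(\delta)a+b$. By \prref{prop:pgm-tests}, this reduces to constructing effects $\Gamma_m$ on $BL$ that satisfy two estimates:
\[
\Tr[(I-\Gamma_m)\widetilde\rho_m]\le C_2(\delta)\sum_s\Tr[(I-P_{s,m})\rho_m],
\qquad
\Tr[\Gamma_\ell\widetilde\rho_m]\le\Tr[P_{s(m,\ell),\ell}\rho_m]\quad\text{for }\ell\ne m .
\]

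\emph{Analysis tests.} For a candidate $\ell$ and a type $s$, let $\ket{v_{s,\ell}}$ be the label vector that keeps $\ket{t_{\ell_i}}$ for the senders $i\notin s$, whose coordinates agree with the candidate, and keeps only the reference component $\sqrt{1-\delta}\ket0$ for the senders $i\in s$. Define the rejection span
\[
\cW_\ell\coloneqq\sum_{s\in\{1,2,12\}}\ran(I-P_{s,\ell})\otimes\ket{v_{s,\ell}},
\]
and set $\Gamma_\ell\coloneqq I-\Pi_{\cW_\ell}$.

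\emph{False acceptance.} Let $m$ have type $s$ relative to $\ell$. Then $\ket{t_m}$ has the form $\ket{v_{s,\ell}}$ plus components along the basis vectors $\ket{m_i}$, $i\in s$, with $m_i\ne\ell_i$ and $m_i\ge1$. I expect these extra components to be orthogonal to every $\ket{v_{s',\ell}}$, since those vectors involve only $\ket0$ and $\ket{\ell_i}$. The task is then to check, using injectivity of the labels and the linear independence of the error-pattern vectors, that $\Gamma_\ell$ acting on $\widetilde\rho_m$ retains at most $\Tr[P_{s,\ell}\rho_m]$. Roughly, the component along $(I-P_{s,\ell})\otimes\ket{v_{s,\ell}}$ is removed exactly, and the remainder is orthogonal to the other directions. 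This step needs care, because $\Pi_{\cW_\ell}$ is not the sum of the three projectors.

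\emph{Correct rejection.} This is the main obstacle. Here $\ket{t_\ell}$ overlaps every $\ket{v_{s,\ell}}$, and we need an operator inequality of the form
\[
\Pi_{\cW_\ell}\le C\sum_s(I-P_{s,\ell})\otimes\frac{\proj{v_{s,\ell}}}{\langle v_{s,\ell}|v_{s,\ell}\rangle}
\]
in the relevant sense. I would prove it by a Gram-matrix argument. Write $\ket w=\sum_s\ket{h_s}\otimes\ket{v_{s,\ell}}$ with $\ket{h_s}\in\ran(I-P_{s,\ell})$, and bound $\norm{\ket w}^2$ from below by $\lambda_{\min}\sum_s\norm{\ket{h_s}}^2\norm{\ket{v_s}}^2$, where $\lambda_{\min}$ is the smallest eigenvalue of the normalized $3\times3$ label Gram matrix. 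The output-space inner products $\langle h_s|h_{s'}\rangle$ are arbitrary, so this requires a bound that is independent of them. I would obtain it from the Schur product of positive matrices together with the Gram matrix of the label vectors, which is explicit: it is the tensor product of $2\times2$ blocks built from $\ket{t_j}$ and $\sqrt{1-\delta}\ket0$. Its smallest relevant eigenvalue should be $(1-\sqrt{1-\delta})^2$, which yields $C_2(\delta)$. Applying $\Pi_{\cW_\ell}$ to $\ket h\otimes\ket{t_\ell}$ and using Cauchy--Schwarz then gives the correct-rejection estimate with constant $C_2(\delta)$.

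Combining the two estimates with \prref{prop:pgm-tests} and the compression step proves \eqref{eq:two-packing}.
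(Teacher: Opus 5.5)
Your overall architecture matches the paper's: weak labels, the tagged-state PGM certified through \prref{prop:pgm-tests}, compression against $\ket{00}$ at cost $d_2(\delta)$, and a Gram bound for correct rejection. Your Schur-product route to $\norm{\sum_s\ket{h_s}\otimes\ket{v_{s,\ell}}}^2\ge\lambda\sum_s\norm{\ket{h_s}}^2\norm{\ket{v_{s,\ell}}}^2$, followed by Cauchy--Schwarz, is also a valid substitute for the argument of \lemref{lem:gram}.

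The genuine gap is in the false-acceptance step, and it comes from your choice of test $\Gamma_\ell\coloneqq I-\Pi_{\cW_\ell}$. You correctly observe that, for a competitor $m$ of type $s$, the components of $\ket{t_m}$ along $\ket{m_i}$, $i\in s$, are orthogonal to $\cW_\ell$. But orthogonality to the rejected span means that $I-\Pi_{\cW_\ell}$ \emph{accepts} those components with certainty.

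To quantify this, set up the following notation:
\begin{itemize}
\item $|s|\in\{1,2\}$ is the number of mismatched senders;
\item $\ket{w_{\ell,s}}\coloneqq\ket{v_{s,\ell}}/\norm{\ket{v_{s,\ell}}}$, so that $\norm{\ket{v_{s,\ell}}}^2=(1-\delta)^{|s|}$;
\item $A_{\ell,s}$ is the compression of $\Gamma_\ell$ along $\ket h\mapsto\ket h\otimes\ket{w_{\ell,s}}$.
\end{itemize}
Since $\cW_\ell\subseteq\cH_B\otimes\spanop\{\ket0,\ket{\ell_1}\}\otimes\spanop\{\ket0,\ket{\ell_2}\}$, an exact computation gives
\[
\Tr[(I-\Pi_{\cW_\ell})\widetilde\rho_m]=1-(1-\delta)^{|s|}+(1-\delta)^{|s|}\Tr[A_{\ell,s}\rho_m].
\]
So your claimed inequality $\Tr[\Gamma_\ell\widetilde\rho_m]\le\Tr[P_{s,\ell}\rho_m]$ already fails for $P_{s,\ell}=0$. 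In that case $A_{\ell,s}=0$, yet the left side equals $1-(1-\delta)^{|s|}>0$. In general, every one of the $M-1$ competitors carries an additive term $1-(1-\delta)^{|s|}\ge\delta$. The resulting certificate is therefore at least $(M-1)\delta$. That is vacuous for exponentially many messages at fixed $\delta$, and it is not of the form $d_2(\delta)+C_2(\delta)a+b$.

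The missing ingredient is the candidate-dependent projection onto the local label space,
\[
K_\ell\coloneqq(\proj0+\proj{\ell_1})\otimes(\proj0+\proj{\ell_2}),
\]
together with the analysis test $\Gamma_\ell\coloneqq I_B\otimes K_\ell-\Pi_{\cW_\ell}$. This is a projector because $\cW_\ell\subseteq\ran(I_B\otimes K_\ell)$, and it discards the off-local label components instead of accepting them.

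In your notation, $K_\ell\ket{t_m}=\ket{v_{s,\ell}}$ exactly. Hence $\Tr[\Gamma_\ell\widetilde\rho_m]=(1-\delta)^{|s|}\Tr[A_{\ell,s}\rho_m]$ with no additive term. The compression $A_{\ell,s}$ is the same operator as before, because $K_\ell\ket{w_{\ell,s}}=\ket{w_{\ell,s}}$.

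The bound $A_{\ell,s}\le P_{s,\ell}$ should come from the annihilation identity $A_{\ell,s}(I-P_{s,\ell})=0$. This holds because $\Gamma_\ell$ kills $\ran(I-P_{s,\ell})\otimes\ket{w_{\ell,s}}\subseteq\cW_\ell$, and it gives $A_{\ell,s}=P_{s,\ell}A_{\ell,s}P_{s,\ell}\le P_{s,\ell}$. Your heuristic of an exactly removed part plus an orthogonal remainder does not work, because $\ran P_{s,\ell}\otimes\ket{w_{\ell,s}}$ need not be orthogonal to the other summands of $\cW_\ell$.

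Correct rejection is unaffected by this change, since $K_\ell\ket{t_\ell}=\ket{t_\ell}$. There, $(1-\sqrt{1-\delta})^2$ is only a lower bound for the $3\times3$ error Gram matrix. It is inherited from the $4\times4$ tensor square, of which that matrix is a principal submatrix, and a lower bound is all you need.
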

The proof occupies the remainder of this section. We first explain the weak labels and the candidate projection, then construct a labeled span of rejected subspaces. Compression along each error direction gives the false-acceptance bounds. A Gram-matrix estimate controls correct-message rejection, and a worked qubit example illustrates why that control is possible. We finally analyze the pretty-good measurement of the tagged states using Cheng's trace inequality and remove the labels.

\subsection{Weak labels and the auxiliary space of a message pair}\label{subsec:two-labels}
The first step is to give the three error events different auxiliary directions without substantially changing the states that we wish to decode. Fully orthogonal message labels would make discrimination trivial, but they would also supply information that the receiver does not have. Identical labels would supply no information, but would leave the subspace-intersection problem unchanged. We use labels between these two extremes: each is close to the same reference vector, with a small component that identifies its message index. We call these auxiliary pure states \emph{weak labels}; their precise definition is given in \eqref{eq:two-label}. The small components will provide just enough linear independence to control the labeled rejection span, while closeness to the reference state will let us remove the labels after decoding. Thus the two requirements that appear to compete---separating the error directions and changing the states only slightly---will be met in two successive parts of the proof.

For Sender~$i$, introduce an auxiliary system $L_i$ with orthonormal basis
\begin{equation}
 \ket0,\ket1,\ldots,\ket{M_i}.
\end{equation}
The vector $\ket0$ is a reference vector and is not a message. For $0<\delta<1$, define
\begin{equation}\label{eq:two-label}
 \ket{u_{m_i}}\coloneqq\sqrt{1-\delta}\ket0+\sqrt\delta\ket{m_i},
 \qquad
 \ket{t_m}_L\coloneqq\ket{u_{m_1}}_{L_1}\otimes\ket{u_{m_2}}_{L_2},
 \qquad L=L_1L_2.
\end{equation}
The two components of $\ket{u_{m_i}}$ are orthogonal, so its squared norm is $(1-\delta)+\delta=1$. The label is weak in the following precise sense: a measurement in the displayed basis would return the reference symbol with probability $1-\delta$ and the message index with probability $\delta$. We do not perform this measurement. In particular, $\proj{u_{m_i}}$ is a coherent pure state, not a mixture of the reference and message basis states. Distinct weak labels are close rather than orthogonal:
\begin{equation}\label{eq:weak-label-overlap}
 \langle u_{\ell_i}|u_{m_i}\rangle=1-\delta\quad(m_i\ne\ell_i),
 \qquad
 \Delta(\proj{u_{m_i}},\proj0)=\sqrt\delta.
\end{equation}

We compare two mathematical ensembles on $BL$, illustrated in \fref{fig:weak-tags}. Their states, for a fixed message pair $m$, are
\begin{equation}\label{eq:labeled-states}
 \rhot m\coloneqq\rho_m\otimes\proj{t_m}_L,
 \qquad
 \rhoh m\coloneqq\rho_m\otimes\proj{00}_L.
\end{equation}
The first state has message-dependent labels and will be used only in the analysis. The second state uses a fixed auxiliary state that a receiver could prepare locally. There is no assertion that an unknown output state $\rho_m$ can be transformed into $\rhot m$ by a physical channel independent of $m$. For each fixed message, both states in \eqref{eq:labeled-states} are product states across $B:L$; attaching a label here does not mean entangling the output with it.

The labels are useful only if they can eventually be removed. The overlap with the reference label is
\begin{equation}\label{eq:two-reference-overlap}
 \langle00|t_m\rangle
 =\langle0|u_{m_1}\rangle\langle0|u_{m_2}\rangle
 =\sqrt{1-\delta}\sqrt{1-\delta}=1-\delta.
\end{equation}
Consequently, \eqref{eq:pure-distance} gives
\begin{align}
\Delta(\rhot m,\rhoh m)
 &=\tfrac12\norm{\rho_m\otimes(\proj{t_m}-\proj{00})}_1\label{eq:two-label-distance}\\
&=\tfrac12\norm{\proj{t_m}-\proj{00}}_1\label{eq:line-4-11-2}\\
&=\sqrt{1-|\langle00|t_m\rangle|^2}
 =\sqrt{1-(1-\delta)^2}=d_2(\delta).\label{eq:line-4-11-3}
\end{align}
The bound does not depend on the number of messages. Thus a fixed measurement that works well on the fictitious ensemble also works nearly as well on the fixed-label ensemble. We will make that comparison only after constructing the entire measurement.

\begin{figure}[!t]
\centering
\begin{adjustbox}{max width=\linewidth}
\begin{tikzpicture}[font=\small]
\begin{scope}[shift={(0,0)}]
 \node[anchor=west,font=\small\bfseries] at (-.3,3.30) {(a) One weak label};
 \draw[->,gray] (0,0)--(4.0,0) node[right] {$\ket0$};
 \draw[->,gray] (0,0)--(0,2.55) node[above] {$\ket{m_i}$};
 \draw[->,inkblue,very thick] (0,0)--(3.15,1.30);
 \node[above,inkblue] at (2.80,1.47) {$\ket{u_{m_i}}$};
 \draw[densely dashed,gray] (3.15,0)--(3.15,1.30)--(0,1.30);
 \node[below] at (3.15,-.12) {$\sqrt{1-\delta}$};
 \node[left] at (-.06,1.30) {$\sqrt\delta$};
 \node[align=center,text width=4.5cm,font=\footnotesize] at (1.95,-1.55)
 {Most of the amplitude lies in the common reference direction.};
\end{scope}
\begin{scope}[shift={(6.35,0)}]
 \node[anchor=west,font=\small\bfseries] at (-.1,3.30) {(b) A label for each message index};
 \node[anchor=west,font=\footnotesize] at (-.1,2.45) {Fictitious state used in the analysis};
 \node[operator,fill=softblue,text width=1.1cm,minimum height=1.05cm] (b) at (.5,1.45) {$B$\\$\rho_m$};
 \node at (1.50,1.45) {$\otimes$};
 \node[operator,fill=softgreen,text width=1.65cm,minimum height=1.05cm] (l1) at (2.85,1.45) {$L_1$\\$\proj{u_{m_1}}$};
 \node at (4.13,1.45) {$\otimes$};
 \node[operator,fill=softgreen,text width=1.65cm,minimum height=1.05cm] (l2) at (5.48,1.45) {$L_2$\\$\proj{u_{m_2}}$};
 \node[align=left,anchor=west,text width=6.1cm,font=\footnotesize] at (-.1,-.15)
 {The physical comparison state has the same output $\rho_m$ and the fixed labels $\proj0_{L_1}\otimes\proj0_{L_2}$.};
 \node[align=left,anchor=west,text width=6.1cm,font=\footnotesize] at (-.1,-1.65)
 {No message-dependent label is transmitted or made available to the receiver.};
\end{scope}
\end{tikzpicture}
\end{adjustbox}
\caption{Weak message labels. Panel (a) shows the two orthogonal components of one normalized label; the components are rescaled for readability. Panel (b) displays the tensor factors of $\rhot m$ for a fixed $m=(m_1,m_2)$. The output state is unchanged. The message-dependent tensor factors are a device for constructing and analyzing a decoder, not additional communication. The normalized trace distance to the fixed-label state is $d_2(\delta)$.}
\label{fig:weak-tags}
\end{figure}

\paragraph{Why keep a two-dimensional space for each candidate?}
Fix a candidate message pair $\ell=(\ell_1,\ell_2)$. In the auxiliary register of Sender~$i$, we want to retain the component indexed by the proposed message $\ell_i$ and the common reference component, but discard the basis vectors belonging to other messages. Define
\begin{equation}\label{eq:two-K}
 K_{i,\ell_i}\coloneqq\proj0+\proj{\ell_i},
 \qquad K_\ell\coloneqq K_{1,\ell_1}\otimes K_{2,\ell_2}.
\end{equation}
The projector $K_{i,\ell_i}$ acts as the identity on the weak label for the candidate. For a different message, it removes just the message-specific component. Substituting the definition in \eqref{eq:two-K} gives
\begin{align}
K_{i,\ell_i}\ket{u_{m_i}}
 &=(\proj0+\proj{\ell_i})
       (\sqrt{1-\delta}\ket0+\sqrt\delta\ket{m_i})\label{eq:one-label-calculation}\\
&=\sqrt{1-\delta}\ket0
    +\sqrt\delta\,\langle\ell_i|m_i\rangle\ket{\ell_i}.\label{eq:line-4-13-2}
\end{align}
Thus
\begin{equation}\label{eq:one-label-projection}
 K_{i,\ell_i}\ket{u_{m_i}}
 =\begin{cases}
   \ket{u_{\ell_i}},&m_i=\ell_i,\\
   \sqrt{1-\delta}\ket0,&m_i\ne\ell_i.
  \end{cases}
\end{equation}
The receiver does not need to know which case holds to define $K_{i,\ell_i}$; its definition uses only the proposed message $\ell_i$. It is important that this projector retains a two-dimensional space rather than only the line through $\ket{u_{\ell_i}}$. The latter rank-one choice would act as
\begin{equation}\label{eq:rank-one-label-projection}
 \proj{u_{\ell_i}}\ket{u_{m_i}}
 =\langle u_{\ell_i}|u_{m_i}\rangle\ket{u_{\ell_i}}
 =\begin{cases}
    \ket{u_{\ell_i}},&m_i=\ell_i,\\
    (1-\delta)\ket{u_{\ell_i}},&m_i\ne\ell_i.
   \end{cases}
\end{equation}
In both cases, the surviving vector lies on the same line. The amplitudes differ, but after normalization there is no directional distinction. Tensoring such rank-one choices would send all error patterns to scalar multiples of the same vector $\ket{t_\ell}$. Attaching the rejection spaces to that one direction would recreate the unstable unlabelled span discussed previously in Section~\ref{subsec:intersection-obstruction}.

By contrast, \eqref{eq:one-label-projection} retains the entire vector $\ket{u_{\ell_i}}$ for a matching message and only the reference vector $\ket0$ for a nonmatching message. These two vectors are linearly independent for $\delta>0$, because only the former has a nonzero $\ket{\ell_i}$ component. The projector does not perfectly distinguish them: they are not orthogonal. It preserves exactly the geometric distinction needed later, without attempting to reveal which case occurred.

The range of $K_\ell$ is the four-dimensional space
\begin{equation}\label{eq:two-local-space}
 \ran K_\ell
 =\spanop\{\ket{00},\ket{\ell_1,0},\ket{0,\ell_2},\ket{\ell_1,\ell_2}\}.
\end{equation}
Although the entire label space can be large, each candidate message pair $\ell$ uses only this four-dimensional space. At this stage, $K_\ell$ is part of the mathematical definition of a detector; we have not prescribed a preliminary measurement that would reveal the error pattern.

\paragraph{Vectors for wrong-message patterns.}
For a fixed candidate message pair $\ell$, all true message pairs $m$ with the same error pattern $s(m,\ell)$ leave the same label direction after projection. We use the letter $w$ to indicate \emph{wrong-message pattern}, and define
\begin{equation}\label{eq:two-w}
 \ket{w_{\ell,1}}\coloneqq\ket0\ket{u_{\ell_2}},\qquad
 \ket{w_{\ell,2}}\coloneqq\ket{u_{\ell_1}}\ket0,\qquad
 \ket{w_{\ell,12}}\coloneqq\ket{00}.
\end{equation}
For example, suppose $m_1\ne\ell_1$ and $m_2=\ell_2$. If the receiver were to declare the proposed pair $\ell$, only Sender~1's message would be decoded incorrectly. Under $K_\ell$, the first label becomes proportional to~$\ket0$, while the second label survives unchanged. This explains both the error-type subscript $1$ and the appearance of Sender~2's weak label in $\ket{w_{\ell,1}}$. If both proposed messages differ from the true messages, only the reference vector survives in each register. At this point, no decoding outcome has occurred; the error type records a comparison between $m$ and $\ell$. These normalized vectors describe error-associated directions, not orthogonal flags from which the receiver can read the error pattern perfectly.

Tensoring the two cases in \eqref{eq:one-label-projection} gives
\begin{equation}\label{eq:two-exact-table}
 \begin{array}{c|c}
 \text{relation between }m\text{ and }\ell &K_\ell\ket{t_m}\\ \hline
 m=\ell &\ket{t_\ell}\\
 s(m,\ell)=1&\sqrt{1-\delta}\ket{w_{\ell,1}}\\
 s(m,\ell)=2&\sqrt{1-\delta}\ket{w_{\ell,2}}\\
 s(m,\ell)=12&(1-\delta)\ket{w_{\ell,12}}
 \end{array}
\end{equation}
Each nonmatching message coordinate contributes one factor $\sqrt{1-\delta}$. This table is exact: no approximation, channel property, or codebook average has been used. The labels must identify message indices, rather than codeword values, so that the table remains true even when distinct messages have identical codewords.

The three directions are particularly simple in an orthonormal basis of their span. Set
\begin{equation}\label{eq:two-e-basis}
 \ket{e_0}\coloneqq\ket{00},\qquad
 \ket{e_1}\coloneqq\ket{\ell_1,0},\qquad
 \ket{e_2}\coloneqq\ket{0,\ell_2}.
\end{equation}
Then
\begin{align}
\ket{w_{\ell,1}}&=\sqrt{1-\delta}\ket{e_0}+\sqrt\delta\ket{e_2},\label{eq:two-three-directions}\\
\ket{w_{\ell,2}}&=\sqrt{1-\delta}\ket{e_0}+\sqrt\delta\ket{e_1},\label{eq:line-4-20-2}\\
\ket{w_{\ell,12}}&=\ket{e_0}.\label{eq:line-4-20-3}
\end{align}
They are linearly independent for $\delta>0$: in a linear combination equal to zero, the $\ket{e_2}$ coefficient first forces the coefficient of $\ket{w_{\ell,1}}$ to vanish; the $\ket{e_1}$ coefficient does the same for $\ket{w_{\ell,2}}$; the remaining coefficient must then vanish. \fref{fig:labels} illustrates these directions. Their independence, not orthogonality, will control the sum of the rejection subspaces. To relate this to the obstruction in \sref{subsec:intersection-obstruction}, a linear combination of labeled rejection vectors has a separate transverse component for each of the first two error types. Cancelling the common reference component alone cannot eliminate those transverse components. The Gram-matrix lemma will make this statement quantitative even for sums $\sum_s\ket{h_s}\otimes\ket{w_{\ell,s}}$, where each $\ket{h_s}$ is an output vector drawn from its associated rejection space, rather than a scalar multiplying the label. This is why adding only small label components can regularize the rejected span without requiring mutually orthogonal error flags.

\begin{figure}[!t]
\centering
\begin{adjustbox}{max width=\linewidth}
\begin{tikzpicture}[font=\small]
\begin{scope}[shift={(2.15,1.35)}]
 \coordinate (O) at (0,0);
 \coordinate (E0) at (4.0,0);
 \coordinate (E1) at (0,2.65);
 \coordinate (E2) at (-1.75,-1.60);
 \coordinate (C0) at (3.20,0);
 \coordinate (W1) at (2.15,-.96);
 \coordinate (W2) at (3.20,1.59);
 \draw[->,gray] (O)--(E0);
 \node[below right,inner sep=2pt] at (E0) {$\ket{e_0}$};
 \draw[->,gray] (O)--(E1) node[above] {$\ket{e_1}$};
 \draw[->,gray] (O)--(E2) node[below left] {$\ket{e_2}$};
 \draw[densely dashed,gray] (C0)--(W1);
 \draw[densely dashed,gray] (C0)--(W2);
 \draw[->,inkgreen,very thick] (O)--(W1);
 \node[below right,inkgreen,inner sep=3pt] at (W1) {$\ket{w_{\ell,1}}$};
 \draw[->,inkblue,very thick] (O)--(W2);
 \node[above right,inkblue,inner sep=3pt] at (W2) {$\ket{w_{\ell,2}}$};
 \fill[gray] (C0) circle (1.2pt);
\end{scope}
\node[align=left,anchor=west,text width=5.1cm] at (8.0,3.30)
 {$\ket{w_{\ell,1}}$ retains Sender~2's label.\\[9pt]
  $\ket{w_{\ell,2}}$ retains Sender~1's label.\\[9pt]
  $\ket{w_{\ell,12}}=\ket{e_0}$.};
\node[align=left,anchor=west,text width=5.1cm,font=\footnotesize] at (8.0,.65)
 {All three vectors remain near the reference direction when $\delta$ is small, but their distinct transverse components make them linearly independent.};
\end{tikzpicture}
\end{adjustbox}
\caption{Auxiliary geometry for one candidate message pair. The basis vectors in \eqref{eq:two-e-basis} are orthonormal; the drawing is an oblique projection, not a representation of their Euclidean angles on the page. The dashed segments start at the common reference component $\sqrt{1-\delta}\ket{e_0}$ and show the transverse components in \eqref{eq:two-three-directions}--\eqref{eq:line-4-20-3}. The output space $B$ is not represented. The correct label also has a component $\delta\ket{\ell_1,\ell_2}$ in a fourth auxiliary direction, not shown here.}
\label{fig:labels}
\end{figure}

\FloatBarrier
\subsection{A labeled union of rejected subspaces}\label{subsec:two-rejection}
We now combine the three tests by working with their rejection conditions. In a classical problem, an observed output passes every test for a proposed message pair exactly when that output lies outside the union of the corresponding rejection events. The union adds no new sample points: a point belongs to it only if at least one rejection event contains that point. A union bound therefore controls its probability by the sum of the individual rejection probabilities.

The quantum counterpart requires more care. A linear operator that annihilates two vectors must annihilate their sum, and hence a detector that rejects several subspaces must reject their entire linear span. Unlike a set-theoretic union, that span contains superpositions that need not belong to any single rejected subspace. The pure-state example in \sref{subsec:intersection-obstruction} shows the consequence: subtracting two nearly parallel rejection vectors produces the correct state, although each separate rejection probability is small.

We use a message-label realization of Sen's tilted-span construction~\cite[Section~3]{Sen2021Joint}. Instead of taking the rejected span on $B$, we first associate each rejection space with its error-label vector in~$L$. The exact identity \eqref{eq:two-exact-table} makes this assignment appropriate for every competing message pair. Linear independence of the error-label vectors then prevents arbitrary cancellation between the labeled summands. Later, \lemref{lem:gram} will show that the projector onto their span is bounded by a controlled constant times the sum of their projectors. This is the quantitative substitute for the classical union bound; the original tests need not commute.

For each candidate message pair $\ell$ and error type $s\in\{1,2,12\}$, set
\begin{equation}\label{eq:two-rejections}
 Q_{s,\ell}\coloneqq I_B-P_{s,\ell},
\end{equation}
where $P_{s,\ell}$ is an acceptance projector defined after \eqref{eq:error-types}.
The space $\ran Q_{s,\ell}$ consists of output vectors rejected with certainty by the binary test for type $s$. Define its labeled counterpart and the span of all three counterparts by
\begin{equation}\label{eq:two-bad-space}
\cW_{s,\ell}\coloneqq
 \ran Q_{s,\ell}\otimes\spanop\{\ket{w_{\ell,s}}\},
\end{equation}
\begin{equation}\label{eq:W_ell-linear-span}
\cW_\ell\coloneqq
 \spanop\left(\bigcup_{s\in\{1,2,12\}}\cW_{s,\ell}\right)
 =\sum_{s\in\{1,2,12\}}\cW_{s,\ell}.
\end{equation}
To be clear, the sum on the right-hand side of \eqref{eq:W_ell-linear-span} is explicitly a linear span, not a sum of operators or a probabilistic mixture. Equivalently,
\begin{align}
\cW_\ell
 &=\spanop\!\left\{
   \ket {h_s} \otimes\ket{w_{\ell,s}}:
   \ket {h_s}\in\ran Q_{s,\ell},\ s\in\{1,2,12\}
   \right\}\label{eq:two-bad-explicit}\\
&=\left\{
   \sum_{s\in\{1,2,12\}}\ket{h_s}\otimes\ket{w_{\ell,s}}:
   \ket{h_s}\in\ran Q_{s,\ell}\ \text{for each }s
   \right\}.\label{eq:line-4-24-2}
\end{align}
The output vectors $\ket{h_s}$ need not be normalized and may be zero. The two expressions agree because all terms having the same $s$ can be combined inside the subspace $\ran Q_{s,\ell}$. In particular, $\cW_\ell$ contains the zero vector. If two of its vectors have output vectors $\ket{h_s}$ and $\ket{h'_s}$, their linear combination with scalar coefficients $\alpha,\beta\in\cc$ has output vectors $\alpha\ket{h_s}+\beta\ket{h'_s}\in\ran Q_{s,\ell}$. This verifies directly that $\cW_\ell$ is a subspace.

Rejecting this span enforces all three rejection conditions: whenever an output vector lies in $\ran Q_{s,\ell}$ and its label is $\ket{w_{\ell,s}}$, the combined detector must reject it. The span is the \emph{smallest subspace} with this property, just as a union is the smallest classical set containing all the rejection events. The difference is that a subspace must also contain all linear combinations. Auxiliary labels are what will make this necessary enlargement quantitatively controllable.

Let $\Pi_\ell\coloneqq\Pi_{\cW_\ell}$. An alternative characterization of both this projector and its range is obtained from the following positive semidefinite operator:
\begin{equation}\label{eq:two-positive-sum}
 H_\ell\coloneqq\sum_{s\in\{1,2,12\}}R_{s,\ell},
 \qquad R_{s,\ell}\coloneqq Q_{s,\ell}\otimes\proj{w_{\ell,s}}.
\end{equation}
The next lemma states the characterization precisely. The support is a subspace, whereas its support projector is an operator; these two objects should be distinguished.
\begin{lemma}[Support of the labeled rejection sum]\label{lem:labeled-support}
Fix a message pair $\ell$ and projectors $Q_{s,\ell}$ on $B$, for $s\in\{1,2,12\}$. Let the normalized vectors $\ket{w_{\ell,s}}$ be those in \eqref{eq:two-w}, let $\mathcal W_\ell$ be the span in \eqref{eq:W_ell-linear-span}, and define $H_\ell$ by \eqref{eq:two-positive-sum}. Then
\begin{equation}\label{eq:labeled-support-statement}
 \supp H_\ell=\mathcal W_\ell,\qquad
 \Pi_{\supp H_\ell}=\Pi_\ell.
\end{equation}
\end{lemma}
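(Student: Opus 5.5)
The plan is to prove the general fact that the support of a finite sum of positive semidefinite operators equals the sum of their supports, and then identify each support separately. Since $\supp H=(\ker H)^\perp$ for positive semidefinite $H$ in finite dimensions, and the support projector is by definition the projector onto this subspace, the second identity in \eqref{eq:labeled-support-statement} follows immediately from the first. The work therefore lies in proving $\supp H_\ell=\mathcal W_\ell$.

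First, I will compute the kernel of $H_\ell$. For a vector $\ket\psi$ on $BL$, positivity of each $R_{s,\ell}$ gives
\begin{equation*}
 \langle\psi|H_\ell|\psi\rangle=\sum_s\langle\psi|R_{s,\ell}|\psi\rangle,
\end{equation*}
a sum of nonnegative terms. Hence $H_\ell\ket\psi=0$ forces $\langle\psi|R_{s,\ell}|\psi\rangle=0$ for every $s$. Because $R_{s,\ell}$ is positive semidefinite, this in turn forces $R_{s,\ell}\ket\psi=0$, since $\norm{R_{s,\ell}^{1/2}\ket\psi}^2=0$. The converse inclusion is trivial. Therefore $\ker H_\ell=\bigcap_s\ker R_{s,\ell}$. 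Taking orthogonal complements, and using $(\bigcap_s\mathcal U_s^\perp)^\perp=\sum_s\mathcal U_s$ for subspaces of a finite-dimensional space, gives $\supp H_\ell=\sum_s\supp R_{s,\ell}$.

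Second, I will identify each individual support. The operator $R_{s,\ell}=Q_{s,\ell}\otimes\proj{w_{\ell,s}}$ is a tensor product of two projectors, because $\ket{w_{\ell,s}}$ is normalized by \eqref{eq:two-w}. It is therefore itself a projector. Its range is $\ran Q_{s,\ell}\otimes\spanop\{\ket{w_{\ell,s}}\}=\cW_{s,\ell}$: one can see this either by expanding in a basis adapted to $\ran Q_{s,\ell}$, or by noting directly that the range of a tensor product of operators is the tensor product of their ranges. Combining this with the first step and \eqref{eq:W_ell-linear-span} yields $\supp H_\ell=\sum_s\cW_{s,\ell}=\mathcal W_\ell$.

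No step is a genuine obstacle. The only point needing care is the duality identity $(\bigcap \mathcal U_s^\perp)^\perp=\sum\mathcal U_s$, where the sum must be read as a linear span, consistent with the paper's convention. I would justify it in one line: the orthogonal complement of $\sum_s\mathcal U_s$ is exactly the set of vectors orthogonal to every $\mathcal U_s$. Taking complements once more, which is valid because all subspaces are closed in finite dimensions, gives the identity. Notably, linear independence of the $\ket{w_{\ell,s}}$ plays no role here; it becomes relevant only in the later quantitative Gram-matrix bound.
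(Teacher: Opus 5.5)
Your proposal is correct and follows essentially the same route as the paper: both derive $\ker H_\ell=\bigcap_s\ker R_{s,\ell}$ from the nonnegativity of the summands in $\langle\psi|H_\ell|\psi\rangle$, identify each $R_{s,\ell}$ as a projector with range $\cW_{s,\ell}$, and take orthogonal complements to obtain $\supp H_\ell=\cW_\ell$. The only cosmetic difference is that you pass through $R_{s,\ell}^{1/2}$ for a general positive semidefinite operator, whereas the paper uses idempotence to write each term directly as $\norm{R_{s,\ell}\ket\psi}^2$.
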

\begin{proof}
Each $R_{s,\ell}$ is a projector because its two tensor factors are projectors. Its range is $\mathcal W_{s,\ell}$.
For an arbitrary vector $\ket\psi$ on $BL$, self-adjointness and idempotence of $R_{s,\ell}$ give
\begin{align}
\langle\psi|H_\ell|\psi\rangle
 &=\sum_s\langle\psi|R_{s,\ell}|\psi\rangle\label{eq:two-positive-kernel}\\
&=\sum_s\langle\psi|R_{s,\ell}^\dagger R_{s,\ell}|\psi\rangle
 =\sum_s\norm{R_{s,\ell}\ket\psi}^2\label{eq:line-4-27-2}\\
&=\sum_s\norm{(Q_{s,\ell}\otimes\bra{w_{\ell,s}})\ket\psi}^2.\label{eq:line-4-27-3}
\end{align}
For the last equality, use
$R_{s,\ell}=(I_B\otimes\ket{w_{\ell,s}})(Q_{s,\ell}\otimes\bra{w_{\ell,s}})$
and note that tensoring with a normalized vector preserves the norm. Every term in the sum is nonnegative. Hence the expectation in \eqref{eq:two-positive-kernel}--\eqref{eq:line-4-27-3} is zero exactly when $R_{s,\ell}\ket\psi=0$ for every $s$. These equalities imply $H_\ell\ket\psi=0$, and the converse follows by taking the expectation of $H_\ell\ket\psi=0$.

For each $s$, the identity $R_{s,\ell}\ket\psi=0$ is equivalent to $\ket\psi\perp\cW_{s,\ell}$. Indeed, if $\ket y=R_{s,\ell}\ket x$ lies in the range, then
$\langle y|\psi\rangle=\langle x|R_{s,\ell}|\psi\rangle=0$.
Conversely, orthogonality to the range implies orthogonality to $R_{s,\ell}\ket\psi$, and therefore
$0=\langle\psi|R_{s,\ell}|\psi\rangle=\norm{R_{s,\ell}\ket\psi}^2$.
A vector is orthogonal to every summand if and only if it is orthogonal to every linear combination of their vectors. We have proved
\begin{equation}\label{eq:two-support-proof}
 \ker H_\ell
 =\bigcap_s\ker R_{s,\ell}
 =\bigcap_s\cW_{s,\ell}^{\perp}
 =\cW_\ell^\perp,
 \qquad
 \supp H_\ell=(\ker H_\ell)^\perp=\cW_\ell.
\end{equation}
The projectors onto these equal subspaces are equal, proving the second assertion.
\end{proof}

To contrast this with the classical problem, suppose that the original rejection projectors $Q_{s,\ell}$ commute. We can then recover the ordinary union bound directly in a common eigenbasis. The projector onto their unlabelled span is
\begin{equation}\label{eq:classical-join}
 Q_{\cup,\ell}\coloneqq I_B-\prod_s(I_B-Q_{s,\ell})
 =I_B-\prod_sP_{s,\ell}
 \le\sum_sQ_{s,\ell}.
\end{equation}
To prove both assertions, diagonalize all the $Q_{s,\ell}$ in a common orthonormal basis $(\ket z)_z$, and write
$Q_{s,\ell}\ket z=q_s(z)\ket z$, where $q_s(z)\in\{0,1\}$.
The eigenvalue of $Q_{\cup,\ell}$ on $\ket z$ is
\begin{equation}\label{eq:classical-join-eigenvalues}
 1-\prod_s(1-q_s(z))
 =\mathbf1\{q_s(z)=1\ \text{for at least one }s\}
 \le\sum_s q_s(z).
\end{equation}
It is equal to zero precisely outside every rejection space, and one on their span. This proves that $Q_{\cup,\ell}$ is the desired projector and proves the operator inequality. Taking the expectation in an arbitrary state $\rho$ gives
\begin{equation}\label{eq:classical-join-probability}
 \Tr[Q_{\cup,\ell}\rho]\le\sum_s\Tr[Q_{s,\ell}\rho].
\end{equation}
In the common eigenbasis, the numbers $\langle z|\rho|z\rangle$ form a probability distribution, so this is precisely the classical union bound.

For noncommuting rejection projectors, the projector onto their span need not be bounded above by $\sum_sQ_{s,\ell}$, unlike \eqref{eq:classical-join}. The span and its projector are still well defined; it is this operator upper bound that can fail. Our construction does not assert otherwise. Instead, it will establish
$\Pi_\ell\le C_2(\delta)H_\ell$
with the explicit finite coefficient $C_2(\delta)$ from \eqref{eq:two-constants}. Thus, for any state $\tau$ on $BL$, the combined rejection is bounded by
$C_2(\delta)\sum_s\Tr[R_{s,\ell}\tau]$.
The coefficient depends on the label strength, not on how nearly parallel the original rejection spaces are. Holding that strength fixed while the individual rejection probabilities tend to zero is what overcomes the quantum intersection obstruction.

\paragraph{Constructing an acceptance test from the rejected span.}
The ambient space associated with the candidate message pair $\ell$ is $\cH_B\otimes\ran K_\ell$. The span $\cW_\ell$ is contained in this space. To verify containment explicitly, write an arbitrary $\ket\psi\in\cW_\ell$ in the second form of \eqref{eq:two-bad-explicit}--\eqref{eq:line-4-24-2}. Since $K_\ell\ket{w_{\ell,s}}=\ket{w_{\ell,s}}$ for every $s$, linearity gives
\begin{align}
(I_B\otimes K_\ell)\ket\psi
 &=\sum_s\ket{h_s}\otimes K_\ell\ket{w_{\ell,s}}\label{eq:explicit-containment}\\
&=\sum_s\ket{h_s}\otimes\ket{w_{\ell,s}}=\ket\psi.\label{eq:line-4-32-2}
\end{align}
A vector is fixed by a projector exactly when it belongs to its range. Since $\ran(I_B\otimes K_\ell)=\cH_B\otimes\ran K_\ell$, this proves the claimed containment. We want to retain precisely the vectors in this ambient space that are orthogonal to the rejected span. Therefore set
\begin{equation}\label{eq:two-Gamma}
 \Gamma_\ell\coloneqq I_B\otimes K_\ell-\Pi_\ell.
\end{equation}
Above we write the identity operator $I_B$ explicitly: $K_\ell$ acts only on $L$, whereas $\Gamma_\ell$ and $\Pi_\ell$ act on $BL$.

The subtraction in \eqref{eq:two-Gamma} removes a contained subspace, not an arbitrary positive semidefinite operator. For every $\ket\psi$, the vector $\Pi_\ell\ket\psi$ belongs to $\cW_\ell$ and is therefore fixed by $I_B\otimes K_\ell$, by \eqref{eq:explicit-containment}--\eqref{eq:line-4-32-2}. This proves the first identity below; taking adjoints proves the second:
\begin{equation}\label{eq:containment}
\Pi_\ell= (I_B\otimes K_\ell)\Pi_\ell
 =\Pi_\ell(I_B\otimes K_\ell).
\end{equation}
Both terms in \eqref{eq:two-Gamma} are self-adjoint, and
\begin{align}
\Gamma_\ell^2
 &=(I_B\otimes K_\ell)^2
   -(I_B\otimes K_\ell)\Pi_\ell
   -\Pi_\ell(I_B\otimes K_\ell)+\Pi_\ell^2\label{eq:Gamma-projective}\\
&=I_B\otimes K_\ell-\Pi_\ell-\Pi_\ell+\Pi_\ell
 =\Gamma_\ell.\label{eq:line-4-35-2}
\end{align}
Thus $\Gamma_\ell$ is a projector, with range
\begin{equation}\label{eq:Gamma-range}
 \ran\Gamma_\ell
 =(\cH_B\otimes\ran K_\ell)\cap\cW_\ell^\perp.
\end{equation}
It restricts acceptance to the local label space associated with $\ell$ and, inside that space, keeps only the orthogonal complement of the rejected span. Equivalently,
\begin{equation}\label{eq:ambient-decomposition}
 \cH_B\otimes\ran K_\ell=\cW_\ell\oplus\ran\Gamma_\ell.
\end{equation}
Here $\oplus$ denotes an orthogonal direct sum. This decomposition is orthogonal even though the three summands defining $\cW_\ell$ need not be orthogonal to one another. In particular, for every $\ket\psi\in\cW_\ell$, both projectors in the difference act as the identity, so
\begin{equation}\label{eq:Gamma-annihilates-span}
 \Gamma_\ell\ket\psi
 =(I_B\otimes K_\ell)\ket\psi-\Pi_\ell\ket\psi
 =\ket\psi-\ket\psi=0.
\end{equation}
For a vector in the ambient space orthogonal to $\cW_\ell$, the first term acts as the identity and the second gives zero, so $\Gamma_\ell$ accepts that vector with certainty. \fref{fig:rejected-span} summarizes these two actions.

\begin{figure}[!t]
\centering
\begin{adjustbox}{max width=\linewidth}
\begin{tikzpicture}[font=\small]
\node[anchor=west,font=\bfseries] at (0,4.00) {(a) Form the span of all labeled rejections};
\node[panel,font=\footnotesize,inner sep=4pt,minimum width=4.45cm,minimum height=.90cm] (q1) at (1.75,2.95)
 {$\ran Q_{1,\ell}\otimes\spanop\{\ket{w_{\ell,1}}\}$};
\node[panel,font=\footnotesize,inner sep=4pt,minimum width=4.45cm,minimum height=.90cm] (q2) at (6.65,2.95)
 {$\ran Q_{2,\ell}\otimes\spanop\{\ket{w_{\ell,2}}\}$};
\node[panel,font=\footnotesize,inner sep=4pt,minimum width=4.45cm,minimum height=.90cm] (q12) at (11.55,2.95)
 {$\ran Q_{12,\ell}\otimes\spanop\{\ket{w_{\ell,12}}\}$};
\node[operator,fill=softorange,text width=3.0cm,minimum height=.9cm] (w) at (6.65,1.25)
 {$\cW_\ell$\\Their linear span};
\draw[flow] (q1.south)--(w.west);
\draw[flow] (q2.south)--(w.north);
\draw[flow] (q12.south)--(w.east);
\node[anchor=west,font=\bfseries] at (0,-.1) {(b) Keep the orthogonal complement inside the candidate space};
\draw[draw=inkblue!65,rounded corners=3pt] (0,-3.75) rectangle (13.3,-.8);
\node[anchor=north] at (6.65,-.92) {$\cH_B\otimes\ran K_\ell$};
\fill[softorange] (.20,-3.54) rectangle (5.48,-1.60);
\fill[softgreen] (5.70,-3.54) rectangle (13.10,-1.60);
\node[align=center,text width=4.8cm] at (2.84,-2.53)
 {$\cW_\ell$\\[5pt]Rejected by $\Gamma_\ell$\\[3pt]Projector $\Pi_\ell$};
\node[align=center,text width=6.6cm] at (9.4,-2.53)
 {$(\cH_B\otimes\ran K_\ell)\cap\cW_\ell^\perp$\\[5pt]Accepted by $\Gamma_\ell$\\[3pt]Projector $I_B\otimes K_\ell-\Pi_\ell$};
\end{tikzpicture}
\end{adjustbox}
\caption{From separate tests to one candidate detection projector. Panel (a) represents the sum of subspaces in \eqref{eq:two-bad-space}; its boxes are not a claim that the summands are orthogonal. Panel (b) represents the orthogonal decomposition \eqref{eq:ambient-decomposition}. The shaded areas are schematic and do not encode dimensions. The complement is taken inside the local space associated with $\ell$, not inside all of $BL$. A later Gram-matrix bound controls how much of the correct state lies in $\cW_\ell$.}
\label{fig:rejected-span}
\end{figure}

\paragraph{The first rejected line from the qubit example.}
Return to the pure-state example in \sref{subsec:intersection-obstruction}. For the moment, retain just its first rejection condition: set
$Q_{1,\ell}=\proj{q_+}$ and $Q_{2,\ell}=Q_{12,\ell}=0$.
Thus $P_{1,\ell}=\proj{p_+}$, with $\ket{p_+}$ and $\ket{q_+}$ from \eqref{eq:acceptance-lines} and \eqref{eq:bad-lines}. This temporary simplification lets us see explicitly what subtraction does to one of the very same rejection directions. The full two-line example will be evaluated in \sref{subsec:qubit-example}.

Only one labeled rejection space is now nonzero, so
\begin{align}
\cW_\ell&=\spanop\{\ket{q_+}_B\otimes\ket{w_{\ell,1}}_L\},\label{eq:single-rejection-example}\\
\Pi_\ell&=\proj{q_+}_B\otimes\proj{w_{\ell,1}}_L,\label{eq:line-4-39-2}\\
\Gamma_\ell&=I_B\otimes K_\ell-\proj{q_+}_B\otimes\proj{w_{\ell,1}}_L.\label{eq:line-4-39-3}
\end{align}
For every $\ket v\in\ran K_\ell$, substitute $K_\ell\ket v=\ket v$, $\langle q_+|p_+\rangle=0$, and $\langle q_+|q_+\rangle=1$ to obtain
\begin{align}
\Gamma_\ell(\ket{p_+}\otimes\ket v)
 &=\ket{p_+}\otimes K_\ell\ket v
   -\ket{q_+}\langle q_+|p_+\rangle\otimes
      \ket{w_{\ell,1}}\langle w_{\ell,1}|v\rangle\label{eq:single-rejection-action}\\
&=\ket{p_+}\otimes\ket v,\label{eq:line-4-40-2}\\
\Gamma_\ell(\ket{q_+}\otimes\ket v)
 &=\ket{q_+}\otimes K_\ell\ket v
   -\ket{q_+}\langle q_+|q_+\rangle\otimes
      \ket{w_{\ell,1}}\langle w_{\ell,1}|v\rangle\label{eq:line-4-40-3}\\
&=\ket{q_+}\otimes
      \bigl(\ket v-\langle w_{\ell,1}|v\rangle\ket{w_{\ell,1}}\bigr).\label{eq:line-4-40-4}
\end{align}
Thus the accepted output direction $\ket{p_+}$ is untouched. In the rejected output direction $\ket{q_+}$, only the component along its designated error label is removed. In particular, $\ket{q_+}\otimes\ket{w_{\ell,1}}$ is annihilated, whereas $\ket{q_+}\otimes\ket v$ is retained when $\ket v\perp\ket{w_{\ell,1}}$.

This is not the same as rejecting $\ket{q_+}$ for every label. That selective action is useful because an actual type-$1$ competing pair has exactly the direction $\ket{w_{\ell,1}}$ after projection by $K_\ell$. For the correct tagged vector, the rejected probability in this one-line example is
\begin{align}
\Tr[\Pi_\ell(\proj{e_0}\otimes\proj{t_\ell})]
 &=|\langle q_+|e_0\rangle|^2
   |\langle w_{\ell,1}|t_\ell\rangle|^2\label{eq:single-rejection-probability}\\
&=\sin^2\theta\,
   |\langle0|u_{\ell_1}\rangle
      \langle u_{\ell_2}|u_{\ell_2}\rangle|^2
 =(1-\delta)\sin^2\theta.\label{eq:line-4-41-2}
\end{align}
It is small for small $\theta$. Restoring the second rejected line changes the projector onto the span, so one cannot simply add two such expressions. The Gram-matrix argument below supplies the missing bound, and the full qubit calculation will show the effect exactly.

\FloatBarrier
\subsection{Error estimates and the geometric bound}\label{subsec:two-estimates}
The geometric analysis projector $\Gamma_\ell$ is now defined in \eqref{eq:two-Gamma}. It is not the final decoding POVM; it is a binary test used to certify that POVM's performance. We first check the property that motivated it: every alternative message is rejected according to its intended binary test (Section~\ref{subsec:two-wrong}). We then use the Gram matrix to prove that imposing all these rejection conditions does not spoil acceptance of the true message (Section~\ref{subsec:gram}). Section~\ref{subsec:two-correct} then analyzes the probability of correctly accepting the message when testing with $\Gamma_\ell$. The worked qubit example in Section~\ref{subsec:qubit-example} connects both estimates to the obstruction in Section~\ref{subsec:intersection-obstruction}.

\subsubsection{Incorrect-message acceptance by compression}\label{subsec:two-wrong}
The span construction in \eqref{eq:two-bad-space} must satisfy two requirements: reject alternative message pairs with small false-acceptance probability, and retain the correct pair with high probability. We prove the first requirement here. Fix a proposed pair $\ell$ and a true pair $m\ne\ell$, and set $s\coloneqq s(m,\ell)$ as in \eqref{eq:error-types}. The appropriate comparison is between acceptance by $\Gamma_\ell$ from \eqref{eq:two-Gamma} and by the original binary test $P_{s,\ell}$ introduced in \sref{subsec:two-tests}.

The exact label identity makes that comparison possible without intersecting or sandwiching the original tests on $B$. After projection by $K_\ell$, every true pair of this error type has the same auxiliary direction $\ket{w_{\ell,s}}$, up to the scalar already displayed in \eqref{eq:two-exact-table}. We therefore restrict the detector to states with that fixed label. The resulting operator on $B$ will be an acceptance effect supported inside $\ran P_{s,\ell}$. This is the sense in which the combined test is no more likely to accept a type-$s$ alternative than its designated individual test.

Define the isometry
\begin{equation}\label{eq:V-compression}
 V_{\ell,s}\colon\cH_B\longrightarrow\cH_B\otimes\cH_L,
 \qquad
 V_{\ell,s}\ket h\coloneqq\ket h\otimes\ket{w_{\ell,s}}.
\end{equation}
The only action of $V_{\ell,s}$ is to tensor in the fixed normalized error-label state $\ket{w_{\ell,s}}$; it does not rotate, filter, or measure the output vector. Thus
$V_{\ell,s}^\dagger V_{\ell,s}=I_B$,
and for every state $\rho$ on $B$,
$V_{\ell,s}\rho V_{\ell,s}^\dagger=\rho\otimes\proj{w_{\ell,s}}$.
This error label is a product of reference vectors and surviving weak labels, not an additional unknown register. Compressing the detection projector $\Gamma_\ell$ along this isometry gives
\begin{equation}\label{eq:compressed-A}
 A_{\ell,s}\coloneqq V_{\ell,s}^\dagger\Gamma_\ell V_{\ell,s}
 =(I_B\otimes\bra{w_{\ell,s}})\Gamma_\ell
   (I_B\otimes\ket{w_{\ell,s}}).
\end{equation}
This is an effect on $B$. To see explicitly that $0 \leq  A_{\ell,s}\leq I$, use $0\le\Gamma_\ell\le I_{BL}$ to obtain
\begin{equation}\label{eq:A-effect}
 0\le A_{\ell,s} = V_{\ell,s}^\dagger\Gamma_\ell V_{\ell,s}
 \le V_{\ell,s}^\dagger I_{BL}V_{\ell,s}=I_B.
\end{equation}
The acceptance probability of $\Gamma_\ell$ on $\rho\otimes\proj{w_{\ell,s}}$ is exactly the acceptance probability of $A_{\ell,s}$ on~$\rho$. Explicitly, for every state $\rho$ on $B$,
\begin{align}
\Tr[\Gamma_\ell(\rho\otimes\proj{w_{\ell,s}})]
 &=\Tr[\Gamma_\ell V_{\ell,s}\rho V_{\ell,s}^\dagger]\label{eq:compression-probability-identity}\\
&=\Tr[V_{\ell,s}^\dagger\Gamma_\ell V_{\ell,s}\rho]
 =\Tr[A_{\ell,s}\rho].\label{eq:line-4-45-2}
\end{align}
The second equality follows from cyclicity of the trace (between the original and extended spaces).

A crucial property of the effect $A_{\ell,s}$ is that it has no support in the rejection space of $P_{s,\ell}$. To see this, consider that, for every $\ket h$, the vector
\begin{equation}
 V_{\ell,s}Q_{s,\ell}\ket h
 =Q_{s,\ell}\ket h\otimes\ket{w_{\ell,s}}
 \ \text{belongs to }\cW_\ell.
\end{equation}
By \eqref{eq:Gamma-annihilates-span}, $\Gamma_\ell$ annihilates every vector in $\cW_\ell$. Applying that identity to $V_{\ell,s}Q_{s,\ell}\ket h$ for every $\ket h$ gives the first equality below. Left multiplication by $V_{\ell,s}^\dagger$, followed by the definition of $A_{\ell,s}$, gives the second; taking its adjoint gives the third:
\begin{equation}\label{eq:annihilation}
 \Gamma_\ell V_{\ell,s}Q_{s,\ell}=0,
 \qquad
 A_{\ell,s}Q_{s,\ell}=0,
 \qquad
 Q_{s,\ell}A_{\ell,s}=0.
\end{equation}
Inserting $I_B=P_{s,\ell}+Q_{s,\ell}$ on both sides of $A_{\ell,s}$ now gives
\begin{align}
A_{\ell,s}
 &=(P_{s,\ell}+Q_{s,\ell})A_{\ell,s}(P_{s,\ell}+Q_{s,\ell})\label{eq:A-support-expansion}\\
&=P_{s,\ell}A_{\ell,s}P_{s,\ell}
  +P_{s,\ell}A_{\ell,s}Q_{s,\ell}
  +Q_{s,\ell}A_{\ell,s}P_{s,\ell}
  +Q_{s,\ell}A_{\ell,s}Q_{s,\ell}\label{eq:line-4-48-2}\\
&=P_{s,\ell}A_{\ell,s}P_{s,\ell}.\label{eq:line-4-48-3}
\end{align}
Combining this support identity with \eqref{eq:A-effect} yields
\begin{equation}\label{eq:A-le-P}
 0\le A_{\ell,s}=P_{s,\ell}A_{\ell,s}P_{s,\ell}
 \le P_{s,\ell}I_BP_{s,\ell}=P_{s,\ell}.
\end{equation}

Equation~\eqref{eq:A-le-P} is the operator version of the classical implication that a candidate can pass a combined test only if it passes the relevant individual test. Importantly, each error type has its own compression $A_{\ell,s}$. We have not constructed a single effect on $B$ that is simultaneously dominated by all three $P_{s,\ell}$ while retaining their acceptance probabilities.

This also explains the difference from sandwiching the original projectors. A product such as $P_1P_2P_1$ is bounded above by $P_1$, but not generally by $P_2$; the final support depends on the outer projector. Here domination follows instead from an exact annihilation condition in the enlarged space. For each $s$, the appropriate compression is supported on $\ran P_{s,\ell}$ regardless of the other two tests. There is no commutativity assumption, and no need to move one original projector past another.

\paragraph{Substituting the states and labels.}
We spell out the calculation when declaring $\ell$ would decode only Sender~1's message incorrectly. Thus $m_1\ne\ell_1$ and $m_2=\ell_2$. Equation~\eqref{eq:two-exact-table} gives
\begin{equation}\label{eq:wrong-projected-label}
 K_\ell\proj{t_m}K_\ell
 =(\sqrt{1-\delta}\ket{w_{\ell,1}})
   (\sqrt{1-\delta}\bra{w_{\ell,1}})
 =(1-\delta)\proj{w_{\ell,1}}.
\end{equation}
The support relation for $\Gamma_\ell$ is
\begin{equation}\label{eq:Gamma-local-support}
 (I_B\otimes K_\ell)\Gamma_\ell
 =\Gamma_\ell(I_B\otimes K_\ell)=\Gamma_\ell.
\end{equation}
Inserting these two identities, together with \eqref{eq:labeled-states}, into the false-acceptance probability gives
\begin{align}
\Tr[\Gamma_\ell\rhot m]
 &=\Tr[\Gamma_\ell(\rho_m\otimes\proj{t_m})]\label{eq:two-wrong-bound}\\
&=\Tr[(I_B\otimes K_\ell)\Gamma_\ell(I_B\otimes K_\ell)
            (\rho_m\otimes\proj{t_m})]\label{eq:line-4-52-2}\\
&=\Tr[\Gamma_\ell(I_B\otimes K_\ell)
            (\rho_m\otimes\proj{t_m})(I_B\otimes K_\ell)]\label{eq:line-4-52-3}\\
&=\Tr[\Gamma_\ell(\rho_m\otimes K_\ell\proj{t_m}K_\ell)]\label{eq:line-4-52-4}\\
&=(1-\delta)\Tr[\Gamma_\ell(\rho_m\otimes\proj{w_{\ell,1}})]\label{eq:line-4-52-5}\\
&=(1-\delta)\Tr[\Gamma_\ell V_{\ell,1}\rho_m V_{\ell,1}^\dagger]\label{eq:line-4-52-6}\\
&=(1-\delta)\Tr[V_{\ell,1}^\dagger\Gamma_\ell V_{\ell,1}\rho_m]\label{eq:line-4-52-7}\\
&=(1-\delta)\Tr[A_{\ell,1}\rho_m]\label{eq:line-4-52-8}\\
&\le(1-\delta)\Tr[P_{1,\ell}\rho_m]
 \le\Tr[P_{1,\ell}\rho_m].\label{eq:line-4-52-9}
\end{align}
The third and seventh lines use cyclicity of the trace. The fourth uses that $K_\ell$ acts only on the label register. The sixth uses the defining action of $V_{\ell,1}$. Finally, \eqref{eq:A-le-P} and positivity of $\rho_m$ give the first inequality; $0<1-\delta<1$ gives the second.

The same substitutions give, for the other two error types,
\begin{alignat}{2}
\Tr[\Gamma_\ell\rhot m]
 &=(1-\delta)\Tr[A_{\ell,2}\rho_m]
 \le\Tr[P_{2,\ell}\rho_m],&&\qquad s(m,\ell)=2,\label{eq:two-other-wrong}\\
\Tr[\Gamma_\ell\rhot m]
 &=(1-\delta)^2\Tr[A_{\ell,12}\rho_m]
 \le\Tr[P_{12,\ell}\rho_m],&&\qquad s(m,\ell)=12.\label{eq:line-4-53-2}
\end{alignat}
For the second line, each of the two nonmatching message coordinates supplies one factor $\sqrt{1-\delta}$ at the vector level and hence one factor $1-\delta$ at the probability level. In all three cases,
\begin{equation}\label{eq:two-wrong-unified}
 \Tr[\Gamma_\ell\rhot m]\le\Tr[P_{s(m,\ell),\ell}\rho_m].
\end{equation}
There is no additive error in this estimate. This is essential because the coding argument will sum it over many competitors.

\subsubsection{A Gram-matrix bound for rejected subspaces}\label{subsec:gram}
The false-acceptance estimate is now established for every alternative pair. It remains to show that imposing all the labeled rejection conditions does not discard too much of the correct state. The unresolved step is precisely the one that fails without labels: small overlap with each rejected subspace does not automatically imply small overlap with their span.

To see what must be controlled, consider a vector in that span written as $\sum_j\ket{h_j}\otimes\ket{v_j}$, where the vectors $\ket{h_j}$ come from the different output rejection spaces and $\ket{v_j}$ are their auxiliary labels. If the tensor-product terms $\ket{h_j}\otimes\ket{v_j}$ nearly canceled one another, normalizing their small residual could produce a direction with large overlap with the correct tagged state.  This was the mechanism in \eqref{eq:cancellation-obstruction}. Linearly independent labels prevent such cancellation from being arbitrarily severe. Their Gram matrix measures how much cancellation remains possible.

For auxiliary vectors $\ket{v_1},\ldots,\ket{v_r}$, introduce an $r$-dimensional register $J$ with orthonormal basis $(\ket j)_{j=1}^r$. Their \emph{Gram matrix} is the operator
\begin{equation}\label{eq:gram-prelim-definition}
 G\coloneqq\sum_{j,\ell=1}^r
       \langle v_j|v_\ell\rangle\ket j\bra\ell,
 \qquad G_{j\ell}=\langle v_j|v_\ell\rangle.
\end{equation}
It records every pairwise inner product. For scalar coefficients $z_j$, set $\ket z\coloneqq \sum_jz_j\ket j$. Expanding the squared norm and then using the matrix entries gives
\begin{align}
\left\|\sum_jz_j\ket{v_j}\right\|^2
 &=\left(\sum_j\overline z_j\bra{v_j}\right)
   \left(\sum_\ell z_\ell\ket{v_\ell}\right)\label{eq:gram-scalar-intuition}\\
&=\sum_{j,\ell}\overline z_j z_\ell\langle v_j|v_\ell\rangle
 =\langle z|G|z\rangle.\label{eq:line-4-56-2}
\end{align}
In particular, $G\ge0$. If $G\ge\lambda I_J$ for $\lambda>0$, then
\begin{equation}\label{eq:gram-scalar-lower}
 \left\|\sum_jz_j\ket{v_j}\right\|^2
 \ge\lambda\langle z|z\rangle
 =\lambda\sum_j|z_j|^2.
\end{equation}
The inequality follows by taking the expectation of the positive semidefinite operator $G-\lambda I_J$ in $\ket z$. It says that a linear combination cannot have arbitrarily small norm relative to its coefficients. The next lemma replaces the scalar coefficients $z_j$ by output vectors $\ket{h_j}$ in arbitrary rejection spaces, and then converts the norm bound into the needed projector inequality.

\Needspace{15\baselineskip}
\begin{lemma}[Subspaces with independent auxiliary vectors]\label{lem:gram}
Let $Q_1,\ldots,Q_r$ be projectors on a finite-dimensional Hilbert space $\cH$, and let $\ket{v_1},\ldots,\ket{v_r}$ be vectors in a finite-dimensional Hilbert space $\cH_L$. On an $r$-dimensional register $J$ with orthonormal basis $(\ket j)_{j=1}^r$, let
\begin{equation}\label{eq:gram-definition}
 G\coloneqq\sum_{j,\ell=1}^r
       \langle v_j|v_\ell\rangle\ket j\bra\ell
 \ge\lambda I_J
\end{equation}
for some $\lambda>0$. Define the subspace
\begin{align}
\cW&\coloneqq\spanop\!\left\{
   \ket h\otimes\ket{v_j}:\ket h\in\ran Q_j,\ j\in[r]
   \right\}\label{eq:gram-subspace}\\
&=\sum_{j=1}^r\ran Q_j\otimes\spanop\{\ket{v_j}\},\label{eq:line-4-59-2}
\end{align}
and let $\Pi_{\cW}$ be its projector. Then
\begin{equation}\label{eq:gram-bound}
 \Pi_{\cW}\le\lambda^{-1}\sum_{j=1}^r Q_j\otimes\proj{v_j}.
\end{equation}
The projectors $Q_j$ need not commute. The auxiliary vectors need not be normalized; when they are normalized, the summands on the right before multiplication by $\lambda^{-1}$ are projectors.
\end{lemma}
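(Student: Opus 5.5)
We prove the operator inequality $\Pi_{\cW}\le\lambda^{-1}H$, where $H\coloneqq\sum_j Q_j\otimes\proj{v_j}$, by writing $H=FF^\dagger$ for a suitable operator $F$ whose range is exactly $\cW$. Then we use the Gram lower bound to control $F^\dagger F$ from below on the relevant subspace. Concretely, define
\[
 F\colon\cH\otimes\cH_J\to\cH\otimes\cH_L,\qquad
 F\Bigl(\sum_j\ket{g_j}\otimes\ket j\Bigr)\coloneqq\sum_j Q_j\ket{g_j}\otimes\ket{v_j},
\]
that is, $F=\sum_j Q_j\otimes\ket{v_j}\bra j$. Two direct computations come first. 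Using $Q_j^2=Q_j=Q_j^\dagger$, we get $FF^\dagger=\sum_{j,\ell}Q_jQ_\ell\otimes\ket{v_j}\langle j|\ell\rangle\bra{v_\ell}=H$. Moreover, $\ran F=\cW$ by the explicit form in the statement, since every element of $\cW$ is $\sum_j\ket{h_j}\otimes\ket{v_j}$ with $\ket{h_j}\in\ran Q_j$, exactly as in the second form of the labeled span.

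The key step is a lower bound on $F^\dagger F$ restricted to the subspace $\mathcal D\coloneqq\bigoplus_j\ran Q_j\otimes\ket j$, which is the range of $D\coloneqq\sum_jQ_j\otimes\proj j$. We have $F=FD$, so $\ran F=F(\mathcal D)$. Take $\ket g=\sum_j\ket{h_j}\otimes\ket j\in\mathcal D$. Then
\[
 \norm{F\ket g}^2=\sum_{j,\ell}\langle h_j|h_\ell\rangle\langle v_j|v_\ell\rangle .
\]
Expand the $\ket{h_j}$ in an orthonormal basis $(\ket{b_k})$ of $\cH$, with coefficients $z^{(k)}_j\coloneqq\langle b_k|h_j\rangle$. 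The right side becomes $\sum_k\langle z^{(k)}|G|z^{(k)}\rangle\ge\lambda\sum_k\sum_j|z^{(k)}_j|^2=\lambda\norm{\ket g}^2$, which is the vector-valued version of the scalar estimate already displayed before the lemma. Hence $F^\dagger F\ge\lambda D$ as operators supported on $\mathcal D$, and in particular $F$ is injective on $\mathcal D$.

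To convert this into the claimed inequality, take an arbitrary $\ket\psi\in\cH\otimes\cH_L$; it suffices to show $\norm{\Pi_\cW\ket\psi}^2\le\lambda^{-1}\norm{F^\dagger\ket\psi}^2$, because $\langle\psi|H|\psi\rangle=\norm{F^\dagger\ket\psi}^2$. Write $\Pi_\cW\ket\psi=F\ket g$ with $\ket g\in\mathcal D$. Then
\[
 \norm{\Pi_\cW\ket\psi}^2=\langle\psi|F|g\rangle=\langle F^\dagger\psi|g\rangle\le\norm{F^\dagger\ket\psi}\,\norm{\ket g}\le\norm{F^\dagger\ket\psi}\,\lambda^{-1/2}\norm{F\ket g}.
\]
Dividing by $\norm{F\ket g}=\norm{\Pi_\cW\ket\psi}$ (the zero case is trivial) and squaring gives the bound.

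The main obstacle is the middle step: the Gram bound must be carried from scalar coefficients to output-vector coefficients lying in noncommuting subspaces. The basis expansion handles this, because $G$ acts only on the index register $J$ and the inner products $\langle h_j|h_\ell\rangle$ decompose over any basis of $\cH$. The remaining steps are routine. The normalized-vector remark follows because $\proj{v_j}$ is then a projector tensored with $Q_j$.
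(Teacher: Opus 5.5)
Your proof is correct and follows the paper's route: your $F=\sum_j Q_j\otimes\ket{v_j}\bra j$ is the paper's operator $T$ (the restriction of $I_{\cH}\otimes R$ to $\mathcal D$) extended by zero on $\mathcal D^\perp$, so $FF^\dagger=TT^\dagger$, and your basis-expansion bound on $\mathcal D$ is the same Gram estimate the paper gets from positivity of $I_{\cH}\otimes(G-\lambda I_J)$. The only variation is the last step: you conclude $FF^\dagger\ge\lambda\Pi_{\cW}$ by applying Cauchy--Schwarz to $\Pi_{\cW}\ket\psi=F\ket g$, whereas the paper transfers the nonzero eigenvalues of $T^\dagger T$ to $TT^\dagger$ and uses that $\ker(TT^\dagger)=\cW^\perp$.
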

\begin{proof}
We first place the output vectors from the separate rejection spaces in orthogonal bookkeeping slots. Define
\begin{align}
\mathcal D
 &\coloneqq\spanop\!\left\{
   \ket h\otimes\ket j:\ket h\in\ran Q_j,\ j\in[r]
   \right\}\label{eq:gram-domain}\\
&=\left\{\sum_j\ket{h_j}\otimes\ket j:
            \ket{h_j}\in\ran Q_j\ \text{for each }j\right\}
 \subseteq\cH\otimes\cH_J.\label{eq:line-4-61-2}
\end{align}
The vectors in different slots are orthogonal because $\langle j|\ell\rangle=0$ for $j\ne\ell$. Hence the projector onto $\mathcal D$ and the norm of one of its vectors are
\begin{align}
\Pi_{\mathcal D}&=\sum_jQ_j\otimes\proj j,\label{eq:gram-domain-projector}\\
\left\|\sum_j\ket{h_j}\otimes\ket j\right\|^2
 &=\sum_{j,\ell}\langle h_j|h_\ell\rangle\langle j|\ell\rangle
 =\sum_j\norm{\ket{h_j}}^2.\label{eq:line-4-62-2}
\end{align}
For the projector identity, each summand projects onto one slot, their pairwise products vanish, and their ranges together span $\mathcal D$.

Define a linear operator $R\colon\cH_J\to\cH_L$ that replaces the bookkeeping basis by the auxiliary vectors:
\begin{equation}\label{eq:gram-label-map}
 R\coloneqq\sum_j\ket{v_j}\bra j,
 \qquad
 R^\dagger R
 =\sum_{j,\ell}\langle v_j|v_\ell\rangle\ket j\bra\ell=G.
\end{equation}
This operator is not assumed to be unitary or an isometry; its change of norm is governed by $G$. Restrict $I_{\cH}\otimes R$ to $\mathcal D$ and denote the resulting operator by $T$:
\begin{equation}\label{eq:gram-V}
 T\colon\mathcal D\longrightarrow\cH\otimes\cH_L,
 \qquad
 T\left(\sum_j\ket{h_j}\otimes\ket j\right)
 =\sum_j\ket{h_j}\otimes\ket{v_j}.
\end{equation}
For every $\ket h\in\mathcal D$, substitution gives
\begin{align}
\norm{T\ket h}^2
 &=\langle h|(I_{\cH}\otimes R^\dagger)(I_{\cH}\otimes R)|h\rangle\label{eq:gram-lower-step}\\
&=\langle h|I_{\cH}\otimes(R^\dagger R)|h\rangle
 =\langle h|I_{\cH}\otimes G|h\rangle\label{eq:line-4-65-2}\\
&=\lambda\langle h|h\rangle
   +\langle h|I_{\cH}\otimes(G-\lambda I_J)|h\rangle\label{eq:line-4-65-3}\\
&\ge\lambda\norm{\ket h}^2.\label{eq:line-4-65-4}
\end{align}
Tensoring a positive semidefinite operator with an identity preserves positivity, so the last term is nonnegative. This proves $T^\dagger T\ge\lambda I_{\mathcal D}$. It is the version for the output vectors $\ket{h_j}$ of \eqref{eq:gram-scalar-lower}.

By definition, $\ran T=\cW$. We determine $T^\dagger$ explicitly, remembering that its output must belong to $\mathcal D$. For $\ket h\in\mathcal D$ and $\ket\psi\in\cH\otimes\cH_L$,
\begin{align}
\langle Th|\psi\rangle
 &=\langle h|(I_{\cH}\otimes R^\dagger)|\psi\rangle\label{eq:gram-adjoint}\\
&=\langle h|\Pi_{\mathcal D}(I_{\cH}\otimes R^\dagger)|\psi\rangle.\label{eq:line-4-66-2}
\end{align}
We inserted $\Pi_{\mathcal D}$ because $\Pi_{\mathcal D}\ket h=\ket h$. Thus
$T^\dagger\ket\psi=\Pi_{\mathcal D}(I_{\cH}\otimes R^\dagger)\ket\psi$,
regarded as a vector in $\mathcal D$. Applying $T$ to it and expanding the projector yields
\begin{align}
TT^\dagger
 &=(I_{\cH}\otimes R)\Pi_{\mathcal D}(I_{\cH}\otimes R^\dagger)\label{eq:VVdagger}\\
&=(I_{\cH}\otimes R)
   \left(\sum_jQ_j\otimes\proj j\right)
   (I_{\cH}\otimes R^\dagger)\label{eq:line-4-67-2}\\
&=\sum_jQ_j\otimes(R\ket j)(\bra jR^\dagger)
 =\sum_jQ_j\otimes\proj{v_j}.\label{eq:line-4-67-3}
\end{align}
This is the positive sum whose support projector we must bound.

First identify its kernel. For every $\ket\psi$,
$\langle\psi|TT^\dagger|\psi\rangle=\norm{T^\dagger\ket\psi}^2$.
Hence $\ker(TT^\dagger)=\ker T^\dagger$. Moreover,
$T^\dagger\ket\psi=0$ exactly when
$\langle Th|\psi\rangle=0$ for every $\ket h\in\mathcal D$.
Since $\ran T=\cW$, this proves
\begin{equation}\label{eq:gram-kernel}
 \ker(TT^\dagger)=\ker T^\dagger=(\ran T)^\perp=\cW^\perp.
\end{equation}
Next let $\ket f\ne0$ satisfy $TT^\dagger\ket f=\mu\ket f$ with $\mu>0$. Then
\begin{align}
\norm{T^\dagger\ket f}^2
 &=\langle f|TT^\dagger|f\rangle
 =\mu\norm{\ket f}^2>0,\label{eq:gram-nonzero-eigenvalue}\\
T^\dagger T(T^\dagger\ket f)
 &=T^\dagger(TT^\dagger\ket f)
 =\mu(T^\dagger\ket f).\label{eq:line-4-69-2}
\end{align}
Thus $T^\dagger\ket f$ is a nonzero eigenvector of $T^\dagger T$, and $T^\dagger T\ge\lambda I_{\mathcal D}$ implies $\mu\ge\lambda$. Every nonzero eigenvalue of $TT^\dagger$ is therefore at least $\lambda$. Its nonzero eigenvectors span $\cW$, by \eqref{eq:gram-kernel}. If $(\ket{f_a})_a$ is an orthonormal eigenbasis of that space, its spectral decomposition gives
\begin{equation}\label{eq:gram-support-gap}
 TT^\dagger=\sum_a\mu_a\proj{f_a}
 \ge\lambda\sum_a\proj{f_a}
 =\lambda\Pi_{\cW}.
\end{equation}
Combine this with \eqref{eq:VVdagger}--\eqref{eq:line-4-67-3} and divide by $\lambda$ to obtain \eqref{eq:gram-bound}. If all $Q_j$ have zero range, both sides are zero; otherwise, zero-dimensional individual ranges cause no change in the proof.
\end{proof}

\paragraph{A controlled union bound.}
For a state $\tau$ on $\cH\otimes\cH_L$, the lemma implies that
\begin{equation}\label{eq:gram-probability}
 \Tr[\Pi_{\cW}\tau]
 \le\lambda^{-1}\sum_j\Tr[(Q_j\otimes\proj{v_j})\tau].
\end{equation}
This is the precise replacement for a union bound in the labeled space. If the auxiliary vectors are orthonormal, then $G=I_J$, the summands have orthogonal ranges, and the projector onto their sum equals the sum of their projectors. For normalized, linearly independent vectors, the factor $\lambda^{-1}$ quantifies the loss of orthogonality. For general unnormalized vectors it also accounts for their scales, and the terms are understood as expectations of positive semidefinite operators. When the vectors are normalized, each term is specifically the probability of a projective rejection test. If the auxiliary vectors coincide, the positive lower bound on $G$ fails, and one recovers the possibility of the unstable unlabelled span. The lemma is therefore not a universal bound for arbitrary unions of quantum subspaces: its hypothesis is the quantitative separation supplied by the labels.

\paragraph{The three error vectors.}
For our vectors in the order $1,2,12$, direct inner products in \eqref{eq:two-w} give
\begin{equation}\label{eq:two-Gram}
 G_{\mathrm{err}}\coloneqq 
 \begin{pmatrix}
  1&1-\delta&\sqrt{1-\delta}\\
  1-\delta&1&\sqrt{1-\delta}\\
  \sqrt{1-\delta}&\sqrt{1-\delta}&1
 \end{pmatrix}.
\end{equation}
For instance, the overlap of the first two vectors is
$\langle0|u_{\ell_1}\rangle\langle u_{\ell_2}|0\rangle=1-\delta$.
We could diagonalize this $3\times3$ matrix, but a tensor-product argument gives a simpler bound that will also apply to $k$ senders.

Adding $\ket{t_\ell}$ completes all four choices of the two single-register vectors $\ket0$ and $\ket{u_{\ell_i}}$. To see the tensor-product structure explicitly, set
\begin{equation}\label{eq:two-binary-label-choices}
 \ket{v_{i,0}}\coloneqq\ket0,\qquad
 \ket{v_{i,1}}\coloneqq\ket{u_{\ell_i}}.
\end{equation}
In lexicographic order of $(a,b)\in\{0,1\}^2$, the vectors $\ket{v_{1,a}}\otimes\ket{v_{2,b}}$ are
\begin{equation}\label{eq:two-full-vector-order}
 \ket{w_{\ell,12}},\quad
 \ket{w_{\ell,1}},\quad
 \ket{w_{\ell,2}},\quad
 \ket{t_\ell}.
\end{equation}
The single-register Gram matrix is the same in each register, because $\langle0|u_{\ell_i}\rangle=\sqrt{1-\delta}$. Furthermore,
\begin{equation}\label{eq:two-Gram-entry-factorization}
 \langle v_{1,a}\otimes v_{2,b}|v_{1,a'}\otimes v_{2,b'}\rangle
 =\langle v_{1,a}|v_{1,a'}\rangle
  \langle v_{2,b}|v_{2,b'}\rangle.
\end{equation}
These are exactly the entries of a tensor product of the two single-register Gram matrices. Hence
\begin{equation}\label{eq:two-full-Gram}
 G_{\mathrm{full}}
 \coloneqq \begin{pmatrix}1&\sqrt{1-\delta}\\\sqrt{1-\delta}&1\end{pmatrix}^{\otimes2}.
\end{equation}
The $2\times2$ factor has eigenvalues $1+\sqrt{1-\delta}$ and $1-\sqrt{1-\delta}$. Eigenvalues of its tensor square are products of two factor eigenvalues, so
\begin{equation}\label{eq:two-full-Gram-lower}
 G_{\mathrm{full}}\ge
       \bigl(1-\sqrt{1-\delta}\bigr)^2 I_4.
\end{equation}
The ordering in \eqref{eq:two-Gram} is $1,2,12$, whereas \eqref{eq:two-full-vector-order} is $12,1,2,\text{correct}$. For $\ket z$ with coefficients $(z_1,z_2,z_{12})$, form $\ket{\widetilde z}$ with coefficients $(z_{12},z_1,z_2,0)$ in the latter order. This permutation and insertion preserve the norm and identify the required principal submatrix. Consequently,
\begin{align}
\langle z|G_{\mathrm{err}}|z\rangle
 &=\langle\widetilde z|G_{\mathrm{full}}|\widetilde z\rangle\label{eq:two-Gram-lower}\\
&\ge\bigl(1-\sqrt{1-\delta}\bigr)^2\norm{\ket{\widetilde z}}^2\label{eq:line-4-78-2}\\
&=C_2(\delta)^{-1}\norm{\ket z}^2.\label{eq:line-4-78-3}
\end{align}
Applying \lemref{lem:gram} to \eqref{eq:two-bad-space} therefore yields
\begin{equation}\label{eq:two-bad-bound}
 \Pi_\ell\le C_2(\delta)
  \sum_{s\in\{1,2,12\}}Q_{s,\ell}\otimes\proj{w_{\ell,s}}.
\end{equation}
For each fixed $\delta>0$, the constant is finite and independent of the output dimension, the candidate message pair, and the message sizes. Its divergence as $\delta\downarrow0$ expresses the tradeoff already visible in the labels: making them harder to distinguish also weakens the control on their rejected span. In the asymptotic proof we will hold $\delta$ fixed while making the separate testing errors small.

\subsubsection{Correct-message acceptance}\label{subsec:two-correct}
The geometric lemma (Lemma~\ref{lem:gram}) supplies the step that is unavailable for arbitrary intersections on $B$. We now apply it to the correct message pair $m$, for which each binary rejection probability is already included in $a$, defined in \eqref{eq:two-a}. There are two possible sources of rejection by $\Gamma_m$: a component outside the local label space and a component in the labeled rejected span. The first will be exactly zero, and the second will be bounded by the Gram estimate from Lemma~\ref{lem:gram}.

For the correct pair, \eqref{eq:two-exact-table} gives $K_m\ket{t_m}=\ket{t_m}$. Substitute the definition $\rhot m=\rho_m\otimes\proj{t_m}$ from \eqref{eq:labeled-states} to obtain
\begin{align}
\Tr[(I_{BL}-I_B\otimes K_m)\rhot m]
 &=\Tr[(I_B\otimes(I_L-K_m))(\rho_m\otimes\proj{t_m})]\label{eq:correct-support-zero}\\
&=\Tr[\rho_m]\,\Tr[(I_L-K_m)\proj{t_m}]\label{eq:line-4-80-2}\\
&=\langle t_m|(I_L-K_m)|t_m\rangle=0.\label{eq:line-4-80-3}
\end{align}
The only possible rejection of the correct tagged state is therefore its overlap with $\cW_m$. Substituting \eqref{eq:two-Gamma} and then \eqref{eq:two-bad-bound}, we obtain
\begin{align}
\Tr[(I_{BL}-\Gamma_m)\rhot m]
 &=\Tr[(I_{BL}-I_B\otimes K_m+\Pi_m)\rhot m]\label{eq:two-correct-bound}\\
&=\Tr[\Pi_m(\rho_m\otimes\proj{t_m})]\label{eq:line-4-81-2}\\
&\le C_2(\delta)\sum_s
     \Tr[(Q_{s,m}\otimes\proj{w_{m,s}})
                         (\rho_m\otimes\proj{t_m})]\label{eq:line-4-81-3}\\
&=C_2(\delta)\sum_s
      \Tr[Q_{s,m}\rho_m]\,
      \Tr[\proj{w_{m,s}}\proj{t_m}]\label{eq:line-4-81-4}\\
&=C_2(\delta)\sum_s
      |\langle w_{m,s}|t_m\rangle|^2\Tr[Q_{s,m}\rho_m]\label{eq:line-4-81-5}\\
&\le C_2(\delta)\sum_s\Tr[(I_B-P_{s,m})\rho_m].\label{eq:line-4-81-6}
\end{align}
The fourth line factors the trace over $B$ and $L$. In the last line, the squared inner products are at most one because both vectors are normalized. Their exact values are $1-\delta$, $1-\delta$, and $(1-\delta)^2$ for $s=1,2,12$, respectively.

Only the overlap with the correct tagged state is needed in this estimate. The present Gram bound controls a larger set of label directions and is therefore conservative. In \sref{subsubsec:sharp-packing}, rescaling the same spanning vectors by their overlaps with the correct label improves the estimate without changing the rejected space or the decoder.

Every term $\Tr[(I_B-P_{s,m})\rho_m]=\Tr[Q_{s,m}\rho_m]$ is a \emph{rejection} probability for a test applied to the correct message state. Such rejection is an error, so it is desirable that these terms be small. The calculation in \eqref{eq:two-correct-bound}--\eqref{eq:line-4-81-6} shows that their labeled span also has small rejection, at a cost depending only on $\delta$. Equivalently, the correct tagged state's acceptance obeys
\begin{equation}\label{eq:correct-acceptance-lower}
 \Tr[\Gamma_m\rhot m]
 \ge1-C_2(\delta)\sum_s\Tr[Q_{s,m}\rho_m].
\end{equation}
This is the controlled replacement for the classical intersection guarantee: a small sum of individual rejection probabilities yields acceptance close to one, even though the original tests do not commute.

\subsubsection{A worked noncommuting example}\label{subsec:qubit-example}
We return to the instability example from \sref{sec:two-setup} and compute the effect of the labels exactly. This example explains together the rejected span, the subtraction in $\Gamma_\ell$, and the need for the Gram-matrix bound. Fix one candidate message pair $\ell$; this is an illustration of its detection projector, not a full communication code.

Let $B$ be a qubit with correct output $\rho_\ell\coloneqq\proj0$. The basis vectors $\ket0,\ket1$ here play the roles of $\ket{e_0},\ket{e_1}$ in that earlier example. For $0<\theta<\pi/4$, set
\begin{equation}\label{eq:example-q}
 \ket{q_+}=\sin\theta\ket0+\cos\theta\ket1,
 \qquad
 \ket{q_-}=-\sin\theta\ket0+\cos\theta\ket1,
\end{equation}
and choose
\begin{equation}\label{eq:example-tests}
 Q_{1,\ell}\coloneqq \proj{q_+},\qquad
 Q_{2,\ell}\coloneqq \proj{q_-},\qquad Q_{12,\ell}\coloneqq 0.
\end{equation}
Each nontrivial test rejects the correct output with probability $\sin^2\theta$, yet $\ket{q_+}$ and $\ket{q_-}$ span $B$. Rejecting their unlabelled span would therefore reject $\ket0$ with probability one, however small the positive angle is.

The labeled rejected span is instead
\begin{equation}\label{eq:example-r}
 \cW_\ell=\spanop\{\ket{r_1},\ket{r_2}\},\qquad
 \ket{r_1}\coloneqq\ket{q_+}\otimes\ket{w_{\ell,1}},\quad
 \ket{r_2}\coloneqq\ket{q_-}\otimes\ket{w_{\ell,2}}.
\end{equation}
The two vectors are normalized, and their overlap is
\begin{equation}\label{eq:example-overlap}
 \zeta\coloneqq\langle r_1|r_2\rangle
 =\langle q_+|q_-\rangle\langle w_{\ell,1}|w_{\ell,2}\rangle
 =(1-\delta)\cos(2\theta).
\end{equation}
Here $0<\zeta<1$, so an orthonormal basis of the rejected span is
\begin{equation}\label{eq:example-chi}
 \ket{\chi_+}\coloneqq
       \frac{\ket{r_1}+\ket{r_2}}{\sqrt{2(1+\zeta)}},
 \qquad
 \ket{\chi_-}\coloneqq
       \frac{\ket{r_1}-\ket{r_2}}{\sqrt{2(1-\zeta)}}.
\end{equation}
Indeed, the squared norms of the two numerators are $2(1+\zeta)$ and $2(1-\zeta)$, and their mutual inner product is $1-\zeta+\zeta-1=0$. Thus $\Pi_\ell=\proj{\chi_+}+\proj{\chi_-}$.

The correct tagged vector is $\ket{\psi_\ell}\coloneqq\ket0\otimes\ket{t_\ell}$. Its overlaps with the spanning vectors are
\begin{align}
\langle r_1|\psi_\ell\rangle
 &=\langle q_+|0\rangle\langle w_{\ell,1}|t_\ell\rangle
 =\sqrt{1-\delta}\sin\theta,\label{eq:example-r-overlaps}\\
\langle r_2|\psi_\ell\rangle
 &=\langle q_-|0\rangle\langle w_{\ell,2}|t_\ell\rangle
 =-\sqrt{1-\delta}\sin\theta.\label{eq:line-4-88-2}
\end{align}
It is therefore orthogonal to $\ket{\chi_+}$. Substitution into the other overlap gives the exact rejection probability
\begin{align}
\langle\psi_\ell|\Pi_\ell|\psi_\ell\rangle
 &=|\langle\chi_-|\psi_\ell\rangle|^2\label{eq:example-exact-rejection}\\
&=\frac{(2\sqrt{1-\delta}\sin\theta)^2}{2(1-\zeta)}\label{eq:line-4-89-2}\\
&=\frac{2(1-\delta)\sin^2\theta}
         {1-(1-\delta)\cos(2\theta)}\label{eq:line-4-89-3}\\
&=\frac{2(1-\delta)\sin^2\theta}
         {\delta+2(1-\delta)\sin^2\theta}.\label{eq:line-4-89-4}
\end{align}
Because $\ket{\psi_\ell}$ is in the local space associated with $\ell$, subtracting $\Pi_\ell$ from $I_B\otimes K_\ell$ leaves acceptance probability
\begin{align}
\langle\psi_\ell|\Gamma_\ell|\psi_\ell\rangle
 &=1-\langle\psi_\ell|\Pi_\ell|\psi_\ell\rangle\label{eq:example-exact-acceptance}\\
&=\frac{\delta}{\delta+2(1-\delta)\sin^2\theta}\label{eq:line-4-90-2}\\
&=\frac{1}{1+\dfrac{2(1-\delta)}{\delta}\sin^2\theta}
 \longrightarrow1\quad(\theta\to0,\ \delta>0\text{ fixed}).\label{eq:line-4-90-3}
\end{align}
The third line divides numerator and denominator by $\delta$. Thus, for every fixed $\delta>0$, the acceptance probability tends to one as $\theta\to0$, whereas the unlabelled intersection accepts with probability zero for every positive $\theta$. Quantitatively, \eqref{eq:example-exact-rejection}--\eqref{eq:line-4-89-4} is at most $2(1-\delta)\sin^2\theta/\delta$. This particular two-vector calculation is sharper than the general three-vector constant, but illustrates the same mechanism. The denominator has acquired a positive term $\delta$ that prevents arbitrarily small individual errors from producing complete rejection. \fref{fig:qubit-regularization} displays this comparison.

\begin{figure}[!t]
\centering
\begin{adjustbox}{max width=\linewidth}
\begin{tikzpicture}[font=\small]
\begin{axis}[
 width=8.5cm,height=5.3cm,
 xmin=0,xmax=.02,ymin=0,ymax=1.08,scaled x ticks=false,
 axis lines=left,
 xlabel={individual rejection $\varepsilon=\sin^2\theta$},
 ylabel={rejection of the span},
 xtick={0,.005,.01,.015,.02},xticklabels={$0$,$0.005$,$0.01$,$0.015$,$0.02$},
 ytick={0,.25,.5,.75,1},
 tick label style={font=\footnotesize},
 label style={font=\small},
 legend style={draw=none,font=\footnotesize,at={(.98,.68)},anchor=east},
 domain=0:.02,samples=120,clip=false]
 \addplot[inkorange,thick,densely dashed] coordinates {(.00015,1)(.02,1)};
 \addlegendentry{unlabelled span, $\varepsilon>0$}
 \addplot[inkblue,very thick] {2*(1-.05)*x/(.05+2*(1-.05)*x)};
 \addlegendentry{weak labels, $\delta=0.05$}
 \addplot[inkgreen,very thick] {2*(1-.2)*x/(.2+2*(1-.2)*x)};
 \addlegendentry{weak labels, $\delta=0.20$}
 \addplot[forget plot,only marks,mark=o,inkorange,mark size=2pt] coordinates {(0,1)};
 \addplot[forget plot,only marks,mark=*,inkorange,mark size=1.5pt] coordinates {(0,0)};
\end{axis}
\node[align=left,anchor=west,text width=4.15cm] at (9.1,2.95)
 {The unlabelled span rejects the correct vector completely for every positive $\varepsilon$.};
\node[align=left,anchor=west,text width=4.15cm] at (9.1,.95)
 {For fixed $\delta>0$, the labeled span's rejection tends to zero with the separate testing errors.};
\end{tikzpicture}
\end{adjustbox}
\caption{Exact calculation for the noncommuting qubit example. The solid curves are $2(1-\delta)\varepsilon/[\delta+2(1-\delta)\varepsilon]$, from \eqref{eq:example-exact-rejection}--\eqref{eq:line-4-89-4}. The unlabelled value is $1$ for $\varepsilon>0$ and $0$ at $\varepsilon=0$; the open circle marks the discontinuity. This figure concerns the fictitious tagged state's rejection by a candidate projector, before the state-based pretty-good-measurement analysis and removal of the labels. It does not compare complete coding-error bounds.}
\label{fig:qubit-regularization}
\end{figure}
\FloatBarrier

The order of limits is important in this example and in the proof. Taking $\delta\to0$ first, with $\theta>0$ fixed, recovers the unlabelled rejection probability one. Taking the individual testing errors to zero at fixed $\delta>0$ avoids that instability. Only after obtaining small testing errors do we choose a smaller label perturbation. The asymptotic argument in the next section implements precisely this order.

\subsection{The tagged-state decoder and removal of the labels}\label{subsec:two-remove}
The geometric argument has produced a good binary test $\Gamma_m$, defined in \eqref{eq:two-Gamma}, for every proposed message pair. We now use these tests to analyze a single decoding measurement. It is important to distinguish their roles: $\Gamma_m$ certifies performance, whereas the \emph{tagged states themselves} define the pretty-good measurement. Thus we do not require the $\Gamma_m$ projectors to be jointly measurable or to sum to the identity. Cheng's trace inequality makes the connection between the tests and the state-based decoder.

Two tasks remain. We first bound the decoder's error on the fictitious tagged ensemble. We then compress the completed POVM against the fixed reference label, obtaining a measurement on $B$ alone. The second step removes the message-dependent auxiliary states; it is not an extra operation that the sender must implement.

\paragraph{The states define the measurement.}
Apply \eqref{eq:weighted-pgm-definition} with $\tau_m=\rhot m$. Explicitly, on $BL$ set
\begin{equation}\label{eq:two-pgm-explicit}
 \widetilde Z\coloneqq\sum_m\rhot m,\qquad
 \widetilde\Lambda_m\coloneqq
       \widetilde Z^{-1/2}\rhot m\widetilde Z^{-1/2},\qquad
 \widetilde\Lambda_{\mathrm f}\coloneqq I_{BL}-\Pi_{\widetilde Z}.
\end{equation}
All inverses are on the support. The effects are positive and sum to the identity, as in \eqref{eq:state-pgm-complete}. The definition uses the realized output states, the message labels, and $\delta$, but not the separately selected projectors $P_{s,m}$. Different choices of binary tests can therefore certify the same decoder. 

\paragraph{The projectors certify its error.}
For a fixed $m$, let $B_m\coloneqq\sum_{\ell\ne m}\rhot\ell$. Substituting $A=\rhot m$, $B=B_m$, and $T=\Gamma_m$ into \eqref{eq:cheng-test} gives
\begin{align}
1-\Tr[\widetilde\Lambda_m\rhot m]
 &=\Tr[\rhot m\widetilde Z^{-1/2}B_m\widetilde Z^{-1/2}]\label{eq:two-pgm-cheng}\\
&\le\Tr[(I_{BL}-\Gamma_m)\rhot m]
       +\sum_{\ell\ne m}\Tr[\Gamma_m\rhot\ell].\label{eq:line-4-92-2}
\end{align}
The equality uses $\rhot m$ being supported on $\supp\widetilde Z$, exactly as in \eqref{eq:pgm-one-versus-rest}--\eqref{eq:line-2-33-4}. The inequality has coefficient one on each term.

The false-acceptance estimate previously proved has $\Gamma_\ell$ acting on $\rhot m$. In \eqref{eq:two-pgm-cheng}--\eqref{eq:line-4-92-2} the indices occur in the opposite order. The uniform-message average resolves this distinction by the exact reindexing
\begin{equation}\label{eq:two-reindex}
 \frac1M\sum_m\sum_{\ell\ne m}\Tr[\Gamma_m\rhot\ell]
 =\frac1M\sum_m\sum_{\ell\ne m}\Tr[\Gamma_\ell\rhot m].
\end{equation}
This is a relabeling of ordered pairs, not an equality between the two individual summands. Now use \eqref{eq:two-correct-bound}--\eqref{eq:line-4-81-6} for the first sum and \eqref{eq:two-wrong-unified} for the reindexed second sum:
\begin{align}
\widetilde\Pe
 &\coloneqq1-\frac1M\sum_m\Tr[\widetilde\Lambda_m\rhot m]\label{eq:two-tagged-error}\\
&\le\frac1M\sum_m\Tr[(I_{BL}-\Gamma_m)\rhot m]
       +\frac1M\sum_m\sum_{\ell\ne m}\Tr[\Gamma_\ell\rhot m]\label{eq:line-4-94-2}\\
&\le\frac{C_2(\delta)}M\sum_m\sum_s\Tr[(I_B-P_{s,m})\rho_m]
       +\frac1M\sum_m\sum_{\ell\ne m}\Tr[P_{s(m,\ell),\ell}\rho_m]\label{eq:line-4-94-3}\\
&=C_2(\delta)a+b.\label{eq:line-4-94-4}
\end{align}
Figure~\ref{fig:decoder-certificate} summarizes how the binary analysis tests certify the state-based decoder. The last equality uses the definitions of $a$ and $b$ in \eqref{eq:two-a} and \eqref{eq:two-b}. The geometry supplies both estimates, while Cheng's trace inequality converts them into a bound on the state-based decoder. 

\begin{figure}[!t]
\centering
\begin{adjustbox}{max width=\linewidth}
\begin{tikzpicture}[font=\small]
\node[panel,text width=3.0cm,minimum height=1.2cm] (states) at (0,2.8)
 {Tagged states\\$\rhot m$};
\node[panel,text width=4.0cm,minimum height=1.2cm] (pgm) at (6.0,2.8)
 {Define the measurement\\$\widetilde Z^{-1/2}\rhot m\widetilde Z^{-1/2}$};
\node[panel,fill=softgreen,text width=3.0cm,minimum height=1.2cm] (tests) at (0,0)
 {Separate tests\\$P_{1,m},P_{2,m},P_{12,m}$};
\node[panel,fill=softgreen,text width=4.0cm,minimum height=1.2cm] (gamma) at (6.0,0)
 {Construct analysis tests\\$\Gamma_m=I_B\otimes K_m-\Pi_m$};
\draw[flow] (states)--(pgm);
\draw[flow] (tests)--(gamma);
\draw[->,densely dashed,inkorange,thick] (gamma.north)--node[right,text width=3.5cm,align=left,font=\footnotesize]{Cheng's trace inequality\\bounds the decoder's error}(pgm.south);
\end{tikzpicture}
\end{adjustbox}
\caption{The states define the decoder; the geometric tests certify its performance. Solid arrows indicate a construction. The dashed arrow indicates an error estimate, not a physical operation or a dependency in the definition of the POVM. The projectors $P_{s,m}$ and $\Gamma_m$ are not its outcome effects. After the tagged-state analysis, the completed POVM is compressed against the fixed reference label.}
\label{fig:decoder-certificate}
\end{figure}

\paragraph{Compressing the completed measurement.}

We now turn the completed POVM into a measurement on the actual output. Define the fixed isometry
\begin{equation}\label{eq:J-two}
 J\colon \cH_B\longrightarrow\cH_B\otimes\cH_L,
 \qquad J\ket h=\ket h\otimes\ket{00}.
\end{equation}
This operation prepares a locally available reference state and is independent of the unknown message. Compress every effect, including failure:
\begin{equation}\label{eq:physical-POVM}
 \Lambda_m\coloneqq J^\dagger\widetilde\Lambda_mJ,
 \qquad
 \Lambda_{\mathrm f}\coloneqq J^\dagger\widetilde\Lambda_{\mathrm f}J.
\end{equation}
For every $\ket h$, positivity follows from
$\langle h|\Lambda_m|h\rangle=\langle Jh|\widetilde\Lambda_m|Jh\rangle\ge0$,
and likewise for failure. Completeness follows by substituting all definitions:
\begin{align}
\sum_m\Lambda_m+\Lambda_{\mathrm f}
 &=J^\dagger\left(\sum_m\widetilde\Lambda_m
                      +\widetilde\Lambda_{\mathrm f}\right)J\label{eq:physical-POVM-sum}\\
&=J^\dagger\bigl(\widetilde Z^{-1/2}\widetilde Z\widetilde Z^{-1/2}+I_{BL}-\Pi_{\widetilde Z}\bigr)J\label{eq:line-4-97-2}\\
&=J^\dagger I_{BL}J
 =I_B\langle00|00\rangle=I_B.\label{eq:line-4-97-3}
\end{align}
Thus the compressed effects form a POVM on the original system $B$.

For each message pair $m$, its actual success probability equals the success probability of the completed extended-space measurement on the fixed-label state. Explicit substitution gives
\begin{align}
\Tr[\Lambda_m\rho_m]
 &=\Tr[J^\dagger\widetilde\Lambda_mJ\rho_m]\label{eq:physical-success-identity}\\
&=\Tr[\widetilde\Lambda_mJ\rho_mJ^\dagger]\label{eq:line-4-98-2}\\
&=\Tr[\widetilde\Lambda_m(\rho_m\otimes\proj{00})]
 =\Tr[\widetilde\Lambda_m\rhoh m].\label{eq:line-4-98-3}
\end{align}
The second equality uses cyclicity of the trace. The third follows from the action of $J$ in \eqref{eq:J-two}, and the last is the definition in \eqref{eq:labeled-states}. Since $\widetilde\Lambda_m$ is an effect, \eqref{eq:effect-distance} and \eqref{eq:two-label-distance}--\eqref{eq:line-4-11-3} yield
\begin{align}
1-\Tr[\Lambda_m\rho_m]
 &=1-\Tr[\widetilde\Lambda_m\rhoh m]\label{eq:two-final-transfer}\\
&=1-\Tr[\widetilde\Lambda_m\rhot m]
   +\Tr[\widetilde\Lambda_m(\rhot m-\rhoh m)]\label{eq:line-4-99-2}\\
&\le1-\Tr[\widetilde\Lambda_m\rhot m]+\Delta(\rhot m,\rhoh m)\label{eq:line-4-99-3}\\
&=1-\Tr[\widetilde\Lambda_m\rhot m]+d_2(\delta).\label{eq:line-4-99-4}
\end{align}
The normalized trace distance in this comparison is exact. The estimate can nevertheless be sharpened by retaining the information that the tagged error is already small; \sref{subsubsec:sharp-packing} gives this error-sensitive comparison for the same completed POVM.

Average over $m$ and insert \eqref{eq:two-tagged-error}--\eqref{eq:line-4-94-4}. This proves \eqref{eq:two-packing} and completes the proof of \thref{thm:two-packing}.

\fref{fig:transfer} emphasizes the order of the last step. The comparison is made once, using the completed success effect for the transmitted message. Replacing the labeled state separately in every false-acceptance term would instead allow an additive error to accumulate over all competitors. We never make that replacement. A failure outcome can finally be assigned to an arbitrary message without increasing the average error.

\begin{figure}[!t]
\centering
\begin{adjustbox}{max width=\linewidth}
\begin{tikzpicture}[font=\small]
\node[panel,text width=4.0cm] (s1) at (0,2.1) {$\rho_m\otimes\proj{t_m}$};
\node[operator,text width=2.1cm] (d1) at (5.4,2.1) {$\widetilde\Lambda$};
\node[anchor=west] at (-2.2,2.95) {\textbf{Analysis only:} message-dependent labels};
\draw[flow] (s1)--(d1);
\draw[flow] (d1.east)--(8.2,2.1) node[right] {estimate};
\node[panel,text width=4.0cm,fill=softgreen] (s2) at (0,0) {$\rho_m\otimes\proj{00}$};
\node[operator,text width=2.1cm] (d2) at (5.4,0) {$\widetilde\Lambda$};
\node[anchor=west] at (-2.2,-.95) {\textbf{Equivalent physical decoder:} $\Lambda_m=J^\dagger\widetilde\Lambda_mJ$ on $B$};
\draw[flow] (s2)--(d2);
\draw[flow] (d2.east)--(8.2,0) node[right] {estimate};
\draw[<->,dashed,inkorange] (s1.south)--node[right,align=left,font=\footnotesize] {normalized trace\\distance $d_2(\delta)$} (s2.north);
\node[align=center,font=\footnotesize] at (5.4,1.08) {the same completed POVM};
\end{tikzpicture}
\end{adjustbox}
\caption{The labels are removed after the pretty-good measurement of the tagged states has been completed. Replacing the fictitious labeled state by the fixed-label state changes the success probability of that measurement by at most $d_2(\delta)$. This is one comparison for the transmitted message, not a separate perturbation estimate for every competitor. The lower row is implemented by a POVM on $B$ alone, with $J\ket h=\ket h\otimes\ket{00}$.}
\label{fig:transfer}
\end{figure}

\FloatBarrier
\section{Random coding and the two-sender rate region}\label{sec:two-coding}
The geometric construction is now complete. To use it for a channel, we choose the separate tests so that the expectation of every false-acceptance term is a binary testing probability. We first carry this out for one use and then apply it to a block of $n$ uses.

\subsection{The one-shot random-coding bound}
For the channel in \eqref{eq:two-channel} and the distributions $p_1$ and $p_2$ defined thereafter, define the conditionally averaged states
\begin{equation}\label{eq:two-averages}
 \overline\rho_{x_2}\coloneqq\sum_{x_1}p_1(x_1)\rho_{x_1,x_2},\qquad
 \overline\rho_{x_1}\coloneqq\sum_{x_2}p_2(x_2)\rho_{x_1,x_2},\qquad
 \overline\rho\coloneqq\sum_{x_1,x_2}p_1(x_1)p_2(x_2)\rho_{x_1,x_2}.
\end{equation}
Choose projectors $P_{s,x_1,x_2}$, for $s\in\{1,2,12\}$, before drawing the codebook. Define
\begin{equation}\label{eq:two-alpha}
 \alpha_s\coloneqq\sum_{x_1,x_2}p_1(x_1)p_2(x_2)
       \Tr[(I-P_{s,x_1,x_2})\rho_{x_1,x_2}]
\end{equation}
and
\begin{align}
\beta_1&\coloneqq\sum_{x_1,x_2}p_1(x_1)p_2(x_2)
                  \Tr[P_{1,x_1,x_2}\overline\rho_{x_2}],\label{eq:two-betas}\\
\beta_2&\coloneqq\sum_{x_1,x_2}p_1(x_1)p_2(x_2)
                  \Tr[P_{2,x_1,x_2}\overline\rho_{x_1}],\label{eq:line-5-3-2}\\
\beta_{12}&\coloneqq\sum_{x_1,x_2}p_1(x_1)p_2(x_2)
                  \Tr[P_{12,x_1,x_2}\overline\rho].\label{eq:line-5-3-3}
\end{align}
These quantities are averages over candidate inputs. They need not be small for every individual input pair.

Generate all $x_i(m_i)$ independently according to $p_i$, and set
\begin{equation}\label{eq:codebook-tests}
 \rho_m\coloneqq\rho_{x_1(m_1),x_2(m_2)},\qquad
 P_{s,m}\coloneqq P_{s,x_1(m_1),x_2(m_2)}.
\end{equation}
For each realization, apply \thref{thm:two-packing}. The following proposition evaluates the expectation of its right-hand side.

\begin{proposition}[Two-sender random coding]\label{prop:two-random}
Let $(x_1,x_2)\mapsto\rho_{x_1,x_2}$ be the finite-alphabet, finite-dimensional channel in \eqref{eq:two-channel}, with product input distribution $p_1p_2$. Choose projectors $P_{s,x_1,x_2}$ for $s\in\{1,2,12\}$ and define their average errors $\alpha_s,\beta_s$ by \eqref{eq:two-alpha} and \eqref{eq:two-betas}--\eqref{eq:line-5-3-3}. For integers $M_1,M_2\ge1$, sample every $x_i(m_i)$ independently according to $p_i$, as in \eqref{eq:codebook-tests}. For every $0<\delta<1$, there are decoding POVMs depending on these codebooks whose uniform-message average error $\Pe$, defined by \eqref{eq:ensemble-error} with $\rho_m\coloneqq\rho_{x_1(m_1),x_2(m_2)}$, satisfies
\begin{equation}\label{eq:two-random}
\begin{aligned}
\ee_{\cC}\Pe
 \le {}&d_2(\delta)+C_2(\delta)(\alpha_1+\alpha_2+\alpha_{12})\\
 &+\bigl[(M_1-1)\beta_1+(M_2-1)\beta_2
            +(M_1-1)(M_2-1)\beta_{12}\bigr].
\end{aligned}
\end{equation}
The expectation is over the independent codebook samples, and $d_2(\delta),C_2(\delta)$ are defined in \eqref{eq:two-constants}.
\end{proposition}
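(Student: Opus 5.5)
For each realization of the codebooks we apply \thref{thm:two-packing} with $\rho_m=\rho_{x_1(m_1),x_2(m_2)}$ and $P_{s,m}=P_{s,x_1(m_1),x_2(m_2)}$. This gives a POVM depending on the codebooks, and its error obeys $\Pe\le d_2(\delta)+C_2(\delta)a+b$ pointwise in $\cC$. The constants $d_2(\delta)$ and $C_2(\delta)$ do not depend on the codebook, so taking expectations gives $\ee_\cC\Pe\le d_2(\delta)+C_2(\delta)\ee_\cC a+\ee_\cC b$. It remains to compute these two expectations term by term using linearity and the independence of the codewords.

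For $\ee_\cC a$, fix $m$ and $s$. The pair $(x_1(m_1),x_2(m_2))$ is distributed as $p_1p_2$, so $\ee\Tr[(I-P_{s,m})\rho_m]=\alpha_s$ exactly. Averaging over the $M$ message pairs and summing over $s$ gives $\ee_\cC a=\alpha_1+\alpha_2+\alpha_{12}$.

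For $\ee_\cC b$, consider each of the three sums in \eqref{eq:two-b} separately.

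\textbf{Type-$1$ terms.} Take $\Tr[P_{1,(\ell_1,m_2)}\rho_m]$ with $\ell_1\ne m_1$. The codewords $x_1(\ell_1)$, $x_1(m_1)$, and $x_2(m_2)$ are independent. Condition on $x_1(\ell_1)=x_1$ and $x_2(m_2)=x_2$ and average over $x_1(m_1)\sim p_1$. Because the trace is linear in the state, this replaces $\rho_m$ by $\overline\rho_{x_2}$. Averaging over $(x_1,x_2)\sim p_1p_2$ then yields $\beta_1$. The key point is that the test and the state share the same Sender-2 codeword, and that shared codeword is exactly what is conditioned on in $\overline\rho_{x_2}$.

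\textbf{Type-$2$ terms.} The argument is symmetric and yields $\beta_2$.

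\textbf{Type-$12$ terms.} Both codewords of $m$ are independent of those of $\ell$, so the state averages to $\overline\rho$ and the term yields $\beta_{12}$.

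Counting the competitors for each fixed $m$ gives $M_1-1$, $M_2-1$, and $(M_1-1)(M_2-1)$ terms respectively. The factor $1/M$ cancels against the sum over $m$, which produces the bracket in \eqref{eq:two-random}.

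The only real care needed is in the type-$1$ and type-$2$ terms. There the shared coordinate is not averaged out, and one must check that the conditioning matches the definitions \eqref{eq:two-betas}--\eqref{eq:line-5-3-2}, including the argument order of $P_{s,x_1,x_2}$. The rest is routine. The result holds even if distinct messages receive identical codewords: the packing theorem labels message indices rather than codeword values, so no injectivity of the codebook is required.
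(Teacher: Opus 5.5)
Your proposal is correct and follows the paper's proof essentially step by step. Like the paper, you apply Theorem~\ref{thm:two-packing} to each codebook realization and take expectations. You then evaluate each correct-rejection and false-acceptance term by independence of the codewords at distinct message indices, and count $M_1-1$, $M_2-1$, and $(M_1-1)(M_2-1)$ competitors, using only linearity of expectation rather than any independence between competitors.
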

\begin{proof}
For a fixed transmitted message pair $m$, its input pair has distribution $p_1p_2$. Thus, by \eqref{eq:two-a} and \eqref{eq:two-alpha},
\begin{equation}
 \ee_{\cC}a=\alpha_1+\alpha_2+\alpha_{12}.
\end{equation}
For the first error class, fix $\ell_1\ne m_1$ and set $\ell_2\coloneqq m_2$. The candidate input $x_1(\ell_1)$, the true input $x_1(m_1)$, and their shared input $x_2(m_2)$ are independent random variables. Therefore
\begin{equation}\label{eq:average-type1}
\begin{aligned}
\ee_{\cC}\Tr[P_{1,(\ell_1,m_2)}\rho_m]& =\sum_{x_1,x_2,y_1}p_1(x_1)p_2(x_2)p_1(y_1)
       \Tr[P_{1,x_1,x_2}\rho_{y_1,x_2}]\\
 &=\sum_{x_1,x_2}p_1(x_1)p_2(x_2)
       \Tr[P_{1,x_1,x_2}\overline\rho_{x_2}]=\beta_1.
\end{aligned}
\end{equation}
Similarly, for $\ell_1=m_1$ and $\ell_2\ne m_2$, the relevant independent variables give
\begin{equation}\label{eq:average-type2}
\begin{aligned}
\ee_{\cC}\Tr[P_{2,(m_1,\ell_2)}\rho_m]
 &=\sum_{x_1,x_2,y_2}p_1(x_1)p_2(x_2)p_2(y_2)
       \Tr[P_{2,x_1,x_2}\rho_{x_1,y_2}]\\
 &=\sum_{x_1,x_2}p_1(x_1)p_2(x_2)
       \Tr[P_{2,x_1,x_2}\overline\rho_{x_1}]=\beta_2.
\end{aligned}
\end{equation}
When both indices differ, all four input variables are independent, so
\begin{equation}\label{eq:average-type12}
\begin{aligned}
\ee_{\cC}\Tr[P_{12,(\ell_1,\ell_2)}\rho_m]
 &=\sum_{x_1,x_2,y_1,y_2}p_1(x_1)p_2(x_2)p_1(y_1)p_2(y_2)
       \Tr[P_{12,x_1,x_2}\rho_{y_1,y_2}]\\
 &=\sum_{x_1,x_2}p_1(x_1)p_2(x_2)
       \Tr[P_{12,x_1,x_2}\overline\rho]=\beta_{12}.
\end{aligned}
\end{equation}
The numbers of competitors in these classes are $M_1-1$, $M_2-1$, and $(M_1-1)(M_2-1)$; see \fref{fig:grid}. Linearity of expectation now yields
\begin{equation}
 \ee_{\cC}b=(M_1-1)\beta_1+(M_2-1)\beta_2
                 +(M_1-1)(M_2-1)\beta_{12}.
\end{equation}
Taking the expectation of \eqref{eq:two-packing} proves the claim. Distinct competitors are generally dependent, but independence between them was not used.
\end{proof}

\begin{figure}[!t]
\centering
\begin{adjustbox}{max width=\linewidth}
\begin{tikzpicture}[font=\small,x=.9cm,y=.75cm]
\foreach \i in {1,...,4}{\foreach \j in {1,...,5}{
\def\cellcolor{softorange}
\ifnum\i=2\def\cellcolor{softblue}\fi
\ifnum\j=3\def\cellcolor{softgreen}\fi
\ifnum\i=2\ifnum\j=3\def\cellcolor{white}\fi\fi
\fill[\cellcolor] (\j-1,\i-1) rectangle (\j,\i);
\draw[gray!55] (\j-1,\i-1) rectangle (\j,\i);
}}
\fill[inkblue] (2.5,1.5) circle (2.2pt);
\node[above right,font=\footnotesize] at (2.5,1.5) {$m$};
\draw[->] (0,-.4)--(5.4,-.4) node[right] {Sender 2};
\draw[->] (-.4,0)--(-.4,4.45) node[above] {Sender 1};
\node[below] at (2.5,-.42) {$\ell_2=m_2$};
\node[anchor=east] at (-.5,1.5) {$\ell_1=m_1$};
\fill[softgreen] (6.2,3.1) rectangle (6.65,3.55);
\node[anchor=west] at (6.85,3.32) {Only first message differs: $M_1-1$};
\fill[softblue] (6.2,1.9) rectangle (6.65,2.35);
\node[anchor=west] at (6.85,2.12) {Only second message differs: $M_2-1$};
\fill[softorange] (6.2,.7) rectangle (6.65,1.15);
\node[anchor=west,align=left] at (6.85,.92) {Both messages differ:\\$(M_1-1)(M_2-1)$};
\end{tikzpicture}
\end{adjustbox}
\caption{The three disjoint classes of competing message pairs for a fixed true pair $m$. Rows index Sender~1 and columns index Sender~2. The illustrated $4\times5$ grid has $3$, $4$, and $12$ competitors in the respective classes. The random-codebook average of each term in a class is the same binary false-acceptance probability. No independence between different competitors is required.}
\label{fig:grid}
\end{figure}

\begin{remark}[Repeated codewords]
Independently sampled codewords need not have distinct values. Equations~\eqref{eq:average-type1}--\eqref{eq:average-type12} use independence of the random variables assigned to different message indices, not an assumption that their realizations differ. The auxiliary label identity uses those indices themselves and thus remains exact in the presence of repeated codewords.
\end{remark}

\subsection{The three binary alternatives}
The three false-acceptance quantities are ordinary binary testing probabilities for the null state $\omega_{X_1X_2B}$ in \eqref{eq:two-omega}. Define alternative states, with tensor factors placed in the order $X_1X_2B$, by
\begin{align}
\sigma^{(1)}_{X_1X_2B}&\coloneqq\omega_{X_1}\otimes\omega_{X_2B},\label{eq:two-sigma}\\
\sigma^{(2)}_{X_1X_2B}&\coloneqq\omega_{X_2}\otimes\omega_{X_1B},\label{eq:line-5-11-2}\\
\sigma^{(12)}_{X_1X_2B}&\coloneqq\omega_{X_1}\otimes\omega_{X_2}\otimes\omega_B.\label{eq:line-5-11-3}
\end{align}
The notation in the second line includes the canonical reordering of tensor factors. Their classical blocks are
\begin{align}
\sigma^{(1)}_{X_1X_2B}
 &=\sum_{x_1,x_2}p_1(x_1)p_2(x_2)\proj{x_1,x_2}\otimes\overline\rho_{x_2},\label{eq:two-sigma-blocks}\\
\sigma^{(2)}_{X_1X_2B}
 &=\sum_{x_1,x_2}p_1(x_1)p_2(x_2)\proj{x_1,x_2}\otimes\overline\rho_{x_1},\label{eq:line-5-12-2}\\
\sigma^{(12)}_{X_1X_2B}
 &=\sum_{x_1,x_2}p_1(x_1)p_2(x_2)\proj{x_1,x_2}\otimes\overline\rho.\label{eq:line-5-12-3}
\end{align}
A block-diagonal test $T_s=\sum_{x_1,x_2}\proj{x_1,x_2}\otimes P_{s,x_1,x_2}$ therefore has errors exactly $\alpha_s$ and $\beta_s$ from \eqref{eq:two-alpha}--\eqref{eq:line-5-3-3}.

The relative entropies of these pairs are the desired three information quantities. The logarithm of a tensor product is the sum of its factor logarithms on the support. Hence
\begin{align}
D(\omega\Vert\sigma^{(1)})
 &=-H(X_1X_2B)_\omega+H(X_1)_\omega+H(X_2B)_\omega\label{eq:two-divergences}\\
&=H(X_1X_2)_\omega+H(X_2B)_\omega-H(X_2)_\omega-H(X_1X_2B)_\omega\label{eq:line-5-13-2}\\
&=I(X_1;B|X_2)_\omega,\label{eq:line-5-13-3}\\
D(\omega\Vert\sigma^{(2)})
 &=-H(X_1X_2B)_\omega+H(X_2)_\omega+H(X_1B)_\omega\label{eq:line-5-13-4}\\
&=H(X_1X_2)_\omega+H(X_1B)_\omega-H(X_1)_\omega-H(X_1X_2B)_\omega\label{eq:line-5-13-5}\\
&=I(X_2;B|X_1)_\omega,\label{eq:line-5-13-6}\\
D(\omega\Vert\sigma^{(12)})
 &=-H(X_1X_2B)_\omega+H(X_1)_\omega+H(X_2)_\omega+H(B)_\omega\label{eq:line-5-13-7}\\
&=H(X_1X_2)_\omega+H(B)_\omega-H(X_1X_2B)_\omega\label{eq:line-5-13-8}\\
&=I(X_1X_2;B)_\omega.\label{eq:line-5-13-9}
\end{align}
The replacements of $H(X_1)+H(X_2)$ by $H(X_1X_2)$ use independence of the inputs. Support inclusion also holds: for a pair of positive input probability, $\overline\rho_{x_2}\ge p_1(x_1)\rho_{x_1,x_2}$, with analogous inequalities for the other alternatives. Zero-probability classical blocks can be ignored. The relative entropies in \eqref{eq:two-divergences}--\eqref{eq:line-5-13-9} are therefore finite.

\subsection{Projective Stein tests}\label{subsec:stein-tests}
For clarity, we justify both projectivity and the exponential false-acceptance bound. This avoids assuming that an arbitrary optimal fixed-error binary test is a projector.

\begin{lemma}[Projective threshold tests]\label{lem:projective-stein}
Let $\rho,\sigma$ be states on the same finite-dimensional Hilbert space and let $r$ be a real number satisfying $r<D(\rho\Vert\sigma)$. Using the binary error probabilities $\alpha(T;\rho)=\Tr[(I-T)\rho]$ and $\beta(T;\sigma)=\Tr[T\sigma]$ from \eqref{eq:binary-errors}, define, for each positive integer $n$,
\begin{equation}\label{eq:NP-test}
 T_n\coloneqq\{\rho^{\otimes n}-2^{nr}\sigma^{\otimes n}\ge0\},
\end{equation}
where $\{Z\ge0\}$ denotes the spectral projector onto the nonnegative eigenspaces of a Hermitian operator $Z$. Then
\begin{equation}\label{eq:NP-errors}
 \alpha(T_n;\rho^{\otimes n})\longrightarrow0,
 \qquad
 \beta(T_n;\sigma^{\otimes n})\le2^{-nr}.
\end{equation}
If both states are block diagonal in the same classical basis, $T_n$ is block diagonal in the corresponding tensor-product classical basis.
\end{lemma}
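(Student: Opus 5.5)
The plan has three parts: a direct trace estimate for the false-acceptance bound, a reduction of the correct-rejection error to the direct part of the quantum Stein lemma \eqref{eq:stein-standard}, and a short observation for block diagonality. Set $\rho_n\coloneqq\rho^{\otimes n}$, $\sigma_n\coloneqq\sigma^{\otimes n}$ and $Z_n\coloneqq\rho_n-2^{nr}\sigma_n$. By construction $T_n$ is a spectral projector of $Z_n$, so it commutes with $Z_n$ and $T_nZ_nT_n\ge0$.

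For the false-acceptance bound, positivity of $T_nZ_nT_n$ gives
\[
2^{nr}\Tr[T_n\sigma_n]\le\Tr[T_n\rho_n]\le1.
\]
This is exactly $\beta(T_n;\sigma_n)\le2^{-nr}$, and it requires no asymptotics.

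For the correct-rejection error, I would compare $T_n$ with the effects $U_n$ supplied by \eqref{eq:stein-standard} at an intermediate exponent. Choose $r'$ with $r<r'<D(\rho\Vert\sigma)$, and let $U_n$ satisfy $\alpha(U_n;\rho_n)\to0$ and $\beta(U_n;\sigma_n)\le2^{-nr'}$. The key fact is that $T_n$ maximizes $\Tr[TZ_n]$ over all effects $0\le T\le I$; this follows from the eigenbasis argument given after \eqref{eq:helstrom-cost}. Applying this optimality with $T=U_n$ gives
\[
\Tr[T_n\rho_n]\ge\Tr[U_n\rho_n]-2^{nr}\Tr[U_n\sigma_n]+2^{nr}\Tr[T_n\sigma_n]\ge1-\alpha(U_n;\rho_n)-2^{-n(r'-r)}.
\]
Hence $\alpha(T_n;\rho_n)\le\alpha(U_n;\rho_n)+2^{-n(r'-r)}\to0$.

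This comparison step is the only genuine content of the proof. The point is that the slack between $r$ and $r'$ absorbs the false-acceptance term of the auxiliary tests. I do not expect any other obstacle, apart from making sure that $r'$ can be chosen. If $D(\rho\Vert\sigma)=+\infty$, any finite $r'>r$ works; since \eqref{eq:stein-standard} is stated whenever $r<D$, it covers this case as well.

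For block diagonality, suppose $\rho=\sum_x\proj x\otimes\rho_x$ and $\sigma=\sum_x\proj x\otimes\sigma_x$ in a common classical basis. Then $\rho_n$ and $\sigma_n$ are block diagonal in the tensor-product basis $\ket{x^n}$, and so is $Z_n$. A spectral projector of a block-diagonal Hermitian operator is the direct sum of the blockwise spectral projectors, so
\[
T_n=\sum_{x^n}\proj{x^n}\otimes\{\rho_{x^n}-2^{nr}\sigma_{x^n}\ge0\},
\]
which is block diagonal as claimed.
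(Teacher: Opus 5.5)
Your proposal is correct and follows the paper's proof essentially step for step: the same positivity argument gives $\beta(T_n;\sigma^{\otimes n})\le2^{-nr}$, and the same comparison with Stein effects $U_n$ at an intermediate rate $r<r'<D(\rho\Vert\sigma)$ uses optimality of the spectral projector for $\Tr[T(\rho^{\otimes n}-2^{nr}\sigma^{\otimes n})]$ to obtain $\alpha(T_n;\rho^{\otimes n})\le\alpha(U_n;\rho^{\otimes n})+2^{-n(r'-r)}$. Block diagonality then follows, as in the paper, because spectral projectors of block-diagonal Hermitian operators are block diagonal; your remark on the case $D(\rho\Vert\sigma)=+\infty$ and your explicit blockwise formula for $T_n$ are harmless additions that the paper leaves implicit.
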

\begin{proof}
Set $t\coloneqq 2^{nr}$. Since $T_n$ is the positive spectral test,
\begin{equation}
 t\Tr[T_n\sigma^{\otimes n}]
 \le\Tr[T_n\rho^{\otimes n}]\le1,
\end{equation}
which proves the second claim. To prove the first, choose $r'$ satisfying $r<r'<D(\rho\Vert\sigma)$ and apply \eqref{eq:stein-standard} at rate $r'$ to obtain effects $U_n$.

For an arbitrary effect $T$,
\begin{equation}
 \alpha(T;\rho^{\otimes n})+t\beta(T;\sigma^{\otimes n})
 =1-\Tr[T(\rho^{\otimes n}-t\sigma^{\otimes n})].
\end{equation}
The spectral projector $T_n$ maximizes the last trace: in an eigenbasis of $\rho^{\otimes n}-t\sigma^{\otimes n}$, each diagonal entry of an effect is in $[0,1]$, and accepting precisely the nonnegative eigenvalues is optimal. Thus
\begin{align}
\alpha(T_n;\rho^{\otimes n})
 &\le\alpha(T_n;\rho^{\otimes n})+t\beta(T_n;\sigma^{\otimes n})\label{eq:line-5-18-1}\\
&\le\alpha(U_n;\rho^{\otimes n})+t\beta(U_n;\sigma^{\otimes n})\label{eq:line-5-18-2}\\
&\le\alpha(U_n;\rho^{\otimes n})+2^{-n(r'-r)}\longrightarrow0.\label{eq:line-5-18-3}
\end{align}
Finally, a spectral projector of a block-diagonal Hermitian operator is block diagonal. Each of its classical blocks is itself a projector.
\end{proof}

\subsection{Achieving the strictly interior rates}
Suppose first that $R_1,R_2>0$ and the three inequalities in \eqref{eq:two-region}--\eqref{eq:line-3-5-3} are strict. Choose thresholds $r_1,r_2,r_{12}$ satisfying
\begin{align}
R_1&<r_1<I(X_1;B|X_2)_\omega,\label{eq:two-thresholds}\\
R_2&<r_2<I(X_2;B|X_1)_\omega,\label{eq:line-5-19-2}\\
R_1+R_2&<r_{12}<I(X_1X_2;B)_\omega.\label{eq:line-5-19-3}
\end{align}
Apply \lemref{lem:projective-stein} separately to $\omega^{\otimes n}$ versus $(\sigma^{(s)})^{\otimes n}$ at threshold $r_s$. The resulting test has the form
\begin{equation}\label{eq:two-block-tests}
 T_{s,n}=\sum_{x_1^n,x_2^n}\proj{x_1^n,x_2^n}
                      \otimes P^{(n)}_{s,x_1^n,x_2^n}.
\end{equation}
The classical blocks give exactly the projectors required for \prref{prop:two-random} applied to the block channel.

To check the alternatives explicitly, independence over channel uses gives
\begin{align}
\sum_{x_1^n}p_1^{\otimes n}(x_1^n)
       \bigotimes_{t=1}^n\rho_{x_{1,t},x_{2,t}}
 &=\bigotimes_{t=1}^n\left(\sum_{x_{1,t}}p_1(x_{1,t})\rho_{x_{1,t},x_{2,t}}\right)\label{eq:block-average}\\
&=\bigotimes_{t=1}^n\overline\rho_{x_{2,t}}.\label{eq:line-5-21-2}
\end{align}
The other conditional average and the full average factor in the same way, by summing over $x_2^n$ or both input strings. Accordingly, the errors in the block version of \eqref{eq:two-alpha}--\eqref{eq:line-5-3-3} satisfy
\begin{equation}\label{eq:two-block-errors}
 \alpha_{s,n}\longrightarrow0,
 \qquad\beta_{s,n}\le2^{-nr_s},
 \qquad s\in\{1,2,12\}.
\end{equation}
The three tests were chosen separately. We have not asserted the existence of one common binary test on $X_1^nX_2^nB^n$ satisfying all three bounds.

Take $M_i\coloneqq\lfloor2^{nR_i}\rfloor$ and write $R_{12}\coloneqq R_1+R_2$, with $s\in\{1,2,12\}$ in the sums below. Since $M_i-1\le2^{nR_i}$, \prref{prop:two-random} gives
\begin{equation}\label{eq:two-final-asymptotic}
 \ee_{\cC}\Pe^{(n)}
 \le d_2(\delta)+C_2(\delta)\sum_s\alpha_{s,n}
       +\sum_s2^{-n(r_s-R_s)}.
\end{equation}
For every fixed $\delta>0$, all terms except $d_2(\delta)$ tend to zero. Given $\eps>0$, first choose $\delta$ so that $d_2(\delta)<\eps/2$, and then choose $n$ sufficiently large so that the remaining terms are below $\eps/2$. This proves that, for every sufficiently large blocklength, the ensemble has a decoder with expected average error at most $\eps$.

For completeness, one may select a single vanishing-error sequence as follows. For each integer $j\ge2$, choose a fixed $\delta_j$ with $d_2(\delta_j)<1/(2j)$ and a blocklength $N_j$ after which the other terms in \eqref{eq:two-final-asymptotic} are below $1/(2j)$. Increase the $N_j$ if necessary so that they are strictly increasing. For $N_j\le n<N_{j+1}$, use $\delta_j$. The expected average error is at most $1/j$ and hence tends to zero. The codeword distribution remains $p_1^{\otimes n}p_2^{\otimes n}$ for every choice of $\delta_j$.

For each $n$, at least one realized codebook has average error no larger than the ensemble average. Selecting such a codebook gives a deterministic code. Also, $\lfloor2^{nR_i}\rfloor\ge2^{nR_i-1}$ implies
\begin{equation}
 R_i-\frac1n\le\frac1n\log_2 M_i\le R_i.
\end{equation}
Thus the asymptotic rates are the prescribed $R_i$.

\subsection{Zero rates and boundary points}\label{subsec:two-boundary}
The zero-rate case must not be obtained by imposing a strict inequality of the form $0<I=0$. Suppose, for example, that $R_1=0$ and $R_2>0$. Take $M_1=1$. Error types $1$ and $12$ are impossible, so choose their projectors to be $I$. Their correct-message errors are zero, and their false-acceptance contributions have zero multiplicity. Only the test for type $2$ is needed. If $R_2<I(X_2;B|X_1)_\omega$, the preceding argument applies to that test alone. Sender~1's single codeword is still sampled from $p_1^{\otimes n}$ and is known as part of the codebook. The case $R_2=0<R_1$ is identical with the sender roles interchanged. When both rates are zero, each sender has one message and the decoder always outputs that pair.

Now let $(R_1,R_2)$ be an arbitrary nonnegative pair in the closed region \eqref{eq:two-region}--\eqref{eq:line-3-5-3}. For $j\ge2$, set $R_i^{(j)}\coloneqq (1-1/j)R_i$. Every error class that can occur for positive-rate senders has a positive rate sum. Since its mutual-information bound is at least that sum at the target rate, its inequality is strict at the scaled rate. The previous arguments therefore give, for each $j$, codes of every sufficiently large blocklength with expected average error at most $1/j$ and rates approaching $(R_1^{(j)},R_2^{(j)})$. Choose increasing blocklength thresholds and use the $j$th construction until the next threshold. The error tends to zero and the rates tend to $(R_1,R_2)$. This is a choice of a sequence of whole-block codes, not time sharing within a code. It completes the proof of \thref{thm:two-capacity}.
\FloatBarrier
\section{An arbitrary fixed number of senders}\label{sec:k}
The two-sender proof used no property specific to three tests. Its essential ingredients were an exact coordinatewise projection identity and a uniformly positive Gram matrix. We now verify both ingredients for $k$ senders and carry out the entire packing and random-coding calculation.

\subsection{Channel and coding theorem}
Fix a positive integer $k$, independent of the blocklength. Write $[k]\coloneqq \{1,\ldots,k\}$ and $S^c\coloneqq[k]\setminus S$. For $S\subseteq[k]$, the notation $x_S$ or $X_S$ denotes the tuple of coordinates or classical registers indexed by $S$, in increasing order. Empty tuples have one value and probability one. Set
\begin{equation}\label{eq:k-products}
 p(x)\coloneqq\prod_{i=1}^kp_i(x_i),\qquad
 p_S(x_S)\coloneqq\prod_{i\in S}p_i(x_i),\qquad
 R(S)\coloneqq\sum_{i\in S}R_i.
\end{equation}
Consider a classical--quantum channel $x=(x_1,\ldots,x_k)\mapsto\rho_x$, where $x_i\in\cX_i$ for finite alphabets $\cX_i$ and $B$ is finite dimensional, and the state
\begin{equation}\label{eq:k-omega}
 \omega_{X_{[k]}B}
 =\sum_xp(x)\proj{x}_{X_{[k]}}\otimes\rho_x.
\end{equation}
A code has independent uniform messages $m_i\in[M_i]$, one codebook $x_i^n(m_i)$ for each sender, and a single POVM indexed by the whole message tuple. Its average error is
\begin{equation}\label{eq:k-code-error}
 \Pe^{(n)}\coloneqq1-\frac{1}{\prod_i M_i}\sum_{m_1,\ldots,m_k}
  \Tr\!\left[\Lambda_m\bigotimes_{t=1}^n
       \rho_{x_{1,t}(m_1),\ldots,x_{k,t}(m_k)}\right].
\end{equation}
Achievability is defined by vanishing average error and $\liminf_n n^{-1}\log_2M_i\ge R_i$ for every $i$.

\begin{theorem}[Simultaneous decoding for $k$ senders; Sen~\cite{Sen2021Joint,Sen2021Network}]\label{thm:k-capacity}
Fix a positive integer $k$. Let $x=(x_1,\ldots,x_k)\mapsto\rho_x$ be a channel with finite input alphabets and finite-dimensional output $B$. Fix the product distribution $p(x)=\prod_i p_i(x_i)$ and the state $\omega_{X_{[k]}B}$ in \eqref{eq:k-omega}. Write $S^c\coloneqq [k]\setminus S$ and $R(S)=\sum_{i\in S}R_i$, as in \eqref{eq:k-products}. With the average-error criterion \eqref{eq:k-code-error}, every nonnegative rate tuple satisfying
\begin{equation}\label{eq:k-region}
 R(S)\le I(X_S;B|X_{S^c})_\omega
 \quad\text{for all }\varnothing\ne S\subseteq[k]
\end{equation}
is achievable with independently generated codebooks and one joint decoding POVM. Sender~$i$'s codewords are sampled independently from $p_i^{\otimes n}$. For positive rates with all inequalities strict, the message sizes may be $M_i=\lfloor2^{nR_i}\rfloor$ and the expected average error tends to zero. Zero rates and boundary points are included by the construction below. No time sharing or rate splitting is used.
\end{theorem}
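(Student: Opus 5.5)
The plan is to repeat the two-sender argument of Sections~\ref{sec:two-packing}--\ref{sec:two-coding} with error patterns indexed by nonempty subsets $S\subseteq[k]$ instead of $\{1,2,12\}$. First I will prove a deterministic $k$-sender packing bound. For $m\in\prod_i[M_i]$ and each $\varnothing\ne S\subseteq[k]$ (the set of wrong coordinates), take a projector $P_{S,m}$. Let $a$ be the average of $\sum_S\Tr[(I-P_{S,m})\rho_m]$, and let $b$ be the average of $\sum_{\ell\ne m}\Tr[P_{S(m,\ell),\ell}\rho_m]$, where $S(m,\ell)\coloneqq\{i:m_i\ne\ell_i\}$. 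The target bound is
\[
\Pe\le d_k(\delta)+C_k(\delta)\,a+b,\qquad d_k(\delta)\coloneqq\sqrt{1-(1-\delta)^k},\quad C_k(\delta)\coloneqq\bigl(1-\sqrt{1-\delta}\bigr)^{-k}.
\]

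The labels and projections carry over coordinatewise. Take weak labels $\ket{t_m}=\bigotimes_i\ket{u_{m_i}}$ and $K_\ell=\bigotimes_iK_{i,\ell_i}$. By \eqref{eq:one-label-projection} in each register, $K_\ell\ket{t_m}=(1-\delta)^{|S|/2}\ket{w_{\ell,S}}$, where $\ket{w_{\ell,S}}$ has $\ket0$ in the registers $i\in S$ and $\ket{u_{\ell_i}}$ elsewhere. This exact table replaces \eqref{eq:two-exact-table}.

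Next, define $\cW_\ell=\sum_S\ran Q_{S,\ell}\otimes\spanop\{\ket{w_{\ell,S}}\}$ and $\Gamma_\ell=I_B\otimes K_\ell-\Pi_\ell$. The containment and projector arguments of \sref{subsec:two-rejection} go through verbatim. The compression argument of \sref{subsec:two-wrong} is also unchanged and gives $\Tr[\Gamma_\ell\rhot m]\le\Tr[P_{S(m,\ell),\ell}\rho_m]$ with no additive term.

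The main step is the Gram bound. The $2^k$ vectors $\bigotimes_i\ket{v_{i,a_i}}$ with $a\in\{0,1\}^k$ have Gram matrix $\begin{pmatrix}1&\sqrt{1-\delta}\\\sqrt{1-\delta}&1\end{pmatrix}^{\otimes k}$. Its smallest eigenvalue is $(1-\sqrt{1-\delta})^k$. The error vectors indexed by $S\ne\varnothing$ form a principal submatrix of this matrix, so it has the same lower bound. \lemref{lem:gram} then gives $\Pi_\ell\le C_k(\delta)\sum_SQ_{S,\ell}\otimes\proj{w_{\ell,S}}$. With $K_m\ket{t_m}=\ket{t_m}$, this yields the correct-acceptance bound $C_k(\delta)a$, exactly as in \eqref{eq:two-correct-bound}--\eqref{eq:line-4-81-6}. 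I then take the PGM of the tagged states and apply \lemref{lem:cheng} with reindexing as in \eqref{eq:two-reindex}. Finally I compress with $J\ket h=\ket h\otimes\ket{0\cdots0}$. Since $\langle0\cdots0|t_m\rangle=(1-\delta)^{k/2}$, the trace distance is $d_k(\delta)$.

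For random coding, fix a tuple $m$ and a competitor $\ell$ with pattern $S$. The true inputs $x_{S}(m_S)$ are independent of the candidate's inputs, and the inputs $x_{S^c}$ are shared. Hence $\ee\Tr[P_{S,\ell}\rho_m]=\beta_S$, the false-acceptance probability of the block-diagonal test against $\sigma^{(S)}=\omega_{X_S}\otimes\omega_{X_{S^c}B}$. The number of competitors with pattern $S$ is $\prod_{i\in S}(M_i-1)\le2^{nR(S)}$. The relative entropy is $D(\omega\Vert\sigma^{(S)})=H(X_S)+H(X_{S^c}B)-H(X_{[k]}B)=I(X_S;B|X_{S^c})_\omega$, using independence of the inputs.

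I then apply \lemref{lem:projective-stein} separately to each $S$ at a threshold $r_S$ with $R(S)<r_S<I(X_S;B|X_{S^c})_\omega$. This gives $\ee\Pe\le d_k(\delta)+C_k(\delta)\sum_S\alpha_{S,n}+\sum_S2^{-n(r_S-R(S))}$. I fix $\delta$ first and then let $n\to\infty$, in the order explained after \fref{fig:qubit-regularization}, and select a good codebook realization. Zero rates are handled by setting $M_i=1$: every pattern $S$ containing $i$ then has zero multiplicity, so its projector is set to $I$. Boundary points come from scaling $R^{(j)}=(1-1/j)R$ with a diagonal blocklength sequence, as in \sref{subsec:two-boundary}.

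The only genuinely new check is that every error class with positive multiplicity has positive rate sum. A class $S$ has positive multiplicity only if $S$ consists of positive-rate senders, and then scaling makes its inequality strict.
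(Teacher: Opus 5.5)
Your proposal is correct and follows the paper's proof essentially step for step: the deterministic $k$-sender packing bound (Theorem~\ref{thm:k-packing}) via coordinatewise weak labels, the exact projection identity, the tensor-product Gram bound, Cheng's inequality with reindexing, and compression against $\ket{0^k}$, followed by random-codebook averaging, separate projective Stein tests against each $\sigma^{(S)}$, $M_i=1$ with identity tests for zero-rate senders, and $(1-1/j)$ scaling with a diagonal blocklength sequence for boundary points. The one check you flag is that every error class with positive multiplicity is a nonempty subset of the positive-rate senders and therefore becomes strict after scaling; this is exactly the inequality $I(X_S;B|X_{S^c})_\omega-R^{(j)}(S)\ge R(S)/j>0$ that the paper uses for nonempty subsets of its active set.
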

When $S=[k]$, the right-hand side means $I(X_{[k]};B)_\omega$. We prove the theorem by a deterministic bound that applies even when the states have no channel structure.

\subsection{Deterministic packing for message tuples}
Let $\cM\coloneqq\prod_{i=1}^k[M_i]$, let $M\coloneqq\prod_iM_i$, and let $(\rho_m)_{m\in\cM}$ be arbitrary states on $B$. For distinct tuples $m,\ell$, define their exact error set by
\begin{equation}\label{eq:k-error-set}
 S(m,\ell)\coloneqq\{i\in[k]:m_i\ne\ell_i\}.
\end{equation}
For every nonempty $S\subseteq[k]$ and every candidate message tuple $\ell$, let $P_{S,\ell}$ be a projector on $B$. Define
\begin{align}
a&\coloneqq\frac1M\sum_m\sum_{\varnothing\ne S\subseteq[k]}
                \Tr[(I-P_{S,m})\rho_m],\label{eq:k-ab}\\
b&\coloneqq\frac1M\sum_m\sum_{\ell\ne m}
                \Tr[P_{S(m,\ell),\ell}\rho_m].\label{eq:line-6-6-2}
\end{align}
As for two senders, $a$ and $b$ are sums of binary-testing error probabilities, not necessarily probabilities themselves: $0\le a\le2^k-1$ and $0\le b\le M-1$. Each term in $b$ uses the projector for the full proposed tuple $\ell$ and its exact set of nonmatching coordinates relative to the true tuple~$m$.

\begin{theorem}[Deterministic $k$-sender packing]\label{thm:k-packing}
Let $k$ and $M_1,\ldots,M_k$ be positive integers, let $\cM\coloneqq\prod_i[M_i]$ and $M\coloneqq\prod_iM_i$, and let $(\rho_m)_{m\in\cM}$ be states on a finite-dimensional system $B$. For every $m\in\cM$ and nonempty $S\subseteq[k]$, let $P_{S,m}$ be a projector. Define the exact error set $S(m,\ell)$ by \eqref{eq:k-error-set} and the sums $a,b$ by \eqref{eq:k-ab}--\eqref{eq:line-6-6-2}. For every $0<\delta<1$, there is a decoding POVM on $B$, with an optional failure outcome, whose uniform-message error
$\Pe=1-M^{-1}\sum_m\Tr[\Lambda_m\rho_m]$
(the convention of \eqref{eq:ensemble-error}) satisfies
\begin{equation}\label{eq:k-packing}
 \Pe\le d_k(\delta)+C_k(\delta)a+b,
\end{equation}
where
\begin{equation}\label{eq:k-constants}
 d_k(\delta)\coloneqq\sqrt{1-(1-\delta)^k},
 \qquad
 C_k(\delta)\coloneqq\bigl(1-\sqrt{1-\delta}\bigr)^{-k}.
\end{equation}
In particular, $d_k(\delta)\le\sqrt{k\delta}$ and $C_k(\delta)\le(2/\delta)^k$.
\end{theorem}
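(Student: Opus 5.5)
The plan is to repeat the two-sender proof of \thref{thm:two-packing} with $k$ label registers $L=L_1\cdots L_k$. Each register $L_i$ has basis $\ket0,\ket1,\ldots,\ket{M_i}$ and carries the weak label $\ket{u_{m_i}}$ of \eqref{eq:two-label}. The tagged label is $\ket{t_m}\coloneqq\bigotimes_i\ket{u_{m_i}}$, and we again set $\rhot m\coloneqq\rho_m\otimes\proj{t_m}$ and $\rhoh m\coloneqq\rho_m\otimes\proj{0^k}$. Since $\langle0^k|t_m\rangle=(1-\delta)^{k/2}$, identity \eqref{eq:pure-distance} gives
\[
\Delta(\rhot m,\rhoh m)=\sqrt{1-(1-\delta)^k}=d_k(\delta).
\]
For a candidate $\ell$, let $K_\ell\coloneqq\bigotimes_iK_{i,\ell_i}$ with $K_{i,\ell_i}$ as in \eqref{eq:two-K}. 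For each nonempty $S$, define $\ket{w_{\ell,S}}\coloneqq\bigotimes_{i\in S}\ket0\otimes\bigotimes_{i\notin S}\ket{u_{\ell_i}}$ (factors in register order). Tensoring \eqref{eq:one-label-projection} then gives the exact table
\[
K_\ell\ket{t_m}=(1-\delta)^{|S(m,\ell)|/2}\ket{w_{\ell,S(m,\ell)}}\quad(m\ne\ell),\qquad K_\ell\ket{t_m}=\ket{t_\ell}\quad(m=\ell).
\]

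With this table in hand, the next step defines $Q_{S,\ell}\coloneqq I-P_{S,\ell}$, the labeled subspaces $\cW_{S,\ell}\coloneqq\ran Q_{S,\ell}\otimes\spanop\{\ket{w_{\ell,S}}\}$, their span $\cW_\ell$ with projector $\Pi_\ell$, and $\Gamma_\ell\coloneqq I_B\otimes K_\ell-\Pi_\ell$. The containment argument \eqref{eq:explicit-containment}--\eqref{eq:Gamma-range} uses only $K_\ell\ket{w_{\ell,S}}=\ket{w_{\ell,S}}$, so it carries over verbatim and $\Gamma_\ell$ is a projector. False acceptance also carries over unchanged. Take the compression $A_{\ell,S}\coloneqq V_{\ell,S}^\dagger\Gamma_\ell V_{\ell,S}$. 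It annihilates $Q_{S,\ell}$, so $A_{\ell,S}\le P_{S,\ell}$. Combining this with the table, the computation \eqref{eq:two-wrong-bound}--\eqref{eq:line-4-52-9} gives
\[
\Tr[\Gamma_\ell\rhot m]=(1-\delta)^{|S|}\Tr[A_{\ell,S}\rho_m]\le\Tr[P_{S(m,\ell),\ell}\rho_m],\qquad S=S(m,\ell),
\]
with no additive error.

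The remaining geometric step, which is the only genuinely $k$-dependent one, is the Gram bound. The $2^k$ vectors $\bigotimes_i\ket{v_{i,a_i}}$, $a\in\{0,1\}^k$, built from \eqref{eq:two-binary-label-choices}, have Gram matrix
\[
\begin{pmatrix}1&\sqrt{1-\delta}\\\sqrt{1-\delta}&1\end{pmatrix}^{\otimes k},
\]
whose smallest eigenvalue is $(1-\sqrt{1-\delta})^k$. The error vectors $\ket{w_{\ell,S}}$ are exactly the vectors with $a\ne(1,\ldots,1)$, so their Gram matrix is a principal submatrix of this one. By the norm-preserving embedding argument of \eqref{eq:two-Gram-lower}, its smallest eigenvalue is at least $C_k(\delta)^{-1}$. \lemref{lem:gram} then gives $\Pi_\ell\le C_k(\delta)\sum_SQ_{S,\ell}\otimes\proj{w_{\ell,S}}$. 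Because $K_m\ket{t_m}=\ket{t_m}$, the calculation \eqref{eq:two-correct-bound}--\eqref{eq:line-4-81-6} yields
\[
\Tr[(I-\Gamma_m)\rhot m]\le C_k(\delta)\sum_S\Tr[(I-P_{S,m})\rho_m].
\]
The main care here is the indexing: we must check that distinct nonempty $S$ give distinct $a$, which holds since $a_i=0$ exactly for $i\in S$, and that the excluded all-ones vector is $\ket{t_\ell}$.

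The final step builds the decoder. Take the PGM of the tagged states $\rhot m$ and apply \lemref{lem:cheng} with $T=\Gamma_m$. After the reindexing \eqref{eq:two-reindex}, the two estimates above bound the tagged error by $C_k(\delta)a+b$. Compress with $J\ket h=\ket h\otimes\ket{0^k}$ as in \eqref{eq:physical-POVM}; this gives a POVM on $B$ and costs $d_k(\delta)$ via \eqref{eq:effect-distance}. This proves \eqref{eq:k-packing}. For the elementary bounds, Bernoulli's inequality gives $(1-\delta)^k\ge1-k\delta$, hence $d_k(\delta)\le\sqrt{k\delta}$. Also $1-\sqrt{1-\delta}=\delta/(1+\sqrt{1-\delta})\ge\delta/2$, hence $C_k(\delta)\le(2/\delta)^k$.
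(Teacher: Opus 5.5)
Your proposal is correct and follows essentially the same route as the paper's proof: the same weak labels and candidate projectors $K_\ell$, the exact identity $K_\ell\ket{t_m}=(1-\delta)^{|S(m,\ell)|/2}\ket{w_{\ell,S(m,\ell)}}$, compression for false acceptance, and the tensor-power Gram bound, where the error vectors form the principal submatrix obtained by deleting the all-ones (equivalently, $S=\varnothing$) index. The proof then concludes with the tagged-state PGM analyzed through Cheng's inequality and a single label-removal cost $d_k(\delta)$. The only cosmetic difference is that you prove $d_k(\delta)\le\sqrt{k\delta}$ via Bernoulli's inequality, whereas the paper uses the geometric-sum identity; both are equally valid.
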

The constants in this basic statement are chosen for the transparent tensor-product proof below. They suffice for all first-order rate results. Section~\ref{subsubsec:sharp-packing} gives a stronger finite-error guarantee for the same decoder; Appendix~\ref{app:quantitative} contains its proof. Neither refinement is needed to follow the present argument.

\begin{proof}
For each sender, introduce $L_i$ with basis $\ket0,\ket1,\ldots,\ket{M_i}$, and set $L\coloneqq L_1\cdots L_k$. Define
\begin{equation}\label{eq:k-labels}
 \ket{u_{m_i}}\coloneqq \sqrt{1-\delta}\ket0+\sqrt\delta\ket{m_i},\qquad
 \ket{t_m}_L\coloneqq\bigotimes_{i=1}^k\ket{u_{m_i}},\qquad
 \rhot m\coloneqq\rho_m\otimes\proj{t_m}.
\end{equation}
For a candidate message tuple $\ell$, set
\begin{equation}\label{eq:k-K}
 K_\ell\coloneqq\bigotimes_{i=1}^k(\proj0+\proj{\ell_i}),
\end{equation}
and define, for every $S\subseteq[k]$, the normalized vector
\begin{equation}\label{eq:k-w}
 \ket{w_{\ell,S}}\coloneqq\bigotimes_{i=1}^k\ket{v_i(S)},
 \qquad
 \ket{v_i(S)}\coloneqq
 \begin{cases}
   \ket0,&i\in S,\\
   \ket{u_{\ell_i}},&i\notin S.
 \end{cases}
\end{equation}
As in the two-sender case, $K_\ell$ preserves the reference and candidate directions in each register and discards all other message basis vectors. Its range has dimension $2^k$, independent of the message sizes. For each coordinate, the two cases in \eqref{eq:one-label-projection} still hold. Multiplying the surviving factors gives
\begin{equation}\label{eq:k-exact}
 K_\ell\ket{t_m}=(1-\delta)^{|S(m,\ell)|/2}\ket{w_{\ell,S(m,\ell)}}.
\end{equation}
For $m=\ell$, this identity is interpreted with $S=\varnothing$, and $\ket{w_{\ell,\varnothing}}=\ket{t_\ell}$. For nonempty $S$, the letter $w$ again denotes a wrong-message pattern; only the labels of the matching senders survive unchanged.

Set $Q_{S,\ell}\coloneqq I-P_{S,\ell}$ and construct
\begin{equation}\label{eq:k-bad-Gamma}
 \cW_\ell\coloneqq\sum_{S\ne\varnothing}
       \ran Q_{S,\ell}\otimes\spanop\{\ket{w_{\ell,S}}\},
 \qquad
 \Gamma_\ell\coloneqq I_B\otimes K_\ell-\Pi_{\cW_\ell}.
\end{equation}
The sum defining $\cW_\ell$ is a linear span, exactly as in \eqref{eq:two-bad-explicit}--\eqref{eq:line-4-24-2}: it contains all vectors $\sum_{S\ne\varnothing}\ket{h_S}\otimes\ket{w_{\ell,S}}$ with $\ket{h_S}\in\ran Q_{S,\ell}$. We again remove that span only inside the local space associated with $\ell$. Since every $\ket{w_{\ell,S}}$ lies in $\ran K_\ell$, the containment $\cW_\ell\subseteq\ran(I_B\otimes K_\ell)$ holds. Therefore $\Gamma_\ell$ is a projector and $\Gamma_\ell(I_B\otimes K_\ell)=(I_B\otimes K_\ell)\Gamma_\ell=\Gamma_\ell$.

\paragraph{Incorrect-message acceptance.}
Fix $\ell\ne m$ and let $S\coloneqq S(m,\ell)$. Define the isometry $V_{\ell,S}\ket h\coloneqq\ket h\otimes\ket{w_{\ell,S}}$ and the effect $A_{\ell,S}\coloneqq V_{\ell,S}^\dagger\Gamma_\ell V_{\ell,S}$. The definition of $\cW_\ell$ implies $\Gamma_\ell V_{\ell,S}Q_{S,\ell}=0$. Multiplying by $V_{\ell,S}^\dagger$ gives $A_{\ell,S}Q_{S,\ell}=0$, and taking adjoints gives $Q_{S,\ell}A_{\ell,S}=0$. The isometry also implies $0\le A_{\ell,S}\le I_B$. Expanding $I_B=P_{S,\ell}+Q_{S,\ell}$ on both sides, exactly as in \eqref{eq:A-support-expansion}--\eqref{eq:line-4-48-3}, therefore yields
\begin{equation}\label{eq:k-A-bound}
 0\le A_{\ell,S}=P_{S,\ell}A_{\ell,S}P_{S,\ell}
 \le P_{S,\ell}I_BP_{S,\ell}=P_{S,\ell}.
\end{equation}
The label identity \eqref{eq:k-exact} gives
\begin{equation}\label{eq:k-projected-label}
 K_\ell\proj{t_m}K_\ell=(1-\delta)^{|S|}\proj{w_{\ell,S}}.
\end{equation}
Using the support of $\Gamma_\ell$ and substituting this identity gives
\begin{align}
\Tr[\Gamma_\ell\rhot m]
 &=\Tr[\Gamma_\ell(I_B\otimes K_\ell)\rhot m(I_B\otimes K_\ell)]\label{eq:k-wrong}\\
&=\Tr[\Gamma_\ell(\rho_m\otimes K_\ell\proj{t_m}K_\ell)]\label{eq:line-6-16-2}\\
&=(1-\delta)^{|S|}\Tr[\Gamma_\ell(\rho_m\otimes\proj{w_{\ell,S}})]\label{eq:line-6-16-3}\\
&=(1-\delta)^{|S|}\Tr[V_{\ell,S}^\dagger\Gamma_\ell V_{\ell,S}\rho_m]\label{eq:line-6-16-4}\\
&=(1-\delta)^{|S|}\Tr[A_{\ell,S}\rho_m]\label{eq:line-6-16-5}\\
&\le(1-\delta)^{|S|}\Tr[P_{S,\ell}\rho_m]
 \le\Tr[P_{S,\ell}\rho_m].\label{eq:line-6-16-6}
\end{align}
This establishes the required estimate for every one of the $2^k-1$ possible error sets.

\paragraph{The Gram matrix.}
For two subsets $S,T$, each tensor factor in \eqref{eq:k-w} has inner product $1$ when its two choices agree and $\sqrt{1-\delta}$ when they differ. Thus
\begin{equation}\label{eq:k-overlaps}
 \langle w_{\ell,S}|w_{\ell,T}\rangle=(1-\delta)^{|S\mathbin\triangle T|/2},
\end{equation}
where $S\mathbin\triangle T\coloneqq(S\setminus T)\cup(T\setminus S)$ is the symmetric difference: the set of indices belonging to exactly one of $S$ and $T$. Each such index contributes one factor $\sqrt{1-\delta}$ to the inner product, while each remaining index contributes one. In the order of the binary indicator strings of subsets, the Gram matrix of all $2^k$ vectors is
\begin{equation}\label{eq:k-Gram}
 G_{\mathrm{full}}=
 \begin{pmatrix}1&\sqrt{1-\delta}\\\sqrt{1-\delta}&1\end{pmatrix}^{\otimes k}
 \ge\bigl(1-\sqrt{1-\delta}\bigr)^k I_{2^k}.
\end{equation}
The inequality follows because the eigenvalues of a tensor product are products of the eigenvalues of its factors, and the smaller factor eigenvalue is $1-\sqrt{1-\delta}>0$.

The Gram matrix $G_{\mathrm{err}}$ for nonempty subsets is obtained by deleting the row and column for $\varnothing$. For $\ket z\in\cc^{2^k-1}$, insert a zero in that coordinate to obtain $\ket{\widetilde z}$. Then
\begin{equation}
 \langle z|G_{\mathrm{err}}|z\rangle
 =\langle\widetilde z|G_{\mathrm{full}}|\widetilde z\rangle
 \ge\bigl(1-\sqrt{1-\delta}\bigr)^k\norm{\ket{\widetilde z}}^2
 =\bigl(1-\sqrt{1-\delta}\bigr)^k\norm{\ket z}^2.
\end{equation}
Consequently, \lemref{lem:gram} gives
\begin{equation}\label{eq:k-bad-bound}
 \Pi_{\cW_\ell}\le C_k(\delta)
      \sum_{S\ne\varnothing}Q_{S,\ell}\otimes\proj{w_{\ell,S}}.
\end{equation}
\fref{fig:lattice} illustrates the subset structure when $k=3$.

\paragraph{Correct-message acceptance.}
For the correct tuple, $K_m\ket{t_m}=\ket{t_m}$. Also, \eqref{eq:k-w} implies $\langle w_{m,S}|t_m\rangle=(1-\delta)^{|S|/2}$. Therefore
\begin{align}
\Tr[(I-\Gamma_m)\rhot m]
 &=\Tr[\Pi_{\cW_m}(\rho_m\otimes\proj{t_m})]\label{eq:k-correct}\\
&\le C_k(\delta)\sum_{S\ne\varnothing}
       \Tr[Q_{S,m}\rho_m]\,|\langle w_{m,S}|t_m\rangle|^2\label{eq:line-6-21-2}\\
&=C_k(\delta)\sum_{S\ne\varnothing}(1-\delta)^{|S|}
                       \Tr[(I-P_{S,m})\rho_m]\label{eq:line-6-21-3}\\
&\le C_k(\delta)\sum_{S\ne\varnothing}
                       \Tr[(I-P_{S,m})\rho_m].\label{eq:line-6-21-4}
\end{align}

\paragraph{The state-based measurement and its error.}
Set
\begin{equation}\label{eq:k-state-pgm}
 \widetilde Z\coloneqq \sum_m\rhot m,\qquad
 \widetilde\Lambda_m\coloneqq \widetilde Z^{-1/2}\rhot m\widetilde Z^{-1/2},
 \qquad\widetilde\Lambda_{\mathrm f}=I_{BL}-\Pi_{\widetilde Z}.
\end{equation}
This is the pretty-good measurement of the tagged states, completed on the kernel. For each $m$, define $B_m\coloneqq \sum_{\ell\ne m}\rhot\ell$. Since $\widetilde Z=\rhot m+B_m$, \lemref{lem:cheng} with the analysis test $\Gamma_m$ gives
\begin{equation}\label{eq:k-cheng-error}
 1-\Tr[\widetilde\Lambda_m\rhot m]
 \le\Tr[(I_{BL}-\Gamma_m)\rhot m]
     +\sum_{\ell\ne m}\Tr[\Gamma_m\rhot\ell].
\end{equation}
After averaging uniformly over the message tuples, interchange $m,\ell$ in the double sum. Equations~\eqref{eq:k-correct}--\eqref{eq:line-6-21-4} and~\eqref{eq:k-wrong}--\eqref{eq:line-6-16-6} then yield
\begin{align}
\widetilde\Pe
 &\le\frac1M\sum_m\Tr[(I_{BL}-\Gamma_m)\rhot m]
   +\frac1M\sum_m\sum_{\ell\ne m}\Tr[\Gamma_\ell\rhot m]\label{eq:k-tagged-error}\\
&\le\frac{C_k(\delta)}M\sum_m\sum_{S\ne\varnothing}
      \Tr[(I_B-P_{S,m})\rho_m]
   +\frac1M\sum_m\sum_{\ell\ne m}\Tr[P_{S(m,\ell),\ell}\rho_m]\label{eq:line-6-24-2}\\
&=C_k(\delta)a+b.\label{eq:line-6-24-3}
\end{align}
The equality uses \eqref{eq:k-ab}--\eqref{eq:line-6-6-2}. As in the two-sender case, the measurement depends on the tagged states, not on the selected binary tests.

\paragraph{Compression to the original output.}
Define $J\ket h\coloneqq \ket h\otimes\ket{0^k}$ and compress every effect by $J^\dagger(\cdot)J$. Positivity is preserved, and the effects sum to $J^\dagger I_{BL}J=I_B$. The success probability on $\rho_m$ is the success probability of $\widetilde\Lambda_m$ on $\rho_m\otimes\proj{0^k}$. The two compared states have normalized trace distance
\begin{align}
\Delta(\rho_m\otimes\proj{t_m},\rho_m\otimes\proj{0^k})
 &=\sqrt{1-|\langle0^k|t_m\rangle|^2}\label{eq:line-6-25-1}\\
&=\sqrt{1-(1-\delta)^k}=d_k(\delta).\label{eq:line-6-25-2}
\end{align}
To make the last comparison explicit, denote the compressed success effect by $\Lambda_m\coloneqq J^\dagger\widetilde\Lambda_mJ$. Cyclicity of the trace and the definition of $J$ give
\begin{align}
 \Tr[\Lambda_m\rho_m]
 &=\Tr[\widetilde\Lambda_mJ\rho_mJ^\dagger]\label{eq:k-physical-success}\\
 &=\Tr[\widetilde\Lambda_m(\rho_m\otimes\proj{0^k})].\label{eq:k-physical-success-final}
\end{align}
We compare this one completed success effect with its probability on the tagged state. Adding and subtracting that probability, and then applying \eqref{eq:effect-distance}, gives
\begin{align}
 1-\Tr[\Lambda_m\rho_m]
 &=1-\Tr[\widetilde\Lambda_m\rhot m]
   +\Tr[\widetilde\Lambda_m(\rhot m-\rho_m\otimes\proj{0^k})]\label{eq:k-transfer-expanded}\\
 &\le1-\Tr[\widetilde\Lambda_m\rhot m]+d_k(\delta).\label{eq:k-transfer-expanded-final}
\end{align}
The same constant applies to every message. Averaging uniformly and substituting \eqref{eq:k-tagged-error}--\eqref{eq:line-6-24-3} therefore yields
\begin{equation}\label{eq:k-transfer-average}
 \Pe\le\widetilde\Pe+d_k(\delta)
 \le d_k(\delta)+C_k(\delta)a+b.
\end{equation}
This proves \eqref{eq:k-packing}. The label comparison is used once per true message, not separately for each competing message.

It remains to verify the convenient elementary estimates for the constants. The finite geometric-sum identity gives
\begin{equation}\label{eq:k-d-elementary}
 1-(1-\delta)^k
 =\delta\sum_{j=0}^{k-1}(1-\delta)^j\le k\delta,
 \qquad d_k(\delta)\le\sqrt{k\delta},
\end{equation}
because each of the $k$ summands is at most one. For $C_k$, multiply the numerator and denominator by $1+\sqrt{1-\delta}$ to eliminate the square root from the numerator:
\begin{align}
 1-\sqrt{1-\delta}
 &=\frac{(1-\sqrt{1-\delta})(1+\sqrt{1-\delta})}
          {1+\sqrt{1-\delta}}\label{eq:k-C-difference-squares}\\
 &=\frac{1-(1-\delta)}{1+\sqrt{1-\delta}}
 =\frac{\delta}{1+\sqrt{1-\delta}}\ge\frac\delta2.\label{eq:k-C-difference-squares-final}
\end{align}
All quantities are positive. Taking reciprocals reverses the last inequality, and taking the $k$th power gives
\begin{equation}\label{eq:k-C-elementary}
 C_k(\delta)=\left(\frac{1}{1-\sqrt{1-\delta}}\right)^k
 \le\left(\frac2\delta\right)^k.
\end{equation}
\end{proof}

\begin{figure}[!t]
\centering
\begin{adjustbox}{max width=\linewidth}
\begin{tikzpicture}[font=\small]
\tikzset{subset/.style={draw=inkblue,fill=white,rounded corners=2pt,minimum width=1.0cm,minimum height=.55cm}}
\node[subset,fill=softgreen] (e) at (0,0) {$\varnothing$};
\node[subset] (a) at (-3,1.45) {$\{1\}$};
\node[subset] (b) at (0,1.45) {$\{2\}$};
\node[subset] (c) at (3,1.45) {$\{3\}$};
\node[subset] (ab) at (-3,3.05) {$\{1,2\}$};
\node[subset] (ac) at (0,3.05) {$\{1,3\}$};
\node[subset] (bc) at (3,3.05) {$\{2,3\}$};
\node[subset,fill=softorange] (abc) at (0,4.5) {$\{1,2,3\}$};
\begin{scope}[on background layer]
\foreach \x in {a,b,c}{\draw[flow] (e)--(\x);}
\draw[flow] (a)--(ab);\draw[flow] (a)--(ac);
\draw[flow] (b)--(ab);\draw[flow] (b)--(bc);
\draw[flow] (c)--(ac);\draw[flow] (c)--(bc);
\foreach \x in {ab,ac,bc}{\draw[flow] (\x)--(abc);}
\end{scope}
\node[anchor=west] at (1.05,0) {$\ket{u_{\ell_1}}\ket{u_{\ell_2}}\ket{u_{\ell_3}}$};
\node[anchor=west] at (1.25,4.5) {$\ket{000}$};
\node[anchor=east,align=right] at (-4.05,2.25) {each arrow adds one\\nonmatching message};
\end{tikzpicture}
\end{adjustbox}
\caption{Error patterns for three senders, illustrating the general tensor-product structure. A node is the set $S$ of senders whose proposed messages differ from the true messages; its vector is $\ket{w_{\ell,S}}$. Along an arrow, one factor $\ket{u_{\ell_i}}$ is replaced by $\ket0$. The eight vectors have Gram matrix $\left(\begin{smallmatrix}1&\sqrt{1-\delta}\\\sqrt{1-\delta}&1\end{smallmatrix}\right)^{\otimes3}$. The seven nonempty patterns form a principal submatrix. No new geometric mechanism is needed beyond the two-sender case.}
\label{fig:lattice}
\end{figure}

\subsection{Random-codebook averaging for every error set}
Return to the channel $x\mapsto\rho_x$ and the product distribution $p$. For nonempty $S\subseteq[k]$, define
\begin{equation}\label{eq:k-average}
 \overline\rho_{x_{S^c}}\coloneqq\sum_{x_S}p_S(x_S)\rho_x.
\end{equation}
For $S=[k]$, this is the unconditional average output. Fix projectors $P_{S,x}$ before drawing the codebooks, and define
\begin{equation}\label{eq:k-test-errors}
 \alpha_S\coloneqq\sum_xp(x)\Tr[(I-P_{S,x})\rho_x],
 \qquad
 \beta_S\coloneqq\sum_xp(x)\Tr[P_{S,x}\overline\rho_{x_{S^c}}].
\end{equation}
Independently draw every $x_i(m_i)$ according to $p_i$ and set $P_{S,\ell}\coloneqq P_{S,x(\ell)}$, $\rho_m\coloneqq\rho_{x(m)}$. The following is the general version of \prref{prop:two-random}.

\begin{proposition}[$k$-sender random coding]\label{prop:k-random}
Let $x\mapsto\rho_x$ be a channel with $k$ finite classical input alphabets and finite-dimensional output $B$, and fix $p(x)=\prod_i p_i(x_i)$ as in \eqref{eq:k-products}. Choose projectors $P_{S,x}$ for every nonempty $S\subseteq[k]$ and define $\alpha_S,\beta_S$ by \eqref{eq:k-test-errors}, using the conditional averages in \eqref{eq:k-average}. For positive integers $M_i$, sample every $x_i(m_i)$ independently according to $p_i$. For every $0<\delta<1$, there are decoding POVMs depending on the realized codebooks whose uniform-message error $\Pe=1-(\prod_iM_i)^{-1}\sum_m\Tr[\Lambda_m\rho_{x(m)}]$ satisfies
\begin{equation}\label{eq:k-random}
 \ee_{\cC}\Pe\le d_k(\delta)
    +C_k(\delta)\sum_{S\ne\varnothing}\alpha_S
    +\sum_{S\ne\varnothing}N_S\beta_S,
 \qquad N_S\coloneqq\prod_{i\in S}(M_i-1).
\end{equation}
The sums run over nonempty subsets of $[k]$, the expectation is over the codebooks, and $d_k,C_k$ are defined in \eqref{eq:k-constants}.
\end{proposition}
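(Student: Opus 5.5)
The plan mirrors the proof of \prref{prop:two-random}: apply \thref{thm:k-packing} to each realized codebook with the states $\rho_m=\rho_{x(m)}$ and projectors $P_{S,m}=P_{S,x(m)}$. This gives a POVM, depending on the realization, with $\Pe\le d_k(\delta)+C_k(\delta)a+b$. Taking the expectation over the codebooks and using linearity reduces the claim to two identities:
\begin{equation*}
 \ee_{\cC}a=\sum_{S\ne\varnothing}\alpha_S,
 \qquad
 \ee_{\cC}b=\sum_{S\ne\varnothing}N_S\beta_S.
\end{equation*}

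For the first identity, fix $m$. The tuple $x(m)=(x_1(m_1),\ldots,x_k(m_k))$ has distribution $p=\prod_ip_i$. Hence $\ee_{\cC}\Tr[(I-P_{S,x(m)})\rho_{x(m)}]=\alpha_S$ for each nonempty $S$. Summing over $S$ and averaging over $m$ gives the identity.

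For the second identity, fix $m$ and a competitor $\ell\ne m$, and set $S\coloneqq S(m,\ell)$, which is nonempty. The random variables relevant to the term $\Tr[P_{S,x(\ell)}\rho_{x(m)}]$ are the following:
\begin{itemize}
\item the shared coordinates $x_i(m_i)=x_i(\ell_i)$ for $i\in S^c$;
\item the candidate coordinates $x_i(\ell_i)$ for $i\in S$;
\item the true coordinates $x_i(m_i)$ for $i\in S$.
\end{itemize}
Since $m_i\ne\ell_i$ for $i\in S$, these are $|S^c|+2|S|$ distinct message indices, so the variables are mutually independent with the appropriate marginals $p_i$. Write $x\coloneqq x(\ell)$ and $y_S\coloneqq x_S(m)$. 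The expectation then equals
\begin{equation*}
 \sum_{x}p(x)\sum_{y_S}p_S(y_S)\Tr[P_{S,x}\rho_{(y_S,x_{S^c})}]
 =\sum_xp(x)\Tr[P_{S,x}\overline\rho_{x_{S^c}}]=\beta_S,
\end{equation*}
by the definition \eqref{eq:k-average}. This is the $k$-sender analogue of \eqref{eq:average-type1}--\eqref{eq:average-type12}.

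It remains to count the competitors with a given error set. For fixed $m$, the number of $\ell$ with $S(m,\ell)=S$ is $\prod_{i\in S}(M_i-1)=N_S$: each $i\in S$ admits $M_i-1$ choices of $\ell_i\ne m_i$, and each $i\notin S$ forces $\ell_i=m_i$. Partitioning $\{\ell\ne m\}$ by error set and summing gives $\ee_{\cC}b=\sum_{S\ne\varnothing}N_S\beta_S$, which completes the argument.

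The only delicate step is the independence bookkeeping in the false-acceptance average. Independence must come from distinct message indices, not from distinct codeword values, in line with the remark on repeated codewords. Dependence between different competitors never matters, because only linearity of expectation is used.
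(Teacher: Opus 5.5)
Your proposal is correct and follows essentially the same route as the paper. Apply \thref{thm:k-packing} to each realized codebook, and obtain $\ee_{\cC}a=\sum_{S\ne\varnothing}\alpha_S$ and $\ee_{\cC}\Tr[P_{S,x(\ell)}\rho_{x(m)}]=\beta_S$ from the independence of codewords at distinct message indices. Then count the $N_S$ competitors in each exact error class and conclude by linearity of expectation.
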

\begin{proof}
Each true input tuple has distribution $p$, so $\ee a=\sum_{S\ne\varnothing}\alpha_S$. Fix $m,\ell$ with exact error set $S$. Conditional on the candidate tuple $x(\ell)=x$, every differing true coordinate $x_i(m_i)$ is an independent fresh sample from $p_i$, while the coordinates in $S^c$ equal $x_{S^c}$. Consequently,
\begin{equation}\label{eq:k-pair-average}
\begin{aligned}
\ee_{\cC}\Tr[P_{S,x(\ell)}\rho_{x(m)}]
 &=\sum_xp(x)\sum_{y_S}p_S(y_S)
             \Tr[P_{S,x}\rho_{y_S,x_{S^c}}]\\
 &=\sum_xp(x)\Tr[P_{S,x}\overline\rho_{x_{S^c}}]
 =\beta_S.
\end{aligned}
\end{equation}
The first equality is a finite sum over independent draws, so it does not require conditioning on a zero-probability value. For fixed $m$, each $i\in S$ has $M_i-1$ allowed differing indices, and every $i\notin S$ has exactly one allowed index. Thus there are precisely $N_S$ competitors with exact error set $S$. Summing \eqref{eq:k-pair-average} over all candidates and then over $m$ gives $\ee b=\sum_S N_S\beta_S$. Taking the expectation of \eqref{eq:k-packing} proves the proposition.
\end{proof}

\subsection{Binary alternatives and the asymptotic limit}
The random-codebook calculation has reduced decoding error to separate correct-rejection and false-acceptance probabilities. We now identify the binary hypotheses that realize these probabilities. Their relative entropies will give the desired conditional-mutual-information rates, and the binary Stein lemma will make each false-acceptance probability decay faster than the number of corresponding competitors grows. After establishing these identities, we take the blocklength limit and include zero-rate senders and boundary points.

For a nonempty error set $S$, define the alternative state
\begin{equation}\label{eq:k-sigma}
 \sigma^{(S)}_{X_{[k]}B}
 \coloneqq\omega_{X_S}\otimes\omega_{X_{S^c}B}
 =\sum_xp(x)\proj{x}\otimes\overline\rho_{x_{S^c}},
\end{equation}
with the tensor factors reordered canonically. Independence of the inputs is needed for the block expression. Taking the logarithm of the product in the first expression gives
\begin{align}
D(\omega\Vert\sigma^{(S)})
 &=-H(X_{[k]}B)_\omega+H(X_S)_\omega+H(X_{S^c}B)_\omega\label{eq:k-relative-entropy}\\
&=H(X_{[k]})_\omega+H(X_{S^c}B)_\omega
      -H(X_{S^c})_\omega-H(X_{[k]}B)_\omega\label{eq:line-6-32-2}\\
&=I(X_S;B|X_{S^c})_\omega.\label{eq:line-6-32-3}
\end{align}
Here $H(X_{[k]})=H(X_S)+H(X_{S^c})$ by product inputs. For every $x$ with $p(x)>0$, \eqref{eq:k-average} implies $\overline\rho_{x_{S^c}}\ge p_S(x_S)\rho_x$. This verifies the support condition and finiteness of the relative entropy.

We now prove \thref{thm:k-capacity}, including zero-rate senders. Given a rate tuple, let
\begin{equation}\label{eq:active-set}
 J\coloneqq\{i:R_i>0\}
\end{equation}
be the set of positive-rate senders. First suppose that
\begin{equation}\label{eq:active-strict}
 R(S)<I(X_S;B|X_{S^c})_\omega
 \qquad\text{for every nonempty }S\subseteq J.
\end{equation}
For $i\notin J$, set $M_i\coloneqq 1$; for $i\in J$, set $M_i\coloneqq \lfloor2^{nR_i}\rfloor$. The only possible error sets are nonempty subsets of $J$. If $J$ is empty, the decoder always outputs the unique tuple and has zero error.

For each nonempty $S\subseteq J$, choose
\begin{equation}\label{eq:k-thresholds}
 R(S)<r_S<I(X_S;B|X_{S^c})_\omega.
\end{equation}
Apply \lemref{lem:projective-stein} to $\omega^{\otimes n}$ and $(\sigma^{(S)})^{\otimes n}$ at threshold $r_S$. After grouping the classical registers by sender, the projective test decomposes as
\begin{equation}
 T_{S,n}=\sum_{x_1^n,\ldots,x_k^n}\proj{x_1^n,\ldots,x_k^n}
                 \otimes P^{(n)}_{S,x_1^n,\ldots,x_k^n}.
\end{equation}
For any fixed $x_{S^c}^n$, the alternative block is precisely the conditional average of the block channel, because
\begin{equation}
 \sum_{x_S^n}\left(\prod_{i\in S}p_i^{\otimes n}(x_i^n)\right)
      \bigotimes_{t=1}^n\rho_{x_t}
 =\bigotimes_{t=1}^n\left(\sum_{x_{S,t}}p_S(x_{S,t})
                              \rho_{x_{S,t},x_{S^c,t}}\right)
 =\bigotimes_{t=1}^n\overline\rho_{x_{S^c,t}}.
\end{equation}
Thus these blocks satisfy
\begin{equation}
 \alpha_{S,n}\longrightarrow0,
 \qquad\beta_{S,n}\le2^{-nr_S}.
\end{equation}
For an impossible error set, namely one not contained in $J$, set every $P^{(n)}_{S,x^n}=I$. Then $\alpha_{S,n}=0$ and $N_S=0$, so that set contributes nothing to the bound.

For the possible sets, $N_S\le\prod_{i\in S}M_i\le2^{nR(S)}$. The block version of \prref{prop:k-random} therefore yields
\begin{equation}\label{eq:k-asymptotic}
 \ee_{\cC}\Pe^{(n)}
 \le d_k(\delta)
 +C_k(\delta)\sum_{\varnothing\ne S\subseteq J}\alpha_{S,n}
 +\sum_{\varnothing\ne S\subseteq J}2^{-n(r_S-R(S))}.
\end{equation}
There are finitely many error sets because $k$ is fixed. Given $\eps>0$, choose $\delta$ so that $d_k(\delta)<\eps/2$. With this $\delta$ held fixed, choose $n$ large enough to make the remaining finite sum smaller than $\eps/2$. As in the explicit diagonal construction after \eqref{eq:two-final-asymptotic}, this gives a sequence with expected average error tending to zero. The rate of each positive-rate sender $i$ lies between $R_i-1/n$ and $R_i$, and an inactive sender has rate zero.

Finally, consider a tuple in the closed region \eqref{eq:k-region}. Keep $J$ as in \eqref{eq:active-set}, and scale the rates by $R_i^{(j)}=(1-1/j)R_i$, for $j\ge2$. Every nonempty $S\subseteq J$ obeys
\begin{equation}
 I(X_S;B|X_{S^c})_\omega-R^{(j)}(S)
 \ge R(S)-R^{(j)}(S)=\frac{R(S)}j>0.
\end{equation}
Hence the strict construction applies at each scaled tuple. Choose successively larger blocklength thresholds for error targets $1/j$, and use the $j$th whole-block construction between its threshold and the next. The expected errors tend to zero and the rates tend to the target tuple. At each blocklength, select a codebook whose error is no larger than the expectation. This proves \thref{thm:k-capacity} for all nonnegative rates in the stated region.

\begin{remark}[Typical good codebooks]
The expectation bound also implies a high-probability statement. If the chosen sequence has $\ee_{\cC}\Pe^{(n)}\le\eta_n$ with $\eta_n>0$ and $\eta_n\to0$, Markov's inequality gives
\begin{equation}
 \Pr_{\cC}\{\Pe^{(n)}>\sqrt{\eta_n}\}\le\sqrt{\eta_n}.
\end{equation}
Thus the conclusion is not based on an unusually favorable codebook outside the ordinary independent ensemble.
\end{remark}
\FloatBarrier
\subsection{General effects, one-shot bounds, and nonuniform priors}\label{subsec:general-packing}\label{app:effects}
So far the geometric proof has used projectors, which suffice for asymptotic threshold tests. The next two applications will use general effects and, for source coding, a nonuniform prior. The deterministic and random-coding bounds extend to both settings without changing the geometry. We begin with arbitrary binary effects. A common two-dimensional extension of the output makes this precise and yields a formulation in terms of hypothesis-testing relative entropy.

\subsubsection{A common extension for all effects}
For an effect $0\le E\le I$, set $L_E\coloneqq\sqrt{E(I-E)}$. On the direct sum $\cH_B\oplus\cH_B$, define
\begin{equation}\label{eq:dilation-effect}
 \mathcal P(E)\coloneqq
 \begin{pmatrix}
  E&L_E\\
  L_E&I-E
 \end{pmatrix}.
\end{equation}
The operator $L_E$ is a function of $E$, so it commutes with $E$. Consequently,
\begin{align}
\mathcal P(E)^2
 &=\begin{pmatrix}
 E^2+L_E^2&EL_E+L_E(I-E)\\
 L_EE+(I-E)L_E&L_E^2+(I-E)^2
 \end{pmatrix}\label{eq:line-6-43-1}\\
&=\begin{pmatrix}E&L_E\\L_E&I-E\end{pmatrix}
 =\mathcal P(E).\label{eq:line-6-43-2}
\end{align}
It is self-adjoint and therefore is a projector. The same isometry $J_0\ket h\coloneqq\ket h\otimes\ket0$ works for every effect, under the identification $\cH_B\oplus\cH_B\simeq\cH_B\otimes\cc^2$:
\begin{equation}\label{eq:common-compression}
 J_0^\dagger\mathcal P(E)J_0=E.
\end{equation}
Different effects generally give noncommuting projectors; no simultaneous commutation is asserted or required.

Now replace each state $\rho_m$ by $J_0\rho_mJ_0^\dagger$ and each effect $E_{S,\ell}$ by $\mathcal P(E_{S,\ell})$. For every pair of messages and every error set,
\begin{align}
\Tr[\mathcal P(E_{S,\ell})J_0\rho_mJ_0^\dagger]
 &=\Tr[E_{S,\ell}\rho_m],\label{eq:line-6-45-1}\\
\Tr[(I-\mathcal P(E_{S,m}))J_0\rho_mJ_0^\dagger]
 &=\Tr[(I-E_{S,m})\rho_m].\label{eq:line-6-45-2}
\end{align}
Apply \thref{thm:k-packing} on the extended output space and compress every resulting POVM element with $J_0$. This preserves all actual success probabilities and introduces no additional error. Thus \thref{thm:k-packing}, and by the same averaging calculation \prref{prop:k-random}, hold with arbitrary effects in place of the projectors. The auxiliary two-dimensional space in this paragraph is distinct from the weak-label spaces used in the main proof. The common dilation also leaves the final state-based decoder unchanged after compression. Here is the algebra. Set
\begin{equation}\label{eq:pgm-dilation-covariance}
 Z=\sum_m\rho_m,\qquad \widehat Z=J_0ZJ_0^\dagger.
\end{equation}
If $Z=\sum_{j:\lambda_j>0}\lambda_j\proj{f_j}$ is its positive spectral decomposition, then $\widehat Z$ has the same positive eigenvalues with eigenvectors $J_0\ket{f_j}$. Its inverse on the support therefore obeys
\begin{equation}
 \widehat Z^{-1/2}J_0=J_0Z^{-1/2}.
\end{equation}
Consequently, the compressed outcome effect is
\begin{align}
J_0^\dagger\widehat Z^{-1/2}(J_0\rho_mJ_0^\dagger)
                      \widehat Z^{-1/2}J_0
 &=(J_0^\dagger J_0)Z^{-1/2}\rho_mZ^{-1/2}(J_0^\dagger J_0)\label{eq:line-6-48-1}\\
&=Z^{-1/2}\rho_mZ^{-1/2}.\label{eq:line-6-48-2}
\end{align}
The completed failure effect is preserved as well, because $J_0^\dagger\Pi_{\widehat Z}J_0=\Pi_Z$. Apply this calculation with the identity on the weak-label registers included in the isometry. Thus using a common dilation to analyze arbitrary effects does not introduce a dependence of the final decoder on those effects.

\subsubsection{Separately optimized hypothesis-testing effects}
Projective threshold tests suffice for the asymptotic region, but a finite-blocklength bound should also allow the best binary effect at a prescribed rejection probability. The common dilation just proved permits exactly this choice. We express the resulting bound in terms of the hypothesis-testing relative entropy from \eqref{eq:DH}; the remaining point is to choose its optimizing effects in the classical-block form required by the random-codebook calculation.

For the product-distribution state $\omega$ in \eqref{eq:k-omega} and the alternatives $\sigma^{(S)}$ in \eqref{eq:k-sigma}, choose an optimizing effect $E_S$ for each $D_H^{\eps_S}(\omega\Vert\sigma^{(S)})$. Both hypotheses are block diagonal in the classical registers. Let $\overline{\Delta}$ denote dephasing in their classical basis. Since $\overline{\Delta}$ is self-adjoint for the trace inner product and fixes both hypotheses,
\begin{equation}
 \Tr[\overline{\Delta}(E_S)\omega]=\Tr[E_S\omega],
 \qquad
 \Tr[\overline{\Delta}(E_S)\sigma^{(S)}]=\Tr[E_S\sigma^{(S)}].
\end{equation}
Also $0\le\overline{\Delta}(E_S)\le I$. We may therefore choose the optimizing effect to have the form
\begin{equation}
 E_S=\sum_x\proj{x}\otimes E_{S,x},\qquad 0\le E_{S,x}\le I.
\end{equation}
Its conditional effects satisfy the exact testing relations
\begin{equation}
 \alpha_S\le\eps_S,
 \qquad
 \beta_S=2^{-D_H^{\eps_S}(\omega\Vert\sigma^{(S)})}.
\end{equation}
The effect version of \prref{prop:k-random} now proves the following statement.

\begin{corollary}[One-shot hypothesis-testing bound]\label{cor:one-shot}
Let $x\mapsto\rho_x$ be a finite-alphabet $k$-sender channel with finite-dimensional output $B$, let $p(x)=\prod_i p_i(x_i)$, and let $\omega$ and $\sigma^{(S)}=\omega_{X_S}\otimes\omega_{X_{S^c}B}$ be the states in \eqref{eq:k-omega} and \eqref{eq:k-sigma}. Use the hypothesis-testing relative entropy $D_H^{\eps_S}$ defined in \eqref{eq:DH}. Fix positive integers $M_i$, $0<\delta<1$, and parameters $0\le\eps_S<1$ for every nonempty $S\subseteq[k]$. If every codeword $x_i(m_i)$ is sampled independently from $p_i$, there are decoding POVMs depending on the codebooks whose uniform-message average error $\Pe=1-(\prod_iM_i)^{-1}\sum_m\Tr[\Lambda_m\rho_{x(m)}]$ satisfies
\begin{equation}\label{eq:one-shot-DH}
 \ee_{\cC}\Pe
 \le d_k(\delta)+C_k(\delta)\sum_{S\ne\varnothing}\eps_S
 +\sum_{S\ne\varnothing}
     \left[\prod_{i\in S}(M_i-1)\right]
       2^{-D_H^{\eps_S}(\omega\Vert\sigma^{(S)})}.
\end{equation}
Here all sums run over nonempty subsets of $[k]$, $S^c\coloneqq[k]\setminus S$, the expectation is over the independent codebooks, and $d_k,C_k$ are defined in \eqref{eq:k-constants}.
\end{corollary}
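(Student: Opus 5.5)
The plan is to apply the effect version of \prref{prop:k-random} with the dephased optimizing tests from the preceding paragraph. First, for each nonempty $S\subseteq[k]$, I would fix an effect $E_S$ attaining the minimum in \eqref{eq:DH} for $D_H^{\eps_S}(\omega\Vert\sigma^{(S)})$. The minimum exists by compactness. Next, I would replace $E_S$ by its dephasing $\overline\Delta(E_S)$ in the classical basis of $X_{[k]}$. Because $\omega$ and $\sigma^{(S)}$ are both block diagonal in that basis, the dephasing leaves both traces unchanged and still yields an effect. This gives a block form $E_S=\sum_x\proj x\otimes E_{S,x}$ with $0\le E_{S,x}\le I$. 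Substituting the block expansions \eqref{eq:k-omega} and \eqref{eq:k-sigma} then gives
\[
\Tr[(I-E_S)\omega]=\sum_xp(x)\Tr[(I-E_{S,x})\rho_x]=\alpha_S,\qquad
\Tr[E_S\sigma^{(S)}]=\sum_xp(x)\Tr[E_{S,x}\overline\rho_{x_{S^c}}]=\beta_S .
\]
Hence $\alpha_S\le\eps_S$ and $\beta_S=2^{-D_H^{\eps_S}(\omega\Vert\sigma^{(S)})}$, with the convention $2^{-\infty}=0$ covering a vanishing minimum.

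The second step is to justify using effects rather than projectors in \prref{prop:k-random}. I would use the common dilation $\mathcal P(E)$ of \eqref{eq:dilation-effect} together with the single isometry $J_0$. In the dilated picture, the states $J_0\rho_xJ_0^\dagger$ and the projectors $\mathcal P(E_{S,x})$ reproduce exactly the same $\alpha_S$ and $\beta_S$, by \eqref{eq:line-6-45-1}--\eqref{eq:line-6-45-2}. \prref{prop:k-random} then applies verbatim on $\cH_B\otimes\cc^2$: the random-codebook averaging in \eqref{eq:k-pair-average} depends only on independence of the codeword draws, not on projectivity. Compressing the resulting POVM with $J_0^\dagger(\cdot)J_0$ yields a valid POVM on $B$. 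It has the same success probabilities, because $J_0^\dagger J_0=I$ and $J_0^\dagger\widetilde\Lambda_mJ_0$ evaluated on $\rho_m$ equals $\widetilde\Lambda_m$ evaluated on $J_0\rho_mJ_0^\dagger$.

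Finally, I would insert $\sum_S\alpha_S\le\sum_S\eps_S$ and $N_S\beta_S=\prod_{i\in S}(M_i-1)\,2^{-D_H^{\eps_S}}$ into \eqref{eq:k-random}, which gives \eqref{eq:one-shot-DH} directly. The only delicate point is the first step, since the random-coding identity requires tests that are conditional on the candidate codeword, so an arbitrary optimizer would not do. I would handle this by observing that dephasing preserves both the constraint and the objective in \eqref{eq:DH}, so some optimizer has the required classical-block form. The remaining steps are bookkeeping.
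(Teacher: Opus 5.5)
Your proposal is correct and follows the paper's route: choose an optimizing effect for each $D_H^{\eps_S}(\omega\Vert\sigma^{(S)})$, dephase it into classical-block form so that $\alpha_S\le\eps_S$ and $\beta_S=2^{-D_H^{\eps_S}(\omega\Vert\sigma^{(S)})}$, and then apply the effect version of \prref{prop:k-random}, which the paper obtains through the common dilation $\mathcal P(E)$ and compression by $J_0$. You spell out the dilation-and-compression bookkeeping in more detail than the paper's one-line appeal to it, and that bookkeeping is accurate.
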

When some error sets are impossible, their effects may instead be chosen to be the identity, and those sets omitted from the two sums. This is the same convention used for zero-rate senders in the main proof. Section~\ref{subsubsec:sharp-packing} and Appendix~\ref{app:combine} give sharper constants and error-sensitive versions for these same binary tests. The basic form shows explicitly that the packing construction can use separately optimized binary effects, even though such effects need not be projectors on the original output space.

The sharpened versions below can be applied with the same individually optimized tests. For independent codebooks, the refined competing-message average is the weighted expression in \eqref{eq:sharp-channel-substitution}; no modification of the coding distribution is required.

\subsubsection{A packing bound for nonuniform message priors}\label{subsubsec:weighted-packing}
Uniform messages are natural for channel coding, but a classical source generally has a nonuniform, correlated distribution. Only the averaging step changes. The preceding common dilation allows arbitrary effects, and the weighted version of \prref{prop:pgm-tests} already includes the required probabilities.

\begin{corollary}[Deterministic packing with arbitrary priors]\label{cor:weighted-packing}
Let $k,M_1,\ldots,M_k$ be positive integers, set $\mathcal M\coloneqq \prod_{i=1}^k[M_i]$, and let $(p_m,\rho_m)_{m\in\mathcal M}$ be an arbitrary ensemble on a finite-dimensional system $B$, and let $0\le E_{S,m}\le I_B$ for every $m$ and every nonempty $S\subseteq[k]$. The prior $p_m$ need not factor across coordinates. Write $S(m,\ell)\coloneqq \{i:m_i\ne\ell_i\}$ and define
\begin{align}
a_p&\coloneqq \sum_m p_m\sum_{S\ne\varnothing}\Tr[(I-E_{S,m})\rho_m],\label{eq:weighted-ab}\\
b_p&\coloneqq \sum_m p_m\sum_{\ell\ne m}\Tr[E_{S(m,\ell),\ell}\rho_m].\label{eq:line-6-53-2}
\end{align}
For each $0<\delta<1$, there is a POVM on $B$ with an optional failure outcome such that
\begin{equation}\label{eq:weighted-packing}
 P_{\mathrm e}=1-\sum_m p_m\Tr[\Lambda_m\rho_m]
 \le d_k(\delta)+C_k(\delta)a_p+b_p,
\end{equation}
where $d_k,C_k$ are given by \eqref{eq:k-constants}. Specifically, use the labels in \eqref{eq:k-labels}, form the PGM of the positive semidefinite operators
\begin{equation}\label{eq:weighted-tagged-states}
 A_m=p_m\rho_m\otimes\proj{t_m},\qquad Z=\sum_m A_m,
\end{equation}
and compress its effects against $\ket{0^k}$ as in \thref{thm:k-packing}. The resulting decoder depends on the ensemble and $\delta$, not on the effects $E_{S,m}$ used to analyze it.
\end{corollary}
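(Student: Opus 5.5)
The plan is to rerun the proof of \thref{thm:k-packing} with the uniform average $M^{-1}\sum_m$ replaced by $\sum_m p_m$ throughout. The weighted \prref{prop:pgm-tests} replaces its uniform form as the bridge between analysis tests and the decoder. First, I would reduce to projectors using the common dilation $\mathcal P(E)$ of \eqref{eq:dilation-effect}. Replacing $\rho_m$ by $J_0\rho_mJ_0^\dagger$ and $E_{S,m}$ by $\mathcal P(E_{S,m})$ leaves every term of $a_p$ and $b_p$ unchanged, by \eqref{eq:line-6-45-1}--\eqref{eq:line-6-45-2}. The calculation \eqref{eq:pgm-dilation-covariance}--\eqref{eq:line-6-48-2} applies verbatim to the weighted operators $A_m$, with $Z=\sum_m p_m\rho_m$ in place of the unweighted sum, since it only uses that $\widehat Z=J_0ZJ_0^\dagger$. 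So the compressed weighted PGM is the weighted PGM on $B$ (tensored with the label identity). Hence it suffices to treat projectors $P_{S,m}$.

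With projectors, I would take the labels \eqref{eq:k-labels}, the local projectors $K_\ell$, the spans $\cW_\ell$ and the analysis tests $\Gamma_\ell$ exactly as in the proof of \thref{thm:k-packing}. None of these objects involve the prior, so two pointwise estimates hold unchanged. The first is the false-acceptance bound $\Tr[\Gamma_\ell\rhot m]\le\Tr[P_{S(m,\ell),\ell}\rho_m]$ from \eqref{eq:k-wrong}--\eqref{eq:line-6-16-6}. The second is the correct-rejection bound $\Tr[(I-\Gamma_m)\rhot m]\le C_k(\delta)\sum_S\Tr[(I-P_{S,m})\rho_m]$ from \eqref{eq:k-correct}--\eqref{eq:line-6-21-4}. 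Next, I would apply \prref{prop:pgm-tests} in its weighted form \eqref{eq:weighted-pgm-error} to the ensemble $(p_m,\rhot m)$ with $T_m=\Gamma_m$. Its PGM is precisely the PGM of the operators $A_m$ in \eqref{eq:weighted-tagged-states}.

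The main point to check is the index order in the false-acceptance term. In the weighted form it appears already as $\sum_m p_m\sum_{\ell\ne m}\Tr[\Gamma_\ell\rhot m]$, because the proof of \prref{prop:pgm-tests} performed the swap \eqref{eq:pgm-index-swap} with the weights attached to the true message. This orientation matches both $b_p$ and the pointwise bound, so no uniformity is needed, and that is why the bound survives a non-product prior. Substituting the two pointwise estimates then gives a tagged error at most $C_k(\delta)a_p+b_p$. Messages with $p_m=0$ contribute nothing, which is handled as in that proposition.

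Finally, I would compress with $J\ket h=\ket h\otimes\ket{0^k}$ as in \eqref{eq:k-physical-success}--\eqref{eq:k-transfer-expanded-final}. For each $m$ this gives $1-\Tr[\Lambda_m\rho_m]\le1-\Tr[\widetilde\Lambda_m\rhot m]+d_k(\delta)$. Averaging this with the weights $p_m$, which sum to one, adds exactly $d_k(\delta)$ and proves \eqref{eq:weighted-packing}. The decoder depends only on $(p_m,\rho_m)$ and $\delta$, as claimed.
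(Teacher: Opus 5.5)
Your proposal is correct and matches the paper's proof: the same prior-independent tests $\Gamma_\ell$ with their two pointwise estimates, the weighted bound \eqref{eq:weighted-pgm-error} (whose false-acceptance term already has the weight on the true message, so $b_p$ appears without reindexing), one $d_k(\delta)$ transfer per message averaged using $\sum_m p_m=1$, and the common dilation for general effects. The only difference is the order. You dilate to projectors first, whereas the paper handles projectors first and invokes the dilation at the end. The paper also applies the covariance calculation to the tagged normalizer $\sum_m A_m$, with the label identity included in $J_0$, which is what your parenthetical remark amounts to.
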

\begin{proof}
First assume the effects are projectors and use the same $\Gamma_m$ as in the proof of \thref{thm:k-packing}. The correct-rejection estimate \eqref{eq:k-correct}--\eqref{eq:line-6-21-4} holds for every $m$. The false-acceptance estimate \eqref{eq:k-wrong}--\eqref{eq:line-6-16-6} holds for every ordered pair of distinct tuples. Apply \eqref{eq:weighted-pgm-error} to the tagged ensemble and multiply these two estimates by $p_m$. Summing gives
\begin{equation}\label{eq:weighted-tagged-error}
 \widetilde P_{\mathrm e}
 \le\sum_m p_m\Tr[(I-\Gamma_m)\rhot m]
 +\sum_m p_m\sum_{\ell\ne m}\Tr[\Gamma_\ell\rhot m]
 \le C_k(\delta)a_p+b_p.
\end{equation}
For each $m$, the label-removal comparison costs at most $d_k(\delta)$. Its weighted average is the same constant because $\sum_m p_m=1$. This proves \eqref{eq:weighted-packing}. For general effects, use the common dilation \eqref{eq:dilation-effect}. It preserves both testing sums and the compressed state-based PGM, by the covariance calculation above, which also holds with $Z=\sum_m p_m\rho_m$. Zero-prior states cause no difficulty: their PGM numerators and their contributions to the actual error vanish.
\end{proof}

The conclusion applies to any ensemble indexed by a Cartesian product, not only to channel output ensembles. The next two sections exploit that flexibility. In entanglement-assisted communication, the coordinates select positions of pre-shared entangled states. In distributed compression, they are the source symbols and the prior is their joint distribution. Only the evaluation of the competing-state terms changes.

\subsubsection{Sharper error estimates for the same decoder}\label{subsubsec:sharp-packing}
The preceding bounds were chosen to expose the construction rather than to optimize its constants. We now record a sharper guarantee without changing any label state, analysis projector, or decoding measurement. The first improvement uses the overlaps of the error-label vectors with the correct label, rather than bounding all label directions at once. The second compares a small tagged error with the actual error, rather than adding the trace distance regardless of the tagged error. The full calculations are deferred to Appendix~\ref{app:quantitative}; the basic proof and its first-order applications do not depend on them.

For $0<\delta<1$, define a matrix indexed by nonempty subsets of $[k]$ and its associated coefficient by
\begin{equation}\label{eq:sharp-gram-definition}
 \mathsf L_k(\delta)\coloneqq
 \left[(1-\delta)^{-|S\cap T|}\right]_{\varnothing\ne S,T\subseteq[k]},
 \qquad
 \kappa_k(\delta)\coloneqq\lambda_{\min}(\mathsf L_k(\delta))^{-1}.
\end{equation}
The matrix is positive definite. It is the Gram matrix obtained by rescaling each error-label vector to have overlap one with the correct label. In particular, the rescaling changes the proof's coordinates but not the rejected subspace. An explicit estimate, avoiding any eigenvalue calculation, is
\begin{equation}\label{eq:sharp-simple-coefficient}
 \kappa_k(\delta)
 \le \frac{(k+1)2^{k-2}(1-\delta)}{\delta^{k-1}},
 \qquad
 \kappa_k(\delta)\le(1-\delta)C_k(\delta).
\end{equation}
These statements include $k=1$, for which $\kappa_1(\delta)=1-\delta$.

For $0<d<1$ and $0\le x\le1$, define
\begin{equation}\label{eq:transfer-function}
 \mathcal T_d(x)\coloneqq
 \begin{cases}
 \left(\sqrt{(1-d^2)x}+d\sqrt{1-x}\right)^2,
       &0\le x\le1-d^2,\\[2pt]
 1,    &1-d^2\le x\le1.
 \end{cases}
\end{equation}
Here $d$ is a scalar parameter. If the tagged ensemble has decoding error at most $x$, then $\mathcal T_{d_k(\delta)}(x)$ bounds the error after its labels are removed. The function is nondecreasing and concave. Notably, $\mathcal T_d(0)=d^2$, whereas the elementary trace-distance comparison would give the larger value $d$.

\Needspace{13\baselineskip}
\begin{corollary}[Refined packing bound]\label{cor:sharp-packing}
Let $(p_m,\rho_m)_{m\in\mathcal M}$, the analysis effects, and the error sums $a_p,b_p$ satisfy the hypotheses of \corref{cor:weighted-packing}. For every $0<\delta<1$, the same compressed tilted-state PGM satisfies
\begin{equation}\label{eq:sharp-main}
 P_{\mathrm e}\le
 \mathcal T_{d_k(\delta)}\!\left(
  \min\{1,\kappa_k(\delta)a_p+(1-\delta)b_p\}\right).
\end{equation}
The quantities $d_k$, $\kappa_k$, and $\mathcal T_d$ are defined in
\eqref{eq:k-constants}, \eqref{eq:sharp-gram-definition}, and \eqref{eq:transfer-function}, respectively. Under a uniform prior, $a_p=a$ and $b_p=b$ as in \eqref{eq:k-ab}--\eqref{eq:line-6-6-2}. The right-hand side of \eqref{eq:sharp-main} is no larger than the basic guarantee $\min\{1,d_k(\delta)+C_k(\delta)a_p+b_p\}$.
\end{corollary}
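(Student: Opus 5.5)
The proof has two independent parts, matching the two refinements. First, we bound the \emph{tagged} error of the weighted PGM of the operators \eqref{eq:weighted-tagged-states} by $\kappa_k(\delta)a_p+(1-\delta)b_p$. Second, we convert a tagged error $x$ into an actual error of at most $\mathcal T_{d_k(\delta)}(x)$. As in \corref{cor:weighted-packing}, we first treat projectors $P_{S,m}$. General effects then follow from the common dilation \eqref{eq:dilation-effect}, which preserves both testing sums and the compressed decoder. The $\min\{1,\cdot\}$ is free, because the tagged error never exceeds one.

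\emph{Step 1: rescaled Gram bound.} Keep $\cW_m$ and $\Gamma_m$ exactly as in the proof of \thref{thm:k-packing}. Replace the spanning vectors $\ket{w_{m,S}}$ by $\ket{v_S}\coloneqq(1-\delta)^{-|S|/2}\ket{w_{m,S}}$. Rescaling does not change the subspaces $\ran Q_{S,m}\otimes\spanop\{\ket{v_S}\}$. By \eqref{eq:k-overlaps}, the new Gram entries are
\[
(1-\delta)^{(|S\triangle T|-|S|-|T|)/2}=(1-\delta)^{-|S\cap T|},
\]
which is exactly $\mathsf L_k(\delta)$. \lemref{lem:gram} (which allows unnormalized vectors) then gives
\[
\Pi_{\cW_m}\le\kappa_k(\delta)\sum_{S\ne\varnothing}Q_{S,m}\otimes\proj{v_S}.
\]
Since $\langle v_S|t_m\rangle=1$, evaluating on $\rho_m\otimes\proj{t_m}$ yields correct rejection at most $\kappa_k(\delta)\sum_S\Tr[Q_{S,m}\rho_m]$.

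For false acceptance, \eqref{eq:line-6-16-6} already contains the factor $(1-\delta)^{|S|}\le1-\delta$, since $|S|\ge1$. Inserting both estimates into the weighted form \eqref{eq:weighted-pgm-error} of \prref{prop:pgm-tests}, with $T_m=\Gamma_m$, gives $\widetilde P_{\mathrm e}\le\kappa_k(\delta)a_p+(1-\delta)b_p$.

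The inequality $\kappa_k\le(1-\delta)C_k$ quoted in \eqref{eq:sharp-simple-coefficient} is what we need for the final comparison. I would prove it via $\mathsf L_k=D^{-1}G_{\mathrm{err}}D^{-1}$ with $D=\operatorname{diag}((1-\delta)^{|S|/2})$, whose entries are at most $(1-\delta)^{1/2}$ on nonempty $S$. This gives
\[
\lambda_{\min}(\mathsf L_k)\ge(1-\delta)^{-1}\lambda_{\min}(G_{\mathrm{err}})\ge(1-\delta)^{-1}C_k^{-1}.
\]

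\emph{Step 2: the transfer function; this is the main obstacle.} For a fixed $m$, let $F\coloneqq I-\widetilde\Lambda_m$, which is an effect. Let $x_m\coloneqq\Tr[F(\rho_m\otimes\proj{t_m})]$; the actual error is $y_m\coloneqq\Tr[F(\rho_m\otimes\proj{0^k})]$. I would purify $\rho_m$ with a common purification $\ket{\phi}$ and Naimark-dilate $F$ to a projector $P$ on a larger space. This turns the two states into pure vectors $\ket{\psi}=\ket\phi\ket{t_m}$ and $\ket{\psi'}=\ket\phi\ket{0^k}$, with $|\langle\psi|\psi'\rangle|^2=(1-\delta)^k=1-d^2$, where $d=d_k(\delta)$.

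For a projector, $\arcsin\sqrt{\langle\psi|P|\psi\rangle}$ is the angle from $\ket\psi$ to $\ran(I-P)$. This is a distance-to-set for the Fubini--Study angle, so it is $1$-Lipschitz. Hence
\[
\arcsin\sqrt{y_m}\le\arcsin\sqrt{x_m}+\arcsin d,
\]
capped at $\pi/2$. Expanding $\sin^2$ of the sum gives exactly $\mathcal T_d(x_m)$, with saturation at $1$ once $x_m\ge1-d^2$. The delicate points are the correct handling of the cap, and the fact that the dilation and purification preserve both expectations. I would try the angle/triangle-inequality route first, falling back on a direct two-dimensional optimization over the span of $\ket\psi,\ket{\psi'}$ if the metric argument gets awkward.

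\emph{Step 3: averaging and comparison.} I would check directly that $\mathcal T_d$ is nondecreasing and concave. It is $\sin^2(a+b)$ with $a=\arcsin\sqrt x$, and its derivative decreases. Jensen's inequality with weights $p_m$ then gives
\[
P_{\mathrm e}=\sum_m p_m y_m\le\mathcal T_d\Bigl(\sum_m p_mx_m\Bigr),
\]
and monotonicity with Step 1 gives \eqref{eq:sharp-main}.

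For the final comparison, note that $\mathcal T_d(x)\le\min\{1,x+d\}$. This holds because the trace-distance argument \eqref{eq:effect-distance} yields the same kind of bound, and $\mathcal T_d$ is the optimum given only the overlap; it can also be checked by a short computation. Combining this with $\kappa_k\le(1-\delta)C_k\le C_k$ and $1-\delta\le1$ shows that the right-hand side of \eqref{eq:sharp-main} is at most $\min\{1,d_k+C_ka_p+b_p\}$.
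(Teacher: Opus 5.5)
Your proposal is correct and follows the paper's proof essentially step for step. The shared steps are: the rescaled vectors $(1-\delta)^{-|S|/2}\ket{w_{m,S}}$ with Gram matrix $\mathsf L_k(\delta)$, the congruence $\mathsf L_k=D_\delta^{-1}G_{\mathrm{err}}D_\delta^{-1}$ giving $\kappa_k\le(1-\delta)C_k$, the retained factor $(1-\delta)^{|S|}\le 1-\delta$ in the false-acceptance terms, a per-message transfer bound averaged over the prior by concavity, and the final comparison through $\mathcal T_d(x)\le\min\{1,x+d\}$. The only variation is in the transfer lemma: you Naimark-dilate the error effect and use the triangle inequality for the Fubini--Study angle, whereas the paper needs no dilation and instead splits $\langle a|b\rangle$ using $E+(I-E)$ and applies Cauchy--Schwarz to obtain $\cos(\alpha-\beta)\ge\cos\vartheta$. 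Both routes yield $|\alpha-\beta|\le\vartheta$. Your suggested direct check of $\mathcal T_d(x)\le x+d$, for example via $\sin^2(\alpha+\vartheta)-\sin^2\alpha=\sin(2\alpha+\vartheta)\sin\vartheta\le d$, is a valid substitute for the paper's sharpness-based argument.
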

\begin{proof}
Appendix~\ref{app:quantitative} proves that the tagged error is at most
$\kappa_k(\delta)a_p+b_{p,\delta}$, where $b_{p,\delta}$ retains the factor
$(1-\delta)^{|S(m,\ell)|}$ in every competing-message term. Thus
$b_{p,\delta}\le(1-\delta)b_p$. The exact probability-transfer estimate then gives \eqref{eq:sharp-main}. For the comparison with the basic bound, the same transfer estimate implies $\mathcal T_d(x)\le\min\{1,x+d\}$, and \eqref{eq:sharp-simple-coefficient} implies
$\kappa_k(\delta)a_p+(1-\delta)b_p\le C_k(\delta)a_p+b_p$.
\end{proof}

The refined statement changes no aspect of the protocol. It can be used after the familiar testing errors have been evaluated. For random codebooks or hashes, concavity permits the same formula with upper bounds on $\ee a_p$ and $\ee b_p$; Appendix~\ref{app:combine} proves this averaging step. The most detailed uniform-prior bound proved here, which retains every error-pattern weight and the quadratic correction from Cheng's trace inequality, is stated in \thref{thm:sharp-uniform}. The appendix also distinguishes optimality of the geometric and transfer steps from the unresolved question of an optimal complete coding bound.

\FloatBarrier

\section{Entanglement-assisted classical communication}\label{sec:ea}
We first apply the packing theorem without changing the communication objective: several senders transmit independent classical messages to one receiver. However, now let us suppost that entanglement is shared in advance between each sender and the receiver. Entanglement-assisted multiple-access regions were established in~\cite{HsiehDevetakWinter2008}, with simultaneous-decoding developments in~\cite{XuWilde2013} and an arbitrary-sender capacity treatment in~\cite{ShiEtAl2021}. Our purpose is a one-shot reduction to the theorem already proved. We use position-based coding, introduced by Anshu, Jain, and Warsi~\cite{AnshuJainWarsi2019}, in the multiple-access form developed by Qi \textit{et al}.~\cite[Section~4.1]{QiWangWilde2018}. The receiver identifies which reference position becomes correlated with the channel output. Separately chosen tests handle the different subsets of incorrect positions.

\subsection{Position-based coding and the receiver's states}\label{subsec:ea-position}
Let $\mathcal N\colon A_1\cdots A_k\to B$ be a finite-dimensional quantum channel, and fix pure bipartite states $\phi^{(i)}_{R_iA_i}$, with $R_i$ held by the receiver. Define
\begin{equation}\label{eq:ea-omega}
 \omega_{R_{[k]}B}
 \coloneqq \left(\operatorname{id}_{R_{[k]}}\otimes\mathcal N\right)
       \left(\bigotimes_{i=1}^k\phi^{(i)}_{R_iA_i}\right),
 \qquad R_{[k]}\coloneqq R_1\cdots R_k.
\end{equation}
The channel is trace preserving, so
\begin{equation}\label{eq:ea-product-reference}
 \omega_{R_{[k]}}=\bigotimes_i\phi^{(i)}_{R_i}.
\end{equation}
The input shares need not be classical. The classical coordinates in the packing theorem will be the message positions, not physical input bases.

Figure~\ref{fig:ea-position} illustrates position selection for one sender. For message sizes $M_i$, Sender~$i$ and the receiver share $M_i$ independent copies of $\phi^{(i)}$, with systems $A_{i,j}$ and $R_{i,j}$ at position $j$. To send $m_i\in[M_i]$, Sender~$i$ transmits $A_{i,m_i}$ through input $A_i$ of $\mathcal N$. All unused input shares are discarded. Let $\mathcal R$ denote the tensor product of all reference systems $R_{i,j}$ retained by the receiver. Its state for message tuple $m$ is
\begin{equation}\label{eq:ea-position-state}
 \rho_m^{\mathrm{pos}}
 \coloneqq\omega_{R_{1,m_1}\cdots R_{k,m_k}B}
 \otimes\bigotimes_{i=1}^k\bigotimes_{j\ne m_i}\phi^{(i)}_{R_{i,j}}.
\end{equation}
Tensor factors are placed in a fixed order. This is a finite ensemble on $\mathcal R B$, so the deterministic packing theorem applies directly. No random codebook is required.

For a proposed tuple $\ell$, keep $R_{1,\ell_1},\ldots,R_{k,\ell_k},B$ and trace out the other reference systems. Relabel the retained reference systems as $R_1,\ldots,R_k$. If $m=\ell$, the retained state is $\omega$. If $S=S(m,\ell)$ is nonempty, the retained state is
\begin{equation}\label{eq:ea-alternative}
 \sigma_S^{\mathrm{EA}}
 \coloneqq\omega_{R_S}\otimes\omega_{R_{S^c}B}.
\end{equation}
To verify this, trace out the correctly transmitted references $R_{i,m_i}$ for $i\in S$ in \eqref{eq:ea-position-state}; the correlated factor becomes $\omega_{R_{S^c}B}$. Each selected reference $R_{i,\ell_i}$ for $i\in S$ instead belongs to an unused pair, and contributes the independent factor $\phi^{(i)}_{R_i}$. Their product is $\omega_{R_S}$ by \eqref{eq:ea-product-reference}. For two senders, the three alternatives are therefore
\begin{equation}\label{eq:ea-two-alternatives}
 \omega_{R_1}\otimes\omega_{R_2B},\qquad
 \omega_{R_2}\otimes\omega_{R_1B},\qquad
 \omega_{R_1}\otimes\omega_{R_2}\otimes\omega_B,
\end{equation}
corresponding to types $1$, $2$, and $12$. This exact marginal identity replaces the independent-codeword averaging calculation of the unassisted proof.

\begin{figure}[!t]
\centering
\begin{adjustbox}{max width=\linewidth}
\begin{tikzpicture}[font=\small]
\node[anchor=west,font=\bfseries] at (0,3.5) {One sender's reference positions};
\foreach \j in {1,2,3,4}{
 \node[operator,minimum width=1.6cm,fill=softgreen] (r\j) at (2.4*\j-1.2,2.35) {$R_{i,\j}$};
 \node[operator,minimum width=1.6cm] (a\j) at (2.4*\j-1.2,.55) {$A_{i,\j}$};
 \draw[quantum] (r\j.south)--node[right,font=\footnotesize] {$\phi^{(i)}$}(a\j.north);
}
\node[operator,fill=softblue,text width=2.2cm] (ch) at (12,-1.0) {Channel output $B$};
\draw[flow] (a2.south)--(3.6,-1.0)--node[above] {transmit position $m_i=2$}(ch.west);
\draw[inkorange,densely dashed,thick,rounded corners] (2.6,1.65) rectangle (4.6,3.0);
\node[anchor=west,align=left,text width=4cm] at (9.6,2.35)
 {The receiver retains every reference position.};
\node[align=left,anchor=west,text width=12.8cm] at (0,-2.85)
 {Selecting the transmitted position retains its correlation with $B$. Selecting an unused position gives an independent reference state. The argument applies separately to every sender.};
\end{tikzpicture}
\end{adjustbox}
\caption{Entanglement-assisted position-based coding. The vertical connections represent pre-shared entangled states, not communication. The physical transmitted system is the selected input share; the other input shares are discarded. For a proposed message tuple, each incorrect reference position is independent of the channel output, giving \eqref{eq:ea-alternative}. These physical reference systems are distinct from the fictitious weak-label registers used only to analyze the decoder.}
\label{fig:ea-position}
\end{figure}

\subsection{The one-shot bound}\label{subsec:ea-one-shot}
The position-state identities reduce decoding to the same separate binary tests used by the packing theorem. The only task left is to evaluate their two errors and count the alternative positions. Choose an effect $T_S$ on $R_{[k]}B$ for each nonempty $S$ and define
\begin{equation}\label{eq:ea-errors}
 \alpha_S\coloneqq \Tr[(I-T_S)\omega],\qquad
 \beta_S\coloneqq \Tr[T_S\sigma_S^{\mathrm{EA}}].
\end{equation}
For proposed tuple $\ell$, let $E_{S,\ell}$ apply $T_S$ to its selected references and $B$, and the identity to all other reference systems. Equations~\eqref{eq:ea-position-state}--\eqref{eq:ea-alternative} give
\begin{align}
\Tr[(I-E_{S,m})\rho_m^{\mathrm{pos}}]&=\alpha_S,\label{eq:ea-test-identities}\\
\Tr[E_{S,\ell}\rho_m^{\mathrm{pos}}]&=\beta_S
       \quad\text{when }S(m,\ell)=S.\label{eq:line-7-7-2}
\end{align}
Thus the application needs no new intersection or disturbance estimate.

\begin{theorem}[One-shot entanglement-assisted simultaneous decoding]\label{thm:ea-one-shot}
Fix the finite-dimensional channel $\mathcal N$ and shared states in \eqref{eq:ea-omega}, positive message sizes $M_i$, and $0<\delta<1$. Define $\sigma_S^{\mathrm{EA}}$ by \eqref{eq:ea-alternative} and choose arbitrary effects with errors $\alpha_S,\beta_S$ from \eqref{eq:ea-errors}. Position-based coding using $M_i$ copies of each $\phi^{(i)}$, followed by one POVM on $\mathcal R B$, has uniform-message average error satisfying
\begin{equation}\label{eq:ea-one-shot}
 P_{\mathrm e}^{\mathrm{EA}}
 \le d_k(\delta)+C_k(\delta)\sum_{S\ne\varnothing}\alpha_S
 +\sum_{S\ne\varnothing}\left[\prod_{i\in S}(M_i-1)\right]\beta_S.
\end{equation}
Here $d_k,C_k$ are from \eqref{eq:k-constants}, and $P_{\mathrm e}^{\mathrm{EA}}=1-(\prod_iM_i)^{-1}\sum_m\Tr[\Lambda_m\rho_m^{\mathrm{pos}}]$. In particular, for $0\le\eps_S<1$, one may replace $\alpha_S$ by $\eps_S$ and $\beta_S$ by $2^{-D_H^{\eps_S}(\omega\Vert\sigma_S^{\mathrm{EA}})}$.
\end{theorem}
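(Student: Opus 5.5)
The bound will follow by applying the effect version of \thref{thm:k-packing}, as established in \sref{subsec:general-packing}, directly to the deterministic ensemble $(\rho_m^{\mathrm{pos}})_{m\in\cM}$ on $\mathcal RB$. No codebook averaging is needed. All systems are finite dimensional because $\mathcal N$ and the $\phi^{(i)}$ are, and there are finitely many positions. The message set is the Cartesian product $\prod_i[M_i]$, and the prior is uniform, so the hypotheses of the packing theorem hold once analysis effects $E_{S,m}$ are specified for every candidate tuple and every nonempty $S$.

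\textbf{Step 1: choose the analysis effects.} For each candidate tuple $\ell$, I take $E_{S,\ell}$ to be $T_S$ acting on $R_{1,\ell_1}\cdots R_{k,\ell_k}B$, relabeled as $R_{[k]}B$, tensored with the identity on the remaining references. This is an effect, since it is a tensor product of an effect with the identity.

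\textbf{Step 2: evaluate the two error sums $a$ and $b$.} Because $E_{S,\ell}$ acts as the identity outside the selected systems, $\Tr[E_{S,\ell}\rho^{\mathrm{pos}}_m]$ equals $\Tr[T_S\,\cdot\,]$ evaluated on the marginal of $\rho_m^{\mathrm{pos}}$ on the selected systems.
\begin{itemize}
\item For $m=\ell$, this marginal is $\omega$ by \eqref{eq:ea-position-state}. Hence \eqref{eq:ea-test-identities} gives $a=\sum_{S\ne\varnothing}\alpha_S$ exactly, with no averaging.
\item For $S(m,\ell)=S\ne\varnothing$, the marginal is $\sigma_S^{\mathrm{EA}}$ by \eqref{eq:ea-alternative}. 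Hence every competitor with exact error set $S$ contributes exactly $\beta_S$, by \eqref{eq:line-7-7-2}.
\end{itemize}
This marginal identity is the one point needing care. I would check it once:
\begin{itemize}
\item for $i\in S$, the selected $R_{i,\ell_i}$ is an unused copy, independent of everything else;
\item for $i\notin S$, the selected reference is the transmitted one;
\item tracing out the unselected transmitted references $R_{i,m_i}$, $i\in S$, from $\omega_{R_{[k]}B}$ leaves $\omega_{R_{S^c}B}$;
\item the independent references combine into $\omega_{R_S}$ via \eqref{eq:ea-product-reference}.
\end{itemize}

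\textbf{Step 3: count competitors.} As in the proof of \prref{prop:k-random}, for fixed $m$ there are exactly $\prod_{i\in S}(M_i-1)$ tuples $\ell$ with $S(m,\ell)=S$. Therefore $b=\sum_{S\ne\varnothing}\prod_{i\in S}(M_i-1)\beta_S$. Substituting $a$ and $b$ into \eqref{eq:k-packing} gives \eqref{eq:ea-one-shot}. The decoder is the compressed tilted PGM on $\mathcal RB$.

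\textbf{Step 4: the hypothesis-testing form.} I pick $T_S$ to be an optimizer in \eqref{eq:DH} for $D_H^{\eps_S}(\omega\Vert\sigma_S^{\mathrm{EA}})$. This gives $\alpha_S\le\eps_S$ and $\beta_S=2^{-D_H^{\eps_S}}$.

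The only genuinely nontrivial step is the exact marginal identity; everything else is a direct invocation of the general-effect packing bound.
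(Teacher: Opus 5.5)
Your proposal is correct and follows essentially the same route as the paper. You apply the general-effect form of \thref{thm:k-packing} (equivalently \corref{cor:weighted-packing} with the uniform prior) to the position-state ensemble, read off $a=\sum_S\alpha_S$ and $b=\sum_S\bigl[\prod_{i\in S}(M_i-1)\bigr]\beta_S$ from the marginal identities \eqref{eq:ea-test-identities}--\eqref{eq:line-7-7-2} and the competitor count, and then take $D_H$-optimizing effects; the only difference is that you re-derive the marginal identity inline, whereas the paper verifies it just before the theorem, after \eqref{eq:ea-alternative}.
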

\begin{proof}
Apply the effect version of \thref{thm:k-packing} to \eqref{eq:ea-position-state}, or equivalently use \corref{cor:weighted-packing} with the uniform prior. The first identity in \eqref{eq:ea-test-identities}--\eqref{eq:line-7-7-2} gives $a=\sum_S\alpha_S$. For each true tuple, exactly $\prod_{i\in S}(M_i-1)$ alternatives have error set $S$, so the second identity gives $b=\sum_S[\prod_{i\in S}(M_i-1)]\beta_S$. Substitution proves \eqref{eq:ea-one-shot}. Choose an optimizing effect in \eqref{eq:DH} for each binary pair to obtain the last assertion. The general-effects version permits separately optimized one-shot tests. Unlike the classical-input application, no dephasing in the reference registers is assumed: these registers are genuinely quantum. Projective threshold tests will suffice for the asymptotic statement below.
\end{proof}

For a sharper finite-error guarantee, apply \corref{cor:sharp-packing} or \thref{thm:sharp-uniform} with the exact substitutions in \eqref{eq:sharp-channel-substitution}. The reference-position calculation and the physical protocol do not change.

The physical decoder is the compressed PGM of the labeled versions of \eqref{eq:ea-position-state}. Its weak labels are not supplied by the entanglement resource. Entanglement supplies the genuine reference systems $\mathcal R$; the separate, fictitious label registers are removed just as in the unassisted theorem. The amount of shared entanglement required by this position-based coding construction is not optimized.

\subsection{The asymptotic rate region}\label{subsec:ea-asymptotic}
To identify the rates, only one entropy calculation remains. Product reference marginals and the logarithm of a tensor product give
\begin{align}
D(\omega\Vert\sigma_S^{\mathrm{EA}})
 &=-H(R_{[k]}B)_\omega+H(R_S)_\omega+H(R_{S^c}B)_\omega\label{eq:ea-divergence}\\
&=H(R_{[k]})_\omega-H(R_{S^c})_\omega
       +H(R_{S^c}B)_\omega-H(R_{[k]}B)_\omega\label{eq:line-7-9-2}\\
&=I(R_S;B|R_{S^c})_\omega.\label{eq:line-7-9-3}
\end{align}
Support inclusion holds for a state relative to the product of its marginals; hence this relative entropy is finite in the present setting.

\begin{corollary}[Fixed-input entanglement-assisted rates]\label{cor:ea-rates}
For the channel and product shared input states in \eqref{eq:ea-omega}, every nonnegative rate tuple $(r_i)_{i=1}^k$ satisfying
\begin{equation}\label{eq:ea-region}
 r(S)\coloneqq\sum_{i\in S}r_i\le I(R_S;B|R_{S^c})_\omega
 \quad(\varnothing\ne S\subseteq[k])
\end{equation}
is achievable for classical communication with vanishing uniform-message average error, pairwise sender--receiver entanglement, and one joint decoding measurement. The shared input states and number $k$ are fixed; no bound on entanglement consumption is asserted.
\end{corollary}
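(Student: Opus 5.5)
The plan is to mirror the proof of \thref{thm:k-capacity}, replacing the random-codebook average by \thref{thm:ea-one-shot} and the classical Stein tests by projective Stein tests on the genuinely quantum registers $R_{[k]}B$. Let $J\coloneqq\{i:r_i>0\}$. Senders outside $J$ get $M_i=1$: they use one fixed position, their error sets are impossible, and we take $T_S=I$ for every $S\not\subseteq J$, so those sets contribute nothing.

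First treat tuples with $r(S)<I(R_S;B|R_{S^c})_\omega$ for every nonempty $S\subseteq J$. Set $M_i=\lfloor 2^{nr_i}\rfloor$ for $i\in J$. Apply position-based coding with $n$-fold tensor-power resources: the channel $\mathcal N^{\otimes n}$, with each $\phi^{(i)}$ replaced by $(\phi^{(i)})^{\otimes n}$ on $R_i^nA_i^n$, so the induced state is $\omega^{\otimes n}$. After regrouping tensor factors by sender,
\[
(\sigma_S^{\mathrm{EA}})^{\otimes n}=\omega^{\otimes n}_{R^n_S}\otimes\omega^{\otimes n}_{R^n_{S^c}B^n}.
\]
This is exactly the alternative of \eqref{eq:ea-alternative} for the block setting, so \thref{thm:ea-one-shot} applies verbatim to the block problem.

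For each nonempty $S\subseteq J$, pick thresholds $r(S)<\tilde r_S<I(R_S;B|R_{S^c})_\omega$. Using \eqref{eq:ea-divergence}--\eqref{eq:line-7-9-3}, apply \lemref{lem:projective-stein} to $\omega$ versus $\sigma_S^{\mathrm{EA}}$ at threshold $\tilde r_S$. This is the only place the relative-entropy identity enters, and support inclusion guarantees finiteness. The resulting tests give
\[
\alpha_{S,n}\to0,\qquad \beta_{S,n}\le2^{-n\tilde r_S}.
\]
Since $\prod_{i\in S}(M_i-1)\le 2^{nr(S)}$, \eqref{eq:ea-one-shot} yields
\[
P_{\mathrm e}^{\mathrm{EA},(n)}\le d_k(\delta)+C_k(\delta)\sum_S\alpha_{S,n}+\sum_S2^{-n(\tilde r_S-r(S))}.
\]
Fix $\delta$ to make $d_k(\delta)<\eps/2$, then take $n$ large. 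A diagonal choice $\delta_j$, $N_j$, exactly as after \eqref{eq:two-final-asymptotic}, produces a single sequence with vanishing error and rates $n^{-1}\log_2M_i\to r_i$. No codebook selection is needed here, since the bound is deterministic.

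Boundary tuples in the closed region \eqref{eq:ea-region} are handled by the scaling $r_i^{(j)}=(1-1/j)r_i$. For every nonempty $S\subseteq J$ this gives the strict gap $r(S)/j>0$, and we concatenate whole-block constructions along increasing blocklength thresholds, as in the final paragraph of the proof of \thref{thm:k-capacity}.

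The main point to check, rather than a real obstacle, is that the tensor-power reduction is legitimate. The block resource must itself be a product of pure bipartite states $\phi^{(i)\otimes n}$ on $R_i^nA_i^n$, so that \eqref{eq:ea-product-reference} continues to hold and the marginal identity \eqref{eq:ea-alternative} transfers unchanged. The projective Stein tests must also apply to non-block-diagonal quantum hypotheses; \lemref{lem:projective-stein} requires no classical structure for this, and the effect form of \thref{thm:ea-one-shot} accepts them directly.
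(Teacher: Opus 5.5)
Your proposal is correct and follows essentially the same route as the paper. Both arguments use position-based coding on the block channel with $M_i=\lfloor 2^{nr_i}\rfloor$ copies of $(\phi^{(i)})^{\otimes n}$, feed projective Stein tests for $\omega^{\otimes n}$ versus $(\sigma_S^{\mathrm{EA}})^{\otimes n}$ into the one-shot bound \eqref{eq:ea-one-shot}, use identity tests for error sets involving zero-rate senders, and handle boundary points by the same diagonal and rate-scaling argument as in \thref{thm:k-capacity}; your explicit check that the block resource is again a product of pure bipartite states, so that \eqref{eq:ea-alternative} transfers to the block problem, is a step the paper leaves implicit.
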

\begin{proof}
For a strictly interior positive rate tuple, choose $r(S)<t_S<I(R_S;B|R_{S^c})_\omega$ and apply \lemref{lem:projective-stein} to $\omega^{\otimes n}$ versus $(\sigma_S^{\mathrm{EA}})^{\otimes n}$. The resulting tests satisfy $\alpha_{S,n}\to0$ and $\beta_{S,n}\le2^{-nt_S}$. Position-based coding is applied to the block channel, using $M_i=\lfloor2^{nr_i}\rfloor$ copies of $(\phi^{(i)})^{\otimes n}$. Equation~\eqref{eq:ea-one-shot} gives
\begin{equation}\label{eq:ea-asymptotic-bound}
 P_{\mathrm e}^{\mathrm{EA},(n)}
 \le d_k(\delta)+C_k(\delta)\sum_{S\ne\varnothing}\alpha_{S,n}
       +\sum_{S\ne\varnothing}2^{-n(t_S-r(S))}.
\end{equation}
Hold $\delta$ fixed while $n\to\infty$, then let $\delta\to0$, using the diagonal construction already given after \eqref{eq:two-final-asymptotic}. A zero-rate sender uses one position. Tests for impossible error sets can be the identity, so only subsets of positive-rate senders need strict inequalities. Scaling the positive rates toward the origin and diagonalizing includes boundary points, exactly as in \thref{thm:k-capacity}.
\end{proof}

This is a fixed-input achievable region, not an assertion that optimizing one use gives the unrestricted entanglement-assisted capacity region. The known capacity results and their regularization questions remain those of~\cite{HsiehDevetakWinter2008,ShiEtAl2021}. The point here is that the same one-shot theorem proves the simultaneous-decoding statement without a separate geometric argument.

\section{Distributed classical compression with quantum side information}\label{sec:compression}
The second application changes the direction of the coding problem. Several senders observe correlated classical data and communicate shorter descriptions to a receiver who already holds quantum side information. The case of one classical sender and one receiver with quantum side information was studied asymptotically by Devetak and Winter~\cite{DevetakWinter2003} and in the one-shot setting by Renes and Renner~\cite{RenesRenner2012}. The distributed setting with several classical senders is treated in~\cite{Winter1999Thesis,LiuXiFan2023}. In particular, Salek and Winter~\cite[Appendix~A, Lemma~19]{SalekWinter2022} prove simultaneous decoding from independent random bins using Sen's joint typicality construction. Our purpose is not to introduce simultaneous compression, but to recover its region through a direct one-shot application of the packing theorem already proved. The compression application of Cheng's state-based PGM analysis is the corresponding single-source precedent~\cite[Section~4.3]{Cheng2023}.

\subsection{The source and the transmitted hash values}\label{subsec:compression-model}
Figure~\ref{fig:compression-hashing} summarizes the encoding and decoding steps. Each sender transmits a hash value; the receiver uses these classical values together with the quantum side information.
Fix a positive integer $k$, let $\mathcal X_i$ be nonempty finite alphabets, and let $B$ be finite dimensional. The source is
\begin{equation}\label{eq:source-state}
 \omega_{X_{[k]}B}\coloneqq \sum_{x\in\prod_i\mathcal X_i}p(x)\proj{x}_{X_{[k]}}\otimes\rho_x.
\end{equation}
The probability distribution $p$ may be correlated; the conditional states $\rho_x$ are normalized. Sender~$i$ observes $x_i$. The receiver holds $B$ and must reconstruct the whole tuple. Unlike channel messages, source tuples are generally not uniform, which is why the arbitrary-prior corollary is needed.

Fix positive integers $L_i$. Sender~$i$ sends the value of a function $f_i\colon \mathcal X_i\to[L_i]$. For the existence proof, select the functions independently from families satisfying
\begin{equation}\label{eq:hash-collision}
 \Pr_{f_i}\{f_i(x_i)=f_i(z_i)\}\le L_i^{-1}
 \quad\text{for every }x_i\ne z_i.
\end{equation}
Such two-universal families exist; independent uniform assignments of outputs already suffice~\cite{CarterWegman1979}. The receiver knows the selected functions. Write $f(x)\coloneqq (f_1(x_1),\ldots,f_k(x_k))$ and let $C$ record this tuple of transmitted values. For fixed $f$, the receiver's ensemble is
\begin{equation}\label{eq:source-receiver-ensemble}
 (p(x),\rho_x^f)_x,\qquad
 \rho_x^f\coloneqq \proj{f(x)}_C\otimes\rho_x.
\end{equation}
An alternative tuple $z$ can be accepted only if it has the same transmitted hash values as the true tuple. The independent hashes suppress precisely the error classes used by the packing theorem.

For nonempty $S\subseteq[k]$, set $d_{X_S}\coloneqq \prod_{i\in S}|\mathcal X_i|$ and define the normalized alternative
\begin{equation}\label{eq:source-alternative}
 \zeta_S\coloneqq\pi_{X_S}\otimes\omega_{X_{S^c}B},\qquad
 \pi_{X_S}\coloneqq I_{X_S}/d_{X_S}.
\end{equation}
The uniform state is intentional: summing competing source strings counts their possible values, rather than drawing them independently from the source marginal. Choose a block-diagonal effect
\begin{equation}\label{eq:source-tests}
 T_S\coloneqq \sum_z\proj z\otimes T_{S,z},\quad 0\le T_{S,z}\le I_B,
 \qquad
 \alpha_S\coloneqq \Tr[(I-T_S)\omega],\quad \beta_S\coloneqq \Tr[T_S\zeta_S].
\end{equation}
Optimizing binary effects may be chosen block diagonal by dephasing the classical registers, as in \corref{cor:one-shot}.

\begin{figure}[!t]
\centering
\begin{adjustbox}{max width=\linewidth}
\begin{tikzpicture}[font=\small]
\node[anchor=east] at (0,2.1) {$x_1$};
\node[anchor=east] at (0,.65) {$x_2$};
\node[operator,minimum width=1.7cm] (h1) at (2,2.1) {Hash $f_1$};
\node[operator,minimum width=1.7cm] (h2) at (2,.65) {Hash $f_2$};
\node[operator,text width=3.3cm,minimum height=2.7cm,fill=softblue] (dec) at (8.1,.35)
 {PGM within the indicated bins\\[5pt]Quantum side information $B$};
\draw[classical,->] (0,2.1)--(h1.west);
\draw[classical,->] (0,.65)--(h2.west);
\draw[classical,->] (h1.east)--node[above,sloped] {$f_1(x_1)$}(6.3,1.0);
\draw[classical,->] (h2.east)--node[below,sloped] {$f_2(x_2)$}(6.3,.05);
\draw[quantum,->] (0,-1.8)--node[below] {unchanged side-information state $\rho_x$}(4.7,-1.8)--(6.3,-.6);
\draw[classical,->] (dec.east)--(12.0,.35) node[right] {$\widehat x$};
\node[align=left,anchor=west,text width=12.0cm] at (.3,-3.30)
 {If only the first coordinate differs, a competitor survives hashing with probability at most $1/L_1$. If both differ, it survives with probability at most $1/(L_1L_2)$.};
\end{tikzpicture}
\end{adjustbox}
\caption{Distributed classical compression with quantum side information, shown for two senders. The classical source coordinates may be correlated with each other and with $B$. The receiver already holds $B$; it is not transmitted by either encoder. The bin-conditioned PGM retains the source probabilities and uses the same fictitious tilting and final compression as the general theorem. Random hash selection proves existence; one fixed collection of functions is used by the final code.}
\label{fig:compression-hashing}
\end{figure}

\subsection{One-shot decoding by the weighted packing bound}\label{subsec:compression-one-shot}
For a proposed source tuple $z$, let the analysis effect on $CB$ be
\begin{equation}\label{eq:source-analysis-effect}
 E^f_{S,z}\coloneqq\proj{f(z)}_C\otimes T_{S,z}.
\end{equation}
For the correct tuple, the classical factor accepts with certainty. For an alternative, it enforces equality of the transmitted hash values. This is the only additional ingredient in the proof.

\begin{theorem}[One-shot distributed compression]\label{thm:compression-one-shot}
Let the finite classical--quantum source, independent two-universal hashes, and block-diagonal tests be as in \eqref{eq:source-state}--\eqref{eq:source-tests}. Set $L(S)\coloneqq \prod_{i\in S}L_i$, let $0<\delta<1$, and use $d_k,C_k$ from \eqref{eq:k-constants}. There are decoders depending on the selected functions whose error in reconstructing the full source tuple satisfies
\begin{equation}\label{eq:source-one-shot}
 \ee_f P_{\mathrm e}(f)
 \le d_k(\delta)+C_k(\delta)\sum_{S\ne\varnothing}\alpha_S
       +\sum_{S\ne\varnothing}\frac{d_{X_S}}{L(S)}\beta_S,
\end{equation}
where $P_{\mathrm e}(f)=1-\sum_xp(x)\Tr[\Lambda_{x|f(x)}\rho_x]$ and failure counts as error. In particular, for $0\le\eps_S<1$ there is one fixed collection of hashes and a decoder with
\begin{equation}\label{eq:source-DH}
 P_{\mathrm e}
 \le d_k(\delta)+C_k(\delta)\sum_{S\ne\varnothing}\eps_S
       +\sum_{S\ne\varnothing}\frac{d_{X_S}}{L(S)}
           2^{-D_H^{\eps_S}(\omega\Vert\zeta_S)}.
\end{equation}
All hypotheses in the relative entropies are normalized states. No independence assumption is imposed on the source coordinates.
\end{theorem}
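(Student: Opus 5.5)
The plan is to apply \corref{cor:weighted-packing} for each fixed choice of hash functions $f$, and then average over $f$. For fixed $f$, the receiver's ensemble $(p(x),\rho_x^f)$ from \eqref{eq:source-receiver-ensemble} is indexed by the Cartesian product $\prod_i\mathcal X_i$, and the prior may be correlated. I will use the analysis effects $E^f_{S,z}$ from \eqref{eq:source-analysis-effect}. The corollary then produces a compressed tilted-state PGM on $CB$ with error at most $d_k(\delta)+C_k(\delta)a_p(f)+b_p(f)$.

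Since the register $C$ is classical and known, this decoder is equivalent to a family of POVMs $\Lambda_{x|c}$ on $B$ indexed by the received value $c$. Concretely, the PGM operator $Z$ is block diagonal in $C$, so the measurement acts blockwise, and $\Tr[\Lambda_x\rho_x^f]=\Tr[\Lambda_{x|f(x)}\rho_x]$. This matches the stated $P_{\mathrm e}(f)$. Any outcome $x$ with $f(x)\ne c$ has zero PGM numerator in block $c$, so the decoder works within the indicated bins.

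\textbf{Step 1: the correct-rejection sum.} Since $\proj{f(x)}$ accepts $\rho_x^f$ with certainty, $\Tr[(I-E^f_{S,x})\rho^f_x]=\Tr[(I-T_{S,x})\rho_x]$. Hence $a_p(f)=\sum_S\sum_xp(x)\Tr[(I-T_{S,x})\rho_x]=\sum_S\alpha_S$, using the block form of $T_S$. This holds exactly for every $f$.

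\textbf{Step 2: the competing-tuple sum.} For $z\ne x$ with $S(x,z)=S$, we have
$\Tr[E^f_{S,z}\rho^f_x]=\mathbf 1\{f(z)=f(x)\}\Tr[T_{S,z}\rho_x]$.
The coordinates outside $S$ agree, so the indicator equals $\prod_{i\in S}\mathbf 1\{f_i(z_i)=f_i(x_i)\}$. By independence of the hashes and \eqref{eq:hash-collision}, its expectation is at most $1/L(S)$. Therefore
\begin{equation*}
\ee_f b_p(f)\le\sum_S\frac1{L(S)}\sum_xp(x)\sum_{z:S(x,z)=S}\Tr[T_{S,z}\rho_x].
\end{equation*}
Next I extend the inner sum to all $z_S$; this only increases it, since the terms are nonnegative. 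The $z_{S^c}=x_{S^c}$ constraint remains. Summing $p(x)\rho_x$ over $x_S$ gives $p_{S^c}(x_{S^c})\omega_{B|x_{S^c}}$, i.e.\ the $x_{S^c}$-block of $\omega_{X_{S^c}B}$. Meanwhile, the free sum over $z_S$ equals $d_{X_S}$ times the expectation under $\pi_{X_S}$. The resulting expression is exactly $d_{X_S}\Tr[T_S(\pi_{X_S}\otimes\omega_{X_{S^c}B})]=d_{X_S}\beta_S$. Taking expectations of the packing bound, which is linear in $a_p$ and $b_p$, gives \eqref{eq:source-one-shot}.

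\textbf{Step 3: the hypothesis-testing form.} I choose block-diagonal optimizers for $D_H^{\eps_S}(\omega\Vert\zeta_S)$, obtained by dephasing as in \corref{cor:one-shot}. This gives $\alpha_S\le\eps_S$ and $\beta_S=2^{-D_H^{\eps_S}}$. Some realization of $f$ then has error no larger than the average, which yields \eqref{eq:source-DH}.

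The main obstacle is the bookkeeping in Step 2: the correlated prior has to be combined with the uniform state $\pi_{X_S}$. The key observation is that enlarging the sum to include $z_S=x_S$, and to include coordinates agreeing in $S$, is harmless because all terms are nonnegative. After that enlargement, the count $d_{X_S}$ appears exactly. I should also confirm that zero-prior tuples are harmless, which is already covered in \corref{cor:weighted-packing}.
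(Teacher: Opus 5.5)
Your proposal is correct and follows the paper's proof essentially step for step. You apply \corref{cor:weighted-packing} for each fixed $f$, obtain $a_p=\sum_S\alpha_S$ exactly, and use the collision bound to control $\ee_f b_p$. You then enlarge the constrained sum to all pairs agreeing off $S$, and finish with dephased $D_H^{\eps_S}$ optimizers and the selection of a good hash realization. The only difference is cosmetic: you fix $x$ and free $z_S$, whereas the paper fixes $z$ and frees $x_S$, and both produce the identical sum $\Tr[T_S(I_{X_S}\otimes\omega_{X_{S^c}B})]=d_{X_S}\beta_S$.
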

\begin{proof}
Apply \corref{cor:weighted-packing} to \eqref{eq:source-receiver-ensemble} and the effects in \eqref{eq:source-analysis-effect}. Identify the finite alphabets with message-index sets. For every $f$,
\begin{equation}\label{eq:source-a}
 a_p=\sum_x p(x)\sum_S\Tr[(I-T_{S,x})\rho_x]=\sum_S\alpha_S.
\end{equation}
If $S(x,z)=S$, independence of the hash functions and \eqref{eq:hash-collision} give
\begin{equation}\label{eq:source-pair-collision}
 \Pr_f\{f(x)=f(z)\}
 =\prod_{i\in S}\Pr_{f_i}\{f_i(x_i)=f_i(z_i)\}
 \le L(S)^{-1}.
\end{equation}
For each such pair, the false-acceptance term is
\begin{equation}\label{eq:source-false-acceptance}
 \Tr[E^f_{S,z}\rho_x^f]
 =\mathbf1_{\{f(x)=f(z)\}}\Tr[T_{S,z}\rho_x].
\end{equation}
Thus the expected sum of false acceptances is bounded by
\begin{equation}\label{eq:source-b-start}
 \ee_f b_p
 \le\sum_{S\ne\varnothing}\frac1{L(S)}
       \sum_z\sum_{x:S(x,z)=S}p(x)\Tr[T_{S,z}\rho_x].
\end{equation}
For fixed $z$, retain the equalities $x_{S^c}=z_{S^c}$ but drop the restrictions $x_i\ne z_i$ for $i\in S$. Every summand is nonnegative, so
\begin{align}
\sum_z\sum_{x:S(x,z)=S}p(x)\Tr[T_{S,z}\rho_x]
 &\le\sum_z\sum_{x_S}p(x_S,z_{S^c})
                  \Tr[T_{S,z}\rho_{x_S,z_{S^c}}]\label{eq:source-b-calculation}\\
&=\Tr[T_S(I_{X_S}\otimes\omega_{X_{S^c}B})]\label{eq:line-8-13-2}\\
&=d_{X_S}\Tr[T_S\zeta_S]=d_{X_S}\beta_S.\label{eq:line-8-13-3}
\end{align}
The second equality uses the classical-block expansion
\begin{equation}\label{eq:source-marginal-blocks}
 \omega_{X_{S^c}B}
 =\sum_{z_{S^c}}\proj{z_{S^c}}\otimes
       \sum_{x_S}p(x_S,z_{S^c})\rho_{x_S,z_{S^c}}.
\end{equation}
Equations~\eqref{eq:source-a}--\eqref{eq:line-8-13-3} prove \eqref{eq:source-one-shot}. Dephasing an optimizing effect for $D_H^{\eps_S}(\omega\Vert\zeta_S)$ does not change either testing probability, so it has the form in \eqref{eq:source-tests}. This gives \eqref{eq:source-DH} after selecting a hash realization with error no larger than the average.
\end{proof}

The same hash calculation also yields the refined bound in \eqref{eq:sharp-random-average}, with \eqref{eq:sharp-source-substitution}. The use of this nonlinear bound after averaging is justified by concavity, not by replacing a random error by its expectation without proof. The source prior remains arbitrary.

\subsubsection{The explicit decoder retains the source prior}
The theorem's PGM is block diagonal in the received classical value $c$. On that block, define
\begin{equation}\label{eq:source-pgm-normalizer}
 Z_c\coloneqq \sum_{x:f(x)=c}p(x)\rho_x\otimes\proj{t_x}.
\end{equation}
For $f(z)=c$, its extended-space outcome effect is
\begin{equation}\label{eq:source-pgm}
 \widetilde\Lambda_{z|c}
 =Z_c^{-1/2}\bigl(p(z)\rho_z\otimes\proj{t_z}\bigr)Z_c^{-1/2};
 \qquad \widetilde\Lambda_{z|c}=0\quad\text{if }f(z)\ne c.
\end{equation}
Complete the POVM on each block, and compress against $\ket{0^k}$ to obtain $\Lambda_{z|c}$ on $B$. A zero-probability bin can be assigned a failure outcome. Normalizing $p(z)$ conditionally on $c$ would multiply every numerator and $Z_c$ by the same constant and hence leave the PGM unchanged. Omitting the relative source probabilities would in general change the decoder. This is the only distinction in normalization from uniform-message channel coding.

\subsection{The conditional-entropy rate region}\label{subsec:compression-rates}
We next recover the known distributed rate region from the one-shot estimate. In particular, this gives the same simultaneous-reconstruction constraints as Salek and Winter's independent-random-binning theorem~\cite[Appendix~A, Lemma~19]{SalekWinter2022}. The new calculation needed here is only the relative entropy of the source and its normalized alternative.

The alphabet-size factors in the one-shot bound cancel exactly in the asymptotic calculation. Indeed, the logarithm of \eqref{eq:source-alternative} gives
\begin{align}
D(\omega\Vert\zeta_S)
 &=-H(X_{[k]}B)_\omega+\log_2d_{X_S}+H(X_{S^c}B)_\omega\label{eq:source-divergence}\\
&=\log_2d_{X_S}-H(X_S|X_{S^c}B)_\omega.\label{eq:line-8-17-2}
\end{align}
The support condition holds because the support of each positive classical block of $\omega$ is contained in that of its marginal sum. No product assumption on $p$ is used in this identity.

\begin{corollary}[Distributed classical compression rates]\label{cor:compression-rates}
For independent copies of the source \eqref{eq:source-state}, every nonnegative rate tuple $(r_i)_{i=1}^k$ satisfying
\begin{equation}\label{eq:source-region}
 \sum_{i\in S}r_i\ge H(X_S|X_{S^c}B)_\omega
 \quad(\varnothing\ne S\subseteq[k])
\end{equation}
is achievable by separate deterministic classical encoders and one quantum decoder, with vanishing average reconstruction error. Here Sender~$i$ uses an alphabet of size $L_i^{(n)}$ with $\limsup_n n^{-1}\log_2 L_i^{(n)}\le r_i$. No communication between encoders or shared randomness during operation is required.
\end{corollary}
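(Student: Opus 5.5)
The plan is to apply \thref{thm:compression-one-shot}, specifically the hash-selected bound \eqref{eq:source-DH}, to the $n$-fold source $\omega^{\otimes n}$. The fixed parameter $\delta$ controls the label-removal term, and projective Stein tests from \lemref{lem:projective-stein} control the testing terms. This mirrors the channel-coding proof of \thref{thm:k-capacity}, with minimization of rates replacing maximization. The $n$-fold source is again a finite classical--quantum source, with alphabets $\mathcal X_i^n$ and correlated prior $p^{\otimes n}$, so the one-shot theorem applies without change. Its alternative is $\zeta_S^{(n)}=\pi_{X_S^n}\otimes\omega^{\otimes n}_{X_{S^c}B}=\zeta_S^{\otimes n}$, and $d_{X_S^n}=d_{X_S}^n$.

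First assume every inequality in \eqref{eq:source-region} is strict and every $r_i>0$. For each nonempty $S$, pick $t_S$ with
\[
\log_2 d_{X_S}-r(S)<t_S<D(\omega\Vert\zeta_S)=\log_2d_{X_S}-H(X_S|X_{S^c}B)_\omega,
\]
using \eqref{eq:source-divergence}--\eqref{eq:line-8-17-2}. This interval is nonempty exactly when $r(S)>H(X_S|X_{S^c}B)_\omega$. Apply \lemref{lem:projective-stein} to $\omega^{\otimes n}$ versus $\zeta_S^{\otimes n}$ at threshold $t_S$. The test is block diagonal in the classical registers, as \eqref{eq:source-tests} requires, and gives $\alpha_{S,n}\to0$ and $\beta_{S,n}\le2^{-nt_S}$. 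Take $L_i^{(n)}=\lceil2^{nr_i}\rceil$. Then \eqref{eq:source-one-shot} gives
\[
\ee_f P_{\mathrm e}^{(n)}\le d_k(\delta)+C_k(\delta)\sum_S\alpha_{S,n}+\sum_S 2^{-n(t_S+r(S)-\log_2 d_{X_S})}.
\]
Each exponent is positive, so the last sum vanishes. The alphabet factor cancels exactly, as the remark before the corollary anticipates. Fix $\delta$, let $n\to\infty$, and then diagonalize in $\delta$ as after \eqref{eq:two-final-asymptotic}. For each $n$, select a hash realization whose error is at most the average; these selected hashes are the separate deterministic encoders. Since $n^{-1}\log_2 L_i^{(n)}\to r_i$, the $\limsup$ condition holds.

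For boundary tuples, I would enlarge the rates instead of shrinking them, since increasing rates moves into the interior of this region. Set $r_i^{(j)}=r_i+1/j$. Every inequality is then strict, because $r^{(j)}(S)\ge r(S)+|S|/j$, and every rate is positive. Run the strict construction at error target $1/j$ beyond an increasing blocklength threshold $N_j$, and use the $j$th construction between consecutive thresholds. The error tends to zero and $\limsup_n n^{-1}\log_2L_i^{(n)}\le r_i$. This also covers $r_i=0$, where $L_i^{(n)}$ grows subexponentially. The trivial case of tiny alphabets needs no separate treatment. One detail to check: if $d_{X_S}=1$ for some $S$, then $H(X_S|\cdot)=0$ and the tests can be taken to be the identity.

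The main point to verify is that $D(\omega^{\otimes n}\Vert\zeta_S^{\otimes n})=nD(\omega\Vert\zeta_S)$ with the correct normalization, so that $d_{X_S}^n$ is absorbed exactly by $2^{-nt_S}$. The other point is ensuring the Stein test's block structure matches \eqref{eq:source-tests} after grouping the registers by sender, which \lemref{lem:projective-stein} already provides. I expect no obstacle beyond this bookkeeping.
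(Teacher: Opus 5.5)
Your proposal is correct and follows the paper's proof essentially step for step: Stein thresholds strictly between $\log_2 d_{X_S}-r(S)$ and $D(\omega\Vert\zeta_S)=\log_2 d_{X_S}-H(X_S|X_{S^c}B)_\omega$, so that the factor $d_{X_S}^n$ cancels in \eqref{eq:source-one-shot} with $L_i^{(n)}=\lceil 2^{nr_i}\rceil$. The rest also matches: the fix-$\delta$-then-diagonalize argument with selection of a hash realization, and boundary tuples handled by enlarging the rates to $r_i+1/j$. Your extra assumption $r_i>0$ in the strict case and the aside about $d_{X_S}=1$ are harmless but unnecessary, since \lemref{lem:projective-stein} already permits negative thresholds.
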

\begin{proof}
First suppose all inequalities in \eqref{eq:source-region} are strict. Choose $\gamma>0$ smaller than each positive gap. Write $h_S\coloneqq H(X_S|X_{S^c}B)_\omega$ and apply \lemref{lem:projective-stein} to $\omega^{\otimes n}$ and $\zeta_S^{\otimes n}$ with threshold $\log_2d_{X_S}-h_S-\gamma$. This threshold is allowed to be negative; the lemma requires only that it be strictly below the relative entropy. The tests can be chosen block diagonal and satisfy
\begin{equation}\label{eq:source-Stein}
 \alpha_{S,n}\longrightarrow0,\qquad
 \beta_{S,n}\le2^{-n(\log_2d_{X_S}-h_S-\gamma)}.
\end{equation}
Use $L_i^{(n)}=\lceil2^{nr_i}\rceil$ and independent two-universal hashes on the block alphabets. Substituting \eqref{eq:source-Stein} into \eqref{eq:source-one-shot} gives
\begin{equation}\label{eq:source-asymptotic}
 \ee_fP_{\mathrm e}^{(n)}
 \le d_k(\delta)+C_k(\delta)\sum_{S\ne\varnothing}\alpha_{S,n}
       +\sum_{S\ne\varnothing}2^{-n(r(S)-h_S-\gamma)},
 \qquad r(S)=\sum_{i\in S}r_i.
\end{equation}
There are finitely many sets $S$. Fix $\delta$, let $n$ grow, and then choose $\delta$ smaller by the same diagonal argument used for channel coding. Select one hash collection with no larger error at each blocklength. Rounding the alphabet sizes changes each rate by at most $1/n$. For a boundary tuple, apply the strict construction to $r_i+1/j$ and choose increasing blocklength thresholds for error at most $1/j$. The resulting limiting rates obey the stated upper bounds, including when a target rate is zero.
\end{proof}

For two senders, \eqref{eq:source-region} reads
\begin{equation}\label{eq:source-two-region}
 r_1\ge H(X_1|X_2B),\qquad
 r_2\ge H(X_2|X_1B),\qquad
 r_1+r_2\ge H(X_1X_2|B),
\end{equation}
with all entropies evaluated on $\omega$. These are the established distributed compression constraints~\cite{Winter1999Thesis,SalekWinter2022,LiuXiFan2023}; the result here is their derivation from the same one-shot tilted-state packing theorem. There is no separate intersection proof for source coding.

For comparison, Chou~\cite[Section~V-B, Lemma~4]{Chou2022} recalls this distributed source-coding region, attributes it to Winter's thesis, and explains its achievability from single-sender coding and time sharing. Section~V-C of that work uses distributed compression as an ingredient in channel coding. This source-coding reduction is distinct from the direct simultaneous-hashing analysis used here.

\subsection{Retaining the quantum side information}\label{subsec:compression-gentle}
Decoding need not destroy the side-information state. This is relevant when compression is part of a larger protocol. Implement a bin-conditioned POVM by the operators $\sqrt{\Lambda_{z|c}}$, record its classical answer, and retain $B$.

\begin{proposition}[Reconstruction with small disturbance]\label{prop:compression-gentle}
For the source \eqref{eq:source-state}, fix hashes and a decoder with average reconstruction error $\eps$. Include failure as a distinct decoded symbol. The preceding measurement implementation produces a state on the original classical record $X_{[k]}$, its decoded copy $\widehat X_{[k]}$, and $B$ whose normalized trace distance from
\begin{equation}\label{eq:source-ideal-reconstruction}
 \Omega=\sum_xp(x)\proj x_X\otimes\proj x_{\widehat X}\otimes\rho_x
\end{equation}
is at most $\sqrt\eps+\eps/2$. Here $X$ and $\widehat X$ abbreviate the two tuples of classical registers.
\end{proposition}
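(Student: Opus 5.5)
The plan is to reduce everything to one gentle-measurement estimate per source tuple and then average. First I would write the post-measurement state explicitly. The hashes are fixed, so $c=f(x)$ is determined by $x$. Applying the instrument with Kraus operators $\sqrt{\Lambda_{z|c}}$, with failure included as an extra symbol $z=\mathrm f$, gives
\begin{equation*}
 \Omega'=\sum_x p(x)\proj x_X\otimes\sum_{z}\proj z_{\widehat X}\otimes\sqrt{\Lambda_{z|f(x)}}\,\rho_x\sqrt{\Lambda_{z|f(x)}}.
\end{equation*}
Both $\Omega'$ and the ideal state $\Omega$ in \eqref{eq:source-ideal-reconstruction} are block diagonal in the classical register $X$, and also in $\widehat X$. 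Hence $\norm{\Omega'-\Omega}_1$ splits exactly into $\sum_x p(x)$ times the trace norm of the difference of the $x$-blocks. Within a fixed block $x$, I set $\Lambda\coloneqq\Lambda_{x|f(x)}$ and $\eps_x\coloneqq1-\Tr[\Lambda\rho_x]$. By the definition of $P_{\mathrm e}(f)$ in \thref{thm:compression-one-shot}, these satisfy $\sum_xp(x)\eps_x=\eps$.

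Within block $x$, the $\widehat X$-diagonal terms with $z\ne x$ appear only in $\Omega'$. Their total trace is $\sum_{z\ne x}\Tr[\Lambda_{z|f(x)}\rho_x]=\eps_x$, because the completed POVM sums to $I_B$. These terms therefore contribute exactly $\eps_x$ to the trace norm. The $z=x$ term contributes $\norm{\sqrt\Lambda\rho_x\sqrt\Lambda-\rho_x}_1$.

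For this term I would invoke the gentle measurement lemma in the form
\begin{equation*}
 \norm{\sqrt\Lambda\rho\sqrt\Lambda-\rho}_1\le2\sqrt{1-\Tr[\Lambda\rho]}\qquad(0\le\Lambda\le I).
\end{equation*}
Since it is not among the tools already developed in the paper, I would prove it briefly. Purify $\rho$ as $\ket\psi$. The vector $\ket{\psi'}\coloneqq(\sqrt\Lambda\otimes I)\ket\psi$ satisfies $\langle\psi|\psi'\rangle=\Tr[\sqrt\Lambda\rho]\ge\Tr[\Lambda\rho]=1-\eps_x$, because $\sqrt\Lambda\ge\Lambda$. Monotonicity of the trace norm under partial trace reduces the claim to $\norm{\proj\psi-\proj{\psi'}}_1$. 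For two vectors, one normalized and one of squared norm $1-\eps_x$, a direct two-dimensional calculation analogous to \eqref{eq:pure-distance} gives
\begin{equation*}
 \norm{\proj\psi-\proj{\psi'}}_1=\sqrt{(1+\norm{\ket{\psi'}}^2)^2-4|\langle\psi|\psi'\rangle|^2}.
\end{equation*}
Inserting $\norm{\ket{\psi'}}^2=1-\eps_x$ and $|\langle\psi|\psi'\rangle|\ge1-\eps_x$ bounds this by $\sqrt{(2-\eps_x)^2-4(1-\eps_x)^2}\le2\sqrt{\eps_x}$. I expect this step, namely getting the subnormalized pure-state calculation right with constant exactly $2$, to be the only delicate point. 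If the direct route becomes messy, a fallback is the known fidelity and Fuchs--van de Graaf route.

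Combining the two contributions, the normalized trace distance of block $x$ is at most $\sqrt{\eps_x}+\eps_x/2$. Averaging over $x$ with weights $p(x)$ and applying concavity of the square root (Jensen's inequality), $\sum_xp(x)\sqrt{\eps_x}\le\sqrt\eps$, gives the claimed bound $\sqrt\eps+\eps/2$. Blocks with $p(x)=0$ and zero-probability bins contribute nothing.
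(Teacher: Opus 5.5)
Your proposal is correct and follows essentially the same route as the paper. It uses block decomposition in the classical registers, a per-tuple gentle-measurement bound $\norm{\rho_x-\sqrt{\Lambda}\rho_x\sqrt{\Lambda}}_1\le2\sqrt{\eps_x}$ via purification and $\sqrt\Lambda\ge\Lambda$, the observation that the incorrect $\widehat X$ blocks contribute exactly $\eps_x$, and a final concavity (Cauchy--Schwarz) averaging step. The only difference is minor: you compute the trace norm of the rank-one difference exactly from its two eigenvalues, obtaining the slightly sharper $\sqrt{4\eps_x-3\eps_x^2}$, whereas the paper writes $\proj{\psi_x}-\proj{v_x}$ as two rank-one terms and combines $\norm{\ket{\psi_x}-\ket{v_x}}^2\le e_x$ with the triangle inequality.
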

\begin{proof}
For a true tuple $x$, set $E_x\coloneqq \Lambda_{x|f(x)}$ and $e_x\coloneqq1-\Tr[E_x\rho_x]$. Let $\ket{\psi_x}$ purify $\rho_x$ and set $\ket{v_x}\coloneqq (I\otimes\sqrt{E_x})\ket{\psi_x}$. Since $\sqrt{E_x}\ge E_x$ for $0\le E_x\le I$,
\begin{equation}\label{eq:source-gentle-vector}
 \norm{\ket{\psi_x}-\ket{v_x}}^2
 =1+\Tr[E_x\rho_x]-2\Tr[\sqrt{E_x}\rho_x]\le e_x.
\end{equation}
Write the difference of the two rank-one operators as
\begin{equation}\label{eq:source-gentle-decomposition}
 \proj{\psi_x}-\ket{v_x}\bra{v_x}
 =(\ket{\psi_x}-\ket{v_x})\bra{\psi_x}
   +\ket{v_x}(\bra{\psi_x}-\bra{v_x}).
\end{equation}
The trace norm of $\ket a\bra b$ is $\norm{\ket a}\norm{\ket b}$, and $\norm{\ket{v_x}}\le1$. The triangle inequality and partial-trace contractivity therefore give
\begin{equation}\label{eq:source-gentle-bound}
 \norm{\rho_x-\sqrt{E_x}\rho_x\sqrt{E_x}}_1\le2\sqrt{e_x}.
\end{equation}
The actual reconstructed state is
\begin{equation}\label{eq:source-actual-reconstruction}
 \Omega'=\sum_{x,z}p(x)\proj x_X\otimes\proj z_{\widehat X}
 \otimes\sqrt{\Lambda_{z|f(x)}}\rho_x\sqrt{\Lambda_{z|f(x)}}.
\end{equation}
The classical blocks are orthogonal. For each $x$, the correct block differs by at most $2\sqrt{e_x}$, and all incorrect blocks have total trace $e_x$. Hence
\begin{equation}\label{eq:source-gentle-final}
 \tfrac12\norm{\Omega'-\Omega}_1
 \le\sum_xp(x)\sqrt{e_x}+\tfrac12\sum_xp(x)e_x
 \le\sqrt\eps+\eps/2,
\end{equation}
where the final step is scalar Cauchy--Schwarz. This is the usual gentle-measurement mechanism, written explicitly for the source decoder.
\end{proof}

Both new applications have used the same quantum construction. Entanglement assistance changes which reference systems are available; compression adds a classical test of matching hash values and a nonuniform prior. The next application changes the physical signal space instead, to coherent optical states, while leaving the packing argument intact.

\section{Bosonic multiple-access channels with coherent-state encoding}\label{sec:bosonic}
The same simultaneous-decoding mechanism applies to optical signals mixed by a passive interferometer. We begin with two senders at the inputs of a beamsplitter and then treat $k$ senders at the inputs of a $k$-port splitter. The receiver has access to one output mode per channel use and may make an arbitrary collective measurement over its received modes. The other output modes are inaccessible. The coherent-state rate region is known~\cite{YenShapiro2005}; our purpose is to recover its achievability from the message-label construction, including a careful passage from finite constellations to Gaussian modulation. Extensions to more general Gaussian multiple-access channels are studied in~\cite{ShiZhuang2022}. Explicit optical receiver constructions address a different implementation question~\cite{WildeGuha2012}.

\subsection{Two inputs, one received output, and the energy constraint}\label{subsec:bosonic-model}
A single bosonic mode has Fock basis $(\ket q)_{q=0}^\infty$, annihilation operator $\hat a$, and photon-number operator $\hat n=\hat a^\dagger\hat a$. A coherent state of complex amplitude $z\in\mathbb{C}$ is
\begin{equation}\label{eq:coherent-state}
 \ket z\coloneqq e^{-|z|^2/2}\sum_{q=0}^\infty\frac{z^q}{\sqrt{q!}}\ket q,
 \qquad \hat a\ket z=z\ket z,
 \qquad \langle z|\hat n|z\rangle=|z|^2.
\end{equation}
The basis vectors $\ket{m_i}$ used to construct the auxiliary labels are orthonormal; the weak labels themselves are not. Physical coherent states with different amplitudes are also generally nonorthogonal, but they are optical signals rather than auxiliary message labels. In fact,
\begin{equation}\label{eq:coherent-overlap}
 |\langle z|z'\rangle|^2=e^{-|z-z'|^2},\qquad
 \Delta(\proj z,\proj{z'})=\sqrt{1-e^{-|z-z'|^2}}\le |z-z'|.
\end{equation}
The first identity follows by summing the exponential series in \eqref{eq:coherent-state}; the second uses \eqref{eq:pure-distance} and $1-e^{-u}\le u$ for $u\ge0$.

Fix $0<\eta<1$. A lossless beamsplitter transforms the two input modes $A_1,A_2$ into output modes $B,E$ according to
\begin{equation}\label{eq:beamsplitter}
 \hat b=\sqrt\eta\,\hat a_1+\sqrt{1-\eta}\,\hat a_2,
 \qquad
 \hat e=-\sqrt{1-\eta}\,\hat a_1+\sqrt\eta\,\hat a_2.
\end{equation}
The two rows are orthonormal, so the canonical commutation relations are preserved. To keep the state notation readable, denote the two output amplitudes by
\begin{equation}\label{eq:two-output-amplitudes}
 \gamma_B\coloneqq\sqrt\eta z_1+\sqrt{1-\eta}z_2,
 \qquad
 \gamma_E\coloneqq-\sqrt{1-\eta}z_1+\sqrt\eta z_2.
\end{equation}
Their dependence on $(z_1,z_2)$ is understood. The passive unitary leaves vacuum invariant and transforms displacement amplitudes by the same matrix. Hence
\begin{equation}\label{eq:beamsplitter-coherent}
 \ket{z_1}_{A_1}\ket{z_2}_{A_2}
 \longmapsto\ket{\gamma_B}_B\ket{\gamma_E}_E.
\end{equation}
In particular, tracing out $E$ leaves a pure coherent state on $B$. The induced classical--quantum channel is
\begin{equation}\label{eq:bosonic-two-channel}
 (z_1,z_2)\longmapsto\rho_{z_1,z_2}
   =\proj{\gamma_B(z_1,z_2)}_B.
\end{equation}
These are standard passive-linear-optical identities~\cite{YenShapiro2005,Serafini2023}. The optical phases and the channel parameters are fixed and known. Phase calibration is not a message-dependent auxiliary resource.

Figure~\ref{fig:beamsplitter} shows the two-input optical geometry and the discarded output. For an $n$-use code, Sender~$i$ transmits $\bigotimes_{t=1}^n\ket{z_{i,t}(m_i)}$. We impose the uniform-message average photon-number constraint
\begin{equation}\label{eq:optical-energy}
 \mathcal E_i(\cC)\coloneqq
 \frac1{nM_i}\sum_{m_i=1}^{M_i}\sum_{t=1}^n|z_{i,t}(m_i)|^2\le N_i,
 \qquad i=1,2.
\end{equation}
Here $0\le N_i<\infty$. This is a bound on the mean input photon number, averaged over messages and uses. It is not a hard cutoff on photon number, a maximum-energy constraint for every codeword, or a maximum-error criterion. Error is the uniform-message probability in \eqref{eq:code-error}, with the output states given by \eqref{eq:bosonic-two-channel}.

\begin{figure}[!t]
\centering
\begin{adjustbox}{max width=\linewidth}
\begin{tikzpicture}[font=\small]
 % A physical two-port optical plate: left and lower inputs; right and upper outputs.
 \coordinate (C) at (3.1,0);
 \draw[fill=softblue,draw=inkblue!40,rounded corners=1pt] (2.40,-.70) rectangle (3.80,.70);
 \draw[inkblue,line width=2pt] (2.48,-.62)--(3.72,.62);
 \draw[quantum,->] (-.1,0)--(2.94,0);
 \node[anchor=east] at (-.2,0) {$A_1:\ket{z_1}$};
 \draw[quantum,->] (3.1,-2.25)--(3.1,-.16);
 \node[anchor=north] at (3.1,-2.34) {$A_2:\ket{z_2}$};
 \draw[quantum,->] (3.26,0)--(8.15,0);
 \node[anchor=south,align=center] at (5.8,.28)
 {$B:\ket{\gamma_B}$};
 \draw[quantum,->] (3.1,.16)--(3.1,2.10);
 \draw[line width=1pt] (2.88,2.14)--(3.32,2.14);
 \foreach \x in {2.90,3.02,3.14,3.26}{\draw[gray] (\x,2.14)--(\x+.12,2.34);}
 \node[anchor=west,align=left] at (3.53,1.7) {$E$: inaccessible\\output mode};
 \node[anchor=east,align=right] at (2.16,-.66) {beamsplitter\\$\eta$};
 \node[operator,text width=2.4cm,minimum height=1.0cm] (dec) at (9.5,0)
 {Collective decoder\\on $B^n$};
 \draw[classical,->] (dec.east)--(12.1,0) node[right] {$(\widehat m_1,\widehat m_2)$};
 \node[font=\footnotesize,anchor=west] at (-.1,-3.23)
 {One use is shown; the decoder acts after $n$ received modes have been stored.};
\end{tikzpicture}
\end{adjustbox}
\caption{The two-sender optical multiple-access channel, drawn with the standard diagonal beamsplitter interface. Two input paths interfere at the plate. The right output is delivered to the receiver and the upper output is discarded, so the receiver does not obtain both outputs. The diagram shows one optical use followed schematically by storage and collective decoding over $n$ uses. No weak message labels are sent through the optical channel.}
\label{fig:beamsplitter}
\end{figure}

\subsection{Finite constellations: an exact finite-dimensional reduction}\label{subsec:constellations}
First restrict each input amplitude $z_i$ to a finite constellation $\mathcal A_i\subset\cc$. For $i\in\{1,2\}$, let $p_i$ be a probability distribution on that constellation: $p_i(z_i)\ge0$ and $\sum_{z_i\in\mathcal A_i}p_i(z_i)=1$. The input choices are independent, so their joint probability is $p_1(z_1)p_2(z_2)$. All sums below run over these finite alphabets. All conditional output vectors lie in
\begin{equation}\label{eq:signal-span}
 \cH_{\mathrm{sig}}
 \coloneqq\spanop\left\{\ket{\gamma_B(z_1,z_2)}:
          z_1\in\mathcal A_1,\ z_2\in\mathcal A_2\right\}.
\end{equation}
Its dimension is at most $|\mathcal A_1||\mathcal A_2|$. This is an exact finite-dimensional space even though each coherent state has infinitely many Fock components. We do not truncate those components. All $n$-use outputs lie in $\cH_{\mathrm{sig}}^{\otimes n}$, so the finite-dimensional coding theorem applies without modification. A POVM on that space extends to the full Fock space by adding its orthogonal complement to the failure effect.

To evaluate the conditional information quantities, define
\begin{align}
\tau_1&\coloneqq \sum_{z_1}p_1(z_1)\proj{\sqrt\eta z_1},\label{eq:optical-finite-mixtures}\\
\tau_2&\coloneqq \sum_{z_2}p_2(z_2)\proj{\sqrt{1-\eta}z_2},\label{eq:line-9-8-2}\\
\tau_{12}&\coloneqq \sum_{z_1,z_2}p_1(z_1)p_2(z_2)
                    \proj{\gamma_B(z_1,z_2)}.\label{eq:line-9-8-3}
\end{align}
The displacement $D(z)\coloneqq\exp(z\hat b^\dagger-\overline z\hat b)$ satisfies $D(z)\proj wD(z)^\dagger=\proj{z+w}$. Therefore, for the finite-constellation input--output state $\omega$,
\begin{align}
\overline\rho_{z_2}
 &=D(\sqrt{1-\eta}z_2)\tau_1D(\sqrt{1-\eta}z_2)^\dagger,\label{eq:optical-conditional-entropy}\\
I(X_1;B|X_2)_\omega
 &=\sum_{z_2}p_2(z_2)H(\overline\rho_{z_2})
   -\sum_{z_1,z_2}p_1(z_1)p_2(z_2)H(\rho_{z_1,z_2})\label{eq:line-9-9-2}\\
&=H(\tau_1).\label{eq:line-9-9-3}
\end{align}
The conditional output states are pure, so the second sum is equal to zero; displacement is unitary, so every term in the first sum is equal to $H(\tau_1)$. The same calculation with the sender roles exchanged, and then with both inputs averaged, gives
\begin{equation}\label{eq:finite-three-entropies}
 I(X_2;B|X_1)_\omega=H(\tau_2),\qquad
 I(X_1X_2;B)_\omega=H(\tau_{12}).
\end{equation}
Conditioning on the other sender means that its amplitude is a known displacement in this entropy calculation. It does not mean that the physical decoder is given the other message or decodes it first.

\begin{proposition}[Simultaneous decoding for finite coherent-state constellations]\label{prop:finite-optical}
Consider the channel \eqref{eq:bosonic-two-channel} with finite constellations $\mathcal A_1,\mathcal A_2$, independent input distributions $p_1,p_2$, and mixtures $\tau_1,\tau_2,\tau_{12}$ from \eqref{eq:optical-finite-mixtures}--\eqref{eq:line-9-8-3}. Every nonnegative rate pair satisfying
\begin{equation}\label{eq:finite-optical-region}
 R_1\le H(\tau_1),\qquad R_2\le H(\tau_2),\qquad
 R_1+R_2\le H(\tau_{12})
\end{equation}
is achievable in uniform-message average error by independent unsplit codebooks and a collective measurement on $B^n$. The deterministic and random-coding bounds in \thref{thm:two-packing} and \prref{prop:two-random} hold on the exact signal space \eqref{eq:signal-span}. If $\ee_{p_i}|z_i|^2<N_i$ whenever $N_i>0$, and Sender~$i$ uses vacuum when $N_i=0$, good deterministic codebooks can also be chosen to satisfy \eqref{eq:optical-energy}.
\end{proposition}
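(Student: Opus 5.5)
The plan is to reduce everything to \thref{thm:two-capacity} on the finite-dimensional space $\cH_{\mathrm{sig}}$, and then handle the energy constraint by an expurgation-free averaging argument.

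\emph{Step 1: finite-dimensional reduction.} Every conditional output $\proj{\gamma_B(z_1,z_2)}$ is supported in $\cH_{\mathrm{sig}}$, so the channel \eqref{eq:bosonic-two-channel} restricted to the constellations is a finite-alphabet, finite-dimensional classical--quantum channel in the sense of \eqref{eq:two-channel}. Consequently \thref{thm:two-packing} and \prref{prop:two-random} apply verbatim on $\cH_{\mathrm{sig}}^{\otimes n}$. Any POVM there extends to the Fock space by adding $I-\Pi_{\cH_{\mathrm{sig}}^{\otimes n}}$ to the failure effect; this changes no success probability, because all codeword outputs lie in the subspace.

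\emph{Step 2: the rate region.} By \eqref{eq:line-9-9-3} and \eqref{eq:finite-three-entropies}, the three conditional mutual informations in \thref{thm:two-capacity} equal $H(\tau_1)$, $H(\tau_2)$ and $H(\tau_{12})$. Therefore \eqref{eq:finite-optical-region} is exactly the region \eqref{eq:two-region}--\eqref{eq:line-3-5-3}. Achievability in uniform-message average error, with independent unsplit codebooks and one collective POVM, including zero rates and boundary points, then follows directly from that theorem.

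\emph{Step 3: the energy constraint.} This is the main obstacle, because \thref{thm:two-capacity} selects a codebook only by its error. I would select jointly for small error and admissible energy. For a strictly interior rate pair and the random codebook, $\ee_{\cC}\mathcal E_i(\cC)=\ee_{p_i}|z_i|^2<N_i$ when $N_i>0$. Moreover $\mathcal E_i(\cC)$ is an average of $nM_i$ independent bounded variables, since the constellation is finite. By the weak law of large numbers, or Chebyshev's inequality with variance at most $\max|z|^4/(nM_i)$, we get $\Pr\{\mathcal E_i(\cC)>N_i\}\to0$. Next, Markov's inequality gives $\Pr\{\Pe>\sqrt{\eta_n}\}\le\sqrt{\eta_n}$ for $\eta_n\coloneqq\ee_{\cC}\Pe^{(n)}\to0$. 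A union bound then shows that, for large $n$, some codebook simultaneously has error at most $\sqrt{\eta_n}$ and satisfies \eqref{eq:optical-energy} for both senders. When $N_i=0$, Sender~$i$ uses the vacuum constellation $\{0\}$, so its constraint holds identically and its rate is zero (here $H(\tau_i)=0$).

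\emph{Step 4: boundary points and zero rates.} For boundary points, apply this selection to the scaled rates $(1-1/j)R_i$ and concatenate with the diagonal blocklength argument used in \sref{subsec:two-boundary}. The energy constraint holds at every blocklength, so the limiting sequence inherits it. If some $M_i=1$, the single codeword has energy $n^{-1}\sum_t|z_t|^2$, which concentrates in the same way.
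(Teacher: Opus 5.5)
Your proposal is correct and follows essentially the same route as the paper. It restricts to the exact signal span $\cH_{\mathrm{sig}}^{\otimes n}$, substitutes the entropy identities into the two-sender theorem, and selects a realization by combining Chebyshev bounds on the energy costs (using $nM_i\ge n\to\infty$) with a Markov bound on the decoding error, repeating this selection in the boundary-rate construction. The only detail the paper makes explicit that you leave implicit is the trivial case $\ee_{\cC}\Pe^{(n)}=0$: there the threshold $\sqrt{\eta_n}$ is zero, so Markov's inequality does not apply, and one notes instead that the error vanishes almost surely.
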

\begin{proof}
The refined bounds of \sref{subsubsec:sharp-packing} also apply on the finite signal span, with the same codebooks and decoder. The rate and error claims follow by applying the finite-dimensional theorems to \eqref{eq:signal-span} and substituting \eqref{eq:optical-conditional-entropy}--\eqref{eq:finite-three-entropies}. For the energy claim, write $\mu_i\coloneqq \ee|z_i|^2$ and $v_i\coloneqq\operatorname{Var}(|z_i|^2)$, which are finite because the constellation is finite. For a fixed constellation and fixed distribution $p_i$, the mean $\mu_i$, variance $v_i$, and positive energy slack $N_i-\mu_i$ do not depend on $n$. The $nM_i$ sampled costs are independent, so their average has variance $v_i/(nM_i)$. Chebyshev's inequality therefore gives
\begin{equation}\label{eq:energy-chebyshev}
 \Pr\{\mathcal E_i(\cC)>N_i\}
 \le\frac{v_i}{nM_i(N_i-\mu_i)^2}\xrightarrow[n\to\infty]{}0
 \qquad(N_i>\mu_i).
\end{equation}
The message size $M_i=M_i(n)$ may grow, but $M_i(n)\ge1$ already ensures $nM_i(n)\to\infty$. This limit concerns blocklength, not constellation refinement. A vacuum sender satisfies its constraint exactly. If the expected decoding error is $\xi_n\to0$, Markov's inequality gives $\Pr\{\Pe>\sqrt{\xi_n}\}\le\sqrt{\xi_n}$ when $\xi_n>0$; if $\xi_n=0$, the error is zero almost surely. The sum of these probabilities and the energy-violation probabilities tends to zero. For every sufficiently large $n$, a realization therefore has small error and satisfies both energy constraints. This selects a deterministic realization of the independent ensemble; no shared codebook randomness is needed during communication. The same selection can be applied at each stage of the boundary-rate construction.
\end{proof}

\subsection{Gaussian mixtures and the two-sender region}\label{subsec:gaussian-optical}
Circular Gaussian modulation makes the three entropies explicit. A centered circular complex Gaussian variable $Z$ with variance $N>0$ has density $e^{-|z|^2/N}/(\pi N)$ and $\ee|Z|^2=N$. Its average coherent state is
\begin{equation}\label{eq:thermal-mixture}
 \vartheta_N\coloneqq\int_{\cc}\frac{e^{-|z|^2/N}}{\pi N}\proj z\,\mathrm d^2z
   =\sum_{q=0}^\infty\frac{N^q}{(N+1)^{q+1}}\proj q.
\end{equation}
For completeness, expansion in \eqref{eq:coherent-state} makes the angular integral equal to zero for off-diagonal Fock entries. On the diagonal, polar coordinates give
\begin{equation}\label{eq:thermal-diagonal}
 \langle q|\vartheta_N|q\rangle
 =\frac{2}{Nq!}\int_0^\infty
       e^{-(1+1/N)r^2}r^{2q+1}\,\mathrm dr
 =\frac{N^q}{(N+1)^{q+1}}.
\end{equation}
This geometric distribution has mean $N$. Define the thermal entropy function by
\begin{equation}\label{eq:g-definition}
 g(N)\coloneqq(N+1)\log_2(N+1)-N\log_2N
 \quad(N>0),\qquad g(0)\coloneqq0.
\end{equation}
Its continuous extension at zero uses $\lim_{x\downarrow0}x\log_2x=0$. The entropy of the thermal state is
\begin{align}
H(\vartheta_N)
 &=\sum_{q=0}^\infty\frac{N^q}{(N+1)^{q+1}}
    \bigl[(q+1)\log_2(N+1)-q\log_2N\bigr]\label{eq:thermal-entropy}\\
&=(N+1)\log_2(N+1)-N\log_2N=g(N).\label{eq:line-9-15-2}
\end{align}
Set $\vartheta_0\coloneqq\proj0$; its entropy is the value $g(0)=0$ already defined in \eqref{eq:g-definition}.

With independent Gaussian inputs of variances $N_1,N_2$, the mixtures corresponding to \eqref{eq:optical-finite-mixtures}--\eqref{eq:line-9-8-3} are $\vartheta_{\eta N_1}$, $\vartheta_{(1-\eta)N_2}$, and $\vartheta_{\eta N_1+(1-\eta)N_2}$. Their entropies suggest the following established coherent-state region~\cite{YenShapiro2005}. The proof below justifies achievability by finite constellations, rather than applying a finite-alphabet theorem to continuous inputs without an approximation argument.

\Needspace{18\baselineskip}
\begin{theorem}[Two-sender coherent-state achievability]\label{thm:bosonic-two}
For the beamsplitter channel \eqref{eq:bosonic-two-channel}, let $0<\eta<1$ and $0\le N_1,N_2<\infty$. Under the uniform-message average photon-number constraints \eqref{eq:optical-energy}, every nonnegative rate pair satisfying
\begin{align}
R_1&\le g(\eta N_1),\label{eq:bosonic-two-region}\\
R_2&\le g((1-\eta)N_2),\label{eq:line-9-16-2}\\
R_1+R_2&\le g(\eta N_1+(1-\eta)N_2)\label{eq:line-9-16-3}
\end{align}
is achievable with coherent-state product codewords and a single collective decoding POVM on the received modes. Here $g$ is defined in \eqref{eq:g-definition}, and error is defined by \eqref{eq:code-error}. The codes can be selected from independent finite-constellation ensembles, followed by a limiting choice of the constellation and blocklength. No rate splitting or time sharing is required.
\end{theorem}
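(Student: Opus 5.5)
The plan is to reduce the theorem to \prref{prop:finite-optical}: approximate circular Gaussian inputs by finite constellations whose three output entropies converge to the thermal values. Then take a diagonal limit over constellation refinement and blocklength, as in the boundary argument of \sref{subsec:two-boundary}.

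\textbf{Reductions.} First reduce to strictly interior targets with positive energy slack. If $N_i=0$, Sender~$i$ uses vacuum and the region collapses to a one-sender constraint, handled by the same argument with one active sender. If $N_i>0$, fix target rates strictly inside the region with energies $N_i'<N_i$. This is possible because $g$ is continuous and strictly increasing, so shrinking rates by the factor $1-1/j$ and energies slightly keeps the inequalities strict. Boundary points then follow by the whole-block diagonal selection already used.

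\textbf{Constellations.} For each $N_i'$, construct a sequence of finite constellations $\mathcal A_i^{(r)}$ with distributions $p_i^{(r)}$ and $\ee|z_i|^2\le N_i'<N_i$. A natural choice is to truncate the Gaussian to $|z|\le r$ and quantize it onto a grid of mesh $1/r$, mapping each cell's mass to a grid point of no larger modulus (radially inward). This keeps the second moment at most $N_i'$.

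\textbf{Entropy convergence (main obstacle).} We need $H(\tau_s^{(r)})\to g(\cdot)$ for each of the three mixtures. Two estimates are involved.
\begin{itemize}
\item \emph{Trace-distance convergence.} By \eqref{eq:coherent-overlap}, moving an amplitude by at most the mesh changes each pure state by at most the mesh in trace distance. Truncation removes Gaussian mass that tends to zero. Hence each $\tau_s^{(r)}$ converges to the corresponding thermal state in trace norm, by convexity of the trace norm.
\item \emph{From trace norm to entropy.} Trace-norm convergence alone does not control entropy in infinite dimensions, so this is the delicate step. I would use the uniform mean photon-number bound: every $\tau_s^{(r)}$ has $\Tr[\hat n\tau]\le\eta N_1'+(1-\eta)N_2'$. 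Under a fixed energy bound, von Neumann entropy is continuous in trace norm (Winter's energy-constrained continuity bound). If we would rather not cite that, lower semicontinuity of entropy gives $\liminf H(\tau^{(r)})\ge g$, and only this direction is needed for achievability. Lower semicontinuity follows by restricting to a finite Fock truncation and using monotonicity of entropy under pinching onto the Fock basis plus finite-dimensional continuity.
\end{itemize}
Here the mixtures are the displaced averages in \eqref{eq:optical-finite-mixtures}--\eqref{eq:line-9-8-3}, and the Gaussian limits are $\vartheta_{\eta N_1'}$, $\vartheta_{(1-\eta)N_2'}$, and $\vartheta_{\eta N_1'+(1-\eta)N_2'}$ as noted before the theorem.

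\textbf{Conclusion.} By lower semicontinuity and the strict inequalities, for $r$ large the target rate pair satisfies \eqref{eq:finite-optical-region} strictly for $\mathcal A^{(r)}$. \prref{prop:finite-optical} then yields deterministic codes with vanishing error satisfying the energy constraint \eqref{eq:optical-energy}; the positive slack $N_i-N_i'$ is what enables its Chebyshev selection. Finally, interleave choices of $(j,r,n)$ in the diagonal construction to reach every point of the closed region. No time sharing or rate splitting enters, since every code is a single unsplit finite-constellation ensemble decoded by the compressed tilted-state PGM.
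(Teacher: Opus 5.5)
Your proposal is correct and follows the paper's proof in structure. The shared steps are: interior powers $P_i<N_i$ (your $N_i'$); independent finite constellations approximating circular Gaussians; trace-norm convergence of the three mixtures by a coupling and \eqref{eq:coherent-overlap}; then \prref{prop:finite-optical} at a fixed constellation with $n\to\infty$; and the whole-block diagonal selection for boundary points.

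The differences are minor. Your truncate-and-round-toward-the-origin quantizer replaces the conditional-mean quantizer of \lemref{lem:gaussian-approx}, which additionally yields centered inputs. You rightly note that achievability needs only $\liminf_r H(\tau_s^{(r)})\ge g(\cdot)$. The paper also proves the matching upper bound through the thermal maximum-entropy inequality, which this theorem does not need.

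One sub-argument is wrong, however. Dephasing in the Fock basis, or pinching onto a Fock cutoff and its complement, \emph{increases} entropy: $H(\tau)\le H(\mathcal P(\tau))$. So information about the Fock-diagonal entries of $\tau_s^{(r)}$ can only bound $H(\tau_s^{(r)})$ from above. A pure state spread over many Fock levels has zero entropy but large Fock-diagonal entropy.

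The elementary lower-semicontinuity argument must use eigenvalues instead. By the min--max principle, the decreasingly ordered eigenvalues satisfy $|\lambda_q(\rho)-\lambda_q(\sigma)|\le\norm{\rho-\sigma}\le\norm{\rho-\sigma}_1$. Since $h(x)=-x\log_2 x\ge0$ on $[0,1]$, one has $H(\tau_s^{(r)})\ge\sum_{q\le L}h(\lambda_q(\tau_s^{(r)}))$ for every fixed $L$. Take $r\to\infty$ first and then $L\to\infty$. This is exactly the lower-bound step in the proof of \lemref{lem:gaussian-approx}.

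Your primary route via Winter's energy-constrained continuity bound is valid, since any uniform photon-number bound suffices. Your stated bound for $\tau_{12}$ does drop the cross term $2\sqrt{\eta(1-\eta)}\,\mathrm{Re}(\ee z_1\,\overline{\ee z_2})$. That term vanishes for centered constellations, such as those produced by a symmetric round-toward-zero rule.
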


The only analytic issue is approximation: Gaussian amplitudes have unbounded support, and von Neumann entropy is not uniformly continuous on an unconstrained infinite-dimensional state space. The following elementary lemma supplies precisely the controlled approximation needed here. Its formulation already allows several independent amplitudes, so it will also apply to the $k$-sender channel.

\begin{lemma}[Finite constellations approaching Gaussian mixture entropies]\label{lem:gaussian-approx}
Fix a positive integer $j$. Let $Z_1,\ldots,Z_j$ be independent centered circular complex Gaussian variables with $\ee|Z_i|^2=P_i\ge0$, with $P_i=0$ interpreted as $Z_i=0$. There are independent finite-valued variables $Z_i^{(r)}$, coupled to $Z_i$ and indexed by positive integers $r$, such that
\begin{equation}\label{eq:quantized-moments}
 \ee Z_i^{(r)}=0,\qquad
 \ee|Z_i^{(r)}|^2\le P_i,\qquad
 \ee|Z_i^{(r)}-Z_i|^2\xrightarrow[r\to\infty]{}0.
\end{equation}
For fixed coefficients $u_i\in\cc$, define
\begin{equation}\label{eq:approx-mixtures}
 Y_r\coloneqq\sum_i u_iZ_i^{(r)},\qquad
 \sigma_r\coloneqq\ee\proj{Y_r},\qquad
 \nu\coloneqq\sum_i|u_i|^2P_i.
\end{equation}
Then $\sigma_r$ is finite rank for every $r$, and, as $r\to\infty$ with $j$, $P_i$, and $u_i$ fixed, $\Delta(\sigma_r,\vartheta_\nu)\to0$ and $H(\sigma_r)\to g(\nu)$. Here $\vartheta_\nu$ and $g$ are defined in \eqref{eq:thermal-mixture} and \eqref{eq:g-definition}. The index $r$ refines the input constellation; it is distinct from the coding blocklength $n$.
\end{lemma}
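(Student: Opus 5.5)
The plan is to build the finite constellations by conditional expectation onto finite partitions. Then the trace-distance claim follows from a coupling and the coherent-state estimate \eqref{eq:coherent-overlap}. The entropy claim follows from two one-sided bounds: lower semicontinuity from below, and a Gibbs-type maximum-entropy inequality from above.

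\emph{Construction.} For $P_i=0$ take $Z_i^{(r)}=0$. For $P_i>0$, partition $\cc$ into finitely many cells: the squares of side $1/r$ on the grid $r^{-1}\mathbb Z^2$ that lie inside the disk of radius $r$, together with one outer cell containing everything else. Define $Z_i^{(r)}\coloneqq\ee[Z_i\mid\text{cell of }Z_i]$. This variable is finite valued, and it is a function of $Z_i$ alone, so independence is inherited and the coupling is built in. The tower property gives $\ee Z_i^{(r)}=\ee Z_i=0$. Conditional Jensen gives $\ee|Z_i^{(r)}|^2\le\ee|Z_i|^2=P_i$. For the $L^2$ convergence, on the inner cells $|Z_i^{(r)}-Z_i|\le\sqrt2/r$. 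The outer-cell contribution is bounded by $\ee[|Z_i|^2\mathbf 1\{|Z_i|>r-\sqrt2/r\}]$ up to a constant factor, because conditional expectation is an $L^2$ contraction. This tends to $0$ by dominated convergence, which proves \eqref{eq:quantized-moments}.

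\emph{Trace distance.} Set $Y\coloneqq\sum_iu_iZ_i$. Independent circular Gaussians sum to a circular Gaussian of variance $\nu$, so $\vartheta_\nu=\ee\proj Y$ by \eqref{eq:thermal-mixture}; when $\nu=0$ this is the vacuum. The state $\sigma_r$ is a finite mixture of pure states and hence has finite rank. Convexity of the trace norm, applied to the coupled expectation, together with \eqref{eq:coherent-overlap} gives
\begin{equation}
\Delta(\sigma_r,\vartheta_\nu)\le\ee\,\Delta(\proj{Y_r},\proj Y)\le\ee|Y_r-Y|\le\sum_i|u_i|\bigl(\ee|Z_i^{(r)}-Z_i|^2\bigr)^{1/2}\to0.
\end{equation}
The case $\nu=0$ is trivial: then $u_iZ_i^{(r)}=0$ for all $i$, so $\sigma_r=\proj0$ and $H(\sigma_r)=0=g(0)$.

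\emph{Entropy.} Assume $\nu>0$. Lower semicontinuity of von Neumann entropy under trace-norm convergence gives $\liminf_rH(\sigma_r)\ge H(\vartheta_\nu)=g(\nu)$.

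For the upper bound, first compute the mean photon number. Since the $Z_i^{(r)}$ are independent and centered, the cross terms vanish and
\begin{equation}
\Tr[\hat n\sigma_r]=\ee|Y_r|^2=\sum_i|u_i|^2\ee|Z_i^{(r)}|^2\le\nu .
\end{equation}
Next use nonnegativity of $D(\sigma_r\Vert\vartheta_\nu)$. Here $\vartheta_\nu$ has full support, and $-\log_2\vartheta_\nu=\log_2(\nu+1)\,I+\log_2\frac{\nu+1}{\nu}\,\hat n$. This gives
\begin{equation}
H(\sigma_r)\le-\Tr[\sigma_r\log_2\vartheta_\nu]=\log_2(\nu+1)+\Tr[\hat n\sigma_r]\log_2\tfrac{\nu+1}{\nu}\le g(\nu).
\end{equation}
The last inequality holds because the coefficient $\log_2\frac{\nu+1}{\nu}$ is positive. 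Combining the two bounds yields $H(\sigma_r)\to g(\nu)$.

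The main obstacle is this entropy step, because entropy is discontinuous in infinite dimensions. I will rely on the uniform energy bound $\Tr[\hat n\sigma_r]\le\nu$, which is exactly why the moment condition $\ee|Z_i^{(r)}|^2\le P_i$ is built into the construction. The only technical points to check are the finiteness of $\Tr[\sigma_r\log_2\vartheta_\nu]$ and the justification of relative-entropy nonnegativity for these states; both are routine here, since $\sigma_r$ has finite rank and finite energy.
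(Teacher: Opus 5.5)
Your proposal is correct and follows essentially the same route as the paper: conditional-mean quantization, a coupling with the Gaussian plus the coherent-state estimate \eqref{eq:coherent-overlap} for the trace distance, lower semicontinuity for the $\liminf$ of the entropy, and nonnegativity of relative entropy to a thermal state for the $\limsup$. The differences are minor. The paper proves lower semicontinuity by hand, via convergence of ordered eigenvalues and truncated entropy sums, where you cite it. The paper also compares $\sigma_r$ with $\vartheta_{\nu_r}$ and then lets $\nu_r\to\nu$, whereas your direct comparison with $\vartheta_\nu$, using $\Tr[\hat n\sigma_r]\le\nu$, gives $H(\sigma_r)\le g(\nu)$ for every $r$ without needing $\nu_r\to\nu$.
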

\begin{proof}
For $P_i>0$, partition the square $[-r,r]^2\subset\cc$ into finitely many cells of side at most $1/r$, symmetrically about zero, and take the square's complement as one additional cell. Replace $Z_i$ in each cell by its conditional mean. In other words, $Z_i^{(r)}$ is the conditional expectation of $Z_i$ given the cell containing it. The resulting variable takes finitely many values. The exterior cell is symmetric about zero, so its conditional mean is zero. The law of total expectation and conditional Jensen's inequality give
\begin{align}
 \ee Z_i^{(r)}&=\ee Z_i=0,\label{eq:quantization-proof}\\
 \ee|Z_i^{(r)}|^2&\le\ee|Z_i|^2=P_i.\label{eq:line-9-19-2}
\end{align}
Inside the square, a conditional mean lies in the closure of its cell, whose squared diameter is at most $2/r^2$. Outside the square, the replacement is zero. Splitting the expectation over these two events therefore gives
\begin{align}
 \ee|Z_i-Z_i^{(r)}|^2
 &\le\frac2{r^2}\Pr\{Z_i\in[-r,r]^2\}
       +\ee\bigl[|Z_i|^2\mathbf1_{\{Z_i\notin[-r,r]^2\}}\bigr]\label{eq:quantization-cell-error}\\
 &\le\frac2{r^2}
       +\ee\bigl[|Z_i|^2\mathbf1_{\{Z_i\notin[-r,r]^2\}}\bigr]
       \xrightarrow[r\to\infty]{}0.\label{eq:line-9-19-3}
\end{align}
For the last limit, $2/r^2\to0$ and the tail integrand tends pointwise to zero while being bounded by the integrable random variable $|Z_i|^2$. Dominated convergence applies. This construction is performed separately for the independent inputs, so it preserves their independence. For $P_i=0$, use $Z_i^{(r)}=0$ for every $r$.

The conditional-mean property also implies $\ee[(Z_i-Z_i^{(r)})\overline{Z_i^{(r)}}]=0$. Expanding the squared error gives
\begin{equation}\label{eq:quantization-energy-identity}
 P_i-\ee|Z_i^{(r)}|^2
 =\ee|Z_i-Z_i^{(r)}|^2\xrightarrow[r\to\infty]{}0.
\end{equation}
Thus the input second moments converge as well. Put $Y\coloneqq\sum_i u_iZ_i$. The differences $Z_i^{(r)}-Z_i$ are independent and centered. Their cross terms vanish, so
\begin{equation}\label{eq:amplitude-L2}
 \ee|Y_r-Y|^2
 =\sum_i|u_i|^2\ee|Z_i^{(r)}-Z_i|^2
 \xrightarrow[r\to\infty]{}0.
\end{equation}
There are only $j$ summands, and $j$ and all $u_i$ remain fixed.

The variable $Y$ is circular Gaussian with variance $\nu$; when $\nu=0$, it is zero almost surely. Thus $\vartheta_\nu=\ee\proj Y$, on the same probability space as $Y_r$. The triangle inequality for the trace norm of an integral yields
\begin{align}
 \Delta(\sigma_r,\vartheta_\nu)
 &=\tfrac12\norm{\ee(\proj{Y_r}-\proj Y)}_1\label{eq:optical-trace-expectation}\\
 &\le\ee\,\Delta(\proj{Y_r},\proj Y).\label{eq:optical-trace-approx}
\end{align}
The integrand is trace-norm integrable because it has norm at most two. Now use \eqref{eq:coherent-overlap} pointwise: $\sqrt{1-e^{-|Y_r-Y|^2}}\le|Y_r-Y|$. Scalar Cauchy--Schwarz then gives
\begin{align}
 \ee\,\Delta(\proj{Y_r},\proj Y)
 &\le\ee|Y_r-Y|\label{eq:optical-distance-amplitude}\\
 &\le\sqrt{\ee|Y_r-Y|^2}\xrightarrow[r\to\infty]{}0.\label{eq:line-9-21-2}
\end{align}
This proves trace-distance convergence without changing the optical states by a Fock-space cutoff.

The mean photon number is also controlled. Using \eqref{eq:coherent-state}, centered inputs, and independence, we obtain
\begin{align}
 \nu_r\coloneqq\Tr[\hat n\sigma_r]
 &=\ee|Y_r|^2\label{eq:approx-energy}\\
 &=\sum_i|u_i|^2\ee|Z_i^{(r)}|^2\le\nu,
 \qquad \nu_r\xrightarrow[r\to\infty]{}\nu.\label{eq:approx-energy-limit}
\end{align}
The expectation of each cross term is zero, and \eqref{eq:quantization-energy-identity} proves the final limit. Each $\sigma_r$ is a finite mixture of pure states and hence has finite entropy. To obtain the entropy upper bound, the thermal-state formula gives, for $\nu_r>0$,
\begin{equation}\label{eq:thermal-logarithm}
 -\log_2\vartheta_{\nu_r}
 =\log_2(1+\nu_r)I+
   \log_2\!\left(\frac{1+\nu_r}{\nu_r}\right)\hat n.
\end{equation}
Its expectation in $\sigma_r$ is finite and equals $g(\nu_r)$ by \eqref{eq:approx-energy}--\eqref{eq:approx-energy-limit}. Nonnegativity of relative entropy therefore implies
\begin{equation}\label{eq:thermal-maximum}
 0\le D(\sigma_r\Vert\vartheta_{\nu_r})
 =-H(\sigma_r)+g(\nu_r),\qquad H(\sigma_r)\le g(\nu_r).
\end{equation}
If $\nu_r=0$, the state is supported on the zero-photon eigenspace and equals $\proj0$, so the same bound holds. Since $g$ is continuous, $\limsup_{r\to\infty}H(\sigma_r)\le g(\nu)$.

For the lower bound, list the eigenvalues $\lambda_q(\sigma_r)$ in decreasing order, padded by zeros, and define $h(x)\coloneqq-x\log_2x$ with $h(0)\coloneqq0$. Trace-norm convergence implies convergence of each ordered eigenvalue: the min--max principle bounds its change by $\norm{\sigma_r-\vartheta_\nu}$, which is at most the trace norm. For every fixed nonnegative integer $L$ and every $r$,
\begin{equation}\label{eq:entropy-partial-sum}
 H(\sigma_r)=\sum_{q=0}^{\infty}h(\lambda_q(\sigma_r))
 \ge\sum_{q=0}^{L}h(\lambda_q(\sigma_r)),
\end{equation}
because each summand is nonnegative. Taking a lower limit on the left, and using continuity of $h$ in the finite sum on the right, gives
\begin{align}
 \liminf_{r\to\infty}H(\sigma_r)
 &\ge\lim_{r\to\infty}\sum_{q=0}^{L}h(\lambda_q(\sigma_r))\label{eq:entropy-lower-approx}\\
 &=\sum_{q=0}^{L}h(\lambda_q(\vartheta_\nu)).\label{eq:entropy-lower-finite}
\end{align}
Only now let $L\to\infty$. The right side increases to $H(\vartheta_\nu)=g(\nu)$, proving $\liminf_{r\to\infty}H(\sigma_r)\ge g(\nu)$. Together with the upper bound, this establishes $H(\sigma_r)\to g(\nu)$ as $r\to\infty$. No interchange of a limit with an infinite entropy sum has been assumed.
\end{proof}

\begin{proof}[Proof of \thref{thm:bosonic-two}]
First take positive rates strictly inside \eqref{eq:bosonic-two-region}--\eqref{eq:line-9-16-3}. By continuity of $g$, choose $0<P_i<N_i$ so that the same three strict inequalities hold with $P_i$ in place of $N_i$. Apply \lemref{lem:gaussian-approx} independently to these two Gaussian variables. For the fixed coefficients $(\sqrt\eta,0)$, $(0,\sqrt{1-\eta})$, and $(\sqrt\eta,\sqrt{1-\eta})$, the lemma gives one sequence of finite input constellations for which, as $r\to\infty$ at these fixed powers and mixing coefficients,
\begin{equation}\label{eq:two-entropy-limits}
 H(\tau_1^{(r)})\to g(\eta P_1),\qquad
 H(\tau_2^{(r)})\to g((1-\eta)P_2),\qquad
 H(\tau_{12}^{(r)})\to g(\eta P_1+(1-\eta)P_2).
\end{equation}
Fix $r$ large enough that all three target rates satisfy \eqref{eq:finite-optical-region} strictly. Each constellation has mean energy at most $P_i<N_i$. Proposition~\ref{prop:finite-optical} now gives vanishing-error deterministic codes satisfying the actual energy constraints. The order is important: fix the interior powers and a finite constellation first, and only then let the blocklength grow. No approximation error is multiplied by $n$.

If one rate is zero, that sender can use vacuum, and only the positive-rate sender's individual inequality is needed. If $N_i=0$, the corresponding individual bound forces $R_i=0$. For boundary rates, scale all positive rates by $1-1/j$, choose interior powers and finite constellations for each $j$, and then choose increasing blocklength thresholds with error at most $1/j$. Between consecutive thresholds use the corresponding fixed constellation. This is the same whole-block diagonal construction as in \sref{subsec:two-boundary}; it preserves the energy constraints at every blocklength and attains the closed region.
\end{proof}

\subsection{\texorpdfstring{$k$ senders at a passive $k$-port splitter}{k senders at a passive k-port splitter}}\label{subsec:k-splitter}
Let a passive $k$-mode interferometer be specified by a unitary matrix $U$ acting on annihilation operators. The receiver obtains the first output mode,
\begin{equation}\label{eq:k-splitter}
 \hat b=\sum_{i=1}^k u_i\hat a_i,\qquad
 u_i\coloneqq U_{1i},\qquad \sum_i|u_i|^2=1.
\end{equation}
The remaining $k-1$ outputs are inaccessible. Conversely, every unit row $(u_i)_i$ can be completed to a unitary matrix. This is the meaning of a $k$-splitter here; it need not be balanced. It can be realized by two-port beamsplitters and phase shifts~\cite{Serafini2023}. Write $\eta_i\coloneqq |u_i|^2$. Known phases of $u_i$ can be absorbed into the input amplitudes, so one may use $u_i=\sqrt{\eta_i}$ when desired. All $k$ ports are sender inputs; there are no additional thermal input modes in this model.

Figure~\ref{fig:k-splitter} depicts a realization by successive two-mode beamsplitters. A product of coherent inputs remains a product of coherent states at all outputs, with amplitudes transformed by $U$. Denote the received amplitude by $\gamma(z)\coloneqq\sum_{i=1}^k u_i z_i$. The channel is therefore
\begin{equation}\label{eq:k-optical-channel}
 (z_1,\ldots,z_k)\longmapsto\rho_z=\proj{\gamma(z)}_B.
\end{equation}
Sender~$i$ obeys \eqref{eq:optical-energy}, now for every $i\in[k]$. The received block contains $n$ modes, not $kn$ accessible modes. The receiver makes one measurement on those $n$ modes and estimates the full message tuple.

\begin{figure}[!t]
\centering
\begin{adjustbox}{max width=\linewidth}
\begin{tikzpicture}[font=\small]
 % A cascade is sufficient to synthesize the one received row, with known phases.
 \draw[quantum,->] (-.1,0)--(2.15,0);
 \node[anchor=east] at (-.2,0) {$\ket{z_1}$};
 \foreach \x in {2.3,6.4}{
  \draw[fill=softblue,draw=inkblue!40] (\x-.48,-.48) rectangle (\x+.48,.48);
  \draw[inkblue,line width=1.8pt] (\x-.42,-.42)--(\x+.42,.42);
  \draw[quantum,->] (\x,-1.93)--(\x,-.15);
  \draw[quantum,->] (\x,.15)--(\x,1.67);
  \draw[line width=1pt] (\x-.18,1.72)--(\x+.18,1.72);
 }
 \draw[quantum,->] (2.45,0)--(3.80,0);
 \node at (4.39,0) {$\cdots$};
 \draw[quantum,->] (4.94,0)--(6.25,0);
 \node[anchor=north] at (2.3,-2.02) {$\ket{z_2}$};
 \node[anchor=north] at (6.4,-2.02) {$\ket{z_k}$};
 \node[align=center,font=\footnotesize] at (4.4,1.20) {other outputs\\not received};
 \node[align=center,font=\footnotesize] at (4.4,-1.08) {additional inputs\\and beamsplitters};
 \draw[quantum,->] (6.55,0)--(10.0,0);
 \node[anchor=south] at (8.28,.3) {$B:\ket{\gamma(z)}$};
 \node[operator,text width=2.3cm,minimum height=1.0cm] (dec) at (11.25,0)
 {One collective\\measurement on $B^n$};
 \draw[classical,->] (dec.east)--(13.7,0) node[right] {$\widehat m$};
 \node[align=center,anchor=north,font=\footnotesize,text width=12.4cm] at (6.2,-2.88)
 {A cascade of two-port beamsplitters and known phase shifts realizes the received mixing row.\\Phase shifters and the intermediate discarded ports are suppressed for clarity.};
\end{tikzpicture}
\end{adjustbox}
\caption{A physical realization of the received row of a passive $k$-port splitter. Sequential two-port mixing combines the inputs into one output with amplitude $\sum_i u_i z_i$; known phase shifts allow complex coefficients. The drawing depicts the first and last couplers of this realization for $k\ge3$, with the intermediate couplers suppressed. For $k=2$ there is one beamsplitter, and for $k=1$ no mixing is needed. The complete interferometer has $k-1$ other outputs, all inaccessible. For error set $S$, the corresponding independent Gaussian inputs contribute mean photon number $\sum_{i\in S}\eta_iN_i$ to the conditional mixture. This conditioning is part of the analysis, not side information supplied to the receiver.}
\label{fig:k-splitter}
\end{figure}

\FloatBarrier
\Needspace{10\baselineskip}
\begin{theorem}[Coherent-state simultaneous decoding at a $k$-splitter]\label{thm:bosonic-k}
Fix a positive integer $k$, coefficients $u_i\in\cc$ with $\sum_i|u_i|^2=1$, and finite input energy limits $N_i\ge0$. Consider the channel \eqref{eq:k-optical-channel}, with only the mode $B$ delivered to the receiver. Let $\eta_i\coloneqq|u_i|^2$ and let $g$ be given by \eqref{eq:g-definition}. Under the uniform-message average photon-number constraints \eqref{eq:optical-energy} for every sender, every nonnegative rate tuple satisfying
\begin{equation}\label{eq:k-optical-region}
 \sum_{i\in S}R_i\le g\!\left(\sum_{i\in S}\eta_iN_i\right)
 \qquad(\varnothing\ne S\subseteq[k])
\end{equation}
is achievable by coherent-state product codewords and one collective decoding POVM on $B^n$. Error is the uniform-message probability in \eqref{eq:k-code-error}. The codes can be selected from independent finite-constellation ensembles; the constellation and blocklength are then taken to their limits as in \thref{thm:bosonic-two}. Senders with $\eta_iN_i=0$ have rate zero and may use vacuum. No rate splitting or time sharing is used.
\end{theorem}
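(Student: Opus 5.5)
The plan is to repeat the two-sender argument of \thref{thm:bosonic-two} with \thref{thm:k-capacity} in place of \thref{thm:two-capacity}, so the work lies in three steps: a finite-constellation proposition for $k$ senders, an entropy identification for every error set $S$, and the Gaussian limit via \lemref{lem:gaussian-approx}, which was already stated for $j$ independent amplitudes.

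\emph{Finite constellations.} Restrict Sender~$i$ to a finite constellation $\mathcal A_i$ with distribution $p_i$. All received vectors $\ket{\gamma(z)}$ lie in the finite-dimensional span of at most $\prod_i|\mathcal A_i|$ coherent states, so \thref{thm:k-capacity} applies exactly on that span (and its $n$-fold tensor power), with the orthogonal complement assigned to failure. To identify the rates, note that $\rho_z$ is pure, so $I(X_S;B|X_{S^c})_\omega=\sum_{x_{S^c}}p_{S^c}(x_{S^c})H(\overline\rho_{x_{S^c}})$. Moreover $\overline\rho_{x_{S^c}}$ is the displacement by $\sum_{i\notin S}u_iz_i$ of
\[
 \tau_S\coloneqq\sum_{z_S}p_S(z_S)\proj{\textstyle\sum_{i\in S}u_iz_i}.
\]
Hence $I(X_S;B|X_{S^c})_\omega=H(\tau_S)$, and the finite-constellation region is $R(S)\le H(\tau_S)$ for all nonempty $S$. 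The energy constraints are handled exactly as in \prref{prop:finite-optical}: when $\ee_{p_i}|z_i|^2<N_i$, Chebyshev's inequality bounds each sender's violation probability; Markov's inequality bounds the decoding error; a union bound over the $k$ senders, with $k$ fixed, then yields a deterministic codebook satisfying all constraints.

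\emph{Gaussian limit.} Take positive rates strictly inside \eqref{eq:k-optical-region} and let $J=\{i:R_i>0\}$. Each $i\in J$ satisfies $R_i\le g(\eta_iN_i)$, which forces $\eta_iN_i>0$. Inactive senders use vacuum, so only $S\subseteq J$ matters. Since $g$ is continuous and there are finitely many subsets, choose $0<P_i<N_i$ for $i\in J$ so that every strict inequality survives with $P_i$ in place of $N_i$. Apply \lemref{lem:gaussian-approx} once to the independent Gaussians $Z_i$ of variance $P_i$, obtaining one sequence of quantized variables $Z_i^{(r)}$. For each $S$, use the coefficient vector $u_i\mathbf 1_{i\in S}$. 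The lemma then gives $H(\tau_S^{(r)})\to g(\sum_{i\in S}\eta_iP_i)$ with $\nu=\sum_{i\in S}|u_i|^2P_i$. Fix $r$ so that all $2^{|J|}-1$ inequalities hold strictly, and only then let $n\to\infty$ using the finite-constellation step.

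The main subtlety is this ordering: a single common constellation must serve every subset $S$ simultaneously. This works because the quantization is done per sender and the lemma applies to every coefficient vector built from the same variables, with finitely many limits. Boundary points and zero rates follow by scaling the positive rates by $1-1/j$ and using the diagonal whole-block construction of \sref{subsec:two-boundary}, choosing a new constellation for each $j$.
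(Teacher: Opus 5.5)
Your proposal is correct and follows essentially the same route as the paper's proof. Both arguments identify $I(X_S;B|X_{S^c})_\omega=H(\tau_S)$ through purity and displacement, restrict attention to subsets of the positive-rate senders and let the others use vacuum, and choose interior powers $0<P_i<N_i$. Both then use a single per-sender quantization from \lemref{lem:gaussian-approx} for all coefficient vectors $u_i\mathbf 1_{\{i\in S\}}$, fix that constellation before letting $n\to\infty$ with Chebyshev and Markov selection of a codebook, and handle boundary points by scaling the positive rates by $1-1/j$.
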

\begin{proof}
For finite independent constellations, the exact received signal span has dimension at most $\prod_i|\mathcal A_i|$. Each $p_i$ is a probability distribution on $\mathcal A_i$, and $p_S(z_S)\coloneqq\prod_{i\in S}p_i(z_i)$. Define the partial amplitude $\gamma_S(z_S)\coloneqq\sum_{i\in S}u_i z_i$, with $\gamma_\varnothing=0$. For a nonempty $S\subseteq[k]$, the mixture contributed by these senders is
\begin{equation}\label{eq:optical-subset-mixture}
 \tau_S\coloneqq\sum_{z_S}p_S(z_S)\proj{\gamma_S(z_S)},
 \qquad \gamma(z)=\gamma_S(z_S)+\gamma_{S^c}(z_{S^c}).
\end{equation}
Conditioning on $z_{S^c}$ leaves
\begin{equation}\label{eq:k-optical-conditional}
 \overline\rho_{z_{S^c}}
 =D(\gamma_{S^c}(z_{S^c}))\tau_S D(\gamma_{S^c}(z_{S^c}))^\dagger.
\end{equation}
Every fully conditioned state is pure. The entropy formula for a classical--quantum state, followed by unitary invariance, gives
\begin{align}
I(X_S;B|X_{S^c})_\omega
 &=\sum_{z_{S^c}}p_{S^c}(z_{S^c})H(\overline\rho_{z_{S^c}})
      -\sum_zp(z)H(\rho_z)\label{eq:k-optical-information}\\
&=H(\tau_S).\label{eq:line-9-31-2}
\end{align}
Hence \thref{thm:k-capacity} applies to all the subset entropies simultaneously. The state-based decoder is the one proved earlier, restricted to this exact finite signal span and extended by a failure effect on its complement.

First assume that every inequality indexed by a nonempty subset of positive-rate senders is strict. Choose $0<P_i<N_i$ for those senders so that these finitely many inequalities remain strict. All zero-rate senders may use vacuum. Independent centered Gaussian inputs give
\begin{equation}\label{eq:k-optical-thermal}
 \sum_{i\in S}u_i Z_i\ \text{circular Gaussian with variance }\ 
 \nu_S\coloneqq\sum_{i\in S}\eta_iP_i,
 \qquad \tau_S=\vartheta_{\nu_S}.
\end{equation}
In particular, $H(\tau_S)=g(\nu_S)$. Apply \lemref{lem:gaussian-approx} to the coefficient choices $u_i\mathbf1_{\{i\in S\}}$ for all nonempty subsets of the positive-rate senders. There are finitely many such subsets because $k$ is fixed. Thus one sufficiently fine finite constellation for each sender makes \emph{all} of their entropies close enough to $g(\nu_S)$ to preserve the strict rate inequalities. No continuous classical-register entropy is used in this step.

Now hold the chosen finite constellations, their probability distributions, $k$, and all channel parameters fixed, and let the coding blocklength $n\to\infty$. Each selected input distribution has mean cost $\mu_i\le P_i<N_i$ and finite variance $v_i\coloneqq\operatorname{Var}(|z_i|^2)$. The union bound and the same Chebyshev calculation as in \eqref{eq:energy-chebyshev} show that the probability of any positive-rate sender violating its energy constraint is at most
\begin{equation}\label{eq:k-energy-selection}
 \sum_{i:R_i>0}\frac{v_i}{nM_i(N_i-\mu_i)^2}
 \xrightarrow[n\to\infty]{}0.
\end{equation}
Combining this estimate with the vanishing expected decoding error, as in \eqref{eq:energy-chebyshev}, selects deterministic codebooks meeting all energy constraints. If $R_i=0$, a single vacuum codeword has zero cost. Constraints involving these senders are no tighter than the corresponding constraints on the positive-rate senders alone, because $g$ is nondecreasing.

Finally, scale each positive target rate by $1-1/j$. Every nonempty subset of positive-rate senders then has a strict rate inequality at the limiting powers. Choose interior powers, finite constellations, and increasing blocklength thresholds for each $j$, exactly as in the proof of \thref{thm:bosonic-two}. This achieves the closed region \eqref{eq:k-optical-region}. If there are no positive rates, the unique message tuple is decoded without error.
\end{proof}

For a balanced splitter, $\eta_i=1/k$. When all senders have the same energy limit $N$, the symmetric rate $R_i=g(N)/k$ satisfies every inequality: concavity and $g(0)=0$ give
\begin{equation}\label{eq:balanced-optical}
 g(|S|N/k)\ge\frac{|S|}{k}g(N)=\sum_{i\in S}R_i.
\end{equation}
For the centered independent Gaussian ensemble, the received mean photon number is $N$, not $kN$, because only one output is received. This illustrates why both the mixing coefficients and the receiver's accessible modes must be specified.

The results in this section are achievability statements for coherent-state encoding. They do not claim a new optical capacity formula, optimality over arbitrary nonclassical inputs, an efficient optical implementation of the decoder, or a result under a stronger per-codeword photon constraint. The phase-space amplitudes of the optical signals and the weak auxiliary labels play different roles: the former are physically transmitted, while the latter are removed from the completed measurement. The application shows that the same geometric proof applies to familiar nonorthogonal pure-state signals as well as to finite-dimensional mixed-state channels.

\section{Relation to earlier approaches and scope}\label{sec:relation}
The applications illustrate why it is useful to separate the state-based decoder from its binary analysis tests. We conclude the technical discussion by identifying the sources of these ingredients and the scope of the particular formulation proved here. 

\subsection{A special form of tilted-subspace geometry}
The construction has deliberately been proved without invoking a general joint typicality lemma. Nevertheless, its geometry is closely related to Sen's tilting construction, including the upper-triangular matrix tilting developed in~\cite[Section~3]{Sen2021Joint}. We make this relationship explicit rather than leaving it as an informal analogy.

Fix a candidate $\ell$ and an error set $S$, and write $U\coloneqq S^c$. For $T\subseteq[k]$, define the auxiliary basis vector $\ket{\ell_T}$ to have coordinate $\ket{\ell_i}$ when $i\in T$ and $\ket0$ otherwise. Expanding the product defining $\ket{w_{\ell,S}}$ gives
\begin{equation}\label{eq:tilting-expansion}
 \ket{w_{\ell,S}}
 =\sum_{T\subseteq U}
       (1-\delta)^{(|U|-|T|)/2}\delta^{|T|/2}\ket{\ell_T}.
\end{equation}
Thus the isometry $\ket h\mapsto\ket h\otimes\ket{w_{\ell,S}}$ embeds a copy of the output space into a reference copy and several mutually orthogonal auxiliary copies. The nonreference squared coefficients are
\begin{equation}
 a_{T,U}\coloneqq
 \begin{cases}
  \delta^{|T|}(1-\delta)^{|U|-|T|},&\varnothing\ne T\subseteq U,\\
  0,&\text{otherwise}.
 \end{cases}
\end{equation}
Order the nonempty subsets by a linear extension of inclusion. The matrix $(a_{T,U})_{T,U}$ is upper triangular and obeys
\begin{align}
a_{T,U}&\le a_{T,T},\label{eq:line-10-3-1}\\
\sum_{\varnothing\ne T\subseteq U}a_{T,U}
 &=\sum_{j=1}^{|U|}\binom{|U|}{j}\delta^j(1-\delta)^{|U|-j}
 =1-(1-\delta)^{|U|}\le1.\label{eq:line-10-3-2}
\end{align}
The remaining reference amplitude is $(1-\delta)^{|U|/2}$. These are the triangularity, entrywise diagonal bounds, and column-sum conditions in the matrix-tilting construction. When $U=\varnothing$, the vector is the untilted reference vector. Our Gram-matrix argument is an elementary estimate adapted to this particular family of embeddings.

Sen's multiple-access proof already associates auxiliary labels with message-indexed codewords, uses augmented states, and transfers the resulting decoder to the original channel~\cite[Section~4]{Sen2021Joint}. Neither tilting nor that final transfer should be regarded as a new ingredient here. The specialization in the present proof is the deterministic assignment of a distinct auxiliary basis vector to every message index, followed by a pairwise error calculation before averaging over the codebook. Equation~\eqref{eq:k-exact} is therefore exact, including when the sampled channel input sequences coincide. No estimate for an average over random auxiliary labels is needed.

The candidate detector is the full projector onto the complement of the labeled rejection space inside the local space associated with $\ell$. This choice allows the correct-message estimate to follow directly from \lemref{lem:gram}. Together with the exact pairwise identity, it gives a short packing argument for independent multiple-access codebooks. We view this as a direct specialization of the geometric method, not as a replacement for the more general results in~\cite{Sen2021Joint,Sen2021Network}.

\subsection{The role of separately chosen binary tests}
The connection with position-based coding, introduced by Anshu, Jain, and Warsi~\cite{AnshuJainWarsi2019} and developed for multiple-access channels by Qi \textit{et al}.~\cite{QiWangWilde2018}, lies in the separation of binary testing from the construction of a full decoder. Sen's displayed MAC decoder normalizes acceptance effects obtained from tilted subspaces~\cite[Section~4]{Sen2021Joint}. Here the final step uses the PGM of the tilted signal density operators themselves, analyzed using Cheng's trace inequality~\cite{Cheng2023}; the geometric projectors serve only as analysis tests. Cheng's original MAC bound instead normalizes the original channel output states. This identifies the particular synthesis studied here without asserting historical priority for every specialization of these methods. Cheng's trace bound does not require the binary analysis tests for different error events to commute. What is needed is a candidate operator with both a high correct-message acceptance probability and an appropriate bound for every error event.

Here the individual tests are chosen for the ordinary binary alternatives $\sigma^{(S)}$. Each has its own Stein exponent $D(\omega\Vert\sigma^{(S)})$. The auxiliary labels allow these tests to be used together without reducing any of their false-acceptance exponents. The only loss is in the correct-message term and in the final perturbation term, both of which vanish by taking the limits in the order explained after~\eqref{eq:k-asymptotic}.

This argument does not construct a single test on the original input--output state that simultaneously has all the binary testing exponents. It constructs one decoding POVM for a codebook. The codebook's Cartesian product structure and the dependence of the auxiliary label on the candidate are essential to the proof. Distinguishing these two tasks explains why separate binary Stein lemmas can suffice here without a general assertion about composite quantum hypothesis testing.

\subsection{Subsequent uses and related state-discrimination constructions}
Sen's joint typicality framework has been used for multiparty packing and relay communication~\cite{DingEtAl2020}, computation of sums over classical--quantum multiple-access channels~\cite{SohailEtAl2022}, simultaneous distributed compression~\cite[Appendix~A]{SalekWinter2022}, and structured-code constructions for three-user broadcast and interference channels~\cite{GouiaaPadakandlaBroadcast2025,GouiaaPadakandla2025}. Sen's later work applies tilting, flattening, and smoothing to multipartite covering and decoupling~\cite{Sen2025Covering}. These applications use more general structures than the independent-codebook proof treated here. In particular, the present packing theorem should not be read as a replacement for all the intersection, union, and smoothing conclusions of the joint typicality lemma.

Geometrically lifted signal ensembles and pretty-good measurements also occur outside network coding. Chitambar, Duan, and Hsieh analyze the PGM of a lifted-trine ensemble in their study of local state discrimination~\cite[Section~IV-B3]{ChitambarDuanHsieh2014}. This is a related state-discrimination precedent, not an application of Sen's later construction. Accordingly, neither the use of auxiliary directions nor applying a PGM to an enlarged signal ensemble is asserted to be a new general principle here. The present manuscript develops a particular MAC realization, its deterministic error bound, and reductions of three communication tasks to that bound.

The quantitative refinement in \sref{subsubsec:sharp-packing} is a more precise analysis of the same labeled-state PGM, not a replacement for Sen's geometric principle or Cheng's measurement inequality. The new rescaling exploits the correct-label overlaps, while the transfer calculation is a two-dimensional binary-testing argument. Appendix~\ref{app:constants} proves sharpness only for the fixed-label geometric coefficient and for the information-limited transfer problem. We do not claim that the full decoding bound is optimal or uniformly better than all prior coding bounds.

\subsection{Quantitative dependence on the number of senders}\label{subsec:sen-comparison}
The explicit error constants provide a further reason to specialize the general joint typicality framework to the present packing problem. A comparison must nevertheless fix both the number of classical registers and the meaning of the auxiliary parameters. We first compare displayed estimates, and then state the limits of this comparison.

Sen's general one-quantum-output joint typicality statement~\cite[Section~3, Fact~4]{Sen2021Network} has $c$ classical registers. For the $k$-sender classical-input channel we set $c=k$. Write $t$ for its auxiliary parameter, reserving $\delta$ for the labels in this paper. For $0<\epsilon<1$ and $0<t<1$, when the separate binary-test rejection tolerances are at most $\epsilon$, the displayed correct-rejection estimate and trace-norm perturbation estimate are, respectively,
\begin{equation}\label{eq:sen-displayed-estimates}
 r_{\mathrm{Sen}}\le A_k\epsilon/t^2+B_kt,
 \qquad\norm{\omega'-\omega_{\mathrm{emb}}}_1\le B_kt,
\end{equation}
where $\omega_{\mathrm{emb}}$ is the stated fixed-auxiliary embedding and
\begin{equation}\label{eq:sen-constants}
 A_k\coloneqq3^k2^{\,2^{k+5}},\qquad B_k\coloneqq2^{(k+3)/2}.
\end{equation}
In particular, $A_k$ is doubly exponential in $k$. This is a bound in a general joint typicality lemma, not a lower bound on the error of Sen's decoder.

For comparison on the same channel and input distribution, let
\begin{equation}\label{eq:comparison-beta-sum}
 b_\epsilon\coloneqq\sum_{\varnothing\ne S\subseteq[k]}
       N_S2^{-D_H^\epsilon(\omega\Vert\sigma^{(S)})},
 \qquad N_S\coloneqq\prod_{i\in S}(M_i-1).
\end{equation}
For the same product input distribution, message sizes, and testing tolerance, a direct use of \eqref{eq:sen-displayed-estimates} in the independent-codebook packing reduction, with the standard Hayashi--Nagaoka error analysis~\cite{HayashiNagaoka2003}, gives
\begin{equation}\label{eq:sen-direct-packing-comparison}
 \ee\Pe^{\mathrm{Sen}}
 \le 2A_k\epsilon/t^2+\tfrac52 B_kt+4b_\epsilon.
\end{equation}
Here twice the correct-rejection estimate contributes $2B_kt$, and transferring a completed measurement costs at most half the trace norm, namely $B_kt/2$. For an error set $S$, choose the product alternative in that lemma with the input coordinates in $S$ separated from the other coordinates and the output. Its alternative-acceptance guarantee then gives $b_\epsilon$ after codebook averaging. We quote this direct consequence only to compare explicit guarantees; it is not asserted to be the best consequence of Sen's method. The present one-shot corollary instead gives
\begin{equation}\label{eq:our-direct-packing-comparison}
 \ee\Pe\le d_k(\delta)+(2^k-1)C_k(\delta)\epsilon+b_\epsilon.
\end{equation}
At fixed $\delta$, the coefficient $(2^k-1)C_k(\delta)$ grows exponentially in $k$, rather than doubly exponentially. The refined coefficient in \sref{subsubsec:sharp-packing} further reduces the correct-rejection term.

Holding $\delta$ fixed is not a fair comparison at arbitrarily large $k$, however: $d_k(\delta)$ then tends to one. To keep the additive terms fixed, choose an arbitrary $0<\zeta<1$. Set $\delta=\zeta^2/(4k)$ in our estimate and $t=\zeta/(5B_k)$ in \eqref{eq:sen-direct-packing-comparison}. The elementary bounds proved after \thref{thm:k-packing} give
\begin{align}
 \ee\Pe&\le\frac\zeta2+
       (2^k-1)\left(\frac{8k}{\zeta^2}\right)^k\epsilon+b_\epsilon,
       \label{eq:matched-budget-ours}\\
 \ee\Pe^{\mathrm{Sen}}&\le\frac\zeta2+
       \frac{50A_kB_k^2}{\zeta^2}\epsilon+4b_\epsilon.
       \label{eq:matched-budget-sen}
\end{align}
For fixed $\zeta$, the coefficient of $\epsilon$ is therefore $2^{O(k\log k)}$ in the first estimate and $2^{\,2^{k+5}+O(k)}$ in the second. Thus the present reduction avoids the doubly exponential loss of the displayed general joint typicality estimate even after controlling label removal. This comparison does not establish a coding theorem in which the number of senders grows with blocklength; the coding theorems in this paper keep $k$ fixed.

The dependence on $\epsilon$ is a separate issue. For fixed $k$, optimizing the displayed Sen estimate over $t$ gives a term of order $\epsilon^{1/3}$. Our refined analysis gives a guarantee of order $\epsilon^{1/k}$ for $k\ge2$, up to $k$-dependent constants and the false-acceptance term. For example, this follows from $\kappa_k(\delta)=O_k(\delta^{-(k-1)})$ and $\mathcal T_d(x)\le2x+2d^2$, by choosing $\delta$ proportional to $\epsilon^{1/k}$. The two statements therefore do not imply uniform dominance over all tolerances and all $k$. Specialized estimates, including Sen's separate two-sender bound~\cite[Section~4]{Sen2021Joint}, also need their own comparisons.

Later uses retain large general joint-typicality constants: for example, Salek and Winter's simultaneous-compression argument quotes an error parameter $2^{k+2^{k+5}}\sqrt\epsilon$~\cite[Appendix~A, Lemma~20]{SalekWinter2022}. Their random-binning task is covered by \sref{sec:compression}, so it admits the present one-shot guarantees without that factor. More broadly, the reductions proved here---independent-codebook communication, position-based entanglement assistance, distributed hashing, and finite optical constellations---all inherit the present constants. An arbitrary later use of Sen's lemma does not automatically do so: union statements, nonunique decoding, other codebook dependencies, covering, and decoupling can require additional estimates.

Sen's later covering and decoupling papers~\cite{Sen2025Covering,Sen2025Telescoping} develop different multipartite tasks; the soft-covering paper still invokes the earlier simultaneous decoder in its interference-channel application. We have not identified in those works a replacement for the $k$-dependent classical-input packing estimate compared above. This is a comparison with the stated results and reductions, not a claim that Sen's underlying geometry must incur a doubly exponential loss.

\subsection{Limitations of the statement}
For communication, the error criterion is average error under independent uniform messages; the weighted packing corollary and compression application allow general source priors. We do not establish maximal-error achievability. The packing theorem is proved for finite alphabets and finite-dimensional outputs, with a fixed number of senders. The bosonic application in \sref{sec:bosonic} uses exact finite-dimensional signal spans and an explicitly controlled constellation limit. We do not assert a polynomial-time implementation of the decoding POVM.

The auxiliary dimension $\prod_i(M_i+1)$ can grow exponentially with blocklength. This does not represent additional communication, entanglement, or a supplied message-dependent state. After compression, the POVM acts on the receiver's actual systems: $B^n$ for unassisted and optical communication, the reference systems together with $B^n$ for entanglement-assisted communication, or $B^n$ conditioned on the transmitted hashes for compression. No fictitious message-dependent label remains. The reason the auxiliary dimension causes no difficulty in the error bound is that a fixed candidate message tuple uses only a $2^k$-dimensional auxiliary space, and its Gram matrix depends only on $k$ and $\delta$.

Finally, the finite-dimensional geometric proof does not establish all the intersection and union statements of the quantum joint typicality lemma or their applications to more general network codebooks. Those questions involve additional structure. For the independent-codebook MAC and the two additional reductions proved here, the message-index identity gives the needed pairwise estimate without an additional average over random auxiliary labels.

\FloatBarrier

\Needspace{13\baselineskip}
\section{Conclusion}\label{sec:conclusion}

\subsection{Summary of contributions}
The proof has two complementary components. Sen's tilted-subspace geometry supplies binary tests with controlled correct-message rejection and exact competing-message estimates~\cite{Sen2021Joint,Sen2021Network}. Cheng's trace inequality~\cite{Cheng2023}, related to Renes's binary near-optimality bound~\cite{Renes2017}, then certifies the pretty-good measurement of the tilted states. Removing the labels from the completed measurement incurs one perturbation cost per transmitted message, rather than one per competitor.

The resulting deterministic packing bound applies before a channel or a random codebook is specified. Ordinary independent codebooks, entanglement-assisted position-based coding, and classical hashing produce its competing-state terms in three different ways. Binary Stein tests yield the corresponding conditional-mutual-information or conditional-entropy constraints. Finite coherent-state constellations and an energy-controlled Gaussian limit extend the communication statement to a beamsplitter and a passive multiport splitter. These reductions make the same geometric idea usable without repeating its proof in each setting.

The explicit packing constants avoid the doubly exponential factor in the displayed general joint typicality estimate, with the parameter qualifications explained in \sref{subsec:sen-comparison}. This benefit carries through the reductions actually proved here, not through every possible network application without additional work. The basic theorem is supplemented by sharper finite-error bounds for the same decoder. Rescaling the error-label vectors gives the optimal universal coefficient for the specified correct-rejection inequality, and the sharp label-transfer function uses the known small tagged error. These component optimality statements do not imply global optimality of the complete coding bound. The contribution is a detailed development of this particular tilted-state PGM construction and its one-shot consequences, with the established geometry and coding regions attributed to their original sources.

\subsection{Future directions}
The separation between the decoder and its analysis tests provides several concrete directions for further work. The questions below concern quantitative refinements and extensions of the construction, rather than a new claim about the already established first-order rate regions.

\paragraph{Achievable error exponents.}
One can choose binary tests whose rejection and false-acceptance probabilities decay exponentially with blocklength, and let the label strength depend on blocklength. Optimizing these choices could give useful bounds on the exponential decay of decoding error at rates strictly inside the multiple-access region. A careful comparison should distinguish improvements to the individual testing exponents from losses incurred when combining the tests. The error-exponent analysis of position-based coding in~\cite{QiWangWilde2018} provides a natural point of comparison.

\paragraph{Second-order coding rates.}
At a fixed nonzero target error, the next question is how closely finite-blocklength rates can approach a boundary point of the region. The one-shot bounds permit separately chosen testing tolerances, but sharp second-order rates may require keeping track of the relations among several active error constraints, especially at corners. Determining achievable offsets of order $\sqrt n$ in the logarithms of message sizes, and comparing them with suitable converses, would clarify the quantitative cost of this simultaneous-decoding analysis. The classical multiple-access dispersion analysis of Tan and Kosut~\cite{TanKosut2014} provides a direct network-coding precedent, while \cite{QiWangWilde2018} develops the connection between binary testing and quantum second-order rates.

\paragraph{Implementable receivers and resource use.}
The existence of the compressed PGM does not establish an efficient circuit or an optical implementation. For structured signal ensembles, it would be useful to identify approximations that retain a controlled error bound while reducing measurement complexity. Coherent-state receivers such as those considered in~\cite{WildeGuha2012} provide a concrete setting for this investigation. The original position-based coding work already reduces entanglement consumption for point-to-point communication~\cite{AnshuJainWarsi2019}. Extending such resource savings to the present multiple-access construction while retaining its simultaneous-decoding estimates is a separate question.

\paragraph{Correlated codebooks and network protocols.}
The deterministic geometry applies to arbitrary message-indexed states, whereas the simple evaluation of false-acceptance terms uses the structure of each protocol. Superposition coding, structured codebooks, and several receivers introduce dependencies that require new averaging identities or different binary alternatives. It would be valuable to determine which of these settings admit comparably direct reductions, and to relate them to the broader network constructions of~\cite{Sen2021Network,DingEtAl2020,GouiaaPadakandla2025}.

\Needspace{10\baselineskip}
\section*{Acknowledgments and AI assistance}
ChatGPT Pro 6 (Astra) was used extensively and interactively for proof formulation and exploration, preparation and revision of the exposition, preparation of LaTeX, TikZ, and numerical-checking code, literature searches, and bibliographic checks. The research problem and successive revisions were directed by the author, who supplied detailed comments on the mathematical arguments and their presentation.

AI-assisted mathematical and editorial review passes were followed by further revisions. These were internal self-reviews, not independent external peer review or formal verification. The author is responsible for the final manuscript, its mathematical claims, and its attributions.

\begingroup
\small
\setlength{\bibsep}{5pt}
\bibliographystyle{unsrtnat}
\bibliography{references}
\endgroup
\clearpage
\appendix
\section{Quantitative refinements of the packing bound}\label{app:quantitative}
The main proof establishes simultaneous decoding using estimates that are transparent to recognize and combine. This appendix extracts more information from the same construction. We first adapt the Gram estimate to the correct label, then derive the sharp transfer from tagged error to actual error, and finally combine these results. The last subsection gives explicit constants and identifies precisely which component estimates are optimal. Throughout, the basic decoder remains the pretty-good measurement of the tilted signal states, compressed against the reference label.

\subsection{A Gram estimate adapted to the correct label}\label{app:rescaled-gram}
Fix a message tuple $m$ and the label vectors in \eqref{eq:k-labels} and \eqref{eq:k-w}. The correct-rejection calculation needs only the expectation of $\Pi_{\cW_m}$ on $\rho_m\otimes\proj{t_m}$. It need not bound every label direction equally well. This observation suggests using the overlap with $\ket{t_m}$ to scale the vectors spanning $\cW_m$.

For each nonempty $S\subseteq[k]$, define
\begin{equation}\label{eq:rescaled-label}
 \ket{z_{m,S}}\coloneqq(1-\delta)^{-|S|/2}\ket{w_{m,S}}.
\end{equation}
The scalar is finite and nonzero because $0<\delta<1$. Consequently,
\begin{equation}\label{eq:rescaled-span-overlap}
 \spanop\{\ket{z_{m,S}}\}=\spanop\{\ket{w_{m,S}}\},
 \qquad \langle z_{m,S}|t_m\rangle=1.
\end{equation}
The vectors $\ket{z_{m,S}}$ need not be normalized, but \lemref{lem:gram} allows this. Their Gram entries are
\begin{align}
\langle z_{m,S}|z_{m,T}\rangle
 &=(1-\delta)^{-(|S|+|T|)/2}
       \langle w_{m,S}|w_{m,T}\rangle\label{eq:rescaled-gram-calculation}\\
&=(1-\delta)^{-(|S|+|T|)/2+|S\triangle T|/2}
  =(1-\delta)^{-|S\cap T|}.\label{eq:line-A-3-2}
\end{align}
We used $|S\triangle T|=|S|+|T|-2|S\cap T|$. This proves the interpretation of $\mathsf L_k(\delta)$ in \eqref{eq:sharp-gram-definition}. The original error vectors are linearly independent, and nonzero rescaling preserves linear independence; hence this Gram matrix is positive definite.

Applying \lemref{lem:gram} with these rescaled vectors gives
\begin{equation}\label{eq:rescaled-projector-bound}
 \Pi_{\cW_m}\le\kappa_k(\delta)
       \sum_{S\ne\varnothing}Q_{S,m}\otimes\proj{z_{m,S}}.
\end{equation}
Taking its expectation in the correct tagged state and using \eqref{eq:rescaled-span-overlap} yields
\begin{align}
\Tr[\Pi_{\cW_m}(\rho_m\otimes\proj{t_m})]
 &\le\kappa_k(\delta)\sum_{S\ne\varnothing}
       \Tr[Q_{S,m}\rho_m]|
          \langle z_{m,S}|t_m\rangle|^2\label{eq:sharp-correct-rejection}\\
&=\kappa_k(\delta)\sum_{S\ne\varnothing}\Tr[Q_{S,m}\rho_m].\label{eq:line-A-5-2}
\end{align}
This is the only change to the correct-message geometry. The subspace $\cW_m$, its projector, and every decoder effect are unchanged.

For comparison with the basic coefficient, let $G_{\mathrm{err}}$ be the Gram matrix of the normalized error vectors, and let $D_\delta$ denote the diagonal matrix with entries $(1-\delta)^{|S|/2}$. Then
\begin{equation}\label{eq:gram-congruence}
 \mathsf L_k=D_\delta^{-1}G_{\mathrm{err}}D_\delta^{-1},
 \qquad D_\delta^2\le(1-\delta)I.
\end{equation}
The basic estimate $G_{\mathrm{err}}\ge C_k(\delta)^{-1}I$ therefore implies
\begin{equation}\label{eq:sharp-coefficient-comparison}
 \mathsf L_k\ge C_k(\delta)^{-1}D_\delta^{-2}
 \ge\frac{I}{(1-\delta)C_k(\delta)},
 \qquad \kappa_k(\delta)\le(1-\delta)C_k(\delta).
\end{equation}
The stronger explicit estimate in \eqref{eq:sharp-simple-coefficient} is proved in Appendix~\ref{app:constants}.

\subsection{Transferring a small error between two label states}\label{app:transfer}
The trace-distance comparison in the main text is exact as a bound for an arbitrary effect. It is not generally sharp when the effect's probability on one state is already known to be small. We now retain that information. The relevant optimization is only two dimensional, regardless of the output dimension.

\begin{lemma}[Sharp probability transfer]\label{lem:sharp-transfer}
Let $\rho$ be a state, and let $\ket v$ and $\ket w$ be normalized vectors with $|\langle v|w\rangle|=\sqrt{1-d^2}$ for $0<d<1$. For an effect $E$ on the joint space, define
\begin{equation}\label{eq:transfer-errors}
 x\coloneqq\Tr[E(\rho\otimes\proj v)],\qquad
 y\coloneqq\Tr[E(\rho\otimes\proj w)].
\end{equation}
Then $y\le\mathcal T_d(x)$, with $\mathcal T_d$ defined in \eqref{eq:transfer-function}. The same conclusion with $x$ replaced by $x_0$ holds when only $x\le x_0\le1$ is known. The function $\mathcal T_d$ is nondecreasing and concave. It is the smallest universal upper bound on $y$ given only $d$ and an upper bound $x_0$ on $x$.
\end{lemma}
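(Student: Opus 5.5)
The plan is to reduce the statement to a $2\times2$ problem on the span of $\ket v$ and $\ket w$, and then solve that problem by a short matrix-entry optimization.

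\emph{Step 1 (removing $B$).} Define the operator
\begin{equation*}
 F\coloneqq\Tr_B\!\bigl[(\sqrt\rho\otimes I)E(\sqrt\rho\otimes I)\bigr]
\end{equation*}
on the label space. Since $0\le E\le I$ and $\Tr\rho=1$, we have $0\le F\le I$. Moreover $x=\langle v|F|v\rangle$ and $y=\langle w|F|w\rangle$. Let $P$ be the projector onto $\spanop\{\ket v,\ket w\}$. Then $PFP$ is an effect on this (at most) two-dimensional space and has the same two expectations. So it suffices to bound $\langle w|F|w\rangle$ over $2\times2$ effects $F$. We may absorb a phase into $\ket w$ and write
\begin{equation*}
 \ket w=c\ket v+d\ket{v^\perp},\qquad c=\sqrt{1-d^2}.
\end{equation*}

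\emph{Step 2 (the $2\times2$ optimization).} In the basis $\ket v,\ket{v^\perp}$, write $F=\begin{pmatrix}a&f\\\overline f&b\end{pmatrix}$, so that $a=x$. The conditions $F\ge0$ and $I-F\ge0$ give $|f|^2\le ab$ and $|f|^2\le(1-a)(1-b)$. Hence
\begin{equation*}
 y=c^2a+d^2b+2cd\,\mathrm{Re}f\le c^2a+d^2b+2cd\min\bigl\{\sqrt{ab},\sqrt{(1-a)(1-b)}\bigr\}.
\end{equation*}
I would maximize the right side over $b\in[0,1]$.
\begin{itemize}
\item For $b\le 1-a$, the minimum is $\sqrt{ab}$, and the expression is nondecreasing in $b$.
\item For $b\ge1-a$, the minimum is $\sqrt{(1-a)(1-b)}$, and $\partial_b$ equals $d^2-cd\sqrt{(1-a)/(1-b)}$.
\end{itemize}
This derivative has sign depending on whether $a\ge c^2=1-d^2$. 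When $a\le1-d^2$ the maximum is at $b=1-a$, giving
\begin{equation*}
 \bigl(c\sqrt a+d\sqrt{1-a}\bigr)^2,
\end{equation*}
which is $\mathcal T_d(x)$. When $a\ge1-d^2$ the trivial bound $y\le1$ applies and is attained.

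This case analysis, and in particular checking that the kink at $b=1-a$ is the maximizer exactly in the regime $x\le 1-d^2$, is the step I expect to need the most care. A cleaner alternative is to observe that the extreme points of the constrained effect set are rank-one projectors. Then the value is $\sin^2$ of an angle sum, $\sin^2(\arcsin\sqrt x+\arcsin d)$ capped at $\pi/2$, which reproduces the same formula.

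\emph{Step 3 (properties and sharpness).} Using the angle form, $\mathcal T_d(x)=\sin^2(\phi+\arcsin d)$ with $\phi=\arcsin\sqrt x$. This is increasing until the angle reaches $\pi/2$, which happens exactly at $x=1-d^2$, and is constant afterwards. So $\mathcal T_d$ is nondecreasing, and the bound with $x_0$ in place of $x$ follows.

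For concavity, I would compute the second derivative on $[0,1-d^2)$. Alternatively, I would note that the set of achievable pairs $(x,y)$ over effects is convex: mixing effects mixes both expectations. Its upper boundary, which Step 2 shows is $\mathcal T_d$, is therefore concave, and the constant piece joins it at its maximum.

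For minimality, take $E=I_B\otimes\proj\phi$ with $\ket\phi$ the optimal unit vector from Step 2, or $E=I$ in the capped regime. This effect attains $y=\mathcal T_d(x_0)$ with $x=x_0$ for any $\rho$. Hence no smaller universal bound depending only on $d$ and $x_0$ is possible.
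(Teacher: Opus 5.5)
Your main argument is correct, but it takes a different route from the paper. The paper purifies $\rho$ and splits $\langle a|b\rangle=\langle a|E|b\rangle+\langle a|(I-E)|b\rangle$. Cauchy--Schwarz on each term then gives the fidelity-type inequality $\sqrt{1-d^2}\le\sqrt{xy}+\sqrt{(1-x)(1-y)}$. With $\alpha=\arcsin\sqrt x$ and $\beta=\arcsin\sqrt y$, this becomes $|\alpha-\beta|\le\arcsin d$, which yields the bound and monotonicity in one stroke; the paper then checks concavity by an explicit second derivative. You instead compress $E$ to an effective effect on the two-dimensional label span and solve the $2\times2$ optimization exactly. Your case analysis is sound. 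On $[1-a,1)$ the derivative $d^2-cd\sqrt{(1-a)/(1-b)}$ is decreasing in $b$, and at $b=1-a$ it is nonpositive exactly when $a\le1-d^2$, so the kink is the maximizer in that regime. The paper's route is shorter and requires no optimization. Yours exhibits the optimizer directly, so sharpness falls out of the same computation. Your concavity argument, via the upper envelope of the convex set of achievable $(x,y)$ pairs, also avoids calculus.

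One step fails as written. In the capped regime you propose $E=I$ as the extremal effect, but $E=I$ gives $x=1$. That violates $x\le x_0$ whenever $x_0<1$, and it does not give $x=x_0$ as you state. There are two fixes:
\begin{itemize}
\item Use $E=I_B\otimes\proj w$, which has $x=1-d^2\le x_0$ and $y=1$; this is the paper's choice.
\item Use $E=I_B\otimes(\proj w+t\proj{w^\perp})$ with $t=(x_0-1+d^2)/d^2$, which gives $x=x_0$ exactly.
\end{itemize}
The second option also supplies the attainment at every $x\in[1-d^2,1]$ that your envelope argument for concavity needs. For the same reason, drop the ``cleaner alternative'' claiming that the extreme points of the constrained effect set are rank-one projectors. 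At fixed $a\in(1-d^2,1)$ the unique maximizer is $\proj w+t\proj{w^\perp}$ with $0<t<1$, which has rank two and is not a projector.
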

\begin{proof}
Choose a purification $\ket\psi$ of $\rho$ and write $\ket a\coloneqq\ket\psi\ket v$ and $\ket b\coloneqq\ket\psi\ket w$. Extend $E$ by the identity on the purifying register. The two corresponding outcome probabilities are still $x,y$, and $|\langle a|b\rangle|=\sqrt{1-d^2}$. Insert $E+(I-E)=I$ into the overlap and apply Cauchy--Schwarz separately to the two terms:
\begin{align}
\sqrt{1-d^2}
 &=|\langle a|E|b\rangle+\langle a|(I-E)|b\rangle|\label{eq:binary-fidelity-transfer}\\
&\le\|\sqrt E\ket a\|\,\|\sqrt E\ket b\|
    +\|\sqrt{I-E}\ket a\|\,\|\sqrt{I-E}\ket b\|\label{eq:line-A-9-2}\\
&=\sqrt{xy}+\sqrt{(1-x)(1-y)}.\label{eq:line-A-9-3}
\end{align}
Define $\alpha\coloneqq\arcsin\sqrt x$, $\beta\coloneqq\arcsin\sqrt y$, and $\vartheta\coloneqq\arcsin d$, all in $[0,\pi/2]$. The right-hand side is $\cos(\alpha-\beta)$ and the left-hand side is $\cos\vartheta$. Monotonicity of cosine on $[0,\pi/2]$ implies $|\alpha-\beta|\le\vartheta$. Hence
\begin{equation}\label{eq:transfer-angle-bound}
 y\le\sin^2\!\left(\min\{\pi/2,\alpha+\vartheta\}\right)
   =\mathcal T_d(x).
\end{equation}
The sine addition formula gives the first case of \eqref{eq:transfer-function}; the condition $\alpha+\vartheta\le\pi/2$ is equivalent to $x\le1-d^2$.

For completeness, on $0<x<1-d^2$ one may expand and differentiate:
\begin{align}
\mathcal T_d(x)
 &=d^2+(1-2d^2)x+2d\sqrt{1-d^2}\sqrt{x(1-x)},\label{eq:transfer-concavity}\\
\mathcal T_d''(x)
 &=-\frac{d\sqrt{1-d^2}}{2[x(1-x)]^{3/2}}<0.\label{eq:line-A-11-2}
\end{align}
The first derivative tends to zero at $x=1-d^2$, is positive to its left, and is zero on the constant part. Continuity at the junction proves monotonicity and concavity on the full interval. Monotonicity allows $x$ to be replaced by an upper bound.

To show sharpness, take a one-dimensional output space and choose phases so that
$\ket w=\sqrt{1-d^2}\ket v+d\ket{v_\perp}$ with $\ket{v_\perp}\perp\ket v$. For $0\le x_0\le1-d^2$, the rank-one effect onto
$\sqrt{x_0}\ket v+\sqrt{1-x_0}\ket{v_\perp}$ has $x=x_0$ and $y=\mathcal T_d(x_0)$. For $x_0\ge1-d^2$, the effect $\proj w$ has $x=1-d^2\le x_0$ and $y=1$. Thus no smaller bound based only on $d,x_0$ is possible.
\end{proof}

For the present label family, the overlap is $(1-\delta)^{k/2}$, so $d=d_k(\delta)$. Apply the lemma separately to each completed decoding error effect $I-\widetilde\Lambda_m$. Concavity then gives, for an arbitrary prior,
\begin{equation}\label{eq:transfer-prior-average}
 P_{\mathrm e}=\sum_m p_m y_m
 \le\sum_m p_m\mathcal T_{d_k(\delta)}(x_m)
 \le\mathcal T_{d_k(\delta)}\!\left(\sum_m p_m x_m\right).
\end{equation}
The final argument is exactly the tagged ensemble's average error. This step uses one completed error effect per transmitted message, not a separate perturbation comparison for every competitor.

An affine form is useful when a bound with the original additive structure is desired. For every $\lambda\ge0$, binary testing gives
\begin{equation}\label{eq:affine-transfer}
 y\le\lambda x+\gamma_\lambda(d),\qquad
 \gamma_\lambda(d)\coloneqq
 \frac{1-\lambda+\sqrt{(1-\lambda)^2+4\lambda d^2}}{2}.
\end{equation}
Indeed, the maximum of $y-\lambda x$ over effects is the trace of the positive part of
$\rho\otimes(\proj w-\lambda\proj v)$. Tensoring with the state $\rho$ leaves this trace unchanged. On the span of the labels, $\proj w-\lambda\proj v$ has trace $1-\lambda$ and determinant $-\lambda d^2$, so its positive eigenvalue is the displayed expression. The case $\lambda=1$ recovers $y\le x+d$. Sharpness in \lemref{lem:sharp-transfer} then also shows $\mathcal T_d(x)\le\min\{1,x+d\}$.

\subsection{Combined bounds and their use in the applications}\label{app:combine}
We now collect the two improvements without changing the state-based PGM. Let $(p_m,\rho_m)_{m\in\mathcal M}$ and the projectors satisfy the setting of \corref{cor:weighted-packing}. Retain the exact error-pattern factors by defining
\begin{equation}\label{eq:weighted-b-delta}
 b_{p,\delta}\coloneqq
 \sum_m p_m\sum_{\ell\ne m}(1-\delta)^{|S(m,\ell)|}
       \Tr[P_{S(m,\ell),\ell}\rho_m]
 \le(1-\delta)b_p.
\end{equation}
The inequality follows because every competing pair has $|S(m,\ell)|\ge1$. The calculation in \eqref{eq:k-wrong}--\eqref{eq:line-6-16-6} proves the false-acceptance bound with these factors before they are discarded. Replacing only the correct-message estimate by \eqref{eq:sharp-correct-rejection}--\eqref{eq:line-A-5-2}, the same weighted PGM analysis gives
\begin{equation}\label{eq:sharp-tagged-average}
 \widetilde P_{\mathrm e}
 \le X_{p,\delta}\coloneqq\kappa_k(\delta)a_p+b_{p,\delta}.
\end{equation}
Combining this with \eqref{eq:transfer-prior-average} proves
\begin{equation}\label{eq:sharp-weighted-full}
 P_{\mathrm e}\le
 \mathcal T_{d_k(\delta)}\!\left(\min\{1,X_{p,\delta}\}\right).
\end{equation}
For effects in place of projectors, the common dilation in \sref{subsec:general-packing} preserves the weighted testing sums, their error-pattern factors, and the compressed state-based PGM. Thus no additional loss is incurred. Equation~\eqref{eq:sharp-weighted-full} proves \corref{cor:sharp-packing}.

\begin{theorem}[Uniform-prior refinement with Cheng's quadratic correction]\label{thm:sharp-uniform}
In the setting of \thref{thm:k-packing}, or its general-effect extension, let $M\coloneqq\prod_iM_i\ge2$, define $b_\delta$ by \eqref{eq:weighted-b-delta} with $p_m=1/M$, and set
\begin{equation}\label{eq:uniform-sharp-input}
 u_\delta\coloneqq\min\{1,\kappa_k(\delta)a+b_\delta\}.
\end{equation}
The same tilted-state PGM, compressed to the original output, satisfies
\begin{equation}\label{eq:strongest-uniform}
 P_{\mathrm e}\le
 \mathcal T_{d_k(\delta)}\!\left(u_\delta-\frac{u_\delta^2}{M}\right).
\end{equation}
The functions are defined in \eqref{eq:k-constants}, \eqref{eq:sharp-gram-definition}, and \eqref{eq:transfer-function}. If $M=1$, assigning every measurement outcome to the unique message gives error zero.
\end{theorem}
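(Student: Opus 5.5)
The plan is to redo the tagged-ensemble analysis of \thref{thm:k-packing}, this time keeping the quadratic factor $1-e/s$ from \lemref{lem:cheng} instead of dropping it. After that, the sharp transfer \lemref{lem:sharp-transfer} replaces the trace-distance step. Fix $m$ and set $A_m=\rhot m/M$, $B_m=\sum_{\ell\ne m}\rhot\ell/M$, and $s=\Tr(A_m+B_m)=1$. By \eqref{eq:cheng-trace}, the tagged error of message $m$ satisfies
\begin{equation*}
 \tfrac1M\bigl(1-\Tr[\widetilde\Lambda_m\rhot m]\bigr)\le e_m(1-e_m),
 \qquad e_m\coloneqq\mathsf e(A_m,B_m).
\end{equation*}
Testing with $\Gamma_m$ gives
\begin{equation*}
 e_m\le c_m\coloneqq\tfrac1M\Bigl(\Tr[(I-\Gamma_m)\rhot m]+\sum_{\ell\ne m}\Tr[\Gamma_m\rhot\ell]\Bigr).
\end{equation*}
The function $e\mapsto e(1-e)$ is nondecreasing on $[0,1/2]$, and we always have $e_m\le\min\{\Tr A_m,\Tr B_m\}\le 1/M\le1/2$. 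Hence $e_m(1-e_m)\le \min\{c_m,1/M\}\bigl(1-\min\{c_m,1/M\}\bigr)$.

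Summing over $m$ gives $\widetilde P_{\mathrm e}\le\sum_m f(\min\{c_m,1/M\})$ with $f(e)=e-e^2$. Since $f$ is concave, Jensen's inequality over the $M$ messages yields
\begin{equation*}
 \widetilde P_{\mathrm e}\le M f\Bigl(\tfrac1M\sum_m\min\{c_m,1/M\}\Bigr).
\end{equation*}
Let $u\coloneqq\sum_m\min\{c_m,1/M\}$. Then $u\le\min\{1,\sum_m c_m\}$, and $Mf(u/M)=u-u^2/M$. This is nondecreasing in $u$ on $[0,M/2]\supseteq[0,1]$, because $M\ge2$. It therefore suffices to show $\sum_m c_m\le\kappa_k(\delta)a+b_\delta$, which then gives $\widetilde P_{\mathrm e}\le u_\delta-u_\delta^2/M$.

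The bound $\sum_m c_m\le\kappa_k(\delta)a+b_\delta$ comes from the first-order analysis. The correct-rejection terms are controlled by \eqref{eq:sharp-correct-rejection}--\eqref{eq:line-A-5-2}, which produce the coefficient $\kappa_k(\delta)$. For the false-acceptance terms, reindex the ordered pairs as in \eqref{eq:two-reindex}. Then use \eqref{eq:k-wrong}--\eqref{eq:line-6-16-6}, stopping at the line that still carries the factor $(1-\delta)^{|S(m,\ell)|}$; this produces exactly $b_\delta$. The reindexing is an exact identity of the sums, so it commutes with the inequality $e_m\le c_m$ applied before summation.

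The remaining steps transfer the bound to the real decoder and handle general effects. Apply \lemref{lem:sharp-transfer} to each completed error effect $I-\widetilde\Lambda_m$, using the overlap $\langle0^k|t_m\rangle=(1-\delta)^{k/2}$ so that $d=d_k(\delta)$. Averaging with concavity and monotonicity of $\mathcal T_d$, as in \eqref{eq:transfer-prior-average}, gives \eqref{eq:strongest-uniform}. For general effects, the common dilation of \sref{subsec:general-packing} preserves all testing sums and the compressed PGM, so the same argument goes through. The case $M=1$ is trivial.

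The step I expect to be delicate is the concave-averaging step. A naive approach would try to show $\sum_m c_m(1-c_m)\le u-u^2/M$ directly, but this fails when some $c_m$ exceed $1/2$, where $f$ is decreasing. The truncation at $1/M$, justified by $e_m\le1/M$, is what makes both Jensen's inequality and the monotonicity argument valid.
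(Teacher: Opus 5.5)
Your proof is correct and follows essentially the paper's route. You keep the factor $1-e/s$ in Cheng's inequality, bound the average binary cost by $\kappa_k(\delta)a+b_\delta$ using the rescaled Gram estimate and the $(1-\delta)^{|S(m,\ell)|}$-weighted false acceptances, apply Jensen together with monotonicity of $x\mapsto x-x^2/M$, and finish with the sharp transfer lemma. The only differences are cosmetic: the paper works with the unnormalized ensemble, where the optimal cost already satisfies $0\le e_m\le1$, and applies Jensen directly to the $e_m$. This makes your per-message truncation at $1/M$ unnecessary, though it is equivalent after rescaling.
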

\begin{proof}
Use the unnormalized uniform-prior convention for the tagged ensemble. For each $m$, let
$B_m\coloneqq\sum_{\ell\ne m}\widetilde\rho^\ell_{BL}$ and
$e_m\coloneqq\mathsf e(\widetilde\rho^m_{BL},B_m)$.
The two operators have total trace $M$. Since the test $T=0$ has cost one, $0\le e_m\le1$. The stronger part of \lemref{lem:cheng} therefore gives
\begin{equation}\label{eq:sharp-cheng-individual}
 1-\Tr[\widetilde\Lambda_m\widetilde\rho^m_{BL}]
 \le e_m\left(1-\frac{e_m}{M}\right).
\end{equation}
Using the analysis tests $\Gamma_m$ as in \eqref{eq:k-cheng-error}, and interchanging the two message indices in the false-acceptance sum, bounds the average binary cost by
\begin{equation}\label{eq:sharp-cost-average}
 \frac1M\sum_m e_m
 \le \kappa_k(\delta)a+b_\delta,
 \qquad \frac1M\sum_m e_m\le1.
\end{equation}
The function $f_M(x)\coloneqq x-x^2/M$ is concave on $[0,1]$ and nondecreasing there for $M\ge2$. Average \eqref{eq:sharp-cheng-individual} and apply these two properties:
\begin{equation}\label{eq:sharp-cheng-average}
 \widetilde P_{\mathrm e}
 \le\frac1M\sum_m f_M(e_m)
 \le f_M\!\left(\frac1M\sum_m e_m\right)
 \le f_M(u_\delta).
\end{equation}
Finally apply \eqref{eq:transfer-prior-average}. The quadratic correction is asserted here for uniform priors only; the arbitrary-prior estimate \eqref{eq:sharp-weighted-full} does not require it.
\end{proof}

For readers who prefer the form of the basic theorem, set $\lambda\coloneqq(1-\delta)^{-1}$ in \eqref{eq:affine-transfer}, and use \eqref{eq:weighted-b-delta}. This gives the additional bound
\begin{equation}\label{eq:sharp-affine-bound}
 P_{\mathrm e}\le\min\{1,\widehat d_k(\delta)+\widehat C_k(\delta)a_p+b_p\},
\end{equation}
where
\begin{align}
\widehat C_k(\delta)&\coloneqq\frac{\kappa_k(\delta)}{1-\delta},\label{eq:sharp-affine-constants}\\
\widehat d_k(\delta)&\coloneqq
 \frac{\sqrt{\delta^2+4(1-\delta)[1-(1-\delta)^k]}-\delta}{2(1-\delta)}.\label{eq:line-A-23-2}
\end{align}
Equation~\eqref{eq:sharp-coefficient-comparison} proves $\widehat C_k\le C_k$. Also $\widehat d_k<d_k$. To verify the latter inequality, the positive root $y=\gamma_\lambda(d)$ obeys
$y^2+(\lambda-1)y-\lambda d^2=0$.
At $y=d\in(0,1)$ this polynomial equals $(\lambda-1)d(1-d)>0$ when $\lambda>1$, whereas its positive root is smaller. Thus both constants can be improved while the coefficient of $b_p$ remains one. The nonlinear estimate retains more information and is at least as strong as this affine consequence when the same testing sums are used.

\paragraph{Random codebooks and hashes.}
Let $\xi$ denote any randomness in the ensemble, codebook, or hashes, with a fixed $\delta$ and fixed message sizes. The function
$x\mapsto\mathcal T_d(\min\{1,x\})$ on $[0,\infty)$ is nondecreasing and concave. Hence
\begin{equation}\label{eq:sharp-random-average}
 \ee_\xi P_{\mathrm e}(\xi)
 \le\mathcal T_{d_k(\delta)}\!\left(
   \min\{1,\kappa_k(\delta)\ee_\xi a_p(\xi)+\ee_\xi b_{p,\delta}(\xi)\}\right).
\end{equation}
For a uniform prior and fixed $M\ge2$, the composition
$x\mapsto\mathcal T_d(f_M(\min\{1,x\}))$ is also nondecreasing and concave, because each constituent has these properties on the relevant interval. Therefore the quadratic correction can likewise be retained after codebook averaging. In contrast, source priors in distributed compression are generally not uniform, so \eqref{eq:sharp-random-average} is the applicable formula there.

The substitutions needed by the applications are concise. With $N_S=\prod_{i\in S}(M_i-1)$, both ordinary random coding and entanglement-assisted position-based coding use
\begin{equation}\label{eq:sharp-channel-substitution}
 \ee a=\sum_{S\ne\varnothing}\alpha_S,
 \qquad
 \ee b_\delta=\sum_{S\ne\varnothing}
        (1-\delta)^{|S|}N_S\beta_S.
\end{equation}
For position-based coding there is no codebook expectation. For compression, the proof of \thref{thm:compression-one-shot} gives
\begin{equation}\label{eq:sharp-source-substitution}
 a_p=\sum_{S\ne\varnothing}\alpha_S,
 \qquad
 \ee_f b_{p,\delta}\le
 \sum_{S\ne\varnothing}(1-\delta)^{|S|}
            \frac{d_{X_S}}{L(S)}\beta_S.
\end{equation}
Each factor is constant on an exact error class, so it is preserved by the original averaging calculation. The same substitutions allow $\alpha_S\le\eps_S$ and $\beta_S=2^{-D_H^{\eps_S}}$. No first-order rate proof needs to be repeated. One may choose $\delta$ to minimize the available bound; different values specify different members of the already defined decoder family, not a single measurement chosen independently of $\delta$.

\subsection{Explicit coefficients and the scope of optimality}\label{app:constants}
The matrix in \eqref{eq:sharp-gram-definition} depends only on $k$ and $\delta$, so its smallest eigenvalue can be evaluated without using the channel output dimension. We next prove the explicit estimate \eqref{eq:sharp-simple-coefficient}, give a numerically stable alternative expression, and delimit the optimality statements.

\paragraph{An explicit inverse and a dimension-independent estimate.}
Include the empty subset temporarily. In binary subset order the full rescaled Gram matrix and its inverse are
\begin{equation}\label{eq:full-rescaled-gram}
 \mathsf L_{\mathrm{full}}
 =\begin{pmatrix}1&1\\1&(1-\delta)^{-1}\end{pmatrix}^{\otimes k},
 \qquad
 \mathsf L_{\mathrm{full}}^{-1}
 =\delta^{-k}
   \begin{pmatrix}1&-(1-\delta)\\-(1-\delta)&1-\delta\end{pmatrix}^{\otimes k}.
\end{equation}
Multiplying the two $2\times2$ factors verifies the inverse. Its $(S,T)$ entry is
$\delta^{-k}(-1)^{|S|+|T|}(1-\delta)^{|S\cup T|}$.
Write its block decomposition with the empty subset first as
$\mathsf L_{\mathrm{full}}^{-1}=\left(\begin{smallmatrix}a&r^\dagger\\r&D\end{smallmatrix}\right)$.
The lower-right block of $\mathsf L_{\mathrm{full}}$ is $\mathsf L_k$, and the block inverse identity gives $\mathsf L_k^{-1}=D-ra^{-1}r^\dagger$. Here $a=\delta^{-k}$ and $r_S=\delta^{-k}(-1)^{|S|}(1-\delta)^{|S|}$. Substitution proves
\begin{equation}\label{eq:rescaled-inverse-positive}
 \mathsf L_k^{-1}=J\mathsf B_kJ,
 \qquad J_{S,S}\coloneqq(-1)^{|S|},
\end{equation}
where
\begin{equation}\label{eq:positive-inverse-entries}
 (\mathsf B_k)_{S,T}\coloneqq
 \frac{(1-\delta)^{|S\cup T|}
       [1-(1-\delta)^{|S\cap T|}]}{\delta^k},
 \qquad S,T\ne\varnothing.
\end{equation}
In particular, $\mathsf B_k$ is entrywise nonnegative, and
$\kappa_k(\delta)=\lambda_{\max}(\mathsf B_k)$. The block inverse identity follows, for example, by multiplying
$\left(\begin{smallmatrix}1&0\\-ra^{-1}&I\end{smallmatrix}\right)
 \left(\begin{smallmatrix}a&r^\dagger\\r&D\end{smallmatrix}\right)
 \left(\begin{smallmatrix}1&-a^{-1}r^\dagger\\0&I\end{smallmatrix}\right)$:
the result is $\operatorname{diag}(a,D-ra^{-1}r^\dagger)$, whose inverse has lower-right block $\mathsf L_k$.

Since $S\cup T\ne\varnothing$ and $1-(1-\delta)^j\le j\delta$ for nonnegative integers $j$,
\begin{equation}\label{eq:incidence-entry-bound}
 0\le(\mathsf B_k)_{S,T}
 \le\frac{1-\delta}{\delta^{k-1}}|S\cap T|.
\end{equation}
Let $F$ be the real subset-incidence matrix, $F_{S,i}\coloneqq\mathbf1_{\{i\in S\}}$, so $(FF^{\mathsf T})_{S,T}=|S\cap T|$. For $k\ge2$, counting subsets gives
\begin{equation}\label{eq:incidence-eigenvalue}
 F^{\mathsf T}F=2^{k-2}(I_k+\ket{\boldsymbol 1}\bra{\boldsymbol 1}),
 \qquad \lambda_{\max}(FF^{\mathsf T})=(k+1)2^{k-2},
\end{equation}
where $\ket{\boldsymbol 1}\coloneqq\sum_{i=1}^k\ket i$ is the unnormalized all-ones vector. For $k=1$ the same largest-eigenvalue value is one, verified directly.

For $\ket x\coloneqq\sum_S x_S\ket S$, let $\ket{x_{\mathrm{abs}}}\coloneqq\sum_S|x_S|\ket S$. This replacement preserves its norm. Entrywise nonnegativity and \eqref{eq:incidence-entry-bound} yield
\begin{align}
\langle x|\mathsf B_k|x\rangle
 &\le \langle x_{\mathrm{abs}}|\mathsf B_k|x_{\mathrm{abs}}\rangle\label{eq:coefficient-explicit-proof}\\
&\le\frac{1-\delta}{\delta^{k-1}}
       \langle x_{\mathrm{abs}}|FF^{\mathsf T}|x_{\mathrm{abs}}\rangle\label{eq:line-A-32-2}\\
&\le\frac{(k+1)2^{k-2}(1-\delta)}{\delta^{k-1}}\norm{\ket x}^2.\label{eq:line-A-32-3}
\end{align}
Taking the maximum over unit vectors proves the first estimate in \eqref{eq:sharp-simple-coefficient}. Also,
$\delta^{k-1}(\mathsf B_k)_{S,T}\to|S\cap T|$ as $\delta\downarrow0$. Since $k$ is fixed, entrywise convergence is norm convergence and eigenvalues are continuous. Consequently,
\begin{equation}\label{eq:sharp-coefficient-asymptotic}
 \lim_{\delta\downarrow0}\delta^{k-1}\kappa_k(\delta)
  =(k+1)2^{k-2}.
\end{equation}
The leading small-$\delta$ constant in the explicit bound is therefore sharp.

For two senders, in the order $1,2,12$, the inverse formula reduces to
\begin{equation}\label{eq:two-sharp-matrix}
 \mathsf B_2=\frac{1-\delta}{\delta}
 \begin{pmatrix}
  1&0&1-\delta\\
  0&1&1-\delta\\
  1-\delta&1-\delta&(1-\delta)(2-\delta)
 \end{pmatrix},
 \qquad \kappa_2(\delta)\le\frac{3(1-\delta)}{\delta}.
\end{equation}
Thus the quantitative refinement already saves a power of $\delta$ in the first nontrivial case. Figure~\ref{fig:bound-comparison} illustrates the improved guarantees; it compares bounds for the same decoder, not measured errors of different receivers.

\begin{figure}[!t]
\centering
\pgfplotstableread{
a basic refined
1e-07 0.436027868 0.19759928
1.17121759e-07 0.43603437 0.197616345
1.37175063e-07 0.436041985 0.197636285
1.60661846e-07 0.436050904 0.197659574
1.8816998e-07 0.436061349 0.197686763
2.20387989e-07 0.436073584 0.197718488
2.58122288e-07 0.436087913 0.197755484
3.02317363e-07 0.436104695 0.197798596
3.54079412e-07 0.436124351 0.197848795
4.14704034e-07 0.436147373 0.197907192
4.85708657e-07 0.436174336 0.197975058
5.6887052e-07 0.436205915 0.198053832
6.66271157e-07 0.436242902 0.198145148
7.80348495e-07 0.436286221 0.198250843
9.1395788e-07 0.436336958 0.198372979
1.07044354e-06 0.436396381 0.198513852
1.2537223e-06 0.436465979 0.198676007
1.4683816e-06 0.436547493 0.198862246
1.71979436e-06 0.436642963 0.199075636
2.01425339e-06 0.43675478 0.199319509
2.35912899e-06 0.436885742 0.199597466
2.76305336e-06 0.437039127 0.199913374
3.23613669e-06 0.437218774 0.200271362
3.7902202e-06 0.43742918 0.200675822
4.43917254e-06 0.437675611 0.201131407
5.19923695e-06 0.437964235 0.201643037
6.08943774e-06 0.438302277 0.202215906
7.13205657e-06 0.438698198 0.202855506
8.35319007e-06 0.439161907 0.203567644
9.7834031e-06 0.439705011 0.204358485
1.14584937e-05 0.440341104 0.20523459
1.34203894e-05 0.441086108 0.206202969
1.5718196e-05 0.44195867 0.207271143
1.84094276e-05 0.442980629 0.208447214
2.15614453e-05 0.444177566 0.209739937
2.52531439e-05 0.445579439 0.211158808
2.95769263e-05 0.447221338 0.212714154
3.46410162e-05 0.449144358 0.214417231
4.05721673e-05 0.451396634 0.216280336
4.75188358e-05 0.454034538 0.218316919
5.56548961e-05 0.457124099 0.220541716
6.5183993e-05 0.460742646 0.222970887
7.63446389e-05 0.464980752 0.225622169
8.94161836e-05 0.469944497 0.228515048
0.000104725807 0.475758122 0.231670942
0.000122656706 0.482567142 0.235113412
0.000143657691 0.490541986 0.238868386
0.000168254414 0.499882264 0.242964413
0.000197062529 0.510821761 0.247432945
0.000230803099 0.523634292 0.252308643
0.000270320648 0.538640554 0.257629724
0.000316604297 0.556216152 0.263438335
0.00037081252 0.576801002 0.26978097
0.000434302144 0.60091034 0.276708927
0.000508662309 0.62914762 0.284278803
0.000595754241 0.66221962 0.292553036
0.000697757843 0.700954127 0.301600488
0.000817226256 0.746320663 0.311497058
0.000957149762 0.799454748 0.322326335
0.00112103063 0.861686322 0.334180269
0.00131297079 0.934573036 0.347159834
0.00153777448 1 0.361375676
0.00180106851 1 0.376948675
0.00210944311 1 0.394010387
0.00247061687 1 0.41270325
0.00289362992 1 0.433180438
0.00338907025 1 0.455605159
0.00396933867 1 0.480149131
0.00464895925 1 0.506989815
0.00544494283 1 0.536305803
0.00637721279 1 0.568269451
0.00746910376 1 0.603035354
0.00874794567 1 0.640722453
0.0102457478 1 0.681386177
0.012 1 0.724974519
}\qsdBoundsData
\begin{tikzpicture}
\begin{axis}[width=.84\linewidth,height=5.15cm,xmode=log,
 xmin=1e-7,xmax=.012,ymin=0,ymax=1.03,
 xlabel={correct-rejection sum $a$},ylabel={certified upper bound on $P_{\mathrm e}$},
 legend style={at={(.03,.97)},anchor=north west,draw=none,font=\small},
 tick label style={font=\small},label style={font=\small},grid=major]
 \addplot[inkorange,thick,densely dashed] table[x=a,y=basic] {\qsdBoundsData};
 \addlegendentry{basic bound}
 \addplot[inkblue,very thick] table[x=a,y=refined] {\qsdBoundsData};
 \addlegendentry{refined bound \eqref{eq:sharp-main}}
\end{axis}
\end{tikzpicture}
\caption{Certified bounds for the same decoder at $k=2$, $\delta=0.1$, and $b=10^{-4}$. The basic curve is $\min\{1,d_2+C_2a+b\}$; the refined curve is \eqref{eq:sharp-main} with the exact $\kappa_2$. The testing sums are hypothetical, not channel data. Neither curve optimizes $\delta$. Error-pattern weights and Cheng's correction permit further improvements.}
\label{fig:bound-comparison}
\end{figure}

\paragraph{Optimality of the geometric coefficient for these labels.}
We finally show that $\kappa_k(\delta)$ cannot be reduced in \eqref{eq:sharp-correct-rejection}--\eqref{eq:line-A-5-2} if that inequality is to hold for every output state and every collection of rejection projectors with the specified labels. Fix $\delta$ and $k$. Let $G$ be the normalized error-label Gram matrix and $D_\delta$ be as in \eqref{eq:gram-congruence}. Then
$\mathsf L_k^{-1}=D_\delta G^{-1}D_\delta$ is a real symmetric matrix. Choose a real unit vector $\ket\xi\coloneqq\sum_S\xi_S\ket S$ for its largest eigenvalue $\kappa_k(\delta)$. On a qubit output, define, for $\epsilon>0$,
\begin{equation}\label{eq:geometric-sharpness-family}
 \rho\coloneqq\proj0,\qquad
 \ket{q_S(\epsilon)}\coloneqq
  \frac{\ket1+\epsilon\xi_S\ket0}{\sqrt{1+\epsilon^2\xi_S^2}},
 \qquad Q_S(\epsilon)\coloneqq\proj{q_S(\epsilon)}.
\end{equation}
The sum of individual rejection probabilities satisfies
\begin{equation}\label{eq:geometric-sharpness-denominator}
 \sum_S\Tr[Q_S(\epsilon)\rho]
 =\sum_S\frac{\epsilon^2\xi_S^2}{1+\epsilon^2\xi_S^2}
 =\epsilon^2+O(\epsilon^4).
\end{equation}
Let $R_\epsilon$ be the matrix whose columns are
$\ket{q_S(\epsilon)}\otimes\ket{w_S}$, and let
$G_\epsilon\coloneqq R_\epsilon^\dagger R_\epsilon$.
As $\epsilon\to0$, $G_\epsilon=G+O(\epsilon^2)$ and is invertible for sufficiently small $\epsilon$. Its column-space projector is
$\Pi_\epsilon=R_\epsilon G_\epsilon^{-1}R_\epsilon^\dagger$;
self-adjointness, idempotence, and action as the identity on every column verify this formula directly. The vector of overlaps with $\ket0\ket t$ is
\begin{equation}\label{eq:geometric-sharpness-overlap}
 \ket{h_\epsilon}\coloneqq R_\epsilon^\dagger(\ket0\ket t)
  =\epsilon D_\delta\ket\xi+O(\epsilon^3).
\end{equation}
Here all $O(\cdot)$ estimates are for fixed $k,\delta$ in a fixed finite-dimensional space. Continuity of the inverse at $G$ gives
\begin{align}
\langle0,t|\Pi_\epsilon|0,t\rangle
 &=\langle h_\epsilon|G_\epsilon^{-1}|h_\epsilon\rangle\label{eq:geometric-sharpness-numerator}\\
&=\epsilon^2\langle\xi|D_\delta G^{-1}D_\delta|\xi\rangle+O(\epsilon^4)\label{eq:line-A-38-2}\\
&=\epsilon^2\kappa_k(\delta)+O(\epsilon^4).\label{eq:line-A-38-3}
\end{align}
Dividing by \eqref{eq:geometric-sharpness-denominator} shows that the ratio approaches $\kappa_k(\delta)$. Even pure qubit states and rank-one rejection projectors require this coefficient. This proves optimality for that particular geometric inequality and label family.

The transfer bound is separately optimal under the information specified in \lemref{lem:sharp-transfer}. These component statements do not show simultaneous saturation of all coding inequalities. We assert neither globally optimal decoding nor optimal reliability or second-order rates. Both bounds certify the same decoder.

\end{document}